\newenvironment{tbs}{%
   \small\tt
   \begin{itemize}}{\end{itemize}}
\newcommand{\btbs}{\begin{tbs}}                                                                      
\newcommand{\etbs}{\end{tbs}}
\newcommand{\hide}[1]{}
\newtheorem{theorem}{Theorem}[section]
\newtheorem{fact}[theorem]{Fact}
\newtheorem{proposition}[theorem]{Proposition}
\newtheorem{lemma}[theorem]{Lemma}
\newtheorem{corollary}[theorem]{Corollary}
\newtheorem{defi}[theorem]{Definition}
\newtheorem{conv}[theorem]{Convention}
\newtheorem{rema}[theorem]{Remark}
\newtheorem{exam}[theorem]{Example}
\newenvironment{definition}{\begin{defi}\rm}{\hfill $\lhd$\end{defi}}
\newenvironment{remark}{\begin{rema}\rm}{\hfill $\lhd$\end{rema}}
\newenvironment{example}{\begin{exam}\rm}{\hfill $\lhd$\end{exam}}
\newcommand{\qed}{\textsc{qed}}
\newenvironment{proof}{\begin{trivlist}\item[]{\bf
Proof.}}{\hfill {\sc qed}\end{trivlist}}
\newtheorem{claim2}{\sc Claim}
\newenvironment{proofof}[1]{\begin{trivlist}\item[\hskip\labelsep{\textsc{Proof
  of {#1}.\ }}]}{\hspace*{\fill} {\qed}\end{trivlist}}
\newenvironment{pfclaim}{\begin{trivlist}\item[]{\sc Proof of Claim.~}}
{\hfill {\mbox{$\blacktriangleleft$}}\end{trivlist}}
\newcommand{\Dom}{\mathsf{Dom}}
\newcommand{\Ran}{\mathsf{Ran}}
\newcommand{\nada}{\varnothing}
\newcommand{\rst}[1]{{\upharpoonright}_{#1}\,}
\newcommand{\isdef}{\mathrel{:=}}
\newcommand{\tup}[1]{\langle{#1}\rangle}
\newcommand{\pow}{\wp}
\newcommand{\sz}[1]{|#1|}
\newcommand{\LFP}{\mathit{LFP}}
\newcommand{\Prop}{\mathsf{Prop}}
  \newcommand{\pprop}{\mathsf{P}}
  \newcommand{\qprop}{\mathsf{Q}}
  \newcommand{\rprop}{\mathsf{R}}
\newcommand{\fovar}{\mathsf{iVar}}
\newcommand{\nxt}[1]{\raisebox{.3ex}{$\scriptstyle \bigcirc$}_{#1}}
\newcommand{\tmod}{\Diamond}
\newcommand{\qu}{\ensuremath{\exists^\infty}\xspace}
\newcommand{\dqu}{\ensuremath{\forall^\infty}\xspace}
\newcommand{\wqu}{\ensuremath{\mathbf{W}}\xspace}
\newcommand{\inc}{\sqsubseteq}
\newcommand{\foeq}{\approx}
\newcommand{\here}[1]{{\Downarrow}#1}
\newcommand{\arediff}[1]{\mathrm{diff}(#1)}
\newcommand{\dualop}{\delta}
\newcommand{\dual}[1]{#1^{\dualop}}
\newcommand{\ytr}{\mathtt{tr}}
\newcommand{\yvlang}{L}
\newcommand{\oslang}{L_{1}}
\newcommand{\muML}{\ensuremath{\mu\ML}\xspace}
\newcommand{\ML}{\ensuremath{\mathrm{ML}}\xspace}
\newcommand{\mucML}{\ensuremath{\mu_{C}\ML}\xspace}
\newcommand{\mudML}{\ensuremath{\mu_{D}\ML}\xspace}
   \newcommand{\AFMC}{\ensuremath{\mu_{D}\ML}\xspace}
\newcommand{\fo}{\ensuremath{\mathrm{FO}}\xspace}
\newcommand{\mso}{\ensuremath{\mathrm{MSO}}\xspace}
\newcommand{\wmso}{\ensuremath{\mathrm{WMSO}}\xspace}
\newcommand{\nmso}{\ensuremath{\mathrm{NMSO}}\xspace}
\newcommand{\smso}{\ensuremath{\mathrm{SMSO}}\xspace}
\newcommand{\ofo}{\ensuremath{\fo_1}\xspace}%\mathrm{FO}_1}\xspace}
\newcommand{\foe}{\ensuremath{\mathrm{FOE}}\xspace}
\newcommand{\foei}{\ensuremath{\mathrm{FOE}^{\infty}}\xspace}
\newcommand{\ofoe}{\ensuremath{\foe_1}\xspace}%\mathrm{FO}^=_1}\xspace}
\newcommand{\ofoei}{\ensuremath{\foei_1}\xspace}%\mathrm{FO}^=_1}\xspace}
\newcommand{\oml}{\ensuremath{\mathrm{ML}_1}\xspace}%{\ensuremath{\textrm{ML}^+_1}\xspace}
\newcommand{\owmso}{\ensuremath{\mathrm{WMSO}_1}\xspace}
\newcommand{\cont}[2]{\mathtt{Con}_{#2}(#1)}
\newcommand{\cocont}[2]{\mathtt{CoCon}_{#2}(#1)}
\newcommand{\noe}[2]{\mathtt{Noe}_{#2}(#1)}
\newcommand{\conoe}[2]{\mathtt{CoNoe}_{#2}(#1)}
\newcommand{\dbnf}{\nabla}
\newcommand{\dbnfofoe}[2]{\dbnf_{\foe}(#1,#2)}
\newcommand{\mondbnfofoei}[4]{\dbnf^{#4}_{\foei}(#1,#2,#3)}
\newcommand{\posdgbnfofo}[2]{\dbnf^+_{\fo}(#1,#2)}
\newcommand{\posdbnfofoe}[2]{\dbnf^+_{\foe}(#1,#2)}
\newcommand{\posdbnfofoei}[3]{\dbnf^+_{\foei}(#1,#2,#3)}
\newcommand{\posdbnfinf}[1]{\dbnf^+_\infty(#1)}
\newcommand{\FV}{\mathit{FV}}
\newcommand{\BV}{\mathit{BV}}
\newcommand{\Sfor}{\mathit{Sfor}}
\newcommand{\subf}{\trianglelefteq}
\newcommand{\isbnf}{\mathrel{::=}}
\newcommand{\mathstr}[1]{\mathbb{#1}}
\newcommand{\bbA}{\mathstr{A}}
\newcommand{\bbS}{\mathstr{S}}
\newcommand{\bbT}{\mathstr{T}}
\newcommand{\nat}{\ensuremath{\mathbb{N}}\xspace}
\newcommand{\tscolors}{\kappa}
\newcommand{\tsval}{\tscolors^\natural}
\newcommand{\Aut}{\mathit{Aut}}
\newcommand{\AutW}{\mathit{Aut}_{w}}
\newcommand{\AutWC}{\mathit{Aut}_{wc}}
\newcommand{\ord}{\prec}
\newcommand{\tmap}{\Delta}
\newcommand{\tmapProj}{\tmap^\exists}
\newcommand{\pmap}{\Omega}
\newcommand{\egame}{\mathcal{E}}
\newcommand{\agame}{\mathcal{A}}
\newcommand{\UG}{\mathcal{U}}
\newcommand{\strat}{f}
   \newcommand{\ystrat}{\strat}
\newcommand{\eloise}{\ensuremath{\exists}\xspace}
\newcommand{\abelard}{\ensuremath{\forall}\xspace}
\newcommand{\pmatches}[2]{\mathsf{PM}^{#1}_{#2}}
\newcommand{\win}{\text{\sl Win}}
\newcommand{\last}{\textsf{last}}
\newcommand{\unravel}[1]{\hat{#1}}
\newcommand{\omegaunrav}[1]{{#1}^\omega}
\newcommand{\bis}{\mathrel{\underline{\leftrightarrow}}}
\newcommand{\ext}[1]{\llbracket#1\rrbracket}
\newcommand{\mmodels}{\Vdash}
\newcommand{\Mod}{\ensuremath{\mathsf{Mod}}}
\newcommand{\TMod}{\ensuremath{\mathsf{TMod}}}
\newcommand{\ol}[1]{\overline{#1}}
\newcommand{\sse}{\subseteq}
\newcommand{\vlist}[1]{\overline{{\mathbf{#1}}}}%{\vv{#1}}
\newcommand{\mc}{\mathcal}
\renewcommand{\phi}{\varphi}
\newcommand{\lift}[1]{#1^{\Uparrow}}
\newcommand{\shA}{\pow A}
\newcommand{\f}{F} % SUPSCRIPT of the finitary construction
\renewcommand{\sc}{\scshape}
\newenvironment{Iff-RL}{\textbf{($\Rightarrow$)} }{\bigskip}
\newenvironment{Iff-LR}{\textbf{($\Leftarrow$)} }{}
\newcommand{\mb}{\mathbb}
\def \: {\colon}
\def \p {\pow}
\newcommand{\shai}{a_{I}^{\sharp}}
\newcommand{\shDe}{\Delta^{\sharp}}
\newcommand{\bmDe}{\Delta^{\flat}}
\newcommand{\V}{\tscolors}
\newcommand{\R}[1]{R[#1]}
\newcommand{\MSO}{\mso}
\newcommand{\noet}{{\scriptscriptstyle N}}
\newcommand{\fin}{{\scriptscriptstyle F}}
\newcommand{\noetexists}{\exists_{\noet}}
\newcommand{\finexists}{\exists_{\fin}}
\newcommand{\df}{\mathrel{: =}}
\newcommand{\lque}{\ensuremath{{\foe}^\infty}\xspace}%\omega}\xspace}%(\qu)}\xspace}
\newcommand{\posdbnfolque}[3]{\dbnf^+_{\lque}(#1,#2,#3)}		% disjunct of basic normal form positive LQUE
\newcommand{\om}{\omega}
\newcommand{\al}{\alpha}
\title{The Power of the Weak}
\author{
Facundo Carreiro
\thanks{Institute for Logic, Language and Computation, Universiteit van Amsterdam,
   P.O. Box 94242, 1090 GE Amsterdam. E-mail: \url{contact@facundo.io}.}
%\affil{University of Amsterdam}
\and Alessandro Facchini\thanks{Dalle Molle Institute for Artificial Intelligence (IDSIA),
Galleria 2, 6928 Manno (Lugano), Switzerland. E-mail: \url{alessandro.facchini@idsia.ch}.}
%\affil{IDSIA, Switzerland}
\and Yde Venema
\thanks{Institute for Logic, Language and Computation, Universiteit van Amsterdam,
   P.O. Box 94242, 1090 GE Amsterdam. E-mail: \url{y.venema@uva.nl}.}
\and Fabio Zanasi
\thanks{University College London,
66-72 Gower Street, London, United Kingdom. E-mail: \url{f.zanasi@ucl.ac.uk}.}
%\affil{University College London}
}
\begin{document}

\maketitle

\begin{abstract}
A landmark result in the study of logics for formal verification is Janin \& Walukiewicz's theorem, stating that the modal $\mu$-calculus ($\muML$) is equivalent modulo bisimilarity to standard monadic second-order logic (here abbreviated as $\smso$), over the class of labelled transition systems (LTSs for short). Our work proves two results of the same kind, one for the alternation-free fragment of $\muML$ ($\mudML$) and one for weak $\mso$ ($\wmso$). Whereas it was known that $\mudML$ and $\wmso$ are equivalent modulo bisimilarity on binary trees, our analysis shows that the picture radically changes once we reason over arbitrary LTSs. The first theorem that we prove is that, over LTSs, $\mudML$ is equivalent modulo bisimilarity to \emph{noetherian} $\mso$ ($\nmso$), a newly introduced variant of $\smso$ where second-order quantification ranges over ``well-founded'' subsets only. Our second theorem starts from $\wmso$, and proves it equivalent modulo bisimilarity to a fragment of $\mudML$ defined by a notion of continuity. Analogously to Janin \& Walukiewicz's result, our proofs are automata-theoretic in nature: as another contribution, we introduce classes of parity automata characterising the expressiveness of $\wmso$ and $\nmso$ (on tree models) and of $\mucML$ and 
$\mudML$ (for all transition systems).
\end{abstract}

\textit{Keywords}
Modal $\mu$-Calculus, Weak Monadic Second Order Logic, Tree Automata, Bisimulation.

\newpage

% !TEX root = ../00CFVZ_TOCL.tex

\section{Introduction}
   \label{sec:intro}

\subsection{Expressiveness modulo bisimilarity}

A seminal result in the theory of modal logic is van Benthem's Characterisation
Theorem~\cite{vanBenthemPhD}, stating that, over  the class of all labelled transition systems (LTSs for short),  every bisimulation-invariant
first-order formula is equivalent to (the standard
translation of) a modal formula:
\begin{equation}
\label{eq:vB}
\ML \equiv \fo/{\bis}  \qquad  \text{ (over the class of all LTSs)}.
\end{equation}
Over the years, a wealth of variants of the Characterisation Theorem have been
obtained.
For instance, van Benthem's theorem is one of the few
preservation results that transfers to the setting of finite
models~\cite{rose:moda97}; for a recent, rich source of van Benthem-style
characterisation results, see~\cite{DawarO09}. 
The general pattern of these results takes the shape
\begin{equation}
\label{eq:vBGeneral}
M \equiv \yvlang/{\bis}  \qquad  
\text{ (over a class of models $\mathsf{C}$)}.
\end{equation}
Apart from their obvious relevance to model theory, the interest in these 
results increases if $\mathsf{C}$ consists of transition structures that 
represent certain computational processes, as in the theory of the formal 
specification and verification of properties of software.
In this context, one often takes the point of view that bisimilar models 
represent \emph{the same} process. 
For this reason, only bisimulation-invariant properties are relevant.
Seen in this light, \eqref{eq:vBGeneral} is an \emph{expressive completeness} 
result: all the relevant properties expressible in $\yvlang$ (which is generally
some rich yardstick formalism), can already be expressed in a (usually
computationally more feasible) fragment $M$.

Of special interest to us is the work~\cite{Jan96}, which extends van Benthem's
result to the setting of \emph{second-order} logic, by proving that the 
bisimulation-invariant fragment of standard monadic second-order logic 
($\smso$) is the \emph{modal $\mu$-calculus} ($\muML$), viz., the extension of
basic modal logic with least- and greatest fixpoint operators:
\begin{equation}
\label{eq:JW}
\muML \equiv \smso/{\bis}  \qquad  \text{ (over the class of all LTSs)}.
\end{equation}
The aim of this paper is to study the fine structure of such connections between
second-order logics and modal $\mu$-calculi, obtaining variations of the
expressiveness completeness results \eqref{eq:vB} and \eqref{eq:JW}.
Our departure point is the following result, from \cite{ArnNiw92}.
\begin{equation}
\label{eq:weakfinite}
\AFMC \equiv \wmso/{\bis}  \qquad  \text{ (over the class of all binary trees)}.
\end{equation}
It relates two variants of $\muML$ and standard $\mso$ respectively. 
The first, $\AFMC$ is the \emph{alternation-free} fragment of $\muML$: the
constraint means that variables bound by least fixpoint operators cannot occur
in subformulas ``controlled'' by a greatest fixpoint operator, and viceversa. 
The fact that our best model-checking algorithms for $\muML$ are exponential in
the alternation depth of fixpoint operators 
\cite{EmersonL86,DBLP:conf/cav/LongBCJM94} makes $\AFMC$ of special interest for
applications. 
The second-order formalism appearing in \eqref{eq:weakfinite} is \emph{weak}
monadic second-order logic (see e.g. \cite[Ch. 3]{ALG02}), a variant of standard
$\mso$ where second-order quantification ranges over finite sets only. 
The interest in $\wmso$ is also justified by applications in software 
verification, in which it often makes sense to only consider finite portions of 
the domain of interest. 
\medskip

Equation \eqref{eq:weakfinite} offers only a very narrow comparison of the
expressiveness of these two logics. 
The equivalence is stated on binary trees, whereas \eqref{eq:vB} and
\eqref{eq:JW} work at the level of arbitrary LTSs. 
In fact, it turns out that the picture in the more general setting is far more
subtle. 
First of all, we know that $\AFMC \not\equiv \wmso/{\bis}$ already on (arbitrary) trees, because the class of well-founded trees, definable by the formula $\mu x.\square x$ of $\AFMC$, is not $\wmso$-definable. 
Moreover, whereas $\wmso$ is a fragment of $\smso$ on binary trees ---in fact, 
on all finitely branching trees--- as soon as we allow for infinite branching 
the two logics turn out to have \emph{incomparable} expressive power, 
see~\cite{CateF11,Zanasi:Thesis:2012}. 
For instance, the property ``each node has finitely many successors'' is 
expressible in $\wmso$ but not in $\smso$.

It is thus the main question of this work to clarify the status of 
$\wmso/{\bis}$ and $\AFMC$ on arbitrary LTSs. 
We shall prove that, in this more general setting, \eqref{eq:weakfinite}
``splits'' into the following two results, introducing in the picture a new
modal logic $\mucML$ and a new second-order logic $\nmso$.

\begin{theorem}~
   \label{t:11}
\begin{eqnarray}
   \mucML \equiv \wmso/{\bis} \qquad  \text{ (over the class of all LTSs)}. 
\label{eq:mucML=wmso} 
\\ \AFMC \equiv \nmso/{\bis}  \qquad \text{ (over the class of all LTSs)}. 
\label{eq:afmc=nmso}
\end{eqnarray}
\end{theorem}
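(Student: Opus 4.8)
The plan is to use the automata-theoretic method with which Janin and Walukiewicz proved \eqref{eq:JW}. For each of the four logics in Theorem~\ref{t:11} one designs a class of parity automata, distinguished along two axes. The first is the \emph{one-step language} used in transition maps: the monadic first-order language $\ofoei$ (first-order logic with equality and the quantifier $\exists^\infty$) for the second-order logics $\wmso$ and $\nmso$, and modal logic $\oml$ for the $\mu$-calculi $\mucML$ and $\mudML$. The second is a structural \emph{acceptance condition} on the priority map: a ``weak'' one mirroring alternation-freeness (for $\mudML$ and $\nmso$), and a stronger ``weak-continuous'' one mirroring the continuity restriction (for $\mucML$ and $\wmso$). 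Writing $\AutWC$ and $\AutW$ for the two resulting automaton classes, \eqref{eq:mucML=wmso} decomposes into three statements: $\wmso \equiv \AutWC(\ofoei)$ over tree models; $\mucML \equiv \AutWC(\oml)$ over all LTSs; and the assertion that every \emph{bisimulation-invariant} automaton in $\AutWC(\ofoei)$ is equivalent, over all LTSs, to one in $\AutWC(\oml)$. Equation \eqref{eq:afmc=nmso} is proved the same way, with $\AutW$ in place of $\AutWC$.

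The modal equivalences $\mucML \equiv \AutWC(\oml)$ and $\mudML \equiv \AutW(\oml)$ are the routine part: a formula is compiled into an automaton by a closure construction, with continuity (resp.\ alternation-freeness) turning into the structural constraint on the priority map, and an automaton is decompiled into a formula by reading its acceptance game as a nested system of fixpoint equations, checking that the acceptance condition keeps the formula in the intended fragment. The second-order equivalences cost more. One direction shows that ``$\mathbb{A}$ accepts the tree $t$'' is definable by a $\wmso$- (resp.\ $\nmso$-) sentence, verifying along the way that the weak-continuous (resp.\ weak) acceptance condition can be checked using only quantification over finite (resp.\ noetherian) sets. The other direction puts each $\wmso$/$\nmso$-formula into automaton normal form, which rests on one-step normal form theorems ($\dbnf$) for the monadic fragment of $\foei$ --- refinements, tuned to the weak and weak-continuous settings, of the disjunctive normal forms of Walukiewicz and Janin--Walukiewicz.

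For the two inclusions of \eqref{eq:mucML=wmso}: $\mucML \subseteq \wmso/{\bis}$ follows by translating a $\mucML$-formula into its modal automaton and expressing that automaton's acceptance as a $\wmso$-sentence over an arbitrary LTS, which uses the weakness of the acceptance condition. The converse uses all three statements: a bisimulation-invariant $\wmso$-sentence is turned, over trees, into a bisimulation-invariant automaton $\mathbb{A}$ in $\AutWC(\ofoei)$; the third statement turns $\mathbb{A}$ into an $\AutWC(\oml)$-automaton, hence into a $\mucML$-formula $\phi$; and $\phi$ agrees with the original over all LTSs since both are bisimulation-invariant, hence determined by their value on the $\omega$-unravelling $\omegaunrav{\mathstr{S}}$ of the model $\mathstr{S}$. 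The third statement is itself proved by evaluating $\mathbb{A}$ on $\omega$-unravellings, observing that bisimulation-invariance makes only the ``modal shadow'' of each transition sentence relevant, replacing it by a modal $\dbnf$-normal form, and recovering a genuine transition structure by a powerset construction; \eqref{eq:afmc=nmso} is handled identically with $\AutW$.

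The main obstacle is that this powerset construction, and the normal-form step preceding it, must preserve the acceptance discipline: the textbook simulation theorem for parity automata inflates the priority index, which would destroy both weakness and continuity. The technical core is therefore an index-preserving version of these constructions, resting on closure of the weak and weak-continuous conditions under the product and powerset operations involved, together with the structural fact that over tree models a first-order one-step language under a weak acceptance condition is no more distinguishing than a modal one-step language. Since the two conditions behave differently under these operations, the $\wmso$/$\mucML$ and $\nmso$/$\mudML$ cases need separately tuned one-step normal forms and closure analyses even though their architecture is identical --- and it is exactly this divergence that explains why the single equivalence \eqref{eq:weakfinite} splits into two as soon as one moves from binary trees to arbitrary LTSs.
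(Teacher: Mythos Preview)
Your high-level architecture matches the paper's: characterise each logic by a class of parity automata (varying the one-step language and imposing a weak or weak-continuous constraint), and bridge the modal and monadic automaton classes by a one-step translation justified on $\omega$-unravellings. Two substantive points need correction.

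First, the one-step language for $\nmso$ is $\ofoe$ (first-order logic with equality), not $\ofoei$. This matters: $\AutW(\ofoei)$ is strictly more expressive than $\nmso$, since a one-state weak $\ofoei$-automaton with transition formula $\qu x.\,\top$ recognises ``the root has infinitely many successors'', and this property is not $\nmso$-definable (any subset of a single level of a tree is noetherian, so on one level $\nmso$ collapses to ordinary $\mso$ over a set with unary predicates, which cannot express infinitude of the domain). The paper's point is precisely that the generalised quantifier $\qu$ must be \emph{paired} with the continuity constraint to land on $\wmso$; drop continuity while keeping $\qu$ and you overshoot both $\nmso$ and $\wmso$. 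The two equations therefore use genuinely different one-step languages, $\ofoe$ for $\nmso$ and $\ofoei$ for $\wmso$, and this asymmetry is one of the contributions of the paper. (A related minor point: on the modal side the paper uses $\ofo$, not $\oml$; the automaton class $\Aut(\oml)$, with transitions restricted to a single $\Diamond a$ or $\Box a$, would be too weak, even though $\mu\oml=\muML$ at the level of formulas.)

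Second, you have placed the powerset/simulation construction in the wrong step. The bisimulation-invariance translation $\bbA\mapsto\bbA^{\tmod}$ needs no powerset: one simply replaces each one-step sentence $\alpha$ by an $\ofo$-sentence $\alpha^{\tmod}$ satisfying $(D,V)\models\alpha^{\tmod}$ iff $(D\times\omega,V_\pi)\models\alpha$, and this preserves the state set and the priority map, hence weakness and continuity, for free. The place where the textbook simulation theorem fails and a new construction is required is the \emph{projection} step in the inductive formula-to-automaton translation, i.e.\ closure of the automaton class under $\exists p$. There one wants a non-deterministic automaton to guess the $p$-variant, but full non-determinisation does not preserve weakness. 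The paper's fix is a two-sorted automaton carrying both states $A$ and macro-states $\pow A$, which runs non-deterministically only on a finite (for $\wmso$) or well-founded (for $\nmso$) initial portion of the input tree and as the original automaton elsewhere; assigning odd priority to all macro-states, together with (in the $\wmso$ case) continuity of the lifted one-step formulas in the macro-state predicates, keeps the construction inside the class. Your final paragraph correctly flags that an index-preserving simulation is the technical core; it belongs with the projection step, not with the bisimulation-invariance step.
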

For the first result \eqref{eq:mucML=wmso}, our strategy is to start from 
$\wmso$ and seek a suitable modal fixpoint logic characterising its 
bisimulation-invariant fragment. 
Second-order quantification $\exists p.\phi$ in $\wmso$ requires $p$ to be 
interpreted over a finite subset of an LTS. 
We identify a notion of \emph{continuity} as the modal counterpart of this 
constraint, and call the resulting logic $\mucML$, the \emph{continuous} 
$\mu$-calculus. 
This is a fragment of $\AFMC$, defined by the same grammar as the full $\muML$,
\begin{equation*}
%\label{eq:syntmu}
\phi\ ::= q \mid \neg\phi \mid 
    \phi \lor \phi \mid  \Diamond \phi \mid
    \mu p.\phi' 
\end{equation*}
with the difference that $\phi'$ does not just need to be positive in $p$, but 
also continuous in $p$.
This terminology refers to the fact that $\phi'$ is interpreted by a function 
that is continuous with respect to the Scott topology; as we shall see, 
$p$-continuity can be given a \emph{syntactic} characterisation, as a certain
fragment $\cont{\muML}{p}$ of $\muML$--- see also~\cite{Fontaine08,FV12} for
more details. 

For our second result \eqref{eq:afmc=nmso}, we move in the opposite direction.
That is, we look for a natural second-order logic of which $\AFMC$ is the 
bisimulation-invariant fragment. 
Symmetrically to the case \eqref{eq:mucML=wmso} of $\wmso$ and continuity, a 
crucial aspect is to identify which constraint on second-order quantification
corresponds to the constraint on fixpoint alternation expressed by $\AFMC$. 
Our analysis stems from the observation that, when a formula $\mu p.\phi$ of
$\AFMC$ is satisfied in a tree model $\bbT$, the interpretation of $p$ must be
a subset of a \emph{well-founded} subtree of $\bbT$, because alternation-freedom
prevents $p$ from occurring in a $\nu$-subformula of $\phi$. 
We introduce the concept of \emph{noetherian} subset as a generalisation of 
this property from trees to arbitrary LTSs: intuitively, a subset of a LTS 
$\bbS$ is called noetherian if it is a subset of a bundle of paths that does
not contain any infinite ascending chain. 
The logic $\nmso$ appearing in \eqref{eq:afmc=nmso}, which we call 
\emph{noetherian} second-order logic, is the variant of $\mso$ restricting 
second-order quantification to noetherian subsets.

A unifying perspective over these results can be given through the lens of
K\"onig's lemma, saying that a subset of a tree $\bbT$ is finite precisely 
when it is included in a subtree of $\bbT$ which is both finitely branching 
and well-founded. In other words, finiteness on trees has two components, a
\emph{horizontal} (finite branching) and a \emph{vertical} (well-foundedness)
dimension. 
The bound imposed by $\nmso$-quantification acts only on the \emph{vertical}
dimension, whereas $\wmso$-quantification acts on both. 
It then comes at no surprise that \eqref{eq:mucML=wmso}-\eqref{eq:afmc=nmso}
collapse to \eqref{eq:weakfinite} on binary trees.
The restriction to finitely branching models nullifies the difference between
noetherian and finite, equating $\wmso$ and $\nmso$ (and thus also $\AFMC$ and
$\mucML$).

%Incidentally, our characterisation sheds light on why \eqref{eq:mucML=wmso}-\eqref{eq:afmc=nmso} collapse to \eqref{eq:weakfinite} on a restricted class of models. Noetherian quantification bounds only the \emph{vertical} dimension of models, as it yields a notion of well-founded subset of an LTS. On the other hand, quantification in $\wmso$ restricts to subsets that are finite \emph{tout-court}, both on the vertical and horizontal dimension. It then comes at no surprise that $\wmso$ and $\nmso$ (and thus also $\AFMC$ and $\mucML$) happen to coincide on binary trees: in this class of models the horizontal dimension (tree width) is always finite, thus nullifying the difference between noetherian and finite.

Another interesting observation concerns the relative expressive power of 
$\wmso$ with respect to standard $\mso$. 
As mentioned above, $\wmso$ is \emph{not} strictly weaker than $\smso$ on 
arbitrary LTSs. 
Nonetheless, putting together \eqref{eq:JW} and \eqref{eq:mucML=wmso} reveals 
that $\wmso$ collapses within the boundaries of $\smso$-expressiveness when it 
comes to bisimulation-invariant formulas, because $\mucML$ is strictly weaker
than $\muML$. 
In fact, modulo bisimilarity, $\wmso$ turns out to be even weaker than $\nmso$, 
as $\mucML$ is also a fragment of $\AFMC$. 
In a sense, this new landscape of results tells us that the feature 
distinguishing $\wmso$ from $\smso$/$\nmso$, \emph{viz.} the ability of 
expressing cardinality properties of the horizontal dimension of models, 
disappears once we focus on the bisimulation-invariant part, and thus is not
computationally relevant.

\subsection{Automata-theoretic characterisations}

Janin \& Walukiewicz's proof of \eqref{eq:JW} passes through a characterisation 
of the two logics involved in terms of \emph{parity automata}.
In a nutshell, a parity automaton $\bbA = \tup{A,\tmap,\pmap,a_I}$ processes
LTSs as inputs, according to a transition function $\tmap$ defined in terms of
a so-called \emph{one-step logic} $\oslang(A)$, where the states $A$ of 
$\bbA$ may occur as unary predicates. 
The map $\pmap \colon A \to \mathbb{N}$ assigns to each state a \emph{priority};
if the least priority value occurring infinitely often during the computation is 
even, the input is accepted.
Both $\smso$ and $\muML$ are characterised by classes of parity automata: what 
changes is just the one-step logic, which is, respectively, first-order logic 
with ($\ofoe$) and without ($\ofo$) equality. 
\begin{eqnarray}
\smso & \equiv & \Aut(\ofoe)
 \qquad \text{ (over the class of all trees)}, \label{eq:AutCharMSO}
\\ \muML & \equiv & \Aut(\ofo)
  \qquad \text{ (over the class of all LTSs)}. \label{eq:AutCharMuML}
\end{eqnarray}
This kind of automata-theoretic characterisation, which we believe is of independent interest, also underpins our two correspondence results. As the second main contribution of this paper, we
introduce new classes of parity automata that exactly capture the expressive
power of the second-order languages $\wmso$ and $\nmso$ (over tree models), 
and of the modal languages $\mucML$ and $\mudML$ (over arbitrary models).

Let us start from the simpler case, that is $\nmso$ and $\mudML$.
As mentioned above, the leading intuition for these logics is that they are
constrained in what can be expressed about the \emph{vertical} dimension of 
models. 
In automata-theoretic terms, we translate this constraint into the requirement 
that runs of an automaton can see at most one parity infinitely often: this 
yields the class of so-called \emph{weak} parity 
automata \cite{MullerSaoudiSchupp92}, which we write $\AutW(\oslang)$ for a
given one-step logic $\oslang$. \footnote{%
    Interestingly, \cite{MullerSaoudiSchupp92} introduces the class
    $\AutW(\ofoe)$ in order to shows that it characterises $\wmso$ on binary 
    trees, whence the name of \emph{weak} automata. 
    As discussed above, this correspondence is an ``optical illusion'', due to
    the restricted class of models that are considered, on which $\nmso = 
    \wmso$.
    } 
We shall show:
\begin{theorem}
\begin{eqnarray}
\nmso & \equiv & \AutW(\ofoe)
  \qquad \text{ (over the class of all trees)}, \label{eq:AutCharNmso}
\\ \mudML & \equiv & \AutW(\ofo)\label{eq:AutCharMudML}
  \qquad \text{ (over the class of all LTSs)}.
\end{eqnarray}
\end{theorem}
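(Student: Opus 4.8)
I would prove the two equivalences separately, each via a pair of effective, truth-preserving translations, and organise both proofs around one observation: under the standard translations between logics and parity automata, the defining restriction of $\mudML$ (alternation-freedom) and that of $\nmso$ (quantification over noetherian sets) both correspond to the \emph{structural} constraint that the priority map $\pmap$ of the automaton is constant on every strongly connected component (SCC) of its transition graph --- equivalently, that along every branch of every run at most one priority occurs infinitely often. This is exactly ``weak'' in $\AutW(\ofo)$ and $\AutW(\ofoe)$. The theorem then reduces to four implications: every $\mudML$-formula is equivalent to some $\bbA\in\AutW(\ofo)$ and conversely; and, over trees, every $\nmso$-sentence is equivalent to some $\bbA\in\AutW(\ofoe)$ and conversely. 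Throughout I would take as black boxes the characterisations $\muML\equiv\Aut(\ofo)$ and $\smso\equiv\Aut(\ofoe)$, together with the one-step normal form theorems for $\ofo$ and $\ofoe$ that underlie them.

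For the modal side, the automaton-to-formula direction is the familiar construction that reads off, from $\bbA=\tup{A,\tmap,\pmap,a_I}$, a system of fixpoint equations indexed by $A$ whose right-hand sides are modal formulas obtained by rewriting the $\ofo$-transition formulas over the parameters $A$; the alternation pattern of the $\mu$'s and $\nu$'s in the solution mirrors the stratification induced by $\pmap$, so when $\pmap$ is constant on SCCs no least-fixpoint variable and greatest-fixpoint variable lie on a common dependency cycle, and the solution is (equivalent to) an alternation-free formula, i.e.\ a formula of $\mudML$. For the converse I would feed a $\mudML$-formula into the standard translation of $\muML$ into parity automata with $\ofo$ one-step logic (via a disjunctive/$\nabla$-normal form, which also takes care of guardedness) and check that alternation-freedom of the input forces the induced priority assignment to be constant on each SCC of the resulting automaton: since no $\mu$-variable of an alternation-free formula feeds back through a $\nu$-binding or vice versa, no cycle of the automaton mixes an even and an odd priority. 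Thus the closure properties of $\AutW(\ofo)$ we need just drop out of these two translations, making this half comparatively light.

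For the second-order side, working over trees, the automaton-to-formula direction asks for an $\nmso$-sentence expressing ``$\bbA$ has an accepting run''. A run is coded by an $A$-labelling of the tree; local correctness at a node relative to its children is a first-order property \emph{with equality}, hence available (it uses no set quantifiers). The weak acceptance condition for the guessed run --- every branch stabilises in an even stratum --- I would reduce, through the stratification, to the statement that the region of the run lying in odd strata contains no infinite ascending chain, i.e.\ a well-foundedness condition; the task is then to phrase this, and the run-guessing itself, using only noetherian second-order quantifiers, which I expect to be possible after first normalising $\bbA$ so that its non-transient behaviour is tame enough to be tracked without genuine greatest-fixpoint power. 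The formula-to-automaton direction I would prove by induction on the $\nmso$-sentence, maintaining an equivalent weak $\ofoe$-automaton: atomic formulas and element (first-order) quantifiers act at the one-step level and leave the SCC/priority structure undisturbed; Boolean combinations use closure of $\AutW(\ofoe)$ under union, intersection and complementation (complementation merely dualises priorities and strata, so stays weak); and the one delicate case is noetherian second-order quantification $\exists p.\phi$.

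The step I expect to be the principal obstacle --- for this last case, and with it for the whole second-order equivalence --- is closure of $\AutW(\ofoe)$ under \emph{noetherian} projection. Plain projection carries a weak automaton to a B\"uchi one, so the construction must genuinely exploit that $p$ ranges over noetherian valuations only: such a $p$ lives inside a well-founded subtree, so the part of the automaton that certifies ``$p$ holds here'' descends only along a well-founded region and behaves like a least fixpoint, contributing a single odd stratum that can be laid at the foot of the stratification without creating alternation. Turning this into a theorem amounts to a simulation/nondeterminisation result for weak $\ofoe$-automata that threads the noetherian constraint through the powerset construction while keeping priorities constant on SCCs --- the direct analogue of the Simulation Theorem behind $\smso\equiv\Aut(\ofoe)$, but with the extra weak/noetherian bookkeeping --- and this, together with the companion subtlety of certifying that the run-guessing in the automaton-to-formula direction really does fit inside the noetherian fragment, is where I expect the bulk of the technical effort to lie.
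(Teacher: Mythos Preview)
Your treatment of the modal equivalence $\mudML \equiv \AutW(\ofo)$ is essentially the paper's: the standard automaton $\leftrightarrow$ formula translations, checked to restrict correctly to the weak/alternation-free fragments. Likewise, for the formula-to-automaton direction of $\nmso \leq \AutW(\ofoe)$ you have identified the right structure: induction on $\nmso$-formulas, with the hard case being closure of $\AutW(\ofoe)$ under noetherian projection, handled by a two-sorted simulation construction that keeps the automaton weak by making the ``non-deterministic'' macro-state region carry odd priority and hence be visited only along a well-founded initial segment. This matches the paper's noetherian construct $\bbA^{\noet}$ in Section~\ref{sec:autnmso}.

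The gap is in the other direction, $\AutW(\ofoe) \leq \nmso$. Your plan is to encode an accepting run by guessing an $A$-labelling of the tree and then express local correctness and the weak acceptance condition. But the full $A$-labelling is in general \emph{not} noetherian: even-priority states may label an infinite path, so the corresponding set is not contained in any well-founded subtree and cannot be introduced by an $\nmso$-quantifier. You see this and hope that ``normalising $\bbA$ so that its non-transient behaviour is tame enough'' will let the run-guessing fit inside noetherian quantifiers, but you give no mechanism for this, and it is not clear one exists along those lines: the even strata genuinely carry greatest-fixpoint content that cannot be captured by a single noetherian witness.

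The paper avoids this obstacle entirely by \emph{not} encoding runs. Instead it passes through the intermediate fixpoint language $\mu_{D}\ofoe$: first $\AutW(\ofoe) \equiv \mu_{D}\ofoe$ by the general automaton/formula correspondence (Theorems~\ref{t:autofor} and~\ref{t:fortoaut}), and then $\mu_{D}\ofoe \leq \nmso$ via a compositional translation (Theorem~\ref{t:mfl2mso}). The crucial lemma here is Proposition~\ref{p:keyfix}(1): for $\mu p.\phi \in \mu_{D}\ofoe$ and any LTS $\bbS$, one has $r \in \ext{\mu p.\phi}^{\bbS}$ iff there is a \emph{noetherian} set $X$ with $r \in \LFP.(\phi^{\bbS}_{p})_{\rst{X}}$. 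This lets one translate $\mu p.\phi$ as ``there exists a noetherian $q$ such that $v$ lies in every $q$-restricted prefixpoint of $\phi$'', which needs only noetherian existential and universal set quantifiers. The $\nu$-operators are handled dually. The point is that each individual fixpoint is localised to a noetherian witness, rather than trying to guess the whole run at once; this decomposition is what your direct approach lacks.
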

It is worth to zoom in on our main point of departure from Janin \& Walukiewicz' 
proofs of \eqref{eq:AutCharMSO}-\eqref{eq:AutCharMuML}. 
In the characterisation \eqref{eq:AutCharMSO}, due to \cite{Walukiewicz96}, 
a key step is to show that each automaton in $\Aut(\ofoe)$ can be simulated by
an equivalent \emph{non-deterministic} automaton of the same class. 
This is instrumental in the projection construction, allowing to build an
automaton equivalent to $\exists p.\phi \in \MSO$ starting from an automaton 
for $\phi$. 
Our counterpart \eqref{eq:AutCharNmso} is also based on a simulation theorem. 
However, we cannot proceed in the same manner, as the class $\AutW(\ofoe)$,
unlike $\Aut(\ofoe)$, is \emph{not} closed under non-deterministic simulation.
Thus we devise a different construction, which starting from a weak automaton 
$\bbA$ creates an equivalent automaton $\bbA'$ which acts non-deterministically 
only on a \emph{well-founded} portion of each accepted tree.
It turns out that the class $\AutW(\ofoe)$ is closed under this variation of 
the simulation theorem; moreover, the property of $\bbA'$ is precisely what is
needed to make a projection construction that mirrors $\nmso$-quantification.
\medskip

We now consider the automata-theoretic characterisation of $\wmso$ and $\mucML$.
Whereas in \eqref{eq:AutCharNmso}-\eqref{eq:AutCharMudML} the focus was on the 
vertical dimension of a given model, the constraint that we now need to 
translate into automata-theoretic terms concerns both \emph{vertical} and 
\emph{horizontal} dimension. 
Our revision of \eqref{eq:AutCharMSO}-\eqref{eq:AutCharMuML} thus moves on two
different axes. 
The constraint on the vertical dimension is handled analogously to the cases 
\eqref{eq:AutCharNmso}-\eqref{eq:AutCharMudML}, by switching from standard to
\emph{weak} parity automata. 
The constraint on the horizontal dimension requires more work. 
The first problem lies in finding the right one-step logic, which should be able
to express cardinality properties as $\wmso$ is able to do. 
An obvious candidate would be weak monadic second-order logic itself, or more
precisely, its variant $\owmso$ over the signature of unary predicates 
(corresponding to the automata states).
A very helpful observation from~\cite{vaananen77} is that we can actually work
with an equivalent formalism which is better tailored to our aims.
Indeed, $\owmso \equiv \ofoei$, where $\ofoei$ is the extension of $\ofoe$ with
the generalised quantifier $\qu$, with $\qu x. \phi$ stating the existence of 
\emph{infinitely} many objects satisfying $\phi$. 

At this stage, our candidate automata class for $\wmso$ could be $\AutW(\ofoei)$. However, this fails because $\ofoei$ bears too much expressive power: since it extends $\ofoe$,
we would find that, over tree models, $\AutW(\ofoei)$ extends
$\AutW(\ofoe)$, whereas we already saw that $\AutW(\ofoe) \equiv
\nmso$ is incomparable to $\wmso$.
It is here that we crucially involve the notion of \emph{continuity}. 
For a class $\AutW(\oslang)$ of weak parity automata, we call 
\emph{continuous-weak} parity automata, forming a class $\AutWC(\oslang)$, those
satisfying the following additional constraint:
\begin{itemize}
\item 
for every state $a$ with even priority $\pmap(a)$, every one-step formula $\phi
\in \oslang(A)$ defining the transitions from $a$ has to be continuous in all
states $a'$ lying in a cycle with $a$;
dually, if $\pmap(a)$ is odd, every such $\phi$ has to be $a'$-cocontinuous.\footnote{%
   It is important to stress that, even though continuity is a semantic 
   condition, we have a \emph{syntactic} characterisation of $\ofoei$-formulas
   satisfying it (see \cite{carr:mode18}), meaning that 
   $\AutWC(\ofoei)$ is definable independently of the structures that takes as
   input.
   }
\end{itemize}
We can now formulate our characterisation result as follows.
\begin{theorem}
\begin{eqnarray}
\wmso & \equiv & \AutWC(\ofoei)
 \qquad  \text{ (over the class of trees)}, \label{eq:AutCharWmso}
\\ \mucML & \equiv & \AutWC(\ofo)\label{eq:AutCharMucML}
  \qquad \text{ (over the class of all LTSs)}.
\end{eqnarray}
\end{theorem}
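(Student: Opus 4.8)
The plan is to prove both equivalences by the automata-theoretic method, running in parallel with the characterisations \eqref{eq:AutCharNmso}--\eqref{eq:AutCharMudML} of $\nmso$ and $\mudML$; the genuinely new ingredient is the \emph{continuity} layer placed on top of \emph{weakness}. This amounts to four translations. For the modal equivalence \eqref{eq:AutCharMucML} I would give, exactly as in the $\mudML \equiv \AutW(\ofo)$ case, two direct structural translations. From a $\mucML$-formula I would first pass to a normal form, exploiting the \emph{syntactic} characterisation of $p$-continuity as the fragment $\cont{\muML}{p}$, and then read off a modal parity automaton: alternation-freedom makes it weak, and the continuity (resp.\ cocontinuity) of the $\mu$- (resp.\ $\nu$-)bound subformulas is exactly what places it in $\AutWC(\ofo)$. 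Conversely, from an automaton in $\AutWC(\ofo)$ I would introduce one fixpoint variable per state, rewrite each transition as an equivalent modal one-step formula, and solve the resulting system of equations; weakness makes the variable-dependency order alternation-free, and the continuous-weak conditions make the result lie in $\mucML$, using closure properties of $\cont{\muML}{p}$ under the substitutions involved.

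For the second-order equivalence \eqref{eq:AutCharWmso} I would argue by induction on $\wmso$-formulas, handling free variables by the usual alphabet-expansion device. The atomic cases are immediate. Closure of $\AutWC(\ofoei)$ under negation holds because dualising an automaton swaps continuity with cocontinuity and shifts even priorities to odd and back, so the continuous-weak constraint is self-dual, and $\ofoei \equiv \owmso$ is closed under duals (the dual of $\qu$ being $\dqu$); closure under disjunction is routine; and the first-order existential quantifier is a singleton instance of the construction below. The crux is the \emph{weak} second-order quantifier $\exists p.\xi$, where $p$ ranges over \emph{finite} sets, for which I need a projection construction producing an automaton that accepts a tree $T$ precisely when $(T,X) \models \xi$ for some finite $X$. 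As in the $\nmso$ case, $\AutWC(\ofoei)$ is not closed under full non-deterministic simulation, so Walukiewicz's route is unavailable; instead I would prove a tailored simulation theorem that makes the automaton for $\xi$ non-deterministic on the portion of a tree relevant to the guessed set, and then project: the new automaton guesses a finite $X$ node by node on that portion, using the infinity quantifier of the one-step logic to keep the marked part finitely branching and the weak acceptance condition to keep it well-founded, so that by K\"onig's lemma $X$ is forced to be finite; finally one checks that continuous-weakness is preserved throughout. Closure of the one-step logic under the required second-order projection is unproblematic, since prefixing an $\owmso$-formula with a finite second-order quantifier stays within $\owmso \equiv \ofoei$; this is precisely why $\ofoei$ is the correct one-step logic.

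For the remaining direction, $\AutWC(\ofoei) \to \wmso$, I would express acceptance of a continuous-weak automaton by recursion on its layer structure, building up from the bottom layer, and associating with each state $a$ a $\wmso$-formula holding at a node $s$ iff \eloise wins the acceptance game from $(s,a)$. Since the automaton is weak, each such statement reduces --- modulo the formulas already obtained for lower layers --- to a local condition together with a reachability/safety sub-game played inside a single layer; the local conditions are $\wmso$-expressible because the one-step logic is $\owmso$; and the layer-internal game needs only quantification over finite sets because, the automaton being \emph{continuous}-weak, its transitions inside a layer are continuous (resp.\ cocontinuous) in the co-layer states, so the game is played on a (co-)finitely-branching structure and K\"onig's lemma rewrites the relevant fixpoint as a countable disjunction of finite-path conditions. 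This is where continuity does its essential work: without it the layer-internal game would require quantifying over infinite runs and take us out of $\wmso$.

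I expect the main obstacle to be the projection step for $\exists p.\xi$, and within it the tailored simulation theorem: one must control the vertical dimension (as for weak automata in general) and simultaneously the horizontal dimension, confining the non-deterministic behaviour so that the marked region stays finitely branching, and then verify that the continuous-weak conditions survive both the simulation and the subsequent projection. Finally, I would note that \eqref{eq:AutCharMucML} can alternatively be recovered from \eqref{eq:AutCharWmso} through a one-step translation of bisimulation-invariant $\ofoei$-formulas into $\ofo$, which together with \eqref{eq:JW} also yields the bridge to Theorem~\ref{t:11}.
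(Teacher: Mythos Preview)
Your proposal is correct and follows the paper's approach closely for the direction $\wmso \to \AutWC(\ofoei)$ (induction on formulas, with the tailored simulation theorem for finitary projection being the crux) and for both directions of \eqref{eq:AutCharMucML}.

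The one place where your route differs from the paper's is the direction $\AutWC(\ofoei) \to \wmso$. You propose to express acceptance directly by recursion on the layer structure of the automaton, invoking K\"onig's lemma to argue that the layer-internal game can be captured with finite-set quantification. The paper instead factors this step through an intermediate fixpoint language $\mu_{C}\ofoei$: first translate the automaton into a $\mu_{C}\ofoei$-formula by the standard system-of-equations construction (Theorem~\ref{t:autofor}), and then translate $\mu_{C}\ofoei$ into $\wmso$. The second step cannot use the na\"{\i}ve clause $(\mu p.\phi)^{*} = \forall p\,(\phi^{*} \to p)(v)$, since in $\wmso$ this would only quantify over finite prefixpoints; instead the paper isolates the key lemma that for any \emph{continuous} functional $F$ on $\pow(S)$ and any $s$, one has $s \in \LFP.F$ iff $s \in \LFP.F{\upharpoonright}_{X}$ for some \emph{finite} $X \subseteq S$ (Theorem~\ref{t:fixcont}), and uses this to write down a correct $\wmso$-translation of $\mu p.\phi$. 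Your K\"onig's-lemma intuition is morally the same idea, but the paper's formulation is cleaner, applies to arbitrary continuous operators, and yields a uniform compositional translation rather than an ad~hoc automaton-level argument. The intermediate language also pays off organisationally: the same framework handles the $\nmso$ case by replacing ``finite'' with ``noetherian'' and proving the analogous restriction lemma for noetherian functionals. One minor slip: the continuity constraint is in the \emph{same}-cluster states, not ``co-layer'' states.
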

Thus automata for $\wmso$ deviate from $\smso$-automata $\Aut(\ofoe)$ on two
different levels: at the global level of the automaton run, because of the 
weakness and continuity constraint, and at the level of the one-step logic 
defining a single transition step. 
Another interesting point stems from pairing
\eqref{eq:AutCharWmso}-\eqref{eq:AutCharMucML} with the expressive completeness
result \eqref{eq:mucML=wmso}: although automata for $\wmso$ are based on a more
powerful one-step logic ($\ofoei$) than those for $\mucML$ ($\ofo$), modulo 
bisimilarity they characterise the same expressiveness.
This connects back to our previous observation, that the ability of $\wmso$ to 
express cardinality properties on the horizontal dimension vanishes in a 
bisimulation-invariant context.

\subsection{Outline}

It is useful to conclude this introduction with a roadmap of how the various results are achieved. In a nutshell, the two expressive completeness theorems \eqref{eq:mucML=wmso} and \eqref{eq:afmc=nmso} will be based respectively on the following two chains of equivalences:
\begin{eqnarray}
\AFMC \equiv \mu_{D}\ofo \equiv \AutW(\ofo) \equiv \AutW(\ofoe)/{\bis} \equiv \nmso/{\bis}  \ \text{(over  LTSs)}.\label{eq:chain-afmc=nmso}
\\
	\mucML \equiv \mu_{C}\ofo \equiv \AutWC(\ofo) \equiv \AutWC(\ofoei)/{\bis} \equiv \wmso/{\bis}  \ \text{  (over LTSs)}. \label{eq:chain-mucML=wmso}
\end{eqnarray}

After giving a precise definition of the necessary preliminaries in Section \ref{sec:prel}, we proceed as follows. 
First, Section \ref{sec:parityaut} introduces parity automata parametrised over 
a one-step language $\oslang$, both in the standard ($\Aut(\oslang)$), weak 
($\AutW(\oslang)$) and continuous-weak ($\AutWC(\oslang)$) form. 
With Theorems \ref{t:autofor} and \ref{t:fortoaut}, we show that
\begin{equation}\label{sec:outlineFix=Aut}
\mu_{D}\oslang\equiv \AutW(\oslang)  \qquad \qquad \mu_{C}\oslang \equiv \AutWC(\oslang)    \qquad\text{ (over LTSs)}
\end{equation}
where $\mu_{D}\oslang$ and $\mu_{C}\oslang$ are extensions of $\oslang$ with fixpoint operators subject to a ``noetherianess'' and a ``continuity'' constraint respectively. Instantiating \eqref{sec:outlineFix=Aut} yield the second equivalence both in \eqref{eq:chain-afmc=nmso} and \eqref{eq:chain-mucML=wmso}:
\begin{equation*}
	 \mu_{D}\ofo \equiv \AutW(\ofo) \qquad \qquad \mu_{C}\ofo \equiv \AutWC(\ofo)  \qquad \text{ (over LTSs)}.
\end{equation*}
Next, in Section \ref{sec:autwmso}, Theorem~\ref{t:wmsoauto}, we show how to construct from a $\wmso$-formula an equivalent automaton of the class $\AutWC(\ofoei)$. In Section \ref{sec:autnmso}, Theorem~\ref{t:nmsoauto}, we show the analogous characterisation for $\nmso$ and $\AutW(\ofoe)$. These two sections yield part of the last equivalence in \eqref{eq:chain-mucML=wmso} and in \eqref{eq:chain-afmc=nmso} respectively. 
\begin{equation} \label{eq:outlineForToAut}
 \AutW(\ofoe) \geq \nmso \qquad \qquad {\AutWC(\ofoei)} \geq \wmso \qquad \text{ (over trees)}.
 \end{equation}
Notice that, differently from all the other proof pieces, \eqref{eq:outlineForToAut} only holds on trees, because the projection construction for automata relies on the input LTSs being tree shaped. 

Section \ref{sec:fixpointToSO} yields the remaining bit of the automata characterisations. Theorem~\ref{t:mfl2mso} shows
\[ \mu_{D}\ofoe \leq \nmso \qquad \qquad \mu_{C}\ofoei \leq \wmso  \qquad \text{ (over LTSs)},\]
which, paired with \eqref{sec:outlineFix=Aut}, yields
\begin{equation*}%\label{eq:outlineAutToFor}
	 \AutW(\ofoe) \equiv \mu_{D}\ofoe \leq \nmso \quad \AutWC(\ofoei) \equiv \mu_{C}\ofoei \leq \wmso \ \text{ (over LTSs)}.
\end{equation*}
 Putting the last equation and \eqref{eq:outlineForToAut} together we have our automata characterisations
 \begin{equation*}
 \AutW(\ofoe) \equiv \nmso \qquad \qquad  {\AutWC(\ofoei)} \equiv \wmso \qquad \text{ (over trees)}.
 \end{equation*}
 which also yields the rightmost equivalence in \eqref{eq:chain-mucML=wmso} and in \eqref{eq:chain-afmc=nmso}, because any LTS is bisimilar to its tree unraveling.
  \begin{equation*}
 \AutW(\ofoe)/{\bis} \equiv \nmso/{\bis} \qquad \qquad {\AutWC(\ofoei)}/{\bis} \equiv \wmso/{\bis} \qquad \text{ (over LTSs)}.
 \end{equation*}
At last, Section \eqref{sec:expresso} is split into two parts. 
First, Theorem \ref{t:mlaut} extends the results in Section \ref{sec:parityaut} 
to complete the following chains of equivalences, yielding the first block 
in \eqref{eq:chain-afmc=nmso} and in \eqref{eq:chain-mucML=wmso}.
\[ 
\AFMC \equiv \mu_{D}\ofo \equiv \AutW(\ofo) \qquad \qquad 
\mucML \equiv \mu_{C}\ofo \equiv \AutWC(\ofo)  \qquad \text{ (over LTSs)}. 
\]
As a final step, Subsection \ref{ss:bisinv} fills the last gap 
in \eqref{eq:chain-afmc=nmso}-\eqref{eq:chain-mucML=wmso} by showing
\[ 
\AutW(\ofo) \equiv \AutW(\ofoe)/{\bis} \qquad \qquad 
\AutWC(\ofo) \equiv \AutWC(\ofoei)/{\bis} \qquad \text{ (over LTSs)}.
\]

\subsection{Conference versions and companion paper}
This journal article is based on two conference papers
\cite{DBLP:conf/lics/FacchiniVZ13,DBLP:conf/csl/CarreiroFVZ14},
which were based in their turn on a Master thesis \cite{Zanasi:Thesis:2012}
and a PhD dissertation \cite{carr:frag2015}.
Each of the two conference papers focussed on a single expressive completeness 
theorem between \eqref{eq:mucML=wmso} and \eqref{eq:afmc=nmso}: presenting both
results in a mostly uniform way has required an extensive overhaul, involving
the development of new pieces of theory, as in particular the entirety of 
Sections~\ref{sec:parity-to-mc}, \ref{sec:mc-to-parity} 
and \ref{sec:fixpointToSO}. 
All missing proofs of the conference papers are included and the simulation
theorem for $\nmso$- and $\wmso$-automata is simplified, as it is now based
on macro-states that are sets instead of relations. 
Moreover, we amended two technical issues with the characterisation 
$\AFMC \equiv \nmso /{\bis}$ presented in \cite{DBLP:conf/lics/FacchiniVZ13}.
First, the definition of noetherian subset in $\nmso$ has been made more 
precise, in order to prevent potential misunderstandings arising with the
formulation in \cite{DBLP:conf/lics/FacchiniVZ13}. 
Second, as stated in \cite{DBLP:conf/lics/FacchiniVZ13} the expressive 
completeness result was only valid on trees. 
In this version, we extend it to arbitrary LTSs, thanks to the new material 
in Section \ref{sec:fixpointToSO}. 

Finally, our approach depends on model-theoretic results on the three main
one-step logics featuring in this paper: $\ofo$, $\ofoe$ and $\ofoei$.
We believe these results to be of independent interest, and in order to save
some space here, we decided to restrict our discussion of the model theory of
these monadic predicate logics in this paper to a summary.
Full details can be found in the companion paper \cite{carr:mode18}.

\section{Preliminaries}
   \label{sec:prel}

We assume the reader to be familiar with the syntax and (game-theoretic)
semantics of the modal $\mu$-calculus and with the automata-theoretic 
perspective on this logic.
For background reading we refer to~\cite{ALG02,Ven08}; the purpose of this 
section is to fix some notation and terminology. 

\subsection{Transition systems and trees} 
\label{ssec:prelim_trees}

Throughout this article we fix a set $\Prop$ of elements that will be called
\emph{proposition letters} and denoted with small Latin letters $p, q, \ldots$ .
We will often focus on a finite subset $\pprop \subseteq_{\om} \Prop$, and
denote with $C$ the set $\pow (\pprop)$ of \emph{labels} on $\pprop$; it will
be convenient to think of $C$ as an \emph{alphabet}.
Given a binary relation $R \subseteq X \times Y$, for any element $x \in X$,
we indicate with $R[x]$ the set $\{ y \in Y \mid (x,y) \in R \}$ while 
$R^+$ and $R^{*}$ are defined respectively as the transitive closure of~$R$ and
the reflexive and transitive closure of~$R$. 
The set $\Ran(R)$ is defined as $\bigcup_{x\in X}R[x]$.

A \emph{$\pprop$-labeled transition system} (LTS) is a tuple $\bbS = 
\tup{T,R,\tscolors,s_I}$ where $T$ is the universe or domain of $\bbS$, 
$\tscolors:T\to\pow(\pprop)$ is a colouring (or marking),
$R\subseteq T^2$ is the accessibility relation and $s_I \in T$ is a 
distinguished node.
% We use $|\bbS|$ to denote the domain of $\bbS$.
We call $\tscolors(s)$ the colour, or type, of node $s \in T$.
Observe that the colouring %marking 
${\tscolors:T\to\pow(\pprop)}$ can be seen as a 
valuation $\tsval:\pprop\to\pow (T)$ given by $\tsval(p) \isdef \{s \in T \mid
p\in \tscolors(s)\}$. 
A \emph{$\pprop$-tree} is a $\pprop$-labeled LTS in which every node can
be reached from $s_I$, and every node except $s_I$ has a unique predecessor;
the distinguished node $s_I$ is called the \emph{root} of $\bbS$.
Each node $s \in T$ uniquely defines a subtree of $\bbS$ with carrier
$R^{*}[s]$ and root $s$. We denote this subtree by ${\bbS.s}$.

The \emph{tree unravelling} of an LTS $\bbS$ is given by $\unravel{\bbS} 
\isdef \tup{T_P,R_P,\tscolors',s_I}$ where $T_P$ is the set of finite paths in 
$\bbS$ stemming from $s_I$, $R_P(t,t')$ iff $t'$ is a one-step extension of $t$ 
and the colour of a path $t\in T_P$ is given by the colour of its last node in
$T$.
The \emph{$\omega$-unravelling} $\omegaunrav{\bbS}$ of $\bbS$ is an 
unravelling which has $\omega$-many copies of each node different from the root.

A \emph{$p$-variant} of a transition system $\bbS = \tup{T,R,\tscolors,s_I}$
is a $\pprop\cup\{p\}$-transition system $\tup{T,R,\tscolors',s_I}$
such that $\tscolors'(s)\setminus\{p\} = \tscolors(s) \setminus \{p \}$ for all 
$s \in T$.
Given a set $S \subseteq T$, we let $\bbS[p\mapsto S]$ denote the $p$-variant
where $p \in \tscolors'(s)$ iff $s \in S$.

Let $\phi \in \yvlang$ be a formula of some logic $\yvlang$,
we use $\Mod_{\yvlang}(\phi) = \{\bbS \mid \bbS \models \phi\}$ to denote 
the class of transition systems that make $\phi$ true.
The subscript $\yvlang$ will be omitted when $\yvlang$ is clear from context.
A class $\mathsf{C}$ of transition systems is said to be 
\emph{$\yvlang$-definable} if there is a formula $\phi \in \yvlang$ such that
$\Mod_{L}(\phi) = \mathsf{C}$.
We use the notation $\phi \equiv \psi$ to mean that $\Mod_{L}(\phi) = 
\Mod_{L}(\psi)$ and given two logics $\yvlang, \yvlang'$ we use $\yvlang \equiv 
\yvlang'$ when the $\yvlang$-definable and $\yvlang'$-definable classes of 
models coincide.

% \textcolor{red}{%
% \textit{Convention.}
% Throughout this paper, we will only consider transition systems $\bbS$
% in which $R[s]$ is non-empty for every node $s \in T$.
% \marginpar{\fbox{check!}}
% In particular this means that every tree we consider is \emph{leafless}.
% All our results, however, can easily be lifted to the general case.
% }

\subsection{Games}

We introduce some terminology and background on infinite games.
All the games that we consider involve two players called \emph{\'Eloise}
($\exists$) and \emph{Abelard} ($\forall$).
In some contexts we refer to a player $\Pi$ to specify a
a generic player in $\{\exists,\forall\}$.
Given a set $A$, by $A^*$ and $A^\omega$ we denote respectively the set of
words (finite sequences) and streams (or infinite words) over $A$.

A \emph{board game} $\mc{G}$ is a tuple $(G_{\exists},G_{\forall},E,\win)$,
where $G_{\exists}$ and $G_{\forall}$ are disjoint sets whose union
$G=G_{\exists}\cup G_{\forall}$ is called the \emph{board} of $\mc{G}$,
$E\subseteq G \times G$ is a binary relation encoding the \emph{admissible
moves}, and $\win \subseteq G^{\omega}$ is a \emph{winning condition}.
An \emph{initialized board game} $\mc{G}@u_I$ is a tuple
$(G_{\exists},G_{\forall},u_I, E,\win)$ where
%$(G_{\exists},G_{\forall},E,\win)$ is a board game and
$u_I \in G$ is the
\emph{initial position} of the game.
In a \emph{parity game}, the set $\win$ is  given by a \emph{parity function},
that is, a map $\pmap: G \to \omega$ of finite range, in the sense that a
sequence $(a_{i})i<\om$ belongs to $\win$ iff the maximal value $n$ that 
is reached as $n = \pmap(a_{i})$ for infinitely many $i$, is even.
% we say that $\mc{G}$ is a parity game and sometimes
% simply write $\mc{G}=(G_{\exists},G_{\forall},E,\pmap)$.

Given a board game $\mc{G}$, a \emph{match} of $\mc{G}$ is simply a path
through the graph $(G,E)$; that is, a sequence $\pi = (u_i)_{i< \alpha}$ of
elements of $G$, where $\alpha$ is either $\omega$ or a natural number,
and $(u_i,u_{i+1}) \in E$ for all $i$ with $i+1 < \alpha$.
A match of $\mc{G}@u_{I}$ is supposed to start at $u_{I}$.
Given a finite match $\pi = (u_i)_{i< k}$ for some $k<\omega$, we call
$\mathit{last}(\pi) \isdef u_{k-1}$ the \emph{last position} of the match; the
player $\Pi$ such that $\mathit{last}(\pi) \in G_{\Pi}$ is supposed to move
at this position, and if $E[\mathit{last}(\pi)] = \emptyset$, we say that
$\Pi$ \emph{got stuck} in $\pi$.
A match $\pi$ is called \emph{total} if it is either finite, with one of the
two players getting stuck, or infinite. Matches that are not total are called
\emph{partial}.
Any total match $\pi$ is \emph{won} by one of the players:
If $\pi$ is finite, then it is won by the opponent of the player who gets stuck.
Otherwise, if $\pi$ is infinite, the winner is $\exists$ if $\pi \in
\win$, and $\forall$ if $\pi \not\in \win$.

Given a board game $\mc{G}$ and a player $\Pi$, let $\pmatches{G}{\Pi}$ denote
the set of partial matches of $\mc{G}$ whose last position belongs to player
$\Pi$.
A \emph{strategy for $\Pi$} is a function $f:\pmatches{G}{\Pi}\to G$.
A match $\pi  = (u_i)_{i< \alpha}$ of $\mc{G}$ is
\emph{$f$-guided} if for each $i < \alpha$ such that $u_i \in G_{\Pi}$ we
have that $u_{i+1} = f(u_0,\dots,u_i)$.
Let $u \in G$ and a $f$ be a strategy for $\Pi$.
We say that $f$ is a \emph{surviving strategy} for $\Pi$ in $\mc{G}@u$ if
for each $f$-guided partial match $\pi$ of $\mc{G}@u$, if $\mathit{last}(\pi)$
is in $G_{\Pi}$ then $f(\pi)$ is legitimate, that is, $(\mathit{last}(\pi),
f(\pi)) \in E$.
We say that $f$ is a \emph{winning strategy} for $\Pi$ in $\mc{G}@u$ if,
additionally, $\Pi$ wins each $f$-guided total match of $\mc{G}@u$.
If $\Pi$ has a winning strategy for $\mc{G}@u$ then $u$ is called a 
\emph{winning position} for $\Pi$ in $\mc{G}$.
The set of positions of $\mc{G}$ that are winning for $\Pi$ is denoted by
$\win_{\Pi}(\mc{G})$.

A strategy $f$ is called \emph{positional} if $f(\pi) = f(\pi')$ for
each $\pi,\pi'\in \Dom(f)$ with $\mathit{last}(\pi) = 
\mathit{last}(\pi')$.
A board game $\mc{G}$ with board $G$ is \emph{determined} if $G = \win_{\exists}(\mc{G}) \cup \win_{\forall}(\mc{G})$, that is, each $u \in G$ is a winning position for one of the two players.
The next result states that parity games are positionally determined.

\begin{fact}[\cite{EmersonJ91,Mostowski91Games}]
\label{THM_posDet_ParityGames}
For each parity game $\mc{G}$, there are positional strategies $f_{\exists}$
and $f_{\forall}$ respectively for player $\exists$ and $\forall$, such that
for every position $u \in G$ there is a player $\Pi$ such that $f_{\Pi}$ is a
winning strategy for $\Pi$ in $\mc{G}@u$.
\end{fact}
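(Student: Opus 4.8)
The plan is to establish positional determinacy by the classical Zielonka-style induction on the number $d \df |\Ran(\pmap)|$ of distinct priorities occurring in $\mc{G}$, the workhorse being the \emph{attractor} construction. For a player $\Pi \in \{\exists,\forall\}$ and a set $X \subseteq G$, write $\mathrm{Attr}_\Pi(X)$ for the set of positions from which $\Pi$ has a \emph{positional} strategy forcing the match into $X$ within finitely many steps; this set is the value of a monotone operator computed by a (possibly transfinite) least-fixpoint iteration, and its complement $G \setminus \mathrm{Attr}_\Pi(X)$ is a \emph{trap} for $\Pi$ — the opponent can keep every match inside it — so it is the board of a well-defined subgame $\mc{G}\restriction(G\setminus\mathrm{Attr}_\Pi(X))$ with the same winning condition. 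First I would reduce to the case in which every position has a successor (pad each dead-end with a self-loop of suitable parity, so that the resulting infinite play is won by the right player), so that ``getting stuck'' never occurs and, as one checks from the trap property, every sub-board considered below is again total. The statement as phrased then follows from the partition $G = \win_\exists(\mc{G}) \uplus \win_\forall(\mc{G})$ together with positional winning strategies on each region, extended arbitrarily by legal moves outside it.

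For the base case $d = 1$, say the unique priority is even: then $\exists$ wins every infinite match, so $\forall$'s only hope is to strand $\exists$ at a dead-end, which is impossible after the reduction; hence $\win_\exists(\mc{G}) = G$ with any ``stay alive'' positional strategy, and the odd case is dual. For the inductive step, let $n$ be the largest priority; swapping the two players if necessary, assume $n$ is even, put $N \df \pmap^{-1}(n)$, $A \df \mathrm{Attr}_\exists(N)$, and pass to the subgame $\mc{G}_1 \df \mc{G}\restriction(G\setminus A)$, which omits priority $n$. By the induction hypothesis $G\setminus A$ splits as $\win_\exists(\mc{G}_1)\uplus\win_\forall(\mc{G}_1)$ with positional strategies on the two pieces.

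If $\win_\forall(\mc{G}_1)=\nada$, I claim $\exists$ wins all of $\mc{G}$: she plays her $\mc{G}_1$-strategy while in $G\setminus A$ and the attractor strategy toward $N$ while in $A$ (a positional combination). In a resulting match, either the play eventually remains in $G\setminus A$, and then its tail is an $\exists$-winning $\mc{G}_1$-match, or it re-enters $A$ — hence visits $N$, i.e.\ sees the even maximal priority $n$ — infinitely often; either way $\exists$ wins. If instead $\win_\forall(\mc{G}_1)\neq\nada$, put $B \df \mathrm{Attr}_\forall(\win_\forall(\mc{G}_1))$; using that $G\setminus A$ is an $\exists$-trap and that $\win_\forall(\mc{G}_1)$ is an $\exists$-trap inside $\mc{G}_1$, one checks that the positional strategy ``reach $\win_\forall(\mc{G}_1)$ by the attractor strategy, then switch to the $\mc{G}_1$-strategy'' is winning for $\forall$ on $B$. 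One then recurses on $\mc{G}' \df \mc{G}\restriction(G\setminus B)$, whose board is strictly smaller; since $G\setminus B$ is a $\forall$-trap, the positional winning regions of $\mc{G}'$ transplant back into $\mc{G}$, yielding $\win_\exists(\mc{G}) = \win_\exists(\mc{G}')$ and $\win_\forall(\mc{G}) = B \uplus \win_\forall(\mc{G}')$ together with the patched positional strategies.

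The hard part is making this inductive step fully rigorous, and there are two pressure points. First, because $G$ may be infinite, the attractor is a genuine transfinite least fixpoint and — more seriously — the ``recurse on $\mc{G}\restriction(G\setminus B)$'' step must be organised as a transfinite iteration $\nada = B_0 \subseteq B_1 \subseteq B_2 \subseteq \cdots$ of $\forall$-regions; termination is fine (an increasing chain of subsets of $G$ stabilises), but the limit stages and the identification of the final winning regions need care. Second, essentially every claim above rests on trap bookkeeping: one must check repeatedly that a positional strategy built inside a subgame is still legal and still winning once the play may use the extra moves of the ambient game, and that the local strategies glue into one global positional strategy. This is precisely the technical substance of \cite{EmersonJ91,Mostowski91Games}. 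If one is willing to use heavier machinery, an alternative is to obtain bare determinacy for free from Martin's Borel determinacy theorem (the parity set is Borel) and then run only the positional-extraction part of the induction, or to use the signature / progress-measure method, which produces a positional winning strategy for $\exists$ directly, with a dual measure accounting for $\forall$'s wins.
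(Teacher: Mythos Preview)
The paper does not prove this statement: it is recorded as a \emph{Fact} with citations to \cite{EmersonJ91,Mostowski91Games} and no argument is given, so there is no in-paper proof to compare your proposal against. Your sketch is the standard Zielonka-style induction on the number of priorities via attractors and traps, which is a correct and well-known route to positional determinacy; the caveats you flag (transfinite iteration on infinite boards, the trap bookkeeping when gluing local positional strategies) are exactly the points where care is needed, and your outline handles them appropriately.
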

In the sequel we will often assume, without notification, that strategies in
parity games are positional. 
Moreover, we think of a positional strategy $f_\Pi$ for player $\Pi$ as a 
function $f_\Pi:G_\Pi\to G$.

\subsection{The Modal $\mu$-Calculus and some of its fragments.}
\label{subsec:mu}

The language of the modal $\mu$-calculus ($\muML$) is given by the following 
grammar:
\begin{equation*}
    \phi\ \isbnf  q \mid \neg q \mid \phi \land \phi \mid
    \phi \lor \phi \mid  \Diamond \phi \mid \Box \phi \mid
    \mu p.\phi \mid \nu p.\phi
\end{equation*}
% \begin{equation*}
%     \phi\ \isbnf  q \mid \phi \land \phi \mid \lnot\phi
%     \mid  \Diamond \phi \mid
%     \mu p.\phi
% \end{equation*}
%
where $p,q \in \Prop$ and $p$ is positive in $\phi$ (i.e., $p$ is not
negated).
We will freely use standard syntactic concepts and notations related to this
language, such as the sets $\FV(\phi)$ and $\BV(\phi)$ of \emph{free} and 
\emph{bound} variables of $\phi$, and the collection $\Sfor(\phi)$ of subformulas
of $\phi$.
We use the standard convention that no variable is both free and bound in a
formula and that every bound variable is fresh.
We let $\muML(\pprop)$ denote the collection of formulas $\phi$ with $\FV(\phi)
\sse \pprop$.
Sometimes we write $\psi \subf \phi$ to denote that $\psi$ is a subformula
of $\phi$.
For a bound variable $p$ occurring in some formula $\phi \in \muML$, we use
$\delta_p$ to denote the binding definition of $p$, that is, the unique formula
such that either $\mu p.\delta_p$ or $\nu p.\delta_p$ is a subformula of 
$\phi$.

We need some notation for the notion of \emph{substitution}.
Let $\phi$ and $\{ \psi_{z} \mid z \in Z \}$ be modal fixpoint formulas, where
$Z \cap \BV{(\phi)} = \nada$.
Then we let $\phi[\psi_{z}/z \mid z \in Z]$
denote the formula obtained from $\phi$ by simultaneously substituting each
formula $\psi_{z}$ for $z$ in $\phi$ (with the usual understanding that no 
free variable in any of the $\psi_{z}$ will get bound by doing so).
In case $Z$ is a singleton $z$, we will simply write $\phi[\psi_{z}/z]$, or 
$\phi[\psi]$ if $z$ is clear from context.
\smallskip

The semantics of this language is completely standard. 
Let $\bbS = \tup{T,R,\tscolors, s_I}$ be a transition system and $\phi \in 
\muML$. 
We inductively define the \emph{meaning} $\ext{\phi}^{\bbS}$ which includes
the following clauses for the least $(\mu)$ and greatest ($\nu$) fixpoint 
operators:
\begin{align*}
  \ext{\mu p.\psi}^{\bbS}  & \isdef   \bigcap \{S \subseteq T \mid S \supseteq \ext{\psi}^{\bbS[p\mapsto S]} \}  \\
  \ext{\nu p.\psi}^{\bbS}  & \phi   \bigcup \{S \subseteq T \mid S \subseteq \ext{\psi}^{\bbS[p\mapsto S]} \}
\end{align*}
We say that $\phi$ is \emph{true} in $\bbS$ (notation $\bbS \mmodels \phi$) iff $s_I \in \ext{\phi}^{\bbS}$.% As for the case of $\wmso$, $\ext{\phi}$ denotes the class of transition systems where $\phi$ is true.\fcwarning{Pointed or not?}

We will now describe the semantics defined above in game-theoretic terms. 
That is, we will define the evaluation game $\egame(\phi,\bbS)$ associated with
a formula $\phi \in \muML$ and a transition system $\bbS$. 
This game is played by two players (\eloise and \abelard) moving through 
positions $(\xi,s)$ where $\xi \subf \phi$ and $s \in T$.

In an arbitrary position $(\xi,s)$ it is useful to think of \eloise trying to
show that $\xi$ is true at $s$, and of \abelard of trying to convince her that 
$\xi$ is false at $s$. 
The rules of the evaluation game are given in  the following table.%Table~\ref{egame_mucalc}.
%\begin{table}
\begin{center}
\begin{tabular}{|l|c|l|c|}
\hline
Position & Player & Admissible moves
\\ \hline
   $(\psi_1 \vee \psi_2,s)$   & $\exists$ & $\{(\psi_1,s),(\psi_2,s) \}$ 
\\ $(\psi_1 \wedge \psi_2,s)$ & $\forall$ & $\{(\psi_1,s),(\psi_2,s) \}$ 
\\ $(\Diamond\phi,s)$         & $\exists$ & $\{(\phi,t)\ |\ t \in R[s] \}$ 
\\ $(\Box\phi,s)$             & $\forall$ & $\{(\phi,t)\ |\ t \in R[s] \}$ 
\\ $(\mu p.\phi,s)$           & $-$       & $\{(\phi,s) \}$ 
\\ $(\nu p.\phi,s)$           & $-$       & $\{(\phi,s) \}$ 
\\ $(p,s)$ with $p \in \BV(\phi)$ & $-$ & $\{(\delta_p,s) \}$ 
\\ $(\lnot q,s)$ with $q \in \FV(\phi)$ and $q \notin \tscolors(s)$ 
   & $\forall$ & $\emptyset$
\\ $(\lnot q,s)$ with $q \in \FV(\phi)$ and $q \in \tscolors(s)$ 
   & $\exists$ & $\emptyset$
\\ $(q,s)$ with $q \in \FV(\phi)$ and $q \in \tscolors(s)$ 
   & $\forall$ & $\emptyset$
\\ $(q,s)$ with $q \in \FV(\phi)$ and $q \notin \tscolors(s)$ 
   & $\exists$ & $\emptyset$
\\ \hline
\end{tabular}
\end{center}
%\caption{}
%\label{egame_mucalc}
%\end{table}
Every finite match of this game is lost by the player that got stuck. 
To give a winning condition for an infinite match let $p$ be, of the bound 
variables of $\phi$ that get unravelled infinitely often, the one such that 
$\delta_{p}$ the highest subformula in the syntactic tree of $\phi$. 
The winner of the match is \abelard if $p$ is a $\mu$-variable and \eloise if 
$p$ is a $\nu$-variable.
We say that $\phi$ is true in $\bbS$ iff \eloise has a winning strategy in 
$\egame(\phi,\bbS)$.

\begin{proposition}[Adequacy Theorem]\label{p:unfold=evalgame}
Let $\phi = \phi(p)$ be a formula of $\muML$ in which all occurrences of $p$ are positive, $\bbS$ be a LTS and $s \in T$. Then:
\begin{equation}
\label{eq:adeq3}
s \in \ext{\mu p.\phi}^{\bbS} %\iff s \in \Win_{\eloi}(\UG(\phi_{x}^{\bbS}))
\iff (\mu p.\phi,s) \in \win_{\eloise}(\egame(\mu p.\phi,\bbS)).
\end{equation}
\end{proposition}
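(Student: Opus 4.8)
The statement is the instance, for the formula $\mu p.\phi$, of the Adequacy Theorem for $\muML$ evaluation games. The plan is to prove, by induction on the size of the formula, the sharper two-sided claim: for every $\xi\in\muML$, every LTS $\bbS=\tup{T,R,\tscolors,s_I}$ and every $s\in T$, \textbf{(a)} if $s\in\ext{\xi}^\bbS$ then $\eloise$ has a winning strategy in $\egame(\xi,\bbS)@(\xi,s)$, and \textbf{(b)} if $s\notin\ext{\xi}^\bbS$ then $\abelard$ does. Since no position can be winning for both players, (a) and (b) together yield the equivalence \eqref{eq:adeq3}. The atomic, Boolean and modal cases follow by unwinding the game rules and applying the induction hypothesis to the immediate subformulas; the interesting case is $\xi=\mu p.\phi$ (with $\xi=\nu p.\phi$ dual), and I focus on it.

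For $\xi=\mu p.\phi$ I would use the ordinal approximants of the least fixpoint. Writing $F_\phi(S)\isdef\ext{\phi}^{\bbS[p\mapsto S]}$ (monotone, since $p$ is positive in $\phi$), set $\mu^0\isdef\nada$, $\mu^{\alpha+1}\isdef F_\phi(\mu^\alpha)$ and $\mu^\lambda\isdef\bigcup_{\beta<\lambda}\mu^\beta$ for limit $\lambda$; then the $\mu^\alpha$ form an increasing chain with $\ext{\mu p.\phi}^\bbS=\bigcup_\alpha\mu^\alpha=\mu^\theta$ for some $\theta$, and $\mu^\alpha=\{u\mid\|u\|<\alpha\}$, where for $u\in\ext{\mu p.\phi}^\bbS$ we let $\|u\|$ be the least $\beta$ with $u\in\mu^{\beta+1}$.

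For (a): from $(\mu p.\phi,s)$ the game moves to $(\phi,s)$; since $s\in F_\phi(\mu^{\|s\|})=\ext{\phi}^{\bbS[p\mapsto\mu^{\|s\|}]}$, the induction hypothesis (a) applied to the strictly smaller formula $\phi$ over the $p$-variant $\bbS[p\mapsto\mu^{\|s\|}]$ gives $\eloise$ a winning strategy there from $(\phi,s)$, which she copies in $\egame(\mu p.\phi,\bbS)$. Whenever the match reaches some $(p,t)$ --- a leaf won by $\eloise$ in the inner game, hence with $t\in\mu^{\|s\|}$, i.e.\ $\|t\|<\|s\|$ --- it continues to $(\phi,t)$ and $\eloise$ restarts with a winning strategy for $\phi$ over $\bbS[p\mapsto\mu^{\|t\|}]$. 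Strict descent of the ordinals $\|s\|>\|t\|>\cdots$ at successive $p$-positions forces $p$ to be unravelled only finitely often, so eventually the match runs inside a single inner evaluation game according to an $\eloise$-winning strategy and is won by $\eloise$, whether finite or infinite. For (b): if $s\notin\ext{\mu p.\phi}^\bbS=F_\phi(\ext{\mu p.\phi}^\bbS)$ then $s\notin\ext{\phi}^{\bbS[p\mapsto\ext{\mu p.\phi}^\bbS]}$, so $\abelard$ copies a winning strategy given by induction hypothesis (b) for $\phi$ over $\bbS[p\mapsto\ext{\mu p.\phi}^\bbS]$, restarting it --- still legitimately by (b), since every $t$ reached at a $p$-position lies outside $\ext{\mu p.\phi}^\bbS$ --- at each unravelling of $p$. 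There is now no rank to bound the unravellings, but if $p$ is unravelled infinitely often then, $p$ being the outermost fixpoint variable of $\mu p.\phi$, it is the highest variable unravelled infinitely often and $\abelard$ wins because $p$ is a $\mu$-variable; and if $p$ is unravelled only finitely often the match settles inside one inner game won by $\abelard$. The case $\xi=\nu p.\phi$ is entirely symmetric, with the roles of (a) and (b) --- that is, of the rank argument and of the parity-condition argument --- interchanged.

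The crux is the fixpoint case: setting up the approximants and the strategy obtained by \emph{grafting together} inner winning strategies across successive unravellings of the bound variable, checking that the associated rank strictly decreases (on $\eloise$'s side for $\mu$, on $\abelard$'s for $\nu$) so that the fixpoint variable cannot be unravelled infinitely often, and, dually, that an infinite match which \emph{does} unravel it is automatically won by the player the parity condition favours. Everything else is routine bookkeeping, and since this is a well-known fact a pointer to \cite{ALG02,Ven08} would serve equally well.
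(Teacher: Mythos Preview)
Your proof is correct and follows the standard approximant-and-grafting argument for adequacy of the evaluation game. Note, however, that the paper does \emph{not} give a proof of this proposition at all: it appears in the preliminaries (Section~\ref{subsec:mu}) as a background fact, with the reader referred to \cite{ALG02,Ven08} for the game-theoretic semantics of $\muML$; the later, more general Fact~\ref{f:adeqmu} for $\mu\oslang$ is likewise stated with the remark that the proof is ``completely routine''. So there is nothing to compare against --- your sketch is essentially what one finds in those references, and your closing remark that a pointer to \cite{ALG02,Ven08} would suffice is exactly the choice the paper makes.
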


\bigskip
Formulas of the modal $\mu$-calculus may be classified according to their
\emph{alternation depth}, which roughly is given as the maximal length of
a chain of nested alternating least and greatest fixpoint operators~\cite{Niwinski86}.
The \emph{alternation-free fragment} of the modal $\mu$-calculus~($\AFMC$) is 
usually defined as the collection of $\muML$-formulas without nesting of least
and greatest fixpoint operators. 
It can also be also given a more standard grammatical definition as follows.

\begin{definition}
Given a set $\qprop$ of propositional variables, we define the fragment 
$\noe{\mu\ML}{\qprop}$ of \muML-formulas that are (syntactically) 
\emph{noetherian} in $\qprop$, by the following grammar:
\begin{equation*}
   \phi \isbnf  q
   \mid \psi
   \mid \phi \lor \phi
   \mid \phi \land \phi
     \mid \Diamond \phi
       \mid \Box \phi
   \mid \mu p.\phi'
\end{equation*}
where $q \in \qprop$, $\psi$ is a $\qprop$-free $\muML$-formula, and 
$\phi' \in \noe{\mu\ML}{\qprop\cup\{p\}}$. 
The \emph{co-noetherian} fragment $\conoe{\mu\ML}{Q}$ is defined dually.
\end{definition}

The alternation-free $\mu$-calculus can be defined as the fragment of the full
language where we restrict the application of the least fixpoint operator $\mu
p$ to formulas that are noetherian in $p$ (and apply a dual condition to the 
greatest fixpoint operator).

\begin{definition}
The formulas of the \emph{alternation-free} $\mu$-calculus $\AFMC$ 
are defined by the following grammar:
\begin{equation*}
   \phi \isbnf  
      q \mid \neg q  
   \mid \phi\lor\phi \mid \phi\land\phi 
      \mid \Diamond \phi
       \mid \Box \phi
   \mid \mu p. \phi'    
   \mid \nu p. \phi'',
\end{equation*} 
where $p,q \in \Prop$, $\phi' \in \AFMC \cap \noe{\mu\ML}{p}$
and dually $\phi'' \in \AFMC \cap \conoe{\mu\ML}{p}$.
\end{definition}

It is then immediate to verify that the above definition indeed captures exactly
all formulas without alternation of least and greatest fixpoints.
One may prove that 
% \begin{fact}
a formula $\phi \in \muML$ belongs to the fragment $\AFMC$ iff for all 
subformulas $\mu p.\psi_1$ and $\nu q.\psi_2$ it holds that $p$ is not free in
$\psi_2$ and $q$ is not free in $\psi_1$.
% \end{fact}
%
Over arbitrary transition systems, this fragment is less expressive than the 
whole $\muML$~\cite{Park79}. 

In order to properly define the fragment $\mucML \subseteq \AFMC$ which is of
critical importance in this article, we are particularly interested in the 
\emph{continuous} fragment of the modal $\mu$-calculus. 
As observed in Section~\ref{sec:intro}, the abstract notion of continuity can 
be given a concrete interpretation in the context of $\mu$-calculus.
\begin{definition}
Let $\phi \in \muML$, and $q$ be a propositional variable. 
We say that \emph{$\phi$ is continuous in $q$} iff for every transition 
system $\bbS$ there exists some finite $S \subseteq_\omega \tsval(q)$ such that
$$
\bbS \mmodels \phi \quad\text{iff}\quad \bbS[q \mapsto S] \mmodels \phi.
$$
\end{definition}

We can give a syntactic characterisation of the fragment of $\muML$ that captures
this property. 

\begin{definition}
Given a set $\qprop$ of propositional variables, we define the fragment of \muML 
\emph{continuous} in $\qprop$, denoted by $\cont{\muML}{\qprop}$, by induction 
in the following way
\begin{equation*}
   \phi \isbnf  q
   \mid \psi
   \mid \phi \lor \phi
   \mid \phi \land \phi
   \mid \Diamond \phi
   \mid \mu p.\phi'
\end{equation*}
where $q,p \in \qprop$, $\psi$ is a $\qprop$-free $\muML$-formula and 
$\phi' \in \cont{\muML}{\qprop\cup\{p\}}$.

The  \emph{co-continuous} fragment $\cocont{\mu\ML}{Q}$ is defined dually. 
\end{definition}

\begin{proposition}[\cite{Fontaine08,FV12}]\label{prop:FVcont}
A $\muML$-formula is continuous in $q$ iff it is equivalent to a formula in the
fragment $\cont{\muML}{q}$.
\end{proposition}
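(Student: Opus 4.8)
The plan is to prove the two implications separately. The inclusion of the fragment into the continuous formulas is a routine induction, while the converse is the substantial half. For the easy direction, suppose $\phi \equiv \psi$ with $\psi \in \cont{\muML}{q}$; since continuity in $q$ is preserved under logical equivalence, it suffices to show that $\psi$ is continuous in $q$. I would do this by induction on the construction of formulas in $\cont{\muML}{\qprop}$, proving the slightly stronger statement that such a formula is \emph{jointly} continuous in $\qprop$ --- i.e.\ for every $\bbS$ there are finite sets $S_q \subseteq \tsval(q)$, one for each $q \in \qprop$, together witnessing truth --- as this is what makes the fixpoint clause go through. The atomic cases are immediate: for $q$ take the witness $\{s_I\}$, for a $\qprop$-free formula take $\nada$. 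For the clauses $\phi \lor \phi$, $\phi \land \phi$ and $\Diamond\phi$ one forms the union of the finitely many witnesses supplied by the immediate subformulas; this is sound because every formula of the fragment is positive, hence monotone, in each member of $\qprop$, so enlarging a witness cannot destroy truth. The only clause with real content is $\mu p.\phi'$ with $\phi' \in \cont{\muML}{\qprop\cup\{p\}}$: since $\phi'$ is in particular continuous in $p$, its least fixpoint is reached by the Kleene $\omega$-chain, and an inner induction on the stage at which $s_I$ enters this chain --- amalgamating at each step the finitely many $q$-witnesses produced by $\phi'$, again using monotonicity --- yields a single finite witness for $\mu p.\phi'$.

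For the converse I would argue automata-theoretically, using the characterisation $\muML \equiv \Aut(\ofo)$ recalled in the introduction. We may assume $\phi$ positive in $q$, and fix a modal parity automaton $\mathbb{A} = \tup{A,\Delta,\Omega,a_I}$ equivalent to it. The goal is to build an equivalent automaton $\mathbb{A}'$ in a \emph{$q$-continuous normal form}: there is an initial subautomaton $A_1 \ni a_I$ which is well-founded --- every infinite run confined to $A_1$ is won by $\forall$ --- and in which $q$ occurs in the transition formulas only positively and only in $\Diamond$-contexts, never under a $\Box$, while transitions from states in $A \setminus A_1$ neither mention $q$ nor lead back into $A_1$. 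Semantic continuity of $\phi$ in $q$ is precisely what makes this reshaping possible: the finite-witness property forces an accepting run over any $\bbS$ to exploit $q$ only along a finite, well-founded, $\Diamond$-generated frontier of the input and never to revisit a $q$-node afterwards, so $\Delta$ and $\Omega$ can be redesigned to hard-wire this discipline without changing the recognised language. Reading $\mathbb{A}'$ back as a $\muML$-formula by the standard state-by-state translation, and inspecting where $q$ can surface, then places the result inside $\cont{\muML}{q}$.

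The crux --- and the step I expect to be hardest --- is the normalisation $\mathbb{A} \rightsquigarrow \mathbb{A}'$: upgrading the model-by-model hypothesis ``for every $\bbS$ there is a finite $S$'' to a single uniform structural feature of one equivalent automaton, together with the attendant surgery on $\Delta$ and $\Omega$. This calls for a compactness-style argument ensuring that the finite $q$-witnesses can be chosen coherently across all inputs. Since the statement is classical, the article itself simply cites \cite{Fontaine08,FV12}, which follow essentially this route.
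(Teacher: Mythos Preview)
The paper does not give its own proof of this proposition: it is stated with a citation to \cite{Fontaine08,FV12} and nothing more. So there is no in-paper argument to compare against; you correctly anticipate this in your final sentence.

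Your sketch is sound. The easy direction is complete as stated: proving \emph{joint} continuity in $\qprop$ by induction on the grammar of $\cont{\muML}{\qprop}$ is exactly the right strengthening, and your treatment of the $\mu$-clause via the $\omega$-approximation chain (available because $\phi'$ is continuous in $p$) together with amalgamation of finitely many finite witnesses is correct. For the hard direction, your automata-theoretic strategy --- pass to an $\ofo$-automaton, then reshape it so that $q$ is confined to a well-founded, $\Diamond$-only initial segment --- is indeed the route taken in the cited works. You are right to flag the normalisation $\bbA \rightsquigarrow \bbA'$ as the crux: this is where \cite{Fontaine08,FV12} do the real work, and your description of it remains at the level of a specification rather than a construction. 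That is an honest assessment of where the difficulty lies, and appropriate for a result the present paper treats as background.
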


Finally, we define $\mucML$ to be the fragment of $\muML$ where the use of the
least fixed point operator is restricted to the continuous fragment. 

\begin{definition}
Formulas of the fragment $\mucML$ are given by:% the following induction:
\begin{equation*}
   \phi \isbnf  q \mid \lnot q
    \mid \phi \lor \phi
        \mid \phi \land \phi
    \mid \Diamond \phi
     \mid \Box \phi \mid
    \mu p.\phi' 
    \mid \nu p.\phi''
    \end{equation*}
where $p,q \in \Prop$,  $\phi' \in \cont{\muML}{p} \cap \mucML$, and dually 
$\phi'' \in \cocont{\muML}{p} \cap \mucML$.
\end{definition}

% \begin{proposition}
% The following hold for any formula $\phi \in \mucML$:
% \begin{enumerate}[(1)]
% \itemsep 0pt
% \item $\phi$ is an $\AFMC$-formula,
% \item 
% Every $\mu$-variable in $\phi$ is existential (i.e., is only in the scope of
% diamonds), and dually every $\nu$-variable in $\phi$ is universal (i.e., is 
% only in the scope of boxes).
% \end{enumerate}
% \end{proposition}
% \begin{proof}
% Both points are proved by an easy induction on the complexity of a formula. 
% For the first one,  it is enough to notice that if $\phi \in \cont{\muML}{q} 
% \cap \AFMC$, then $\mu q. \phi \in \AFMC$ by definition of $\cont{\muML}{q} $.
% \end{proof}

It is easy to verify that $\mucML \sse \mudML$.
Characteristic about $\mucML$ is that in a formula $\mu p. \phi \in \mucML$,
all occurrences of $p$ are \emph{existential} in the sense that they may be 
in the scope of a diamond but not of a box.
Furthermore, as an immediate consequence of Proposition \ref{prop:FVcont} we 
may make the following observation.

\begin{corollary}\label{cor:cont}
For every $\mucML$-formula $\mu p. \phi$, $\phi$ is continuous in $p$.
\end{corollary}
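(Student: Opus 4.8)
The plan is to observe that this is essentially a matter of unwinding the definition of $\mucML$ and invoking Proposition~\ref{prop:FVcont}. First I would recall that, by the grammar defining $\mucML$, a formula of the shape $\mu p.\phi$ can only belong to $\mucML$ if its body satisfies $\phi \in \cont{\muML}{p} \cap \mucML$; in particular $\phi$ lies in the syntactic continuous fragment $\cont{\muML}{p}$.

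Next I would apply the easy (``if'') direction of Proposition~\ref{prop:FVcont}: that proposition states that a $\muML$-formula is (semantically) continuous in $q$ \emph{iff} it is equivalent to some formula in $\cont{\muML}{q}$. Since $\phi$ is itself a member of $\cont{\muML}{p}$, it is trivially equivalent to a formula in $\cont{\muML}{p}$ (namely itself), hence continuous in $p$ in the sense of the semantic definition. This finishes the argument.

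There is no real obstacle here: the corollary is, as its placement in the text already announces, an immediate consequence of Proposition~\ref{prop:FVcont} together with the defining grammar of $\mucML$. The only point worth spelling out is the bookkeeping that the syntactic side-condition imposed on the $\mu$-binder in $\mucML$ is literally membership in $\cont{\muML}{p}$, so that Proposition~\ref{prop:FVcont} applies directly without any further normalisation of $\phi$.
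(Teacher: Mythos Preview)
Your proposal is correct and matches the paper's approach exactly: the paper states the corollary as an immediate consequence of Proposition~\ref{prop:FVcont}, and you have spelled out precisely that reasoning, namely that the defining grammar of $\mucML$ forces $\phi \in \cont{\muML}{p}$, whence Proposition~\ref{prop:FVcont} yields semantic continuity.
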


\subsection{Bisimulation}

Bisimulation is a notion of behavioral equivalence between processes.
For the case of  transition systems, it is formally defined as follows.

\begin{definition}
Let $\bbS = \tup{T, R, \tscolors, s_I}$ and $\bbS' = \tup{T', R', \tscolors', 
s'_I}$ be $\pprop$-labeled transition systems.
A \emph{bisimulation} is a relation $Z \subseteq T \times T'$ such that for all 
$(t,t') \in Z$ the following holds:
\begin{description}
  \itemsep 0 pt
  \item[(atom)] 
  $\tscolors(t) = \tscolors'(t')$;
  \item[(forth)] 
  for all $s \in R[t]$ there is $s'\in R'[t']$ such
  that $(s,s') \in Z$;
  \item[(back)] 
  for all $s'\in R'[t']$ there is $s \in R[t]$ such
  that $(s,s') \in Z$.
\end{description}
Two pointed transition systems $\bbS$ and $\bbS'$ are
\emph{bisimilar} (denoted $\bbS \bis \bbS'$) if there is a
bisimulation $Z \subseteq T \times T'$ containing $(s_I,s'_I)$.
\end{definition}

The following observation about tree unravellings is the key to understand
the importance of tree models in the setting of invariance modulo bismilarity
results.

\begin{fact}
    \label{prop:tree_unrav}
$\bbS$, $\unravel{\bbS}$ and $\bbS^{\om}$ are bisimilar, for every transition
system $\bbS$.
\end{fact}

A class $\mathsf{C}$ of transition systems is \emph{bisimulation closed} if 
$\bbS \bis \bbS'$ implies that $\bbS \in \mathsf{C}$ iff $\bbS'
\in \mathsf{C}$, for all $\bbS$ and $\bbS'$.
A formula $\phi \in \yvlang$ is \emph{bisimulation-invariant} if $\bbS \bis
\bbS'$ implies that $\bbS \mmodels \phi$ iff $\bbS'
\mmodels \phi$, for all $\bbS$ and $\bbS'$.
%An analogous definition can be given for $\wmso$.

\begin{fact}
Each $\muML$-definable class of transition systems is bisimulation closed.
\end{fact}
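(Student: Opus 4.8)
The plan is to prove the sharper statement that every formula $\phi \in \muML$ is bisimulation-invariant; bisimulation-closedness of $\muML$-definable classes then follows at once from the definitions. So I would fix $\pprop$-labelled transition systems $\bbS = \tup{T,R,\tscolors,s_I}$ and $\bbS' = \tup{T',R',\tscolors',s'_I}$ together with a bisimulation $Z \sse T \times T'$ containing $(s_I,s'_I)$, fix $\phi \in \muML(\pprop)$, and---exploiting the symmetry of the hypothesis---only show that $\bbS \mmodels \phi$ implies $\bbS' \mmodels \phi$. Unfolding the game-theoretic definition of truth, this amounts to transforming a winning strategy $f$ for \eloise in $\egame(\phi,\bbS)$ from $(\phi,s_I)$ into a winning strategy $f'$ for \eloise in $\egame(\phi,\bbS')$ from $(\phi,s'_I)$.

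The heart of the matter is a shadow-match construction. I would define $f'$ so that, in parallel with any partial match $\pi'$ of $\egame(\phi,\bbS')$ played according to $f'$, there is an $f$-guided ``shadow'' match $\pi$ of $\egame(\phi,\bbS)$ kept in lockstep with it, in the sense that whenever $\pi'$ is at a position $(\xi,u')$ then $\pi$ is at a position $(\xi,u)$ with the \emph{same} subformula $\xi \subf \phi$ and with $(u,u') \in Z$. At the boolean positions $(\psi_1 \lor \psi_2,\cdot)$ and $(\psi_1\land\psi_2,\cdot)$, at the fixpoint positions $(\mu p.\psi,\cdot)$ and $(\nu p.\psi,\cdot)$, and at variable positions $(p,\cdot)$ with $p$ bound, the state component does not change and both admissible moves (if any) carry the same subformula, so the invariant is preserved by copying the active player's choice verbatim---\eloise's choice at a disjunction being read off from $f(\pi)$. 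At a position $(\Diamond\psi,u')$, where \eloise moves, $f$ applied to the shadow position $(\Diamond\psi,u)$ selects some $(\psi,t)$ with $t \in \R{u}$; by $(u,u')\in Z$ and clause \textbf{(forth)} of bisimulation there is $t' \in \R{u'}$ with $(t,t') \in Z$, and \eloise plays $(\psi,t')$. Dually, at $(\Box\psi,u')$, where \abelard moves to some $(\psi,t')$ with $t' \in \R{u'}$, clause \textbf{(back)} together with $(u,u') \in Z$ yields $t \in \R{u}$ with $(t,t')\in Z$, which becomes \abelard's move in the shadow match, keeping it $f$-guided. A straightforward induction on the length of $\pi'$ shows that this prescription always produces a legitimate shadow move---for \abelard's moves this is immediate, for \eloise's it uses that $f$ is surviving---and hence that $f'$ is itself a surviving strategy.

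It then remains to verify that $f'$ is winning. Given any $f'$-guided total match $\pi'$ from $(\phi,s'_I)$, its shadow $\pi$ is a total match from $(\phi,s_I)$ traversing exactly the same sequence of subformulas of $\phi$, with $Z$-related state components throughout. If $\pi'$ is finite then a player gets stuck in it precisely when the same player gets stuck in $\pi$: at atomic positions $(q,u')$ and $(\lnot q,u')$ with $q \in \FV(\phi) \sse \pprop$ this is because clause \textbf{(atom)} gives $\tscolors(u) = \tscolors'(u')$, whence $q \in \tscolors(u)$ iff $q \in \tscolors'(u')$; and a modal position $(\Diamond\psi,u')$ or $(\Box\psi,u')$ is a dead end iff $\R{u'} = \nada$, which by \textbf{(forth)} and \textbf{(back)} holds iff $\R{u} = \nada$. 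If $\pi'$ is infinite, then $\pi$ is infinite and the two matches unravel the same bound variables infinitely often, so the topmost such binding definition $\delta_p$ in the syntax tree of $\phi$---and therefore the winner---coincides for $\pi$ and $\pi'$. Either way \eloise, who wins $\pi$ because $f$ is winning, also wins $\pi'$; hence $f'$ is winning and $\bbS' \mmodels \phi$. The only point that needs genuine care is the bookkeeping of the shadow match---making explicit that $f'$ must be defined on the history of $\pi$ rather than on the current position alone, and that the shadow stays $f$-guided and total---while everything else, notably the transfer of the parity winning condition, is immediate once the lockstep invariant is in hand. (A purely semantic alternative would induct on $\phi$, showing that $Z$-related points in models carrying $Z$-coherent valuations satisfy the same formulas, with the $\mu$- and $\nu$-cases handled through ordinal approximants; the game-theoretic route is preferable precisely because it avoids that transfinite induction.)
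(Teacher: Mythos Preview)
Your proof is correct and is the standard game-theoretic argument for bisimulation invariance of $\muML$. The paper, however, does not prove this statement at all: it is presented as a \emph{Fact} without proof, treated as background knowledge. So there is nothing to compare on the level of approach; your argument supplies exactly the kind of proof one would expect to see referenced here, and the shadow-match construction you use mirrors the style of argument the paper itself employs elsewhere (e.g.\ in the proofs of the simulation theorems and of Theorem~\ref{t:bi-aut}).
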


\subsection{Monadic second-order logics}
\label{sec:prel-so}

Three variants of monadic second-order logic feature in our work:
\emph{standard}, \emph{weak}, and \emph{noetherian} monadic second-order 
logic, and for each of these three variants, we consider a one-sorted and 
a two-sorted version.
As we will see later, the one-sorted version fits better in the 
automata-theoretic framework, whereas it is more convenient to use the 
two-sorted approach when translating $\mu$-calculi into second order languages.
In both the one-sorted and the two-sorted version, the syntax of the three 
languages is the same, the difference lying in the semantics, more specificaly,
in the type of subsets over which the second-order quantifiers range.
In the case of standard and weak monadic second-order logic, these quantifiers 
range over all, respectively, all finite subsets of the model.
In the case of \nmso we need the concept of a \emph{noetherian} subset of an LTS.

\begin{definition}
\label{d:bundle1}
Let $\bbS = \tup{T,R,\tscolors, s_I}$ be an LTS, and let $B$ be a non-empty set 
of finite paths that all share the same starting point $s$; we call $B$ a 
\emph{bundle rooted at} $s$, or simply an $s$-\emph{bundle}, if $B$ does not
contain an infinite ascending chain $\pi_{0} \sqsubset \pi_{1} \sqsubset 
\cdots$, where $\sqsubset$ denotes the (strict) initial-segment relation on 
paths.
A \emph{bundle} is simply an $s$-bundle for some $s \in T$.

A subset $X$ of $T$ is called \emph{noetherian} if there is a bundle $B$ such
that each $t \in X$ lies on some path in $B$.
\end{definition}

Notice that in a tree model, the noetherian subsets coincide with those that
are included in a well-founded subtree.

\subsubsection*{One-sorted monadic second-order logics}
\begin{definition}\label{def:mso}
The formulas of the \emph{(one-sorted) monadic second-order language} are
defined by the following grammar:
\begin{eqnarray*}\label{EQ_mso}
  \phi \isbnf  \here{p} \mid p \inc q \mid R(p,q) \mid \lnot\phi 
     \mid \phi\lor\phi \mid \exists p.\phi,
\end{eqnarray*}
where $p$ and $q$ are letters from $\Prop$.
We  adopt the standard convention that no proposition letter is both free and
bound in $\phi$.
\end{definition}

As mentioned, the three logics $\smso$, $\wmso$ and $\nmso$ are distinguished by
their semantics. 
Let  $\bbS = \tup{T,R,\tscolors, s_I}$ be an LTS.
The interpretation of the atomic formulas is fixed:
\begin{align*}
\bbS \models \here{p} & \quad\text{ iff }\quad  \tsval(p) = \{s_I\} \\
\bbS \models p \inc q & \quad\text{ iff }\quad  \tsval(p) \subseteq \tsval(q) \\
\bbS \models R(p,q) & \quad\text{ iff }\quad  \text{for every $s\in \tsval(p)$ there exists $t\in \tsval(q)$ such that $sRt$} 
\end{align*}
Furthermore, the interpretation of the boolean connectives is standard.
The interpretation of the existential quantifier is where the logics diverge:

\begin{align*}
\bbS \models\ \exists p. \phi  & \quad\text{ iff }\quad  \bbS[p \mapsto X] \models \phi \,
\left.\begin{cases}
   \text{for some }                   & (\smso)
\\ \text{for some \emph{finite} }     & (\wmso) 
\\ \text{for some \emph{noetherian} } & (\nmso)
\end{cases}\right\}\,
 X \subseteq T.
\end{align*}

Observe that for a given monadic second-order formula $\phi$, the classes 
$\Mod_{\smso}(\phi)$, $\Mod_{\wmso}(\phi)$ and $\Mod_{\nmso}(\phi)$ will 
generally be different.

\subsubsection*{Two-sorted monadic second-order logics}
The reader may have expected to see the following more standard language for
second-order logic.
\begin{definition}
\label{def:2mso}
Given a set $\fovar$ of individual (first-order) variables, we define the 
formulas of the \emph{two-sorted monadic second-order language} by the following
grammar:
\[
\phi \isbnf  p(x)
%\mid X(y)
\mid R(x,y)
\mid x \foeq y
\mid \neg \phi
\mid \phi \lor \phi
\mid \exists x.\phi
\mid \exists p.\phi
\]
where $p \in \Prop$, $x,y \in \fovar$ and $\foeq$ is the symbol for equality.   
\end{definition}

Formulas are interpreted over an LTS $\bbS = \tup{T,R,\tscolors, s_I}$ with a
variable assignment $g: \fovar \to T$, and the semantics of the language is
completely standard. 
Depending on whether second-order quantification ranges over all subsets, over 
finite subsets or over noetherian subsets, we obtain the three two-sorted 
variants denoted respectively as $2\smso$, $2\wmso$ and $2\nmso$.

%\begin{remark}
\subsubsection*{Equivalence of the two versions}
In each variant, the one-sorted and the two-sorted versions can be proved to
be equivalent, but there is a sublety due to the fact that our models have a 
distinguished state.
In the one-sorted language, we use the downarrow $\here$ to access this
distinguished state; in the two-sorted approach, we will use a \emph{fixed}
variable $v$ to refer to the distinguished state, and given a formula 
$\phi(v)$ of which $v$ is the only free individual variable, we write 
$\bbS \models \phi[s_{I}]$ rather than $\bbS[v \mapsto s_{I}] \models \phi$.
As a consequence, the proper counterpart of the one-sorted language $\smso$ is
the set $2\smso(v)$ of those $2\smso$-formulas that have precisely $v$ 
as their unique free variable.

More in particular, with $L \in \{\smso, \wmso, \nmso\}$, we say that $\phi \in
L$ is \emph{equivalent to} $\psi(v) \in L(v)$ if
\[
\bbS \models \phi \text{ iff } \bbS \models \psi[s_{I}]
\]
for every model $\bbS = \tup{T,R,\tscolors, s_I}$.
We can now state the equivalence between the two approaches to 
monadic second-order logic as follows.

\begin{fact}
\label{fact:msovs2mso}
Let $L \in \{\smso, \wmso, \nmso\}$ be a monadic second-order logic.
\begin{enumerate}
\item
There is an effective construction transforming a formula $\phi \in L$ into
an equivalent formula $\phi^{t} \in 2L(v)$.
\item
There is an effective construction transforming a formula $\psi \in 2L(v)$ into
an equivalent formula $\psi^{o} \in L$.
\end{enumerate}
\end{fact}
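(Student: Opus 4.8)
The plan is to prove Fact~\ref{fact:msovs2mso} by giving two mutually inverse (up to equivalence) syntactic translations, checking in each case that the semantic constraint defining $L$ (``all'', ``finite'', or ``noetherian'' subsets) is preserved. The crucial observation that makes the whole argument uniform across $L \in \{\smso,\wmso,\nmso\}$ is that both translations are \emph{purely syntactic} and manipulate only the first-order structure: they never introduce or eliminate a second-order quantifier, so whatever class of subsets the surviving second-order quantifiers range over is untouched. Hence it suffices to do the standard $\mso \leftrightarrow 2\mso$ bookkeeping and then remark that it restricts to the finite and to the noetherian semantics verbatim.

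For part~(1), the translation $(\cdot)^{t}$ follows the textbook encoding of one-sorted monadic logic into two-sorted monadic logic: an individual variable $x$ is simulated by a monadic variable $p_x$ together with a side condition ``$p_x$ is a singleton'', which is itself expressible using only $\inc$ and second-order quantification over the two elements involved --- but since we must land in $2L(v)$ I would instead go the other way and simulate the one-sorted \emph{atoms} directly. Concretely, $\here{p}$ becomes $p(v) \land \forall q\,(q \inc p \to q(v))$ suitably phrased, $p \inc q$ becomes $\forall x\,(p(x) \to q(x))$, and $R(p,q)$ becomes $\forall x\,(p(x) \to \exists y\,(R(x,y) \land q(y)))$; boolean connectives and $\exists p$ are passed through unchanged, so the only free individual variable ever produced is the fixed $v$, giving a formula of $2L(v)$. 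First I would define $(\cdot)^{t}$ by this structural recursion, then prove by induction on $\phi$ that $\bbS \models \phi$ iff $\bbS \models \phi^{t}[s_I]$; the inductive step for $\exists p.\phi$ is where one invokes that the \emph{same} set $X$ (arbitrary, finite, or noetherian) witnesses both sides, which is immediate because the quantifier is copied verbatim.

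For part~(2), the translation $(\cdot)^{o}$ is the slightly more delicate direction, since two-sorted formulas have genuine first-order quantifiers that must be coded away. The standard device is to replace each individual variable $x$ by a fresh proposition letter $p_x$ constrained to be a singleton, and to relativise: $\exists x.\phi$ becomes $\exists p_x.(\mathrm{sing}(p_x) \land \phi^{o})$ where $\mathrm{sing}(p_x)$ says $p_x$ is a nonempty set contained in every set that meets it --- expressible with $\inc$ and a second-order quantifier --- and the atoms $p(x)$, $R(x,y)$, $x \foeq y$ are rewritten using $p_x \inc p$, $R(p_x,p_y)$, $p_x \inc p_y$. The free variable $v$ is handled by reserving a dedicated letter and adding a global conjunct $\here{p_v}$. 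The one genuine subtlety, which I expect to be the main obstacle, is verifying that this works for $\wmso$ and $\nmso$: when we quantify $\exists p_x$ to simulate a first-order $\exists x$, we need the singleton $\{s\}$ realising $x = s$ to be an admissible witness, i.e.\ to be finite (trivial) and to be noetherian. A singleton $\{s\}$ is always noetherian --- take the bundle consisting of any single finite path from the relevant root through $s$, or more simply observe that $\{s\}$ lies on a single-vertex path hence on a trivially chain-free bundle --- so the restricted quantifiers still ``see'' all the first-order points, and the induction goes through; one must also check that the $\mathrm{sing}$ predicate's internal second-order quantifier is harmless, which again holds because a singleton is finite and noetherian. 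After establishing $\bbS \models \psi[s_I]$ iff $\bbS \models \psi^{o}$ by induction, effectiveness of both constructions is manifest from their recursive definitions, completing the proof.
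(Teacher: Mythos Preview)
Your proposal is correct and follows essentially the same approach as the paper: the paper too dismisses part~(1) as ``completely straightforward'' and for part~(2) encodes each individual variable $x$ by a fresh set variable $p_x$ constrained to denote a singleton, setting $\psi^{o} \isdef \exists p_{v}\,(\here{p_{v}} \land \psi^{m})$. You are in fact more explicit than the paper about the one point that needs checking for $\wmso$ and $\nmso$, namely that singletons are always finite and noetherian, so the restricted second-order quantifier used to simulate $\exists x$ still sees every point of the model.

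One small slip: your proposed two-sorted translation of $\here{p}$ as ``$p(v) \land \forall q\,(q \inc p \to q(v))$'' mixes one-sorted syntax ($\inc$) with two-sorted syntax ($q(v)$) and introduces an unnecessary second-order quantifier; the clean clause is simply $p(v) \land \forall x\,(p(x) \to x \foeq v)$, which uses only first-order machinery and keeps your claim that part~(1) introduces no new set quantifiers literally true.
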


Since it is completely straightforward to define a translation $(\cdot)^{t}$ as 
required for part (1) of Fact~\ref{fact:msovs2mso}, we only discuss the proof 
of part (2). 
The key observation here is that a single-sorted language can interpret the 
corresponding two-sorted language by encoding every individual variable $x \in 
\fovar$ as a set variable $p_x$ denoting a singleton, and that it is easy to 
write down a formula stating that a variable indeed is interpreted by a 
singleton.
As a consequence, where $2\yvlang(\pprop,\mathsf{X})$ denotes the set of
$2L$-formulas with free second-order variables in $\pprop$ and free first-order
variables in $\mathsf{X}$, it is not hard to formulate a translation 
$(\cdot)^{m} : 2\yvlang(\pprop,\mathsf{X}) \to \yvlang(\pprop \uplus 
\{ p_{x} \mid x \in \mathsf{X} \})$
such that, for every model $\bbS$, every variable assignment $g$ and every
formula $\psi \in 2\yvlang(\Prop,\mathsf{X})$:
\[
\bbS,g \models \psi \quad\text{iff}\quad 
\bbS[p_{x} \mapsto \{g(x)\} \mid x \in \mathsf{X}] \models \psi^m.
\]
From this it is immediate that any $\psi \in 2L(v)$ satisfies
\[
\bbS \models \psi[s_{I}]
\quad\text{iff}\quad 
\bbS \models \exists p_{v} (\here{p_{v}} \land \psi^{m}),
\]
so that we may take $\psi^{o} \isdef \exists p_{v} (\here{p_{v}} \land 
\psi^{m})$.

% \begin{fact}
% Let $\yvlang \in \{\mso, \wmso, \nmso\}$ and $\Prop_X\isdef  \{p_x\mid x \in \fovar\}$.
% The translation $(\cdot)^{m} : 2\yvlang(\pprop) \to \yvlang(\pprop \uplus
% \Prop_X)$ defined as
% \begin{itemize}
% \item $(p(x))^m\isdef   p_x \sqsubset p$,
% \item $(R(x,y))^m\isdef  R(p_x, p_y)$,
% \item $ (x \foeq y)^m\isdef    p_x \sqsubset p_y \land p_y \sqsubset p_x$,
% \item $(\exists p. \phi)^m \isdef   \exists p. \phi^m$,
% \item $(\exists x. \phi)^m\isdef   \exists p_x. \texttt{singleton}(p_x) \land \phi^m$,
% \end{itemize}
% where  
% \begin{align*}
%    \texttt{empty}(p) & \isdef \forall q. (p \inc q) 
% \\ \texttt{singleton}(p) & \isdef \neg\texttt{empty}(p) \land 
%    \forall q. (q \inc p \to (\texttt{empty}(q) \lor p\inc q))
% \end{align*}
% is such that
%   \[
%   \bbS,g \models \phi \quad\text{iff}\quad 
%   \bbS[p_{x} \mapsto \{g(x)\}] \models \phi^m).
%   \]
% \end{fact}

\section{One-step logics, parity automata and $\mu$-calculi}
   \label{sec:parityaut}

This section introduces and studies the type of parity automata that will be
used in the characterisation of $\wmso$ and $\nmso$ on tree models. 
In order to define these automata in a uniform way, we introduce, at a slightly
higher level of abstraction, the notion of a \emph{one-step logic}, a concept 
from coalgebraic modal logic~\cite{cirs:modu04} which provides a nice framework 
for a general approach towards the theory of automata operating on infinite
objects.
As salient specimens of such one-step logics we will discuss monadic 
first-order logic with equality ($\ofoe$) and its extension with the infinity 
quantifier ($\ofoei$).
We then define, parametric in the language $\oslang$ of such a one-step logic, 
the notions of an $\oslang$-automaton and of a mu-calculus $\mu\oslang$, 
and we show how various classes of $\oslang$-automata effectively correspond 
to fragments of $\mu\oslang$.

% More concretely, in Subsection \ref{ssec:onestep} we introduce the concept of 
% a one-step language, and we give normal forms for relevant semantic properties
% such as monotonicity and continuity for the one-step languages $\ofoe$ and 
% $\ofoei$.
% We then have all the ingredients to formally introduce various classes of 
% $\oslang$-parity automata in Subsection \ref{sec:parityaut}. 
% In the remaining part of this section we show classes of parity automata are
% equivalently described as modal fixpoint logics. 
% After introducing the concept of the modal fixpoint logic $\mu\oslang$ in
% Subsection \ref{sec:onestep-to-mc}, we construct effective truth-preserving 
% translations from $\oslang$-parity automata to $\mu\oslang$-formulas in 
% Subsection~\ref{sec:parity-to-mc}, and from $\mu\oslang$-formulas to 
% $\oslang$-parity automata in Subsection~\ref{sec:mc-to-parity}.

% !TEX root = ../00CFVZ_TOCL.tex
\subsection{One-step logics and normal forms}
\label{sec:onestep-short}
\label{ssec:onestep}

\begin{definition}\label{def:one-step}
Given a finite set $A$ of \emph{monadic predicates}, a \emph{one-step model} is
a pair $(D, V)$ consisting of a \emph{domain} set $D$ and a \emph{valuation}
or \emph{interpretation} $V : A \to \pow D$. 
Where $B \subseteq A$, we say that $V' : A \to \pow D$ is a 
\emph{$B$-extension of} $V : A \to \pow D$, notation $V \leq_{B} V'$,
if $V(b) \subseteq V'(b)$ for every $b \in B$ and $V(a) = V'(a)$ 
for every $a \in A \setminus B$. 

A \emph{one-step language} is a map assigning any set $A$ to a collection
$\oslang(A)$ of objects called \emph{one-step formulas} over $A$. 
We assume that one-step languages come with a \emph{truth} relation $\models$
between one-step formulas and models, writing $(D, V) \models \phi$ to
denote that $(D,V)$ satisfies $\phi$.
\end{definition}

Note that we do allow the (unique) one-step model that is based on the empty
domain; we will simply denote this model as $(\nada,\nada)$.

Our chief examples of one-step languages will be variants of modal and 
first-order logic.

\begin{definition}
\label{d:oml}
A very simple example of a one-step logic is the following \emph{basic one-step
modal logic} $\oml$, of which the language is defined as follows, for a set 
$A$ of monadic predicates:
\[
\oml(A) \isdef \{ \Diamond a, \Box a \mid a \in A \}.
\]
The semantics of these formulas is given by
\begin{align*}
   (D, V) \models \Diamond a & \quad\text{ iff }\quad  V(a) \neq \nada
\\ (D, V) \models \Box a     & \quad\text{ iff }\quad  V(a) = D.
\end{align*}
\end{definition}

\begin{definition}
The one-step language $\ofoe(A)$ of \emph{first-order logic with equality} on 
a set of predicates $A$ and individual variables $\fovar$ is given by the 
sentences (formulas without free variables) generated by the following grammar,
where $a \in A$ and $x,y\in\fovar$.:
\begin{equation}\label{eq:grammarofoe}
\phi \isbnf  a(x) \mid \lnot a(x) \mid x \foeq y \mid x \not\foeq y \mid \exists x.\phi \mid \forall x.\phi \mid \phi \lor \phi \mid \phi \land \phi
\end{equation}
We use $\ofo$ for the equality-free fragment, where we omit the clauses 
$x \foeq y$ and $x \not\foeq y$.
\end{definition}

The interpretation of this language in a model $(D,V)$ with $D \neq \nada$
is completely standard.
Formulas of $\ofo$ and $\ofoe$ are interpreted inductively by augmenting 
the pair $(D,V)$ with a variable assignment $g: \fovar \to D$. 
The semantics then defines the desired truth relation $(D, V),g \models \phi$ 
between one-step models, assignments and one-step formulas.
As usual, the variable assignment $g$ can and will be omitted when we are
dealing with sentences --- and note that we only take sentences as one-step 
formulas.
For the interpretation in one-step models with empty domain we refer to 
Definition~\ref{d:ed}.

% \begin{align*}
% (D, V),g \models a(x) & \quad\text{ iff }\quad  
%   x \in V(a) 
% \\ (D, V),g \models \lnot a(x) & \quad\text{ iff }\quad  
%   x \not\in V(a) 
% \\ (D, V),g \models x \foeq y & \quad\text{ iff }\quad 
%   g(x) = g(y) 
% \\ (D, V),g \models x \not\foeq y & \quad\text{ iff }\quad 
%   g(x) \neq g(y) 
% \\
% (D, V),g \models \phi\lor\psi & \quad\text{ iff }\quad  
%    (D, V),g \models \phi \text{ or } (D,V),g \models \psi 
% \\ (D, V),g \models \phi\land\psi & \quad\text{ iff }\quad  
%    (D, V),g \models \phi \text{ and } (D,V) ,g \models \psi 
% \\ (D, V),g \models \exists x.\phi & \quad\text{ iff }\quad  
%    \text{there is $s \in D$ such that $(D, V), g[x\mapsto s] \models \phi$}
% \\ (D, V),g \models \forall x.\phi & \quad\text{ iff }\quad  
%    \text{for all $s \in D$, } (D, V), g[x\mapsto s] \models \phi.
% \end{align*}

We now introduce an extension of first-order logic with two additional
quantifiers, which first appeared in the context of Mostowski's 
study~\cite{Mostowski1957} of generalised quantifiers. 
The first, written $\qu x. \phi$, expresses that there exist infinitely many
elements satisfying a formula $\phi$. 
Its dual, written $\dqu x. \phi$, expresses that there are \emph{at most 
finitely many} elements \emph{falsifying} the formula $\phi$. 
Formally:
\begin{equation}\label{eq:definfquant}
\begin{aligned}
 (D,V),g \models \qu x. \phi(x) & \quad\text{iff}\quad 
   |\{s\in D \mid (D, V),g[x\mapsto s] \models \phi(x) \}| \geq \om
\\ (D,V),g \models \dqu x. \phi(x) & \quad\text{iff}\quad 
   |\{s\in D \mid (D, V),g[x\mapsto s] \not\models \phi(x) \}| < \om
\end{aligned}
\end{equation}

\begin{definition}
\label{d:ofoei}
The one-step language $\ofoei(A)$ is defined by adding to the grammar 
\eqref{eq:grammarofoe} of $\ofoe(A)$ the cases $\qu x. \phi$ and $\dqu x. \phi$.
In the case of non-empty models, the truth relation $(D, V),g \models \phi$ is 
defined by extending the truth relation for $\ofoe(A)$ with the clauses
\eqref{eq:definfquant}.
\end{definition}

In the case of models with empty domain, we cannot give an inductive definition
of the truth relation using variable assignments.
Nevertheless, a definition of truth can be provided for formulas that are Boolean combinations of sentences of the form $Qx.\phi$, where $Q \in \{\exists, \qu,  \forall, \dqu\}$ is a quantifier.

\begin{definition}
\label{d:ed}
For the one-step model $(\nada,\nada)$ we define the truth relation as follows:
%For every formula $\phi \in \ofoei(A)$ 
For every sentence $Qx.\phi$, where $Q \in \{\exists, \qu,  \forall, \dqu\}$,
we set
\[\begin{array}{lllllll}
     (\nada,\nada) & \not\models & Qx. \phi
   & \quad\text{ if } \quad
   & Q \in  \{\exists, \qu \}
   %(\nada,\nada) & \not\models & \qu \phi
\\ (\nada,\nada)   & \models & Qx. \phi
   & \quad\text{ if } \quad
   & Q \in  \{\forall, \dqu \},
   %(\nada,\nada) & \models & \dqu \phi,
\end{array}\]
and we extend this definition to arbitrary $\ofoei$-sentences via the standard
clauses for the boolean connectives.
\end{definition}

For various reasons it will be convenient to assume that our one-step languages
are closed under taking \emph{(boolean) duals}.
Here we say that the one-step formulas $\phi$ and $\psi$ are boolean duals if
for every one-step model we have $(D,V) \models \phi$ iff $(D,V^{c}) \not\models 
\psi$, where $V^{c}$ is the complement valuation given by $V^{c}(a) \isdef
D \setminus V(a)$, for all $a$.

As an example, it is easy to see that for the basic one-step modal logic $\oml$
the formulas $\Diamond a$ and $\Box a$ are each other's dual.
In the case of the monadic predicate logics $\ofo$, $\ofoe$ and $\ofoei$ we can 
define the boolean dual of a formula $\phi$ by a straightforward induction.

\begin{definition}
\label{def:concreteduals} 
For $\oslang \in \{ \ofo, \ofoe, \ofoei \}$, we define the following operation
on formulas:
\begin{align*}
 (a(x))^{\delta} & \isdef  a(x) 
 & (\lnot a(x))^{\delta} & \isdef  \lnot a(x) 
\\ (\top)^{\delta} & \isdef  \bot 
  & (\bot)^{\delta} & \isdef  \top 
\\  (x \approx y)^{\delta} & \isdef  x \not\approx y 
  & (x \not\approx y)^{\delta}& \isdef  x \approx y 
\\ (\phi \wedge \psi)^{\delta} &\isdef  \phi^{\delta} \vee \psi^{\delta} 
  &(\phi \vee \psi)^{\delta}& \isdef  \phi^{\delta} \wedge \psi^{\delta}
\\ (\exists x.\psi)^{\delta} &\isdef  \forall x.\psi^{\delta} 
  &(\forall x.\psi)^{\delta} &\isdef  \exists x.\psi^{\delta} 
\\ (\qu x.\psi)^{\delta} &\isdef \dqu x.\psi^{\delta} 
  &(\dqu x.\psi)^{\delta} &\isdef  \qu x.\psi^{\delta}
\end{align*}
\end{definition}
We leave it for the reader to verify that the operation $\dual{(\cdot)}$ indeed 
provides a boolean dual for every one-step sentence.
\medskip

The following semantic properties will be essential when studying the 
parity automata and $\mu$-calculi associated with one-step languages.

\begin{definition}\label{def:semnotions} 
Given a one-step language $\oslang(A)$, $\phi \in \oslang(A)$ and $B \sse A$,
\begin{itemize}
\item 
$\phi$ is \emph{monotone} in $B$ if for all pairs of one step models $(D,V)$ 
and $(D,V')$ with $V \leq_{B} V'$, $(D,V) \models \phi$ implies $(D,V'),g 
\models \phi$.
\item 
$\phi$ is \emph{$B$-continuous} if $\phi$ is monotone in $B$ and, whenever 
$(D,V) \models \phi$, then there exists $V' \: A \to \pow(D)$ such that 
$V' \leq_{B} V$, $(D,V') \models \phi$ and $V'(b)$ is finite for all $b \in B$.
%\item $\phi$ is \emph{functionally continuous} in $B \subseteq A$ if, whenever $(D,V) \models \phi$, then there exists a restriction $V' \: A \to \pow(D)$ of $V$ witnessing both functionality and continuity in $B$.
\item 
$\phi$ is \emph{$B$-cocontinuous} if its dual $\phi^{\delta}$ is continuous in 
$B$.
\end{itemize}
\end{definition}

We recall from \cite{carr:mode18} syntactic characterisations of 
these semantic properties, relative to the monadic predicate logics $\ofo$, 
$\ofoe$ and $\ofoei$. 
We first discuss characterisations of monotonicity and (co)continuity given by
grammars. 

\begin{definition}
For $\oslang \in \{ \ofo, \ofoe, \ofoei \}$, we define the \emph{positive} 
fragment of $\oslang(A)$, written $\oslang^{+}(A)$, as the set of sentences 
generated by the grammar we obtain by leaving out the clause $\lnot a(x)$
from the grammar for $\oslang$.
 
For $B \subseteq A$, the \emph{$B$-continuous} fragment of $\ofoe^{+}(A)$, 
written $\cont{\ofoe(A)}{B}$, is the set of sentences generated by the following
grammar, for $b \in B$ and $\psi \in \ofoe^{+}(A \setminus B)$:
\[
\phi \isbnf  b(x) \mid \psi \mid \phi \land \phi \mid \phi \lor \phi 
   \mid \exists x.\phi.
\]
If $\psi \in \ofo^{+}(A \setminus B)$ in the condition above, we then obtain the
$B$-continuous fragment $\cont{\ofo(A)}{B}$ of $\ofo^{+}(A)$.
The  \emph{$B$-continuous} fragment of ${\ofoei}^{+}(A)$, written 
$\cont{\ofoei(A)}{B}$, is defined by adding to the above grammar the clause 
$\wqu x.(\phi,\psi)$, which is a shorthand for $\forall x.(\phi(x) \lor \psi(x)) 
\land \dqu x.\psi(x)$.\footnote{In words, 
   $\wqu x.(\phi,\psi)$ says: ``every element of the domain validates $\phi(x)$ 
   or $\psi(x)$, but only finitely many need to validate $\phi(x)$ (where $b \in 
   B$ may occur). Thus $\dqu$ makes a certain use of $\forall$ compatible with 
   continuity.}
For $\oslang \in \{ \ofo, \ofoe,\ofoei \}$ and $B \subseteq A$, the 
\emph{$B$-cocontinuous} fragment of $\oslang^{+}(A)$, written 
$\cocont{\oslang(A)}{B}$, is the set $\{\phi \mid \phi^\delta \in 
\cont{\oslang(A)}{B}\}$.
\end{definition}

Note that we do allow the clause $x \not\foeq y$ in the positive fragments of 
$\ofoe$ and $\ofoei$.

The following result provides syntactic characterizations for the mentioned 
semantics properties.

\begin{theorem}[\cite{carr:mode18}] 
    \label{th:onesteplogics-grammars}
For $\oslang \in \{\ofo, \ofoe,\ofoei \}$, we have 
let $\phi \in \oslang(A)$ be 
a one-step formula.
Then

\begin{enumerate}[(1)]
\item
$\phi \in \oslang(A)$ is $A$-monotone iff it is equivalent to some 
$\psi \in \oslang^{+}(A)$. 

\item
$\phi \in \oslang(A)$ is $B$-continuous iff it is equivalent to some $\psi \in
\cont{\oslang(A)}{B}$. 

\item
$\phi \in \oslang(A)$ is $B$-cocontinuous iff it is equivalent to some $\psi \in
\cocont{\oslang(A)}{B}$. 
\end{enumerate}
\end{theorem}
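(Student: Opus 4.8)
The plan is to treat each of the three equivalences as an easy soundness direction together with a harder completeness direction, and to reduce (3) to (2) at the end. For soundness I would run routine inductions on the grammars: every sentence of $\oslang^{+}(A)$ is $A$-monotone because each constructor preserves monotonicity in $A$ and the only non-monotone atom $\lnot a(x)$ has been deleted; every sentence of $\cont{\oslang(A)}{B}$ is $B$-continuous because a $B$-free $\psi\in\oslang^{+}(A\setminus B)$ is trivially $B$-continuous, the constructors $b(x)$, $\wedge$, $\vee$, $\exists$ preserve $B$-continuity, and $\wqu x.(\phi,\psi) = \forall x.(\phi(x)\lor\psi(x))\wedge\dqu x.\psi(x)$ is $B$-continuous whenever $\phi$ is and $\psi$ is $B$-free --- if it holds, all but finitely many elements already satisfy the $B$-free $\psi$, so restricting $V$ on $B$ to that finite exceptional set keeps it true. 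Statement (3) then follows from (2) via the definition $\cocont{\oslang(A)}{B}=\{\phi\mid\dual{\phi}\in\cont{\oslang(A)}{B}\}$, once one checks, as the excerpt leaves to the reader, that $\dual{(\cdot)}$ really is a boolean dual.

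For the completeness directions I would either cite the companion paper \cite{carr:mode18} or reprove a normal-form theorem for the monadic one-step logics. The point is that over the fixed finite $A$ a one-step model $(D,V)$ is determined up to isomorphism by the function sending each atomic type $\tau\subseteq A$ to the number of domain elements realising $\tau$, and a bounded-quantifier-rank Ehrenfeucht--Fra\"iss\'e analysis (so no compactness is needed) yields: every $\ofo(A)$-sentence is equivalent to a boolean combination of occupancy sentences $\exists x.\alpha(x)$ with $\alpha$ a conjunction of literals; every $\ofoe(A)$-sentence to a boolean combination of threshold sentences ``at least $k$ elements realise $\tau$'' for $k$ up to a rank-dependent bound; and every $\ofoei(A)$-sentence to a boolean combination of those thresholds together with infinity sentences ``infinitely many elements realise $\tau$''. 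Consequently each $\phi\in\oslang(A)$ is equivalent to a finite disjunction $\bigvee_i\chi_{P_i}$ of characteristic formulas, one for each admissible cardinality profile consistent with $\phi$; the empty-domain model is one extra profile, handled via Definition~\ref{d:ed}. These disjunctions are exactly the $\nabla$-style normal forms appearing elsewhere in the paper.

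For (1) I would order cardinality profiles by $P\le P'$ when $P'$ arises from $P$ by moving elements upward in the type order $\subseteq$ (and, for $\ofoe$ and $\ofoei$, possibly merging or raising counts, never demoting $\infty$ to finite); this is precisely the order induced by $V\le_A V'$, so by Definition~\ref{def:semnotions} an $A$-monotone $\phi$ has an upward-closed set of satisfying profiles. As there are finitely many profiles at the relevant granularity, this up-set is a finite union of principal up-sets, so it suffices to realise each $\bigvee_{Q\ge P}\chi_Q$ by an $\oslang^{+}(A)$-sentence: a conjunction of existential occupancy clauses $\exists x.\bigwedge_{a\in\tau}a(x)$ (iterated with $x\not\foeq y$ for thresholds, and with $\qu x$ for the $\infty$-entries) encoding the lower bounds recorded in $P$, together with, when $P$ forbids occupancy outside an upward-closed family of types, a universal clause of the form $\forall x.\bigvee_j\bigwedge_{a\in\tau_j}a(x)$ confining every element's type into the allowed family --- all of which is positive (recall that $x\foeq y$ and $x\not\foeq y$ are allowed in the positive fragments). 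For (2) I would start from a $B$-continuous $\phi$, use (1) to assume it positive and in normal form, and then track dependence on $V{\restriction}B$: $B$-continuity forces each surviving disjunct to split as a conjunction of a $B$-free ``bulk'' sentence, finitely many existential occupancy clauses (mentioning $B$ here is harmless, a witness being a single finite element), and a universal part that can require a $B$-predicate of at most finitely many elements --- which is exactly what $\wqu x.(\phi',\psi)$ expresses, with $\psi$ the $B$-free bulk disjunct and $\phi'$ the exceptional part. Reassembling puts each disjunct into $\cont{\oslang(A)}{B}$.

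I expect the main obstacle to be the completeness half of (2): converting the purely semantic condition ``only finitely many $B$-memberships matter'' into the exact syntactic shape of the continuous fragment, and in particular recognising that once a monotone formula is also continuous its universal content must factor through the $\wqu$-construction rather than through an unrestricted $\forall$ (the positive-but-non-continuous formula $\forall x.b(x)$ with $b\in B$ shows this really needs checking). Keeping the normal-form bookkeeping correct --- the interaction of finite thresholds, the $\infty$-values, the empty-domain convention of Definition~\ref{d:ed}, and the precise order on profiles --- is where the genuine work sits; the rest is finite combinatorics on the poset of profiles.
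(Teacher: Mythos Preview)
Your proposal is correct and in fact goes well beyond what the paper itself does: the paper simply cites the companion paper \cite{carr:mode18} for items (1) and (2), and for (3) remarks only that it ``can be verified by a straightforward induction on $\phi$'' --- which amounts to your reduction of (3) to (2) via the boolean dual $(\cdot)^{\delta}$, the induction being the routine check that $(\cdot)^{\delta}$ really is a dual (hence an involution up to equivalence, so that $\phi$ is $B$-cocontinuous iff $\phi^{\delta}$ is $B$-continuous iff $\phi^{\delta}$ is equivalent to some $\psi\in\cont{\oslang(A)}{B}$ iff $\phi$ is equivalent to $\psi^{\delta}\in\cocont{\oslang(A)}{B}$). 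Your detailed sketch of how (1) and (2) could be reproved from scratch via cardinality profiles, Ehrenfeucht--Fra\"iss\'e analysis, and the resulting $\nabla$-style normal forms is sound and is essentially the route taken in the companion paper, but none of it is required for the present paper, which treats the theorem as an import.
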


\begin{proof}
The first two statements are proved in \cite{carr:mode18}. 
The third one can be verified by a straightforward induction on $\phi$. 
\end{proof}

In some of our later proofs we need more precise information on the shape of
formulas belonging to certain syntactic fragments.
For this purpose we introduce normal forms for positive sentences in $\ofo$, 
$\ofoe$ and $\ofoei$. 

\begin{definition}%[Basic form for \ofoe] 
\label{def:basicform-ofoe}
\label{def:basicform-ofoei}
A \emph{type} $T$ is just a subset of $A$. It defines a $\ofoe$-formula 
\[
\tau^{+}_T(x) \df \bigwedge_{a \in T} a(x).
\]
Given a one-step model $(D,V)$, $s \in D$ \emph{witnesses} a type $T$ if 
$(D,V), g[x\mapsto s] \models \tau^{+}_T(x)$ for any~$g$. 
The predicate $\arediff{\vlist{y}}$, stating that the elements $\vlist{y}$ are 
distinct, is defined as $\arediff{y_1,\dots,y_n} \isdef 
\bigwedge_{1\leq m < m^{\prime} \leq n} (y_m \not\approx y_{m^{\prime}})$.

A formula $\phi \in \ofo(A)$ is said to be in \emph{basic form} if $\phi = 
\bigvee \posdgbnfofo{\Sigma}{\Sigma}$, where for sets $\Sigma,\Pi$ of types, 
the formula 
$\posdgbnfofo{\Sigma}{\Pi}$ is defined as 
\begin{equation*}%\label{eq:normalformofoe}
\posdgbnfofo{\Sigma}{\Pi} \isdef 
\bigwedge_{S \in \Sigma} \exists x\, \tau^{+}_{T_i}(x) 
\land 
\forall z. \bigvee_{S\in \Pi} \tau^{+}_S(z)
\end{equation*}

We say that $\phi \in \ofoe(A)$ is in \emph{basic form} if $\phi = \bigvee 
\posdbnfofoe{\vlist{T}}{\Pi}$ where each disjunct is of the form
\begin{equation*}%\label{eq:normalformofoe}
\posdbnfofoe{\vlist{T}}{\Pi} \isdef 
\exists \vlist{x}.\big(\arediff{\vlist{x}} \land \bigwedge_i \tau^{+}_{T_i}(x_i) 
\land 
\forall z.(\arediff{\vlist{x},z} \to \bigvee_{S\in \Pi} \tau^{+}_S(z))\big)
\end{equation*}
such that $\vlist{T} \in \pow(A)^k$ for some $k$ and $\Pi \subseteq \vlist{T}$. 

Finally, we say that $\phi \in \ofoei(A)$ is in \emph{basic form} if $\phi = 
\bigvee \posdbnfofoei{\vlist{T}}{\Pi}{\Sigma}$ where each disjunct is of the 
form
\begin{align*}
   \posdbnfofoei{\vlist{T}}{\Pi}{\Sigma} &\isdef
  \posdbnfofoe{\vlist{T}}{\Pi \cup \Sigma} \land \posdbnfinf{\Sigma}
\\ \posdbnfinf{\Sigma} &\isdef 
   \bigwedge_{S\in\Sigma} \qu y.\tau^{+}_S(y) \land 
      \dqu y.\bigvee_{S\in\Sigma} \tau^{+}_S(y)
\end{align*}
for some sets of types $\Pi,\Sigma \subseteq \pow A$ and $T_1, \dots, T_k 
\subseteq A$.
\end{definition}

Intuitively, the basic $\ofo$-formula $\posdgbnfofo{\Sigma}{\Sigma}$ simply 
states that $\Sigma$ covers a one-step model, in the sense that each element of
its domain witnesses some type $S$ of $\Sigma$ and each type $S$ of $\Sigma$ is 
witnessed by some element.
The formula $\posdbnfofoe{\vlist{T}}{\Pi}$ says that each one-step
model satisfying it admits a partition of its domain in two parts: distinct 
elements $t_1,\dots,t_n$ witnessing types $T_1,\dots,T_n$, and all the remaining
elements witnessing some type $S$ of $\Pi$.  
The formula $\posdbnfinf{\Sigma}$ extends the information given by
$\posdbnfofoe{\vlist{T}}{\Pi \cup \Sigma}$ by saying that (1) for every type 
$S\in\Sigma$, there are infinitely many elements witnessing each $S \in \Sigma$
and (2) only finitely many elements do not satisfy any type in $\Sigma$. 

The next theorem states that the basic formulas indeed provide normal forms.

\begin{theorem}[\cite{carr:mode18}]  
\label{t:osnf}
For each $\oslang \in \{\ofo, \ofoe,\ofoei \}$ there is an effective procedure 
transforming any sentence $\phi \in \oslang^{+}(A)$ into an equivalent
sentence $\phi^{\bullet}$ in basic $\oslang$-form.
\end{theorem}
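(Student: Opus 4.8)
The plan is to prove Theorem~\ref{t:osnf} by induction on the structure of positive one-step sentences, handling the three logics $\ofo$, $\ofoe$ and $\ofoei$ in a uniform way where possible, but paying attention to the features specific to each. The base cases are the atomic formulas: $a(x)$ and $x \not\approx y$ (the latter only for $\ofoe$ and $\ofoei$), as well as the constants $\top$ and $\bot$; each of these is either already in basic form or trivially rewritten into a disjunction of formulas of the prescribed shape (with $\Pi = \pow A$, so that the universal conjunct is vacuous, and an appropriate singleton block of existential witnesses). The inductive steps for disjunction and conjunction are where the combinatorial work lies: disjunction is essentially immediate since a disjunction of disjunctions is again a disjunction, but conjunction requires showing that the conjunction of two basic forms $\bigvee_i \nabla_i$ and $\bigvee_j \nabla'_j$ distributes into $\bigvee_{i,j}(\nabla_i \land \nabla'_j)$ and that each $\nabla_i \land \nabla'_j$ can be massaged back into a single basic-form disjunct. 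This last point is the crux: one must merge the two ``exists distinct witnesses'' blocks and the two universal constraints, using the $\arediff{\cdot}$ predicate to re-partition the domain, and — in the $\ofoei$ case — reconcile the two $\Sigma$-blocks describing infinitely-witnessed types.

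First I would set up the conjunction-merging lemma carefully. Given $\posdbnfofoe{\vlist{T}}{\Pi}$ and $\posdbnfofoe{\vlist{S}}{\Xi}$, a one-step model satisfying both must have a set of distinct elements realizing the types in $\vlist{T}$, another realizing $\vlist{S}$, and every other element realizing a type in $\Pi \cap \Xi$ roughly speaking; but the two distinguished tuples may overlap arbitrarily. The standard trick is to guess, for each pair of indices, whether the corresponding witnesses coincide, and to guess for each ``$\vlist{T}$-witness'' which type of $\Xi$ (or which $\vlist{S}$-position) it also satisfies, and symmetrically; this produces a finite disjunction, each disjunct of which fixes a consistent pattern and thereby a single combined tuple of types together with a combined universal block $\Pi \cap \Xi$ (intersected appropriately with the guessed type extensions). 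I would phrase this as: $\posdbnfofoe{\vlist{T}}{\Pi} \land \posdbnfofoe{\vlist{S}}{\Xi}$ is equivalent to a disjunction over all ``consistent overlays'' of the two tuples, each overlay yielding one basic-form disjunct. For $\ofo$ the same argument is simpler because there are no distinct-witness tuples, only the covering pair $(\Sigma,\Sigma)$: the conjunction of two covers $\posdgbnfofo{\Sigma}{\Sigma}$ and $\posdgbnfofo{\Xi}{\Xi}$ is handled by intersecting the universal parts to $\Sigma \cap \Xi$ and checking each required existential witness is compatible — again a finite case split, with the subtlety that a cover-conjunction can become unsatisfiable (yielding $\bot$, which we allow as the empty disjunction).

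The quantifier cases are the remaining work. For $\ofo$ and $\ofoe$, $\exists x.\phi$ and $\forall x.\phi$ applied to a formula already in basic form must be pushed through; since basic forms are themselves built from quantifier blocks, the right strategy is to first put $\phi$ in basic form by the induction hypothesis and then absorb the outer quantifier. Here it is cleanest to recall that a positive $\ofoe$-sentence is, up to equivalence, a positive boolean combination of ``type-counting'' statements, and that basic form is precisely a disjunctive normal form for such statements; so the outer $\exists x$ or $\forall x$ is handled by re-deriving the type-counting content — this is exactly the classical Ehrenfeucht–Fra\"iss\'e / Feferman–Vaught style normal-form argument for monadic first-order logic over a purely unary signature, which is why the theorem holds and why the paper attributes it to \cite{carr:mode18}. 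For $\ofoei$ one additionally must track the $\qu x$ and $\dqu x$ quantifiers and show that the ``infinitely many of type $S$'' information collected in the $\Sigma$-block is closed under the relevant operations; the key facts are that ``infinitely many'' is closed under finite unions and that removing finitely many elements does not change which types have infinitely many witnesses, which lets one merge $\Sigma$-blocks under conjunction and propagate them under quantification.

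The main obstacle I expect is the bookkeeping in the conjunction case for $\ofoei$: one has to simultaneously merge the finite-witness tuples (as in $\ofoe$), merge the universal blocks, and merge the two $\Sigma$-blocks, while respecting the constraint $\Pi \subseteq \vlist{T}$ (resp.\ $\Pi, \Sigma \subseteq \pow A$) imposed by the definition of basic form, and ensuring the resulting formula is genuinely of the form $\posdbnfofoei{\vlist{T}}{\Pi}{\Sigma}$ rather than merely equivalent to some positive sentence. I would therefore isolate, before the main induction, three ``closure'' lemmas — one stating basic forms are closed under $\lor$ (trivial), one under $\land$ (the overlay argument), and one under a single $\exists$/$\forall$ (and $\qu$/$\dqu$) — and then the theorem follows by a routine structural induction invoking these. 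Since the paper explicitly cites \cite{carr:mode18} for this statement, I would in the actual writeup keep the proof to a sketch of these closure lemmas and refer to the companion paper for the full combinatorial details, exactly as the excerpt does for Theorem~\ref{th:onesteplogics-grammars}.
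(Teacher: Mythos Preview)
The paper provides no proof of this theorem: it is stated with attribution to the companion paper~\cite{carr:mode18} and no further argument is given here. Your proposal correctly anticipates this and sketches a standard structural-induction approach with closure lemmas for $\lor$, $\land$, and the quantifiers; this is indeed the expected shape of such a normal-form argument, and your closing remark that the writeup should defer to~\cite{carr:mode18} matches exactly what the paper does.

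One minor technical caution: you list $a(x)$ and $x \not\approx y$ as base cases, but these are not sentences, and the one-step languages are defined to consist of sentences only. A clean induction therefore either works with open formulas relative to a tuple of free variables (tracking a more general normal form that records what each free variable must satisfy) and then specialises to the closed case, or else treats quantified blocks $\exists x.\psi$, $\forall x.\psi$ as the atomic building step. Your discussion of the quantifier cases (``re-deriving the type-counting content'') implicitly does the former, but making this explicit would avoid the appearance of a type mismatch in the base case.
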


One may use these normal forms to provide a tighter syntactic characterisation
for the notion of continuity, in the cases of $\ofo$ and $\ofoei$.

\begin{theorem}[\cite{carr:mode18}]  
\label{t:osnf-cont}
\begin{enumerate}

\item 
A formula $\phi \in \ofo(A)$ is continuous in $B \subseteq A$ iff it is
equivalent to a formula, effectively obtainable from $\phi$, in the basic form 
$\bigvee \posdgbnfofo{\Sigma'}{\Sigma}$ 
% for some types $\Sigma,\Sigma' \sse \pow A$, 
where we require that $B \cap \bigcup\Sigma = \nada$ for every $\Sigma$.

\item A formula $\phi \in \ofoei(A)$ is continuous in $B \subseteq A$ iff it is
equivalent to a formula, effectively obtainable from $\phi$, in the basic form 
$\bigvee \mondbnfofoei{\vlist{T}}{\Pi}{\Sigma}{+}$, 
where we require that $B \cap \bigcup\Sigma = \nada$ for every $\Sigma$.
\end{enumerate}
\end{theorem}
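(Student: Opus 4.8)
The plan is to obtain both items by composing the syntactic characterisation of continuity, Theorem~\ref{th:onesteplogics-grammars}(2), with the normal-form theorem, Theorem~\ref{t:osnf}, and then ``cleaning up'' the resulting basic forms so that the side condition on the $\Sigma$-blocks is met. Two facts will be used throughout: every basic $\oslang$-form is positive, hence monotone in $A$ and in particular in $B$; and monotone one-step formulas are closed under disjunction.

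For the direction from right to left, suppose $\phi \equiv \bigvee_j D_j$ with each $D_j$ a basic form satisfying the stated constraint, that is, of shape $\posdgbnfofo{\Sigma'_j}{\Sigma_j}$ with $B\cap\bigcup\Sigma_j=\nada$ in case (1), or $\mondbnfofoei{\vlist{T}_j}{\Pi_j}{\Sigma_j}{+}$ with $B\cap\bigcup\Sigma_j=\nada$ in case (2). Monotonicity in $B$ is immediate from positivity. For the finiteness clause of $B$-continuity, given $(D,V)\models D_j$ I would build $V'$ by retaining, for each $b\in B$, only the $b$-memberships of the finitely many elements witnessing the existential conjuncts of $D_j$, and --- in case (2) --- also of the finitely many elements (finitely many because of the $\dqu$-conjunct of $\posdbnfinf{\Sigma_j}$) that realise a type in $\Pi_j$ but none in $\Sigma_j$. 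Since no type in the universal block $\Sigma_j$, and none in $\posdbnfinf{\Sigma_j}$, mentions a letter of $B$, deleting $b$-memberships off the remaining elements can falsify neither the universal conjunct nor the $\qu/\dqu$-conjuncts; hence $(D,V')\models D_j$, $V'\leq_B V$, and every $V'(b)$ is finite.

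For the converse, let $\phi$ be $B$-continuous. By Theorem~\ref{th:onesteplogics-grammars}(2) it is equivalent to some $\psi\in\cont{\oslang(A)}{B}$, which is a positive sentence, so by Theorem~\ref{t:osnf} $\phi$ is equivalent to a disjunction $\bigvee_j D_j$ of basic $\oslang$-forms; let $\Sigma_j$ denote the universal block (case (1)) or the infinity block (case (2)) of $D_j$. In the $\ofoei$ case I would discard every disjunct with $B\cap\bigcup\Sigma_j\neq\nada$: such a $D_j$ entails $\posdbnfinf{\Sigma_j}$, hence $\qu y.\tau^{+}_S(y)$ for a type $S$ containing some $b\in B$, so it forces $b$ to hold of infinitely many points --- impossible in the sub-model supplied by $B$-continuity of $\phi$; that sub-model therefore satisfies a surviving disjunct, and, the surviving disjunction being monotone in $B$, so does the original model, while equivalence with $\phi$ is clearly preserved. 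In the $\ofo$ case the cleanup is a relocation rather than a deletion: $B$-continuity forces a $B$-containing type occurring in a universal block $\Sigma_j$ to be realised by only boundedly many elements of any model of $\phi$, so it may be moved into the existential block $\Sigma'_j$ --- which the statement of (1) explicitly permits to differ from $\Sigma_j$ --- leaving the universal block $B$-free. All of these constructions are effective.

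The main obstacle is making the relocation step for $\ofo$ precise: replacing $\posdgbnfofo{\Sigma_j}{\Sigma_j}$ by $\posdgbnfofo{\Sigma_j}{\Sigma_j^{-}}$ with $\Sigma_j^{-}\isdef\{S\in\Sigma_j\mid S\cap B=\nada\}$ must preserve equivalence of the whole disjunction, and since a single disjunct need not itself be $B$-continuous this forces one to pass to a model of $\phi$, apply the finiteness clause there, and transfer back along $\leq_B$ using monotonicity of the repaired disjunction. A technically cleaner alternative for the forward direction of both items is an induction on the build-up of $\psi$ in the grammar of $\cont{\oslang(A)}{B}$, carrying the invariant ``equivalent to a disjunction of basic forms whose $\Sigma$-blocks avoid $B$'': the leaf $b(x)$ contributes $b$ only to an existential block, the $B$-free leaves $\psi\in\oslang^{+}(A\setminus B)$ are handled by Theorem~\ref{t:osnf} directly, conjunction merges universal blocks by taking unions of types and hence keeps them $B$-free, and the one delicate case is $\wqu x.(\phi_1,\phi_2)$, where one must check that it is the $B$-free component $\phi_2$ sitting under $\dqu$ that governs the infinity block, so that the latter stays $B$-free.
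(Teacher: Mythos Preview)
The paper does not prove this statement: the theorem header cites the companion paper and no argument appears in the present text, so there is no in-paper proof to compare against.

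On your proposal: the right-to-left direction in both items is correct, and so is your discard argument for the $\ofoei$ left-to-right direction---a disjunct whose $\Sigma$-block meets $B$ contains a conjunct $\qu y.\tau^+_S(y)$ with some $b\in S\cap B$, which cannot hold under any valuation with $V'(b)$ finite, so the continuity witness $V'$ lands in a surviving disjunct and monotonicity transfers back to $V$. The $\ofo$ left-to-right direction, however, has a real gap. The claim that $B$-continuity bounds the number of realisers of a $B$-containing type is false (continuity yields \emph{one} finite $V'$, not a uniform bound under $V$), and the repair you sketch---pass to $V'$, verify the relocated disjunct $\posdgbnfofo{\Sigma_j}{\Sigma_j^{-}}$ there, then transfer back by monotonicity---also fails: under $V'$ the finitely many elements of $\bigcup_{b\in B}V'(b)$ may witness \emph{only} types in $\Sigma_{j_0}\setminus\Sigma_{j_0}^{-}$, so $(D,V')$ need not satisfy any repaired disjunct, and nothing forces a different $j$ to work. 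You are right that induction on the grammar of $\cont{\ofo(A)}{B}$ is the route that actually works, but it needs to be carried out rather than gestured at: since the leaf $b(x)$ is an open formula the invariant must be stated for formulas with free individual variables, and the conjunction case (merging two basic-form disjunctions into one) needs an explicit check that the resulting universal blocks remain $B$-free.
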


\begin{remark} 
We focussed on normal form results for monotone and (co)continuous sentences, 
as these are the ones relevant to our study of parity automata.
However, generic sentences both of $\ofo$, $\ofoe$ and $\ofoei$ also enjoy 
normal form results, with the syntactic formats given by variations of the 
``basic form'' above. 
The interested reader may find in \cite{carr:mode18} a detailed 
overview of these results.
\end{remark}

We finish this section with a disucssion of the notion of \emph{separation}.

\begin{definition}
    \label{d:sep} 
Fix a one-step language $\oslang$, and two sets $A$ and $B$ with $B \sse A$.
Given a one-step model $(D,V)$, we say that $V: A \to \pow D$ \emph{separates}
$B$ if $\sz{V^{-1}(d) \cap B} \leq 1$, for every $d \in D$.
A formula $\phi \in \oslang(A)$ is \emph{$B$-separating} if $\phi$ is monotone
in $B$ and, whenever $(D,V) \models \phi$, then there exists a $B$-separating 
valuation $V' \: A \to \pow(D)$ such that $V' \leq_{B} V$ and $(D,V') \models
\phi$. 
\end{definition}

Intuitively, a formula $\phi$ is $B$-separating if its truth in a monadic model
never requires an element of the domain to satisfy two distinct predicates in 
$B$ at the same time; any valuation violating this constraint can be reduced to
a valuation satisfying it, without sacrificing the truth of $\phi$.
We do not need a full syntactic characterisation of this notion, but the 
following sufficient condition is used later on.

\begin{proposition}  
%     \label{th:onesteplogics-normalforms} %\label{th:ofoei-normalforms} 
\label{p:sep}
\begin{enumerate}[(1)]
\item 
Let $\phi \in \ofoe^{+}(A)$ be a formula in basic form, $\phi = 
\bigvee \posdbnfofoe{\vlist{T}}{\Pi}$. 
Then $\phi$ is $B$-separating if, for each disjunct,  $\sz{S \cap B} \leq 1$ for 
each $S \in \{T_1, \dots, T_k\} \cup \Pi$.
\item
Let $\phi \in {\ofoei}^{+}(A)$ be a formula in basic form, $\phi = 
\bigvee \mondbnfofoei{\vlist{T}}{\Pi}{\Sigma}{+}$. 
Then $\phi$ is $B$-separating if, for each disjunct,  $\sz{S \cap B} \leq 1$ for 
each $S \in \{T_1, \dots, T_k\} \cup \Pi \cup \Sigma$.
\end{enumerate}
\end{proposition}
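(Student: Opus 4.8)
The plan is to settle monotonicity in one line and then, for a one-step model satisfying $\phi$, to \emph{thin out} the valuation on the predicates of $B$ pointwise, letting the basic form decide which single $B$-predicate (if any) each domain element should keep. For monotonicity, note that a formula in basic $\ofoe$- or $\ofoei$-form lies in the positive fragment $\oslang^{+}(A)$ (it contains no subformula $\lnot a(x)$), hence is $A$-monotone by Theorem~\ref{th:onesteplogics-grammars}(1); and since $V \leq_{B} V'$ entails $V \leq_{A} V'$, it is monotone in $B$ a fortiori.

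For the separation clause I would first treat $\ofoe$. Assume $(D,V) \models \phi$ and fix a true disjunct $\posdbnfofoe{\vlist{T}}{\Pi}$: there are distinct $t_{1},\dots,t_{k} \in D$ with $t_{i}$ witnessing $T_{i}$, and every other $d \in D$ witnessing some type in $\Pi$. Choose, for each $d \notin \{t_{1},\dots,t_{k}\}$, a type $S_{d} \in \Pi$ witnessed by $d$; put $S_{t_{i}} \isdef T_{i}$; and define $V'$ by $V'(a) \isdef V(a)$ for $a \in A \setminus B$ and, for $b \in B$, $d \in V'(b)$ iff $b \in S_{d}$. Since each $S_{d}$ is among $T_{1},\dots,T_{k}$ or lies in $\Pi$, the hypothesis gives $\sz{V'^{-1}(d) \cap B} = \sz{S_{d} \cap B} \leq 1$, so $V'$ separates $B$; and $d \in V'(b)$ forces $b \in S_{d} \subseteq V^{-1}(d)$, i.e.\ $d \in V(b)$, so $V' \leq_{B} V$. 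Each $d$ still witnesses $S_{d}$ under $V'$ (predicates outside $B$ are untouched, and inside $B$ exactly $S_{d} \cap B$ is kept), whence the disjunct, and so $\phi$, still holds at $(D,V')$.

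For $\ofoei$ the true disjunct is $\mondbnfofoei{\vlist{T}}{\Pi}{\Sigma}{+} = \posdbnfofoe{\vlist{T}}{\Pi\cup\Sigma} \land \posdbnfinf{\Sigma}$, so in addition we must keep, for each $S \in \Sigma$, infinitely many witnesses of $S$, and keep all but finitely many elements witnessing some member of $\Sigma$. The pointwise recipe is the same -- assign each $d$ a single witnessed type $S_{d}$ and thin $B$ to $S_{d} \cap B$ -- but the $S_{d}$ must be chosen coherently. Writing $W_{S} \isdef \{ d \in D \mid d \text{ witnesses } S \}$, the conjunct $\qu y.\tau^{+}_{S}(y)$ of $\posdbnfinf{\Sigma}$ says $W_{S}$ is infinite for each $S \in \Sigma$; a small combinatorial lemma (partition the finitely many atoms of the Boolean algebra generated by $\{W_{S}\}_{S \in \Sigma}$ and split each into finitely many infinite pieces) then produces pairwise disjoint infinite sets $E_{S} \subseteq W_{S} \setminus \{t_{1},\dots,t_{k}\}$. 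Put $S_{d} \isdef S$ for $d \in E_{S}$; for $d \notin \{t_{i}\} \cup \bigcup_{S \in \Sigma} E_{S}$ witnessing some member of $\Sigma$, pick such an $S_{d} \in \Sigma$; for the remaining elements -- finitely many, by the $\dqu$-conjunct of $\posdbnfinf{\Sigma}$ -- pick $S_{d} \in \Pi$; and keep $S_{t_{i}} \isdef T_{i}$. Building $V'$ from these $S_{d}$ as before, one checks $V' \leq_{B} V$, that $V'$ separates $B$ (each $S_{d}$ has $\sz{S_{d} \cap B} \leq 1$), that all but finitely many elements witness $S_{d} \in \Sigma$ under $V'$, and that for each $S \in \Sigma$ the infinite set $E_{S}$ still consists of $S$-witnesses under $V'$ (the part $S \setminus B$ is untouched and $S \cap B = S_{d} \cap B$ is retained); hence $(D,V') \models \mondbnfofoei{\vlist{T}}{\Pi}{\Sigma}{+}$.

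The delicate step is exactly this coordination in the $\ofoei$-case: a single element may witness several $\Sigma$-types simultaneously under $V$, while a $B$-separating valuation permits it to witness at most one type with non-empty $B$-part, so one must redistribute the (possibly overlapping) witness sets $W_{S}$ into pairwise disjoint infinite pools. This is precisely where the hypothesis $\sz{S \cap B} \leq 1$ for $S \in \Sigma$ is indispensable: were some $S \in \Sigma$ to satisfy $\sz{S \cap B} \geq 2$, no $B$-separating valuation could witness $S$ at all and the statement would fail outright. Everything else -- monotonicity, the $\leq_{B}$-bookkeeping, and the verification of the $\exists$- and $\forall$-conjuncts -- is routine.
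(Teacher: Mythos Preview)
Your proof is correct and follows essentially the same approach as the paper: pick a true disjunct, assign each domain element a witnessed type $S_d$, and thin the valuation on $B$ pointwise to $S_d \cap B$. In fact your treatment of the $\ofoei$-case is more careful than the paper's, which simply asserts that such a type assignment exists and that the thinned model still satisfies the disjunct, whereas you explicitly construct the disjoint infinite pools $E_S$ needed to preserve the $\qu$-conjuncts of $\posdbnfinf{\Sigma}$.
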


\begin{proof}
We only discuss the case $\oslang = \ofoei$: a simplification of the same 
argument yields the case $\oslang = \ofoe$. 
Aassume that $(D,V) \models \phi$ for some model $(D,V)$. 
We want to construct a valuation $V' \leq_{B} V$ witnessing the $B$-separation
property. 
First, we fix one disjunct 
$\psi = \mondbnfofoei{\vlist{T}}{\Pi}{\Sigma}{+}$ of $\phi^{\bullet}$ such that
$(D,V) \models \psi$. 
The syntactic shape of $\psi$ implies that $(D,V)$ can be partitioned in three 
sets $D_1$, $D_2$ and $D_3$ as follows: $D_1$ contains elements $s_1, \dots,
s_k$ witnessing types $T_1,\dots, T_k,$ respectively; among the remaining
elements, there are infinitely many witnessing some $S\in \Sigma$ (these form
$D_2$), and finitely many not witnessing any $S \in \Sigma$ but each witnessing
some $R \in \Pi$ (these form $D_3$). 
In other words, we have assigned to each $d \in D$ a type $S_{d} \in
\{T_1, \dots, T_k\} \cup \Pi \cup \Sigma$ such that $d$ witnesses $S_{d}$.
Now consider the valuation $U$ that we obtain by pruning $V$ to the extent
that $U(a) \isdef V(a)$ for $a \in A \setminus B$, while $U(b) \isdef 
\{ d \in D \mid b \in S_{d}\}$.
It is then easy to see that we still have $(D,U) \models \psi$, while it is 
obvious that $U$ separates $B$ and that $U \leq_{B} A$.
Therefore $\psi$ is $B$-separating and so $\phi$ is too.
\end{proof}

% !TEX root = ../00CFVZ_TOCL.tex

\subsection{Parity automata}
\label{ssec:parityaut}

Throughout the rest of the section we fix, next to a set $\pprop$ of proposition 
letters, a one-step language $\oslang$, as defined in 
Subsection \ref{sec:onestep-short}.
In light of the results therein, we assume that we have isolated fragments 
$\oslang^+(A)$, $\cont{\oslang(A)}{B}$ and $\cocont{\oslang(A)}{B}$ consisting
of one-step formulas in $\oslang(A)$ that are respectively monotone, 
$B$-continuous and $B$-co-continuous, for $B \subseteq A$.

We first recall the definition of a general parity automaton, adapted to this
setting. 

\begin{definition}[Parity Automata] \label{def:partityaut}
A \emph{parity automaton} based on the one-step language $\oslang$ and the set
$\pprop$ of proposition letters, or briefly: an \emph{$\oslang$-automaton}, is a 
tuple $\bbA = \tup{A,\tmap,\pmap,a_I}$ such that $A$ is a finite set of states,
$a_I \in A$ is the initial state, $\tmap: A\times \pow(\pprop) \to \oslang^+(A)$
is the transition map, and $\pmap: A \to \nat$ is the priority map.
The class of such automata will be denoted by $\Aut(\oslang)$.

Acceptance of a $\pprop$-transition system $\bbS = \tup{T,R,\tscolors,s_I}$ by
$\bbA$ is determined by the \emph{acceptance game} $\agame(\bbA,\bbS)$ of $\bbA$
on $\bbS$. 
This is the parity game defined according to the rules of the following table.
\begin{center}
\small
\begin{tabular}{|l|c|l|c|} \hline
Position & Player & Admissible moves & Priority \\
\hline
    $(a,s) \in A \times T$
  & $\eloise$
  & $\{V : A \to \pow(R[s]) \mid (R[s],V) \models \tmap (a, \tscolors(s)) \}$
  & $\pmap(a)$ 
\\
    $V : A \rightarrow \pow(T)$
  & $\abelard$
  & $\{(b,t) \mid t \in V(b)\}$
  & $0$
\\ \hline
 \end{tabular}
\end{center}
$\bbA$ \emph{accepts} $\bbS$ if $\eloise$ has a winning strategy in 
$\agame(\bbA,\bbS)@(a_I,s_I)$, and \emph{rejects} $\bbS$ if $(a_I,s_I)$ is 
a winning position for $\abelard$. 
We write $\Mod(\bbA)$ for the class of transition systems that are accepted
by $\bbA$ and $\TMod(\bbA)$ for the class of tree models in $\Mod(\bbA)$.
\end{definition}

Explained in words, the acceptance game $\agame(\bbA,\bbS)$ proceeds in rounds, 
each round moving from one basic position $(a,s) \in A \times T$ to the next.
At such a basic position, it is $\eloise$'s task to turn the set $R(s)$ of 
successors of $s$ into the domain of a one-step model for the formula 
$\tmap(a, \tscolors(s)) \in \oslang(A)$.
That is, she needs to come up with a valuation $V: A \to \pow(R[s])$ such that
$(R[s],V) \models \tmap (a, \tscolors(s))$ (and if she cannot find such a 
valuation, she looses immediately).
One may think of the set $\{(b,t) \mid t \in V(b)\}$ as a collection of 
witnesses to her claim that, indeed, $(R[s],V) \models \tmap (a, \tscolors(s))$.
The round ends with $\abelard$ picking one of these witnesses, which then
becomes the basic position at the start of the next round.
(Unless, of course, $\eloise$ managed to satisfy the formula $\tmap(a, 
\tscolors(s))$ with an empty set of witnesses, in which case $\abelard$ gets 
stuck and looses immediately.)

% \myparagraphns{Closure under complementation.}
Many properties of parity automata can already be determined at the one-step
level.
An important example concerns the notion of complementation, which will be used
later in this section. Recall the notion of \emph{dual} of a one-step formula (Definition \ref{def:one-step}). Following ideas from~\cite{Muller1987,DBLP:conf/calco/KissigV09}, we can use duals, together with a \emph{role switch} between $\abelard$ and
$\eloise$, in order to define a negation or complementation operation on 
automata.

%\begin{definition}
%\label{d:bdual1}
%Two one-step formulas $\varphi$ and $\psi$ are each other's \emph{Boolean dual}
%if for every structure $(D,V)$ we have $(D,V) \models \varphi\quad
%\text{iff}\quad (D,V^{c}) \not\models \psi$, where $V^{c}$ is the 
%valuation given by $V^{c}(a) \mathrel{:=} D \setminus V(a)$, for all $a$.
%%
%A one-step language $\oslang$ is \emph{closed under Boolean duals} if for every
%set $A$, each formula $\varphi \in \oslang(A)$ has a Boolean dual $\dual{\varphi}
%\in \oslang(A)$.
%\end{definition}

\begin{definition}
\label{d:caut}
Assume that, for some one-step language $\oslang$, the map $\dual{(\cdot)}$
provides, for each set $A$, a dual $\dual{\varphi} \in \oslang(A)$ for each
$\varphi \in \oslang(A)$.
We define the \emph{complement} of a given $\oslang$-automaton 
$\bbA = \tup{A,\tmap,\pmap,a_I}$ as the automaton $\dual{\bbA} \isdef 
\tup{A,\dual{\tmap},\dual{\pmap},a_I}$ where $\dual{\tmap}(a,c) \isdef
\dual{(\tmap(a,c))}$, and $\dual{\pmap}(a) \isdef 1 + \pmap(a)$, for all 
$a \in A$ and $c \in \pow(\pprop)$.
\end{definition}

\begin{proposition}
\label{prop:autcomplementation}
Let $\oslang$ and $\dual{(\cdot)}$ be as in the previous definition.
For each $\bbA \in \Aut(\oslang)$ and $\bbS$ we have that $\dual{\bbA}$ accepts
$\bbS$ if and only if $\bbA$ rejects $\bbS$.
\end{proposition}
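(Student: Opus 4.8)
The natural approach is to prove the equivalence directly on the level of the acceptance games, by exhibiting a correspondence between winning strategies for $\eloise$ in $\agame(\dual{\bbA},\bbS)$ and winning strategies for $\abelard$ in $\agame(\bbA,\bbS)$ (and vice versa). Since parity games are determined (Fact~\ref{THM_posDet_ParityGames}), it suffices to show one direction, namely that $\eloise$ wins $\agame(\dual{\bbA},\bbS)@(a_I,s_I)$ iff $\abelard$ wins $\agame(\bbA,\bbS)@(a_I,s_I)$; the other half of the biconditional in the statement (''$\dual{\bbA}$ accepts $\bbS$ iff $\bbA$ rejects $\bbS$'') is then immediate from determinacy applied to both games. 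Concretely, I would fix a winning strategy $f$ for $\abelard$ in $\agame(\bbA,\bbS)@(a_I,s_I)$ and build from it a winning strategy $f'$ for $\eloise$ in $\agame(\dual{\bbA},\bbS)@(a_I,s_I)$, and then run the symmetric argument starting from an $\eloise$-strategy in the dual game.

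\textbf{Key steps.} First, I would analyse a single round. In $\agame(\bbA,\bbS)$, at a basic position $(a,s)$, $\eloise$ picks a valuation $V : A \to \pow(R[s])$ with $(R[s],V) \models \tmap(a,\tscolors(s))$, and then $\abelard$ picks a witness $(b,t)$ with $t \in V(b)$. In $\agame(\dual{\bbA},\bbS)$ at the same basic position, the transition formula is $\dual{(\tmap(a,\tscolors(s)))}$. The crucial observation, from the definition of boolean dual, is: for any valuation $V$, $(R[s],V) \models \tmap(a,\tscolors(s))$ iff $(R[s],V^{c}) \not\models \dual{(\tmap(a,\tscolors(s)))}$. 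This lets me convert between $\eloise$'s move in one game and the complementary valuation in the other. A clean way to package the correspondence is to set up a bijection between basic positions of the two games (they are literally the same set $A \times T$) and argue that the legal moves available to one player in $\agame(\bbA,\bbS)$ correspond exactly to the legal responses of the other player in $\agame(\dual{\bbA},\bbS)$. Second, I would interleave these one-round correspondences to transform a full match $\pi$ of $\agame(\bbA,\bbS)$ into a match $\pi'$ of $\agame(\dual{\bbA},\bbS)$ visiting the same sequence of basic positions $(a_0,s_0),(a_1,s_1),\dots$. Third, I would check the winning condition: since $\dual{\pmap}(a) = 1 + \pmap(a)$, the parity of every priority is flipped, so the least priority occurring infinitely often switches parity; hence an infinite match is won by $\eloise$ in one game iff it is won by $\abelard$ in the other. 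For finite matches, a player gets stuck in $\agame(\dual{\bbA},\bbS)$ exactly when the dual player gets stuck in $\agame(\bbA,\bbS)$ at the corresponding position (e.g. $\eloise$ cannot satisfy $\dual{\varphi}$ with some valuation precisely when... ), which again flips the winner. Putting these together, $f$ winning for $\abelard$ yields $f'$ winning for $\eloise$.

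\textbf{Main obstacle.} The delicate point is the bookkeeping at the role-switch in a single round: in $\agame(\bbA,\bbS)$ the round is ''$\eloise$ chooses $V$, then $\abelard$ chooses $(b,t)$'', whereas in $\agame(\dual{\bbA},\bbS)$ it is ''$\eloise$ chooses $V'$, then $\abelard$ chooses $(b',t')$'' — so a strategy for $\abelard$ in the first game must be repackaged as a strategy for $\eloise$ in the second, which means $\eloise$ in the dual game must, in effect, anticipate what $\abelard$'s strategy $f$ would do. The clean resolution is to work with positional strategies (legitimate by Fact~\ref{THM_posDet_ParityGames}): a positional $\abelard$-strategy $f$ in $\agame(\bbA,\bbS)$, as a function on positions of the form $V : A \to \pow(T)$, together with the duality of transition formulas, determines for $\eloise$ in $\agame(\dual{\bbA},\bbS)$ which valuation $V'$ (the complement relative to $R[s]$) to play so that whatever witness $\abelard$ then picks corresponds to a move $f$ would have allowed. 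Getting this packaging exactly right — in particular verifying that the complement valuation is always a legal $\eloise$-move in the dual game whenever the original was not a legal $\eloise$-move, and tracking the empty-domain / ''$\abelard$ gets stuck'' corner cases — is where the real care is needed; the parity-flip argument itself is routine once the match correspondence is in place.
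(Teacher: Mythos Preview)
Your plan is correct and is essentially the paper's own argument: the proof is a role-switch between $\eloise$ and $\abelard$ based on the Boolean duality of the one-step formulas together with the $+1$ priority shift, which the paper phrases abstractly by saying that the \emph{power} of $\eloise$ in $\agame(\dual{\bbA},\bbS)$ equals that of $\abelard$ in $\agame(\bbA,\bbS)$ (citing Kissig--Venema). One small imprecision to fix when you write it out: the valuation $\eloise$ should play in the dual game at $(a,s)$ is not literally ``the complement of $\abelard$'s choice'' (there is no single such choice to complement) but rather the complement of the valuation $U$ given by $U(b)=\{t\in R[s]\mid (b,t)\in\win_{\eloise}(\agame(\bbA,\bbS))\}$; with this choice the legality check and the shadow-match bookkeeping you describe go through cleanly.
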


The proof of Proposition~\ref{prop:autcomplementation} is based on the fact
that the \emph{power} of $\eloise$ in $\agame(\dual{\bbA},\bbS)$ is the same
as that of $\abelard$ in $\agame(\bbA,\bbS)$, as defined 
in~\cite{DBLP:conf/calco/KissigV09}. 
As an immediate consequence, one may show that if the one-step language 
$\oslang$ is closed under duals, then the class $\Aut(\oslang)$ is closed 
under taking complementation.
Further on we will use Proposition~\ref{prop:autcomplementation} to show that
the same may apply to some subclasses of $\Aut(\oslang)$.

The automata-theoretic characterisation of $\wmso$ and $\nmso$ will use classes 
of parity automata constrained by two additional properties.
To formulate these we first introduce the notion of a \emph{cluster}.

\begin{definition}
Let $\oslang$ be a one-step language, and let $\bbA = \tup{A,\tmap,\pmap,a_I}$
be in $\Aut(\oslang)$. 
Write $\ord$ for the reachability relation in $\bbA$, i.e., the transitive 
closure of the ``occurrence relation'' $\{ (a,b) \mid b \text{ occurs in }
\tmap(a,c) \text{ for some } c \in \pow(\pprop) \}$;
in case $a \ord b$ we say that $b$ is \emph{active} in $a$.
A \emph{cluster} of $\bbA$ is a cell of the equivalence relation generated by 
the relation $\ord \cap \succ$ (i.e., the intersection of $\ord$ with its 
converse).
A cluster is called \emph{degenerate} if it consists of a  single element which
is not active in itself.
\end{definition}

Observe that any cluster of an automaton is either degenerate, or else each
of its states is active in itself and in any other state of the cluster.
Observe too that there is a natural order on clusters: we may say that one
cluster is \emph{higher} than another if each member of the second cluster
if active in each member of the first.
We may assume without loss of generality that the initial state belongs to the
highest cluster of the automaton.

We can now formulate the mentioned requirements on $\oslang$-automata as follows.

\begin{definition}
\label{d:wk}
\label{d:ctwk}
Let $\bbA = \tup{A,\tmap,\pmap,a_I}$ be some $\oslang$-automaton.
We say that $\bbA$ is \emph{weak} if $\pmap(a) = \pmap(b)$ whenever $a$ and $b$
belong to the same cluster.
For the property of \emph{continuity} we require that, for any cluster $M$, any
state $a \in M$ and any $c \in \pow\pprop$, we have that 
$\pmap(a) = 1$ implies $\tmap(a,c) \in \cont{\oslang(A)}{M}$
and 
$\pmap(a) = 0$ implies $\tmap(a,c) \in \cocont{\oslang(A)}{M}$.

We call a parity automaton $\bbA \in \Aut(\oslang)$ \emph{weak-continuous} if it 
satisfies both properties, weakness and continuity.
The classes of weak and weak-continuous automata are denoted as $\AutW(\oslang)$
and $\AutWC(\oslang)$, respectively.
\end{definition}

% \begin{definition}
% \label{d:wk}
% \label{d:ctwk}
% Let $\oslang$ be a one-step language, and let $\bbA = \tup{A,\tmap,\pmap,a_I}$
% be in $\Aut(\oslang)$. Write $\ord$ for the reachability relation in $\bbA$, i.e.
% the reflexive-transitive closure of $\{ (a,b) \mid \exists c. b \text{ occurs 
% in }\tmap(a,c)\}$. 
% A \emph{strongly connected $\ord$-component} ($\ord$-SCC) is a subset $M
% \subseteq A$ such that, for every $a,b \in M$ we have $a \ord b$ and $b \ord c$.
% The SCC is called \emph{maximal} (MSCC) when $M\cup\{a\}$ ceases to be a SCC for
% any choice of $a \in A\setminus M$.
% We formulate two requirements on automata from $\Aut(\oslang)$:
% \begin{description}
% \item[(weakness)] if $a \ord b$ and $b \ord a$ then $\pmap(a) = \pmap(b)$.
% \item[(continuity)] if $a \ord b$ and $b \ord a$, then for any $c\in C$:
%   \\ if ${\pmap(a)}=1$ then $\tmap(a,c)$ is syntactically continuous in $b$,
%      i.e., $\tmap(a,c) \in \cont{\oslang}{b}(A)$;
%   \\ if ${\pmap(a)}=0$ then $\tmap(a,c)$ is syntactically co-continuous in $b$,
%      i.e., $\tmap(a,c) \in \cocont{\oslang}{b}(A)$.
% \end{description}
% We call a parity automaton $\bbA \in \Aut(\oslang)$ \emph{weak} if it satisfies
% \emph{(weakness)}, and \emph{weak-continuous} if it additionally satisfies 
% \emph{(continuity)}.
% The classes of weak and weak-continuous automata are denoted as $\AutW(\oslang)$
% and $\AutWC(\oslang)$, respectively.
% \end{definition}

Intuitively, weakness forbids an automaton to register non-trivial properties 
concerning the vertical `dimension' of input trees, whereas continuity expresses
a constraint on how much of the horizontal `dimension' of an input tree the 
automaton is allowed to process. 
In terms of second-order logic, they correspond respectively to quantification 
over `vertically' finite (i.e. included in well-founded subtrees) and 
`horizontally' finite (i.e. included in finitely branching subtrees) sets. 
The conjunction of weakness and continuity thus corresponds to quantification 
over finite sets. 

\begin{remark}\label{rmk:weak01}
Any weak parity automaton $\bbA$ is equivalent to a special weak automaton
$\bbA'$ with $\pmap: A' \to \{0,1\}$. 
This is because \emph{(weakness)} prevents states of different parity to occur
infinitely often in acceptance games; so we may just replace any even priority 
with $0$, and any odd priority with $1$.
We shall assume such a restricted priority map for weak parity automata.
\end{remark}

\subsection{$\mu$-Calculi}
\label{sec:onestep-to-mc}

We now see how to associate, with each one-step language $\oslang$, the 
following variant $\mu\oslang$ of the modal $\mu$-calculus.
These logics are of a fairly artificial nature; their main use is to smoothen
the translations from automata to second-order formulas further on.

\begin{definition}
Given a one-step language $\oslang$, we define the language $\mu\oslang$ of the 
\emph{$\mu$-calculus over $\oslang$}  by the following grammar:
% \[
% \varphi \isbnf  q \mid \neg\varphi \mid \varphi\lor\varphi 
%    \mid \nxt{\al}(\varphi_{1},\ldots,\varphi_{n})
%    \mid \mu p. \varphi',
% \]
\[
\varphi \isbnf  
   q \mid \neg q 
   \mid \varphi\lor\varphi \mid \varphi\land\varphi 
   \mid \nxt{\al}(\varphi_{1},\ldots,\varphi_{n})
   \mid \mu p. \varphi'    \mid \nu p. \varphi',
\]
where $p,q \in\Prop$, $\al(a_{1},\ldots,a_{n}) \in \oslang^{+}$ and $\varphi'$ 
is monotone in $p$.

As in the case of the modal $\mu$-calculus $\muML$, we will freely use standard
syntactic concepts and notations related to this language.
\end{definition}

Observe that the language $\mu\oslang$ generally has a wealth of modalities:
one for each one-step formula in $\oslang$.

The semantics of this language is given as follows.

\begin{definition}
Let $\bbS$ be a transition system.
The satisfaction relation $\mmodels$ is defined in the standard way, with the 
following clause for the modality $\nxt{\alpha}$:
\begin{equation}\label{eq:mumod}
\bbS \mmodels \nxt{\al}(\varphi_{1},\ldots,\varphi_{n})
\quad\text{iff}\quad 
(R[s_{I}],V_{\overline{\varphi}}) \models \al(a_{1},\ldots,a_{n}),
\end{equation}
where $V_{\overline{\varphi}}$ is the one-step valuation given by 
\begin{equation}\label{eq:valmod}
V_{\overline{\varphi}}(a_{i}) \isdef 
  \{ t \in R[s_{I}] \mid \bbS.t \mmodels \varphi_{i}\}.
\end{equation}
\end{definition}

\begin{example}
\label{ex:mul1}
\begin{enumerate}[(1)]
\item
If we identify the modalities $\nxt{\Diamond a}$ and $\nxt{\Box a}$ of the basic
modal one-step language $\oml$ (cf.~Definition~\ref{d:oml}) with the standard
$\Diamond$ and $\Box$ operators, we may observe that $\mu(\oml)$ corresponds to
the standard modal $\mu$-calculus: $\mu(\oml) = \muML$.
\item
Consider the one-step formulas 
$\al = \exists x (a_{1}(x) \land \forall y\, a_{2}(y))$,
$\beta = \exists x y (x \not\foeq y \land a_{1}(x) \land a_{1}(y))$, and
$\gamma = \wqu x (a_{1}(x), a_{2}(x))$.
Then $\nxt{\al}(\phi_{1},\phi_{2})$ is equivalent to the modal formula
$\Diamond \phi_{1} \land \Box \phi_{2}$ and 
$\nxt{\beta}(\phi)$ expresses that the current state has at least two 
successors where $\phi$ holds.
The formula $\nxt{\gamma}(\phi_{1},\phi_{2})$ holds at a state $s$ if all 
its successors satisfy $\phi_{1}$ or $\phi_{2}$, while at most finitely
many successors refute $\phi_{2}$.
Neither $\nxt{\beta}$ nor $\nxt{\gamma}$ can be expressed in standard modal 
logic.
\item
If the one-step language $\oslang$ is closed under taking disjunctions  
(conjunctions, respectively), it is easy to see that 
$\nxt{\al\lor\beta}(\ol{\phi}) \equiv \nxt{\al}(\ol{\phi}) \lor 
\nxt{\beta}(\ol{\phi})$ 
($\nxt{\al\land\beta}(\ol{\phi}) \equiv \nxt{\al}(\ol{\phi}) \land 
\nxt{\beta}(\ol{\phi})$, respectively).
\end{enumerate}
\end{example}

Alternatively but equivalently, one may interpret the language
game-theoretically.

\begin{definition}
Given a $\mu\oslang$-formula $\phi$ and a model $\bbS$ we define the 
\emph{evaluation game} $\egame(\varphi,\bbS)$ as the two-player infinite
game %of which the 
whose rules are given in the next table.
% Table~\ref{tab:EGL}.%
%\begin{table}[htb]
\begin{center}
\begin{tabular}{|l|c|l|c|}
\hline
Position & Player & Admissible moves
\\\hline
    $(q,s)$, with $q \in \FV(\phi) \cap \tscolors(s)$ 
  & $\abelard$ 
  & $\emptyset$
\\  $(q,s)$, with $q \in \FV(\phi) \setminus \tscolors(s)$ 
  & $\eloise$ & $\emptyset$
\\  $(\lnot q,s)$, with $q \in \FV(\phi) \cap \tscolors(s)$ 
  & $\eloise$ 
  & $\emptyset$
\\  $(\lnot q,s)$, with $q \in \FV(\phi) \setminus \tscolors(s)$ 
  & $\abelard$ 
  & $\emptyset$
\\ $(\psi_1 \lor \psi_2,s)$ 
  & $\eloise$ 
  & $\{(\psi_1,s),(\psi_2,s) \}$ 
\\  $(\psi_1 \land \psi_2,s)$ 
  & $\abelard$ 
  & $\{(\psi_1,s),(\psi_2,s) \}$ 
\\  $(\nxt{\al}(\varphi_{1},\ldots,\varphi_{n}),s)$ 
  & $\eloise$ 
  & $\{ Z \sse \{ \varphi_{1},\ldots,\varphi_{n} \} \times R[s]
     \mid (R[s],V^{*}_{Z}) \models \al(\ol{a}) \}$ 
\\  $Z \sse  \Sfor(\phi) \times S$
  & $\abelard$
  & $\{ (\psi, s) \mid (\psi,s) \in Z \}$
\\  $(\mu p.\varphi,s)$ & $-$ & $\{(\varphi,s) \}$ 
\\  $(\nu p.\varphi,s)$ & $-$ & $\{(\varphi,s) \}$ 
\\  $(p,s)$, with $p \in \BV(\phi)$ & $-$ & $\{(\delta_p,s) \}$ \\
  \hline
\end{tabular}
\end{center}
 %\caption{Evaluation game for $\mu\oslang$}
%\caption{}
%\label{tab:EGL}
%\end{table}
For the admissible moves at a position of the form 
$(\nxt{\al}(\varphi_{1},\ldots,\varphi_{n}),s)$, we consider the valuation 
$V^{*}_{Z}: \{ a_{1}, \ldots, a_{n} \} \to \pow(R[s])$, given by
$V^{*}_{Z}(a_{i}) \isdef \{ t \in R[s] \mid (\phi_{i},t) \in Z \}$.
The winning conditions of $\egame(\varphi,\bbS)$ are standard: $\eloise$ wins
those infinite matches of which the highest variable that is unfolded infinitely
often during the match is a $\mu$-variable.
\end{definition}

The following proposition, 
stating the adequacy of the evaluation game for the semantics of $\mu\oslang$,
is formulated explicitly for future reference.
We omit the proof, which is completely routine.

\begin{fact}[Adequacy]
\label{f:adeqmu}
For any formula $\phi \in \mu\oslang$ and any model $\bbS$ the following 
equivalence holds:
\[
\bbS \mmodels \phi
\quad\text{iff}\quad 
(\phi,s_{I}) \text{ is a winning position for $\eloise$ in } 
\egame(\varphi,\bbS).
\]
\end{fact}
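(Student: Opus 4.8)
The plan is to prove the two implications separately, exploiting that $\egame(\varphi,\bbS)$ is a parity game and hence, by Fact~\ref{THM_posDet_ParityGames}, determined; this mirrors the standard adequacy proof for the evaluation game of $\muML$ (Proposition~\ref{p:unfold=evalgame}), to which the statement reduces essentially verbatim when $\oslang = \oml$ (Example~\ref{ex:mul1}, item (1)). First I would set up the induction on the subformula structure of $\varphi$, strengthening the claim to: for every $\psi \subf \varphi$ and every $s \in T$, $s \in \ext{\psi}^{\bbS}$ iff $(\psi,s)\in\win_{\eloise}(\egame(\varphi,\bbS))$. The atomic and boolean cases are immediate from the rules of the game, and the modal case $\psi = \nxt{\al}(\varphi_{1},\ldots,\varphi_{n})$ follows by comparing clauses \eqref{eq:mumod}--\eqref{eq:valmod} with the admissible moves at $(\psi,s)$: a set $Z$ witnessing $(R[s],V^{*}_{Z})\models\al(\ol{a})$ is exactly a one-step valuation of the kind appearing in \eqref{eq:mumod}, where one should not forget the degenerate subcase $R[s]=\nada$, handled on both sides by the empty-domain semantics of $\oslang$ (Definition~\ref{d:ed}).

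For the fixpoint cases I would pass to ordinal approximants: for $\mu p.\psi$ put $\mu^{0}p.\psi\isdef\bot$, $\mu^{\xi+1}p.\psi\isdef\psi[\mu^{\xi}p.\psi/p]$, and $\mu^{\lambda}p.\psi\isdef\bigvee_{\xi<\lambda}\mu^{\xi}p.\psi$ at limits, dually for $\nu$; by Knaster--Tarski these stabilise to $\ext{\mu p.\psi}^{\bbS}$, respectively $\ext{\nu p.\psi}^{\bbS}$, at some ordinal bounded in terms of $|T|$. To show $\bbS\mmodels\varphi\Rightarrow\eloise$ wins, I would have $\eloise$ follow a \emph{signature} strategy: along a play she records, for each $\mu$-variable $p$, the least stage $\xi_p$ at which the current state entered $\ext{\mu p.\delta_p}^{\bbS}$, and at each of her positions she selects a move keeping all $\xi_p$ from increasing and strictly decreasing the signature of the variable just unfolded; existence of such a move in the $\nxt{\al}$-case is precisely where the positivity (monotonicity) of $\al$ in its arguments is used, together with the inductive hypothesis for $\delta_p$. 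For the converse, I would fix a positional winning strategy $f$ for $\eloise$ and prove, by a transfinite induction tracking the approximants, that every $f$-winning position $(\psi,s)$ satisfies $s\in\ext{\psi}^{\bbS}$: at a $\nu$-subformula the set of such states is checked to be a post-fixpoint of the relevant operator (again using monotonicity of $\al$ in the modal step), while at a $\mu$-subformula one uses that $f$ forbids infinite regeneration of the bound variable to assign each winning position an ordinal rank and induct on it.

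The main obstacle I anticipate is purely bookkeeping: reconciling the nested signatures with the global winning condition ``the highest variable unfolded infinitely often is a $\mu$-variable''. Concretely, one must argue that in any infinite $f$-guided match the signature of the highest infinitely-unfolded $\mu$-variable would strictly decrease infinitely often in a well-order---an impossibility---so that this variable is in fact a $\nu$-variable and $\eloise$ wins; and symmetrically in the other direction. Once this standard argument is in place, assembling the structural and fixpoint cases gives the equivalence at $(\varphi,s_{I})$, which is the statement. Everything beyond the signature bookkeeping is routine and identical to the $\muML$ case, the generalised modalities $\nxt{\al}$ contributing nothing new once their monotonicity is invoked.
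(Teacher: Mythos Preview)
Your proposal is correct and follows exactly the standard adequacy argument that the paper has in mind; indeed, the paper does not give a proof at all, merely stating that ``the proof \ldots\ is completely routine'' and omitting it. Your sketch via ordinal approximants and signatures is precisely that routine argument, and your observation that the $\nxt{\al}$-case reduces to the clauses \eqref{eq:mumod}--\eqref{eq:valmod} together with monotonicity of $\al$ is the only point where the generalised modalities enter, which you handle correctly.
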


We will be specifically interested in two fragments of $\mu\oslang$, associated 
with the properties of being noetherian and continuous, respectively, and with 
the associated variants of the $\mu$-calculus $\mu\oslang$ where the use of the 
fixpoint operator $\mu$ is restricted to formulas belonging to these two
respective fragments.

\begin{definition}
Let $\qprop$ be a set  of proposition letters.
We first define the fragment $\noe{\mu\oslang}{\qprop}$ of $\mu\oslang$ of 
formulas that are syntactically \emph{noetherian} in $\qprop$ by the following 
grammar:
\begin{equation*}
   \varphi \isbnf  q
   \mid \psi
   \mid \varphi \lor \varphi
   \mid \varphi \land \varphi
   \mid \nxt{\al}(\varphi_{1},\ldots,\varphi_{n})
   \mid \mu p.\phi'
\end{equation*}
where $q \in \qprop$, $\psi$ is a $\qprop$-free $\muML$-formula,
$\al(a_{1},\ldots,a_{n}) \in \oslang^{+}$ and 
$\phi' \in \noe{\mu\oslang}{\qprop\cup\{p\}}$. 
The \emph{co-noetherian} fragment $\conoe{\mu\oslang}{Q}$ is defined dually.

Similarly, we define the fragment $\cont{\mu\oslang}{\qprop}$ of 
$\mu\oslang$-formulas that are syntactically \emph{continuous} in $\qprop$ as
follows:
\begin{equation*}
   \varphi \isbnf  q
   \mid \psi
   \mid \varphi \lor \varphi
   \mid \varphi \land \varphi
   \mid 
   \nxt{\al}(\varphi_{1},\ldots,\varphi_{k},\psi_{1},\ldots,\psi_{m})
   \mid \mu p.\phi'
\end{equation*}
where $p\in\Prop$, $q \in \qprop$, $\psi$, $\psi_{i}$ are $\qprop$-free 
$\mu\oslang$-formula, $\al(a_{1},\ldots,a_{k},b_{1},\ldots,b_{m}) \in 
\cont{\oslang}{\ol{a}}(\ol{a},\ol{b})$,
and $\phi' \in \cont{\mu\oslang}{\qprop\cup\{p\}}$. 
The \emph{co-continuous} fragment $\cocont{\mu\oslang}{Q}$ is defined dually.
\end{definition}

Based on this we can now define the mentioned variants 
% $\mu_{D}\oslang$ and $\mu_{C}\oslang$ 
of the $\mu$-calculus $\mu\oslang$ where the use of the least (greatest) 
fixpoint operator can only be applied to formulas that belong to, 
respectively, the noetherian (co-noetherian) and continuous (co-continuous)
fragment of the language that we are defining.

\begin{definition}
The formulas of the \emph{alternation-free} $\mu$-calculus $\mu_{D}\oslang$ 
are defined by the following grammar:
\begin{equation*}
   \varphi \isbnf  
      q \mid \neg q 
   \mid \varphi\lor\varphi \mid \varphi\land\varphi 
   \mid \nxt{\al}(\varphi_{1},\ldots,\varphi_{n})
   \mid \mu p. \varphi'    
   \mid \nu p. \varphi'',
\end{equation*} 
where $\al(a_{1},\ldots,a_{n}) \in \oslang^{+}$,
$\phi' \in \mu_{D}\oslang \cap \noe{\mu\oslang}{p}$
and dually $\phi'' \in \mu_{D}\oslang \cap \conoe{\mu\oslang}{p}$.

Similarly, the formulas of the \emph{continuous} $\mu$-calculus $\mu_{C}\oslang$
are given by the grammar
\begin{equation*}
   \varphi \isbnf  
      q \mid \neg q 
   \mid \varphi\lor\varphi \mid \varphi\land\varphi 
   \mid \nxt{\al}(\varphi_{1},\ldots,\varphi_{n})
   \mid \mu p. \varphi'    
   \mid \nu p. \varphi'',
\end{equation*} 
where $\al(a_{1},\ldots,a_{n}) \in \oslang^{+}$,
$\phi' \in \mu_{C}\oslang \cap \cont{\mu\oslang}{p}$
and dually $\phi'' \in \mu_{C}\oslang \cap \cocont{\mu\oslang}{p}$.
\end{definition}

\begin{example}
Following up on Example~\ref{ex:mul1}, it is easy to verify that 
$\mu_{D}\oml = \mudML$ and $\mu_{C}\oml = \mucML$.
\end{example}
%%%

% !TEX root = ../00CFVZ_TOCL.tex
\subsection{From automata to formulas}
   \label{sec:parity-to-mc}

It is well-known that there are effective translations from automata to formulas
and vice versa~\cite{ALG02}.
The first result on $\oslang$-automata that we need in this paper is the 
following.

\begin{theorem}\label{t:autofor}
There is an effective procedure that, given an automaton $\bbA$ in 
$\Aut(\oslang)$, returns a formula $\xi_{\bbA} \in \mu\oslang$ which satisfies
the following properties:
\begin{enumerate}[(1)]
    
\item $\xi_{\bbA}$ is equivalent to $\bbA$;

\item $\xi_{\bbA} \in \mu_{D}\oslang$ if $\bbA \in \AutW(\oslang)$;

\item $\xi_{\bbA} \in \mu_{C}\oslang$ if $\bbA \in \AutWC(\oslang)$.
\end{enumerate}
\end{theorem}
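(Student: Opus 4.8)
The plan is to follow the classical automata-to-formula translation, but to track carefully how the structural properties of the automaton (weakness, continuity) are reflected in the syntactic shape of the resulting formula. First I would fix an automaton $\bbA = \tup{A,\tmap,\pmap,a_I}$ and, using the cluster decomposition of $\bbA$, set up a system of fixpoint equations: introduce one variable $x_a$ for each state $a \in A$, with the defining equation $x_a \approx \bigvee_{c \in \pow\pprop} \big( \chi_c \land \tmap(a,c)[x_b/b \mid b \in A]\big)$, where $\chi_c$ is the conjunction of literals $q$ (for $q \in c$) and $\lnot q$ (for $q \notin c$) pinning down the colour of the current node, and the substitution replaces each occurrence of a state $b$ in the one-step formula $\tmap(a,c) \in \oslang^+(A)$ by $x_b$ — note this is well-typed because $\tmap(a,c)$ is a positive one-step formula, so it is an $\oslang^+$-modality applied to the tuple $(x_b)_{b}$, i.e. a legitimate $\nxt{\al}$-construct of $\mu\oslang$. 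Then I would apply Gauss elimination / Bekic's principle to solve this system variable by variable, respecting the dependency order given by the reachability relation $\ord$ (eliminating states in lower clusters first), assigning to the fixpoint operator binding $x_a$ the type $\mu$ if $\pmap(a)$ is odd and $\nu$ if $\pmap(a)$ is even, and taking $\xi_\bbA$ to be the component of the solution corresponding to $a_I$. Part (1), equivalence of $\xi_\bbA$ with $\bbA$, is then the standard correspondence between the acceptance game $\agame(\bbA,\bbS)$ and the evaluation game $\egame(\xi_\bbA,\bbS)$, proved by reading off a winning strategy for one game from a winning strategy for the other and checking that the parity/winning conditions match up (the priority of $a$ corresponding to the nesting depth and $\mu$/$\nu$-type of the binder for $x_a$); this is Fact~\ref{f:adeqmu} applied together with the definition of $\agame$, and I would only sketch it.

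The substance is in parts (2) and (3). For (2), suppose $\bbA \in \AutW(\oslang)$. Weakness means that within each cluster all states share the same priority, so when we eliminate a whole cluster $M$ we only introduce fixpoint operators of a single type — all $\mu$ if the common priority is odd, all $\nu$ if it is even — and moreover a variable $x_a$ bound by a $\mu$-operator can, by weakness, only depend on variables in its own cluster (all $\mu$, handled simultaneously by a single vectorial $\mu$) or in strictly lower clusters, which by the elimination order have already been substituted away; crucially it cannot depend on a $\nu$-bound variable sitting in the same cluster or higher. This is exactly the condition in the definition of $\noe{\mu\oslang}{p}$ / the grammar for $\mu_D\oslang$: every $\mu p.\phi'$ we produce has $\phi'$ noetherian in $p$ because the only occurrences of $p$ in $\phi'$ are either "direct" (inside the cluster, governed again by $\mu$'s) or inside $p$-free subformulas coming from higher/other clusters. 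I would make this precise by an induction on the number of clusters, showing that the formula obtained by solving the subsystem on clusters at or below a given one lies in $\mu_D\oslang$, and that when we solve a $\mu$-cluster the result is moreover noetherian in the cluster variables; the dual bookkeeping gives the $\nu$/co-noetherian case.

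For (3), $\bbA \in \AutWC(\oslang)$, we additionally know that for a state $a$ with $\pmap(a)$ odd (i.e. bound by $\mu$) and any colour $c$, the one-step formula $\tmap(a,c)$ lies in $\cont{\oslang(A)}{M}$ where $M$ is the cluster of $a$ — it is continuous in all states of its own cluster. Translating the modality clause: the $\mu\oslang$-construct $\nxt{\al}(\cdots)$ obtained from $\tmap(a,c)$ is then of the form $\nxt{\al}(\varphi_1,\dots,\varphi_k,\psi_1,\dots,\psi_m)$ with $\al \in \cont{\oslang}{\ol a}(\ol a,\ol b)$, where the $\varphi_i$ carry the cluster variables and the $\psi_j$ carry the strictly-lower-cluster variables (which become $p$-free after elimination) — precisely the production in the grammar defining $\cont{\mu\oslang}{\qprop}$. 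Hence solving a $\mu$-cluster now yields a formula in $\cont{\mu\oslang}{p}$, and the whole construction stays inside $\mu_C\oslang$; again an induction on clusters, with the co-continuous dual for $\nu$-clusters using that $\pmap(a)=0$ forces $\tmap(a,c) \in \cocont{\oslang(A)}{M}$.

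The main obstacle I expect is the bookkeeping in the Gauss-elimination step: one must choose the elimination order so that it is compatible with $\ord$ on clusters while handling each cluster as an atomic block (a simultaneous/vectorial fixpoint over all states of the cluster), and then verify that after substituting the already-solved lower clusters, the occurrences of the cluster's own variables in each transition formula genuinely fall under the noetherian (resp. continuous) grammar — i.e. that the substitution of $p$-free material for the lower-cluster variables does not smuggle $p$ back in, and that vectorial $\mu$ over a mutually-recursive $\mu$-cluster can be unfolded into nested single-variable $\mu$'s that each satisfy the noetherianness/continuity constraint. Weakness is what makes this work (no interleaving of $\mu$ and $\nu$ within or across a cluster in the "wrong" direction), and the continuity clauses in Definition~\ref{d:ctwk} are tailored exactly to the modality production of $\cont{\mu\oslang}{\qprop}$, so the verification is ultimately a matching-up of two definitions; I would present it as a careful induction rather than a computation.
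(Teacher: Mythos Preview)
Your proposal is correct and takes essentially the same approach as the paper: both proceed by induction on the cluster decomposition, solve each (non-degenerate) cluster by Gauss elimination into nested single-variable fixpoints of uniform type, and verify membership in $\mu_D\oslang$ resp.\ $\mu_C\oslang$ via an inner induction showing that the partially-solved formulas remain noetherian resp.\ continuous in the yet-uneliminated cluster variables (the paper isolates this as a substitution-closure lemma, Proposition~\ref{p:comp}, which is exactly the ``no smuggling $p$ back in'' step you flag). One cosmetic difference: you include the colour-guard $\chi_c$ in the defining equation, which the paper's explicit construction omits; this is needed for part~(1) but irrelevant to the syntactic analysis in (2) and (3).
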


In the remainder of this subsection
% (which is not used in other parts of the paper) 
we focus on the proof of this theorem, which is (a refinement of)
a variation of the standard proof showing that any modal automaton can be 
translated into an equivalent formula in the modal $\mu$-calculus (see for
instance~\cite[Section 6]{Ven08}). 
For this reason we will not go into the details of showing that $\bbA$ and 
$\xi_{\bbA}$ are equivalent, but we will provide a detailed definition of the 
translation, and pay special attention to showing that the translations of weak
and of weak-continuous $\oslang$-automata land in the right fragments of 
$\mu\oslang$. 

The definition of $\xi_{\bbA}$ is by induction on the number of clusters of
$\bbA$, with a subinduction based on the number of states in the top cluster 
of $\bbA$.
For this inner induction we need to widen the class of $\oslang$-automata, and
it will also be convenient to introduce the notion of a preautomaton (which is  
basically an automaton without initital state).

\begin{definition}
A \emph{preautomaton} based on $\oslang$ and $\pprop$, or briefly: a
\emph{preautomaton}, is a triple $\bbA = \tup{A,\tmap,\pmap}$ such that $A$ is
a (possibly empty) finite set of states, $\tmap: A\times \pow(\pprop) \to 
\oslang^+(A)$ and $\pmap: A \to \nat$.

Given a set $X$ of propositional variables, a \emph{generalized preautomaton} 
over $\pprop$ and $X$ is a triple $\bbA = \tup{A,\tmap,\pmap}$ such that $\pmap:
A \to \nat$ is a priority map on the finite state set $A$, while the transition 
map is of the form $\tmap: A\times \pow(\pprop) \to \oslang^+(A\cup X)$.
\end{definition}

Since we will not prove the semantic equivalence of $\bbA$ and $\xi_{\bbA}$,
we confine our attention to the semantics of generalised automata to the
following remark.

\begin{remark}
Generalised automata operate on $\pprop \cup X$-models; it will be convenient to
denote these structures as quintuples of the form $\bbS = \tup{S,R,\tscolors,U,
s_{I}}$, where $\tscolors: \pprop \to \pow S$ is a $\pprop$-colouring and $U: S \to 
\pow X$ is an $X$-valuation on $S$.
The acceptance game $\agame(\bbA,\bbS)$ associated with a generalised automaton
$\bbA = \tup{A,a_{I},\tmap,\pmap}$ and a $\pprop \cup X$-model $\bbS$ is a minor
variation of the one associated with a standard automaton.
At a basic position of the form $(a,s) \in A \times S$, as before $\eloise$ 
needs to come up with a valuation $V$ turning the set $R[s]$ into the domain of 
a one-step model of the formula $\tmap(a,\tscolors(s))$.
The difference with standard automata is that the formula $\tmap(a,\tscolors(s))$
may now involve variables from the set $X$, and that the interpretation of these
is already fixed by the valuation $U$ of $\bbS$, namely by the restriction 
$U_{s}: x \mapsto R[s] \cap U(x)$ to the collection of successors of $s$.
In table, we can present this game as follows:
\begin{center}
\small
\begin{tabular}{|l|c|l|c|} \hline
Position & Player & Admissible moves & Priority \\
\hline
    $(a,s) \in A \times S$
  & $\eloise$
  & $\{V : A \to \pow(R[s]) \mid (R[s],V\cup U_{s}) \models \tmap (a, \tscolors(s)) \}$
  & $\pmap(a)$ 
\\
    $V : A \rightarrow \pow(S)$
  & $\abelard$
  & $\{(b,t) \mid t \in V(b)\}$
  & $0$
\\ \hline
 \end{tabular}
\end{center}
where $V \cup U_{s}$ is the obviously defined $ A \cup X$-valuation on $R[s]$.
\end{remark}

We now turn to the definition of the translation.
We will use the same notation for substitution as for the standard 
$\mu$-calculus, cf.~Subsection~\ref{subsec:mu}.
In addition we use the following notation.

\begin{definition}
Consider, for some preautomaton $\bbA = \tup{A,\tmap,\pmap}$, some state $a \in
A$, and some colour $c \in \pow(\pprop)$, the one-step formula $\tmap(a,c)\in 
\oslang(A)$.
Suppose that for some subset $B \sse A$ we have a collection of 
$\mu\oslang$-formulas $\{ \phi_{b} \mid b \in B \}$.
Without loss of generality we may write $\tmap(a,c) = 
\al(a_{1},\ldots,a_{m},b_{1},\ldots,b_{n})$, where the $a_{i}$ and $b_{j}$ 
belong to $A\setminus B$ and $B$ respectively.
Then we will denote the $\mu\oslang$-formula 
$\nxt{\al}(a_{1},\ldots,a_{m},\phi_{1},\ldots,\phi_{n})$ as follows:
\[
\nxt{\tmap(a,c)}(\phi_{b}/b \mid b \in B)
\isdef \nxt{\al}(a_{1},\ldots,a_{m},\phi_{1},\ldots,\phi_{n}).
\]
\end{definition}

\noindent
We can now define the desired translation from $\oslang$-automata to 
$\mu\oslang$-formulas.

\begin{definition}
\label{d:tr}
By induction on the number of clusters of a preautomaton $\bbA = \tup{A,\tmap,
\pmap}$ we define a map 
\[
\ytr_{\bbA}: A \to \mu\oslang(P).
\]
Based on this definition, for an automaton $\bbA = \tup{A,\tmap,\pmap, a_{I}}$ 
we put
\[
\xi_{\bbA} \isdef \ytr_{\tup{A,\tmap,\pmap}}(a_{I}).
\]

In the base case of the definition of $\ytr$ the preautomaton $\bbA$ has no 
clusters at all, which means in particular that $A = \nada$.
In this case we let $\ytr_{\bbA}$ be the empty map.

In the inductive case we assume that $\bbA = \tup{A,\tmap,\pmap, a_{I}}$ does 
have clusters. 
Let $B \neq \nada$ be the highest cluster, and let $\bbA^{-}$ denote the 
preautomaton with carrier $A \setminus B$, obtained by restricting the maps 
$\tmap$ and $\pmap$ to the set $A \setminus B$.
Then inductively we may assume a translation $\ytr_{\bbA^{-}}: (A \setminus B)
\to \mu\oslang(P)$, and we will define
\[
\ytr_{\bbA}(a) \isdef \ytr_{\bbA^{-}}(a), \quad\text{ if } a \in A \setminus B.
\]

To extend this definition to the states in $B$, we make a case 
distinction.
If $B$ is a degenerate cluster, that is, $B = \{ b \}$ for some state $b$ 
which is not active in itself, then we define
\[
\ytr_{\bbA}(b) \isdef
   \bigvee_{c \in \pow{\pprop}}
   \nxt{\tmap(b,c)}(\ytr_{\bbA^{-}}(a)/a \mid a \in A \setminus B).
\]
The main case of the definition is where $B$ is not degenerate.
Fix an enumeration $b_{1},\ldots,b_{n}$ of $B$ such that $i \leq j$ implies 
$\pmap(b_{i}) \leq \pmap(b_{j})$.
Let $\bbA_{k}$ be the generalized preautomaton\footnote{%
   Here we see the reason to generalise the notion of an automaton:
   in the structure $\bbA_{k}$ ($0 \leq k \leq n$) the objects $b_{k+1},\ldots,
   b_{n}$ are no longer states, but in the formulas $\tmap_{k}(a,c)$ they still 
   occur at the position of states.
}
obtained from $\bbA$ by restricting the transition and priority map to the set
\[
A_{k} \isdef (A\setminus B) \cup \{ b_{1},\ldots,b_{k} \},
\]
so that $\bbA^{0} = \bbA^{-}$ and $\bbA^{n} = \bbA$.
Where $B_{k} \isdef \{ b_{1},\ldots,b_{k} \}$, we now define, by induction on 
$k$, a map 
\[
\ytr^{k}: B \to \mu\oslang(P \cup (B \setminus B_{k})).
\]
In the base case of this definition we set
\[
\ytr^{0}(b) \isdef 
   \bigvee_{c \in \pow{\pprop}} 
   \nxt{\tmap(b,c)}(\ytr_{\bbA^{-}}(a)/a \mid a \in A \setminus B),
\]
and in the inductive case we first define $\eta_{k+1} \isdef \mu$ if
$\pmap(b_{k+1})$ is odd, and $\eta_{k+1} \isdef \nu$ if $\pmap(b_{k+1})$ is 
even, and then set
\[\begin{array}{llll}
     \ytr^{k+1}(b_{k+1}) &\isdef &
   \eta_{k+1} b_{k+1}. \ytr^{0}(b_{k+1})[\ytr^{k}(b_{i})/b_{i} \mid 1 \leq i\leq k]
\\ \ytr^{k+1}(b_{i}) &\isdef &
   \ytr^{k}(b_{i})[\ytr^{k+1}(b_{k+1})/b_{k+1}]
   & \text{ for } i \neq k+1.
\end{array}\]
Finally, we complete the definition of $\ytr_{\bbA}$ by putting
\[
\ytr_{\bbA}(b) \isdef \ytr^{n}(b),
\]
for any $b \in B$.
\end{definition}

In the proof of Theorem~\ref{t:autofor} we will need the following closure 
property of the fragments $\noe{\mu\oslang}{\qprop}$ and 
$\cocont{\mu\oslang}{\qprop}$.

\begin{proposition}
\label{p:comp}
Let $\rprop \subseteq \qprop$ be sets of proposition letters, and let $\phi$ 
and $\phi_{q}$, for each $q \in \qprop$, be formulas in $\mu\oslang$.

(1) If $\phi$ and each $\phi_{q}$ belongs to 
    $\noe{\mu\oslang}{\qprop\setminus\rprop}$ 
    ($\conoe{\mu\oslang}{\qprop\setminus\rprop}$), 
   then so does $\phi[\phi_{q}/q \mid q \in \qprop]$.

(2) If $\phi$ and each $\phi_{q}$ belongs to
   $\cont{\mu\oslang}{\qprop\setminus\rprop}$
   ($\cocont{\mu\oslang}{\qprop\setminus\rprop}$), 
   then so does $\phi[\phi_{q}/q \mid q \in \qprop]$.
\end{proposition}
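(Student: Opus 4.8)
The plan is to prove both parts by induction on the way $\phi$ is generated by the relevant grammar — that of $\noe{\mu\oslang}{\cdot}$ for part~(1), that of $\cont{\mu\oslang}{\cdot}$ for part~(2) — treating the noetherian and continuous cases explicitly and obtaining the co-noetherian and co-continuous ones by the symmetric argument on the dual grammars. Abbreviate the tracked set by $Q_{0} \isdef \qprop\setminus\rprop$ and write $\sigma = [\phi_{q}/q \mid q\in\qprop]$ for the substitution, so that the goal reads $\phi\sigma\in\noe{\mu\oslang}{Q_{0}}$ (resp.\ $\cont{\mu\oslang}{Q_{0}}$); the standard freshness conventions guarantee that no bound variable of $\phi$ occurs in any $\phi_{q}$ and that no free variable of a $\phi_{q}$ gets captured, so $\sigma$ is a genuine term-for-variable substitution. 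A preliminary observation, used repeatedly, is that a $p$-free member of $\noe{\mu\oslang}{Q_{0}}$ is again a member of $\noe{\mu\oslang}{Q_{0}\cup\{p\}}$ (a routine induction: $\psi$-leaves, being subformulas of a $p$-free formula, are $p$-free, hence $(Q_{0}\cup\{p\})$-free), and likewise for the continuous fragment.

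For part~(1) we distinguish the clauses producing $\phi\in\noe{\mu\oslang}{Q_{0}}$. If $\phi = q$ with $q\in Q_{0}\subseteq\qprop$, then $\phi\sigma = \phi_{q}\in\noe{\mu\oslang}{Q_{0}}$ by hypothesis. The cases $\phi = \varphi_{1}\lor\varphi_{2}$ and $\phi = \varphi_{1}\land\varphi_{2}$ are immediate, since $\sigma$ distributes over the connectives and one applies the induction hypothesis to $\varphi_{1}$ and $\varphi_{2}$. For $\phi = \nxt{\al}(\varphi_{1},\ldots,\varphi_{n})$ with $\al\in\oslang^{+}$, note that $\sigma$ acts only inside the arguments — it leaves $\al$ untouched, as $\al$ ranges over the automaton states, not the proposition letters — so $\phi\sigma = \nxt{\al}(\varphi_{1}\sigma,\ldots,\varphi_{n}\sigma)$; by the induction hypothesis each $\varphi_{i}\sigma\in\noe{\mu\oslang}{Q_{0}}$, and since the noetherian production for $\nxt{\al}$ admits \emph{every} argument slot to be filled by a noetherian formula, $\phi\sigma\in\noe{\mu\oslang}{Q_{0}}$. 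For $\phi = \mu p.\varphi'$ with $\varphi'\in\noe{\mu\oslang}{Q_{0}\cup\{p\}}$, apply the induction hypothesis to $\varphi'$ with the enlarged parameters obtained by adding the fresh variable $p$ to $\qprop$ and setting $\phi_{p} \isdef p$: the $\phi_{q}$ with $q\in\qprop$ are $p$-free and hence lie in $\noe{\mu\oslang}{Q_{0}\cup\{p\}}$, and $\phi_{p}=p\in\noe{\mu\oslang}{Q_{0}\cup\{p\}}$, so we get $\varphi'\sigma\in\noe{\mu\oslang}{Q_{0}\cup\{p\}}$ and therefore $\phi\sigma = \mu p.(\varphi'\sigma)\in\noe{\mu\oslang}{Q_{0}}$. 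The remaining clause, $\phi = \psi$ with $\psi$ a $Q_{0}$-free $\mu\oslang$-formula, is the delicate one: only the components $\phi_{r}$ with $r\in\rprop$ act on $\psi$, and one must verify that $\psi\sigma$ is again admitted at a $\psi$-slot of the grammar.

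For part~(2) the induction is the same, with one extra point in the modal clause $\phi = \nxt{\al}(\varphi_{1},\ldots,\varphi_{k},\psi_{1},\ldots,\psi_{m})$: here the continuity production distinguishes the first $k$ argument slots — where $\al$ is continuous and which must carry $\cont{\mu\oslang}{Q_{0}}$-formulas — from the last $m$ slots, which must carry $Q_{0}$-free formulas. The induction hypothesis handles the first group; for the second one needs each $\psi_{j}\sigma$ to remain $Q_{0}$-free. Since $\sigma$ does not alter $\al$, its continuity profile (the membership $\al\in\cont{\oslang}{\ol{a}}$) is preserved with no work on the one-step side.

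The main obstacle is exactly the bookkeeping flagged above: ensuring that applying $\sigma$ to a $Q_{0}$-free subformula — a $\psi$-leaf, or a $\psi_{j}$ occupying a non-continuous argument slot — produces something the grammar still accepts there. This is where the hypotheses on the $\phi_{q}$, together with $\rprop\cap Q_{0} = \nada$ and the freshness of bound variables, must be combined carefully; every other step is routine propagation of the induction hypothesis through the grammar, and the co-noetherian and co-continuous statements follow by dualising the whole argument.
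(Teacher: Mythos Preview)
Your induction on the grammar is exactly the ``straightforward formula induction'' the paper announces (and then omits). You are also right that the $\psi$-leaf case is the delicate one --- but it is worse than delicate: as stated, the proposition is false, and no amount of bookkeeping will close that case. Take $\qprop = \{q_{0}, r\}$, $\rprop = \{r\}$, so $Q_{0} = \{q_{0}\}$; let $\phi = \nu p.(p \land r)$ (this is $q_{0}$-free, hence a $\psi$-leaf of $\noe{\mu\oslang}{Q_{0}}$), and put $\phi_{r} = q_{0}$, $\phi_{q_{0}} = q_{0}$, both in $\noe{\mu\oslang}{Q_{0}}$. Then $\phi\sigma = \nu p.(p \land q_{0})$, which is \emph{not} in $\noe{\mu\oslang}{\{q_{0}\}}$: the only clause producing a $\nu$-formula is the $\psi$-clause, which demands $q_{0}$-freeness. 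The same example refutes part~(2). So the ``careful combination of hypotheses'' you promise for this case cannot exist.

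The repair is to strengthen the hypothesis on $\phi$ to $\phi \in \noe{\mu\oslang}{\qprop}$ (resp.\ $\cont{\mu\oslang}{\qprop}$), while keeping each $\phi_{q} \in \noe{\mu\oslang}{Q_{0}}$. Then every $\psi$-leaf in the $\qprop$-parse of $\phi$ is $\qprop$-free, so $\sigma$ acts on it vacuously and the leaf is trivially in $\noe{\mu\oslang}{Q_{0}}$; your induction then closes with no further effort (the $\mu$-case still works: apply the induction hypothesis with $\qprop' = \qprop\cup\{p\}$ and $\rprop' = \rprop$). This stronger hypothesis is exactly what the two applications of Proposition~\ref{p:comp} in the proof of Theorem~\ref{t:autofor} actually have available --- in both, $\phi$ is shown to lie in the fragment indexed by the full substitution set, not just by $Q_{0}$ --- so nothing downstream is affected.
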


Both items of this proposition can be proved by a straightforward formula 
induction --- we omit the details.

\begin{proofof}{Theorem~\ref{t:autofor}}
As mentioned, the verification of the equivalence of $\xi_{\bbA}$ and $\bbA$
% for any $\oslang$-automaton $\bbA$, 
is a standard exercise in the theory of parity automata and mu-calculi, and so
we omit the details.
We also skip the proof of item (2), completely focussing on the (harder) third 
item.

To prove this item, it suffices to take an arbitrary continuous-weak 
$\oslang$-preautomaton $\bbA = \tup{A,\tmap,\pmap}$ for the set $\pprop$,
and to show that 
\begin{equation}
\label{eq:tr1}
\ytr_{\bbA}(a) \in \mu_{C}\oslang(\pprop)
\end{equation}
for all $a \in A$.
We will prove this by induction on the number of clusters of $\bbA$.

Since there is nothing to prove in the base case of the proof, we immediately
move to the inductive case.
Let $B$ be the highest cluster of $\bbA$.
By the induction hypothesis we have $\ytr_{\bbA}(a) = \ytr_{\bbA^{-}}(a) \in 
\mu_{C}\oslang(\pprop)$ for all $a \in A \setminus B$, where 
$\bbA^{-}$ is as in Definition~\ref{d:tr}.
To show that \eqref{eq:tr1} also holds for all $b \in B$, we distinguish cases.

If $B$ is a degenerate cluster, say, $B = \{ b \}$, then for every $c \in 
\pow(\pprop)$, the variables occurring in the formula $\tmap(b,c) \in \oslang(A)$
are all from $A \setminus \{ b \}$.
Given the definition of $\ytr_{\bbA}(b)$, it suffices to show that all formulas 
of the form $\nxt{\tmap(b,c)}(\ytr_{\bbA}(a)/a \mid a \in A \setminus \{ b \})$
belong to the set $\mu_{C}\oslang(\pprop)$, but this is immediate by the 
induction hypothesis and the definition of the language.

If, on the other hand, $B$ is nondegenerate, let $b_{1},\ldots,b_{n}$ enumerate
$B$, and let, for $0\leq k \leq n$, the map $\ytr^{k}: B \to \mu\oslang$ be as in
Definition~\ref{d:tr}.
We only consider the case where $B$ is an odd cluster, i.e., $\pmap(b)$ is odd
for all $b \in B$.
Our key claim here is that 
\begin{equation}
\label{eq:tr2}
\ytr^{k}(b_{i}) \in \mu_{C}\oslang(\pprop \cup \{ b_{k+1},\ldots,b_{n}\}) \cap 
\cont{\mu\oslang}{\{ b_{k+1},\ldots,b_{n}\}},
\end{equation}
for all $k$ and $i$ with $0 \leq k \leq n$ and $0 < i \leq n$.
We will prove this statement by induction on $k$ --- this is the `inner' 
induction that we announced earlier on.

In the base case of this inner induction we need to show that $\ytr^{0}(b_{i})$ 
belongs to both $\mu_{C}\oslang(\pprop \cup B)$ and $\cont{\mu\oslang}{B}$.
Showing the first membership relation is straightforward; for the second, the 
key observation is that by our assumption on $\bbA$, every one-step formula of
the form $\tmap(b_{i},c)$ is syntactically continuous in every variable $b \in 
B$.
Furthermore, by the outer inductive hypothesis we have $\ytr_{\bbA^{-}}(a) \in
\mu_{C}\oslang(\pprop) \subseteq \cont{\mu\oslang}{B}$,
for every $a \in A \setminus B$, and we trivially have that every variable $b 
\in B$ belongs to the set $\cont{\mu\oslang}{B}$.
But then it is immediate by the definition of $\cont{\mu\oslang}{B}$ that 
this fragment contains the formula $\nxt{\tmap(b_{i},c)}(\ytr_{\bbA^{-}}(b)
\mid b \in B)$, and since this fragment is closed under taking disjunctions, 
we find that $\ytr^{0}(b_{i}) \in \cont{\mu\oslang}{B}$ indeed.
This finishes the proof of the base case of the inner induction.

For the inner induction step we fix a $k$ and assume that
% $\ytr^{k}(b_{i}) \in \mu_{C}\oslang(\pprop  \cup \{ b_{k+1},\ldots,b_{n}\}) 
% \cap \cont{\mu\oslang}{\{ b_{k+1},\ldots,b_{n}\}}$,
\eqref{eq:tr2} holds for this $k$ and for all $i$ with $0 < i \leq k$.
We will prove that
\begin{equation}
\label{eq:tr3}
\ytr^{k+1}(b_{i}) \in \mu_{C}\oslang(\pprop) \cap 
    \cont{\mu\oslang}{\{ b_{k+2},\ldots,b_{n}\}}.
\end{equation}
first for $i = k+1$, and then for an arbitrary $i \neq k+1$.
To prove \eqref{eq:tr3} for the case $i = k+1$, first note that
% \begin{equation}
% \label{eq:tr3}
% \ytr^{k+1}(b_{k+1}) \in \mu_{C}\oslang(\pprop) \cap 
%     \cont{\mu\oslang}{\{ b_{k+2},\ldots,b_{n}\}}.
% \end{equation}
% To see this, 
% \[\begin{array}{lcl}
% \ytr^{0}(b_{k+1}) &\in& \mu_{C}\oslang(\pprop\cup \{ b_{k+1},\ldots,b_{n}\}) 
%    \cap \cont{\mu\oslang}{\{ b_{k+1},\ldots,b_{n}\}},
% \end{array}\]
\[\begin{array}{lcl}
\ytr^{0}(b_{k+1}) &\in& \mu_{C}\oslang(\pprop\cup B) 
   \cap \cont{\mu\oslang}{B},
\end{array}\]
as we just saw in the base case of the inner induction.
But then it is immediate by Proposition~\ref{p:comp} and the induction 
hypothesis on the formulas $\ytr^{k}(b_{i})$ that 
\[
\ytr^{0}(b_{k+1})[\ytr^{k}(b_{i})/b_{i} \mid 1 \leq i\leq k]
\in \cont{\mu\oslang}{\{ b_{k+1},\ldots,b_{n}\}},
\]
and from this it easily follows by the definition of 
$\cont{\mu\oslang}{\{ b_{k+1},\ldots,b_{n}\}}$ that 
\[
\mu b_{k+1} b_{k+1}. \ytr^{0}(b_{k+1})[\ytr^{k}(b_{i})/b_{i} \mid 1 \leq i\leq k]
\in \cont{\mu\oslang}{\{ b_{k+2},\ldots,b_{n}\}}
\]
as well. 
That is, 
\[
\ytr^{k+1}(b_{k+1}) \in \cont{\mu\oslang}{\{ b_{k+2},\ldots,b_{n}\}}.
\]
This is the crucial step in proving \eqref{eq:tr3} for the case $i=k+1$,
the proof that $\ytr^{k+1}(b_{k+1}) \in \mu_{C}\oslang(\pprop)$ is easy.

Second, to prove \eqref{eq:tr3} for the case $i \neq k+1$, we first 
% claim that
% \begin{equation}
% \label{eq:tr4}
% \ytr^{k+1}(b_{i}) \in \mu_{C}\oslang(\pprop) \cap 
%     \cont{\mu\oslang}{\{ b_{k+2},\ldots,b_{n}\}}.
% \end{equation}
% To see this,
recall that by the induction hypothesis we have
\[
\ytr^{k}(b_{i}) \in \mu_{C}\oslang(\pprop) \cap 
    \cont{\mu\oslang}{\{ b_{k+1},\ldots,b_{n}\}},
\]
while we just saw that 
$\ytr^{k+1}(b_{k+1}) \in \cont{\mu\oslang}{\{ b_{k+2},\ldots,b_{n}\}}$.
But from the latter two statements it is immediate by Proposition~\ref{p:comp} 
that 
\[
\ytr^{k+1}(b_{i}) = \ytr^{k}(b_{i})[\ytr^{k+1}(b_{k+1})/b_{k+1}] 
\in \cont{\mu\oslang}{\{ b_{k+2},\ldots,b_{n}\}}
\]
so that we have indeed proved \eqref{eq:tr2} for the case $i \neq 
k+1$.
This finishes the proof of the inner induction.

Finally, it follows from \eqref{eq:tr2}, instantiated with $k = n$, 
that for all $b \in B$ we have
\[
\ytr_{\bbA}(b) = \ytr^{n}(b) \in \mu_{C}\oslang(\pprop),
\]
as required for proving the outer induction step.
In other words, we are finished with the proof of \eqref{eq:tr1}, and 
hence, finished with the proof of the theorem.
\end{proofof}

% !TEX root = ../00CFVZ_TOCL.tex
\subsection{From Formulas to Automata}
   \label{sec:mc-to-parity}

In this subsection we focus on the meaning preserving translation in the
opposite direction, viz., from formula to automata.
In our set-up we need the one-step language to be closed under conjunctions and
disjunctions.

\begin{theorem}\label{t:fortoaut}
Let $\oslang$ be a on-step language that is closed under taking conjunctions and 
disjunctions.
Then there is an effective procedure that, given a formula $\xi \in \mu\oslang$
returns an automaton $\bbA_{\xi}$ in $\Aut(\oslang)$, which satisfies the 
following properties:
\begin{enumerate}[(1)]
\item $\bbA_\xi$ is equivalent to $\xi$;
\item  $\bbA_\xi \in \AutW(\oslang)$ if $\xi \in \mu_{D}\oslang$;
\item  $\bbA_\xi \in \AutWC(\oslang)$ if $\xi \in \mu_{C}\oslang$.
\end{enumerate}
\end{theorem}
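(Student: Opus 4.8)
The plan is to adapt the standard translation from modal $\mu$-calculus formulas to parity automata (see~\cite[Section~6]{Ven08} or~\cite{ALG02}), refining it so that the two structural constraints on the input formula --- alternation-freedom and continuity --- are mirrored, respectively, by weakness and continuity of the output automaton. As a first step I would preprocess $\xi$ so that it is \emph{clean} (no proposition letter both free and bound, each bound variable bound exactly once) and \emph{guarded} (in the binding definition $\delta_{p}$ of every bound variable $p$, each occurrence of $p$ lies in the scope of some modality $\nxt{\al}$). The one delicate point here is that guarding must be carried out while staying inside $\mu_{D}\oslang$, resp.\ $\mu_{C}\oslang$: this is routine but has to be checked, the key being that the guarding transformation neither creates new free-variable dependencies between $\mu$- and $\nu$-subformulas nor moves a variable from a continuous into a non-continuous position.

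Next I would build $\bbA_{\xi} = \tup{A,\tmap,\pmap,a_{I}}$ exactly as in the classical case: the state set $A$ consists of (representatives of) the subformulas of $\xi$, with $a_{I} \isdef \xi$; the transition formula $\tmap(\psi,c)$ is read off by a finite \emph{one-step unfolding} of $\psi$ --- push through the Boolean connectives, evaluate propositional atoms against the colour $c$, replace a bound variable $p$ by $\delta_{p}$ and a binder $\mu p.\chi$ or $\nu p.\chi$ by $\chi$, and stop at each modality $\nxt{\al}(\chi_{1},\dots,\chi_{n})$, recording $\al$ with the $\chi_{i}$ installed at the argument positions $a_{i}$; and the priority $\pmap(\psi)$ is determined by the dependency order of the bound variables, being odd when the dominant bound variable reachable from $\psi$ is a $\mu$-variable and even when it is a $\nu$-variable. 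The unfolding terminates because $\xi$ is guarded, and it produces a formula in $\oslang^{+}(A)$ precisely because $\oslang$ is assumed closed under $\wedge$ and $\vee$ (those connectives are needed to combine the one-step formulas arising from conjunctive/disjunctive subformulas and from the disjunction over colours). Item~(1), the equivalence of $\bbA_{\xi}$ and $\xi$, is the standard match between the acceptance game and the evaluation game of Fact~\ref{f:adeqmu}, and I would only sketch it.

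For item~(2) the observation is that a cluster of $\bbA_{\xi}$ --- a cell of the symmetric part of the reachability relation $\ord$ --- always contains at least one bound variable (any $\ord$-cycle must use a variable-unfolding step, since modalities and binder-unfoldings only descend in the syntax tree), and that if $\xi \in \mu_{D}\oslang$ then all bound variables of a single cluster have the same kind. This last fact is exactly the content of the characterisation of the alternation-free fragment recalled in Subsection~\ref{subsec:mu} (for all subformulas $\mu p.\psi_{1}$ and $\nu q.\psi_{2}$, $p$ is not free in $\psi_{2}$ and $q$ is not free in $\psi_{1}$): a $\mu$-variable and a $\nu$-variable cannot be mutually reachable. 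Hence the priority map may be replaced by one that is constant on each cluster (indeed by one with range $\{0,1\}$, cf.\ Remark~\ref{rmk:weak01}), so $\bbA_{\xi} \in \AutW(\oslang)$.

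Item~(3) is the part that requires the most care, and I expect it to be the main obstacle. Beyond weakness one must show, for every cluster $M$ of $\bbA_{\xi}$, every $a \in M$ with $\pmap(a)=1$ and every colour $c$, that $\tmap(a,c) \in \cont{\oslang(A)}{M}$, and dually $\tmap(a,c) \in \cocont{\oslang(A)}{M}$ when $\pmap(a)=0$. The idea is to trace how the fragment $\cont{\mu\oslang}{p}$ constrains the binding definition $\delta_{p}$ of a $\mu$-variable: every modality $\nxt{\al}(\dots)$ occurring in $\delta_{p}$ has its one-step formula $\al$ continuous (in $\oslang$) in the argument positions carrying $p$ or another variable of $p$'s cluster, while those cluster variables never appear inside a $\nu$-subformula or in any other position that would break continuity. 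Since $\tmap(a,c)$ is assembled from such $\al$'s by substitution at argument positions and by $\wedge$/$\vee$, and since the continuous one-step fragments $\cont{\oslang(A)}{M}$ are closed under exactly these operations --- a closure fact I would either derive from the grammars of Theorem~\ref{th:onesteplogics-grammars} or lift from the companion paper~\cite{carr:mode18} --- the transition formula $\tmap(a,c)$ lands in $\cont{\oslang(A)}{M}$; the $\nu$-case follows by the duality wired into all the relevant definitions. The two places where vigilance is needed are (i) ensuring that the one-step unfolding never slips a cluster variable out of a continuous position, which is guaranteed by guardedness together with the grammar defining $\cont{\mu\oslang}{p}$, and (ii) having the precise closure lemmas for the continuous fragments of $\ofo$, $\ofoe$ and $\ofoei$ at hand in the form needed here.
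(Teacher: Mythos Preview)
Your proposal is correct and follows the same underlying strategy as the paper: guard the formula, read off an automaton from the syntax, and then verify that the alternation-freedom/continuity constraints on the formula translate into weakness/continuity of the automaton.

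The organisation differs, however. You take the \emph{flat} route: all subformulas become states at once, and you analyse the cluster structure of the resulting automaton after the fact. The paper instead proceeds by induction on the \emph{weak alternation depth} of $\xi$, building the automaton one cluster at a time. Concretely, for the outermost block of mutually recursive $\mu$-variables (say), the paper treats every maximal $\qprop$-free subformula $\psi$ as atomic, inductively assumes an automaton $\bbA_{\psi}$ for it, and then adds a fresh top cluster $\{a_{\xi}\} \cup A_{\Phi}$ (where $A_{\Phi}$ collects the arguments of modalities that stay in the block) with uniform priority~$1$. The transition formula for a top-cluster state is computed by a map $(\cdot)^{o}$ that unfolds Booleans and fixpoints until a modality is reached, exactly as in your one-step unfolding.

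What the paper's layered approach buys is that the continuity check for item~(3) becomes local and explicit: the ``cluster'' is literally the set $A_{\Phi}$ just introduced, and one only has to observe that each $\chi^{o}$ is continuous in $A_{\Phi}$ --- which follows because every contributing one-step formula $\al$ is already syntactically in $\cont{\oslang}{\ol{a}}$ by the grammar of $\mu_{C}\oslang$, and $A_{\Phi}$-states occupy exactly the $\ol{a}$-positions. (The paper even adds a safety net, invoking Theorem~\ref{t:osnf-cont} to enforce syntactic continuity should the direct argument not deliver it.) Your flat approach reaches the same conclusion, but you must first identify the cluster $M$ inside the global state space and then argue that states of $M$ occupy only continuous argument positions in every $\al$ reached during the unfolding --- precisely the bookkeeping you flag under points~(i) and~(ii). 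Both routes work; the paper's is a bit cleaner for the continuity verification, yours is closer to the standard presentation of the formula-to-automaton translation.
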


As in the case of the translation from automata to formulas, the proof of part
(1) of this theorem, is a straightforward variation of the standard proof showing
that any fixpoint modal formula can be translated into an equivalent modal 
automaton (see for instance~\cite[Section 6]{Ven08}). 
The point is to show that this standard construction transforms formulas from 
$\mu_{D}\oslang$ and $\mu_{C}\oslang$ into automata of the right kind.
In fact, we will confine our attention to proving the (second and) third part
of the theorem; the fact that the input formula belongs to the alternation-free
fragment of $\mu\oslang$ enables a slightly simplified presentation of the 
construction.

Our first observation is that without loss of generality we may confine 
attention to \emph{guarded} formulas.

\begin{definition}
An occurrence of a bound variable $p$ in $\xi  \in \mu\oslang$ is called 
\emph{guarded} if there is a modal operator between its binding definition and 
the variable itself. 
A formula $\xi  \in \mu\oslang$ is called \emph{guarded} if every occurrence of
every bound variable is guarded.
\end{definition}

There is a standard construction, going back to \cite{Koz83}, which transforms
any formula $\xi$ in $\mu\oslang$ into an equivalent guarded $\xi^\flat \in 
\mu\oslang$, and it is easily verified that the construction $(\cdot)^{\flat}$
restricts to the fragments $\mu_{D}\oslang$ and $\mu_{C}\oslang$.
It therefore suffices to show that any guarded formula in $\mu_{D}\oslang$
($\mu_{C}\oslang$, respectively) can be transformed into an equivalent
continuous-weak $\oslang$-automaton.

In the remainder of this section we will show that any guarded formula $\xi$ 
in $\mu_{C}\oslang$ can be transformed into an equivalent continuous-weak 
$\oslang$-automaton $\bbA_{\xi}$; the analogous result for $\mu_{D}\oslang$ will
be obvious from our construction.

We will prove the result by induction on the so-called weak alternation depth
of the formula $\xi$.
We only consider the $\mu$-case of the inductive proof step; that is, we 
assume that $\xi$ can be obtained as a formula in the following grammar, for 
some set $\qprop$ of variables:
\begin{equation}
\label{eq:gr}
\phi \isbnf 
   q \mid \psi \mid \phi\lor\phi \mid \phi\land\phi \mid
   \nxt{\al}(\ol{\phi},\ol{\psi}) \mid \mu q. \phi,
\end{equation}
where $q \in \qprop$, every $\psi,\psi_{j}$ is $\qprop$-free, and 
$\al(\ol{a},\ol{b}) \in \cont{\oslang}{\ol{a}}$.
We write $\phi \subf' \xi$ if $\phi$ is a subformula of $\xi$ according to 
the grammar \eqref{eq:gr} (that is, we consider $\qprop$-free formulas $\psi$
to be atomic).
Furthermore, we inductively assume that for every $\qprop$-free formula $\psi
\subf' \xi$ we have already constructed an equivalent continuous-weak 
$\oslang$-automaton $\bbA_{\psi} = \tup{A_{\psi},a_{\psi},\tmap_{\psi},
\pmap_{\psi}}$.
(Observe that every subformula of a guarded formula it itself guarded, so that 
we are justified to apply the induction hypothesis.)

Define
$\Psi \isdef \{ \psi \subf' \xi \mid \psi\text{ $\qprop$-free} \}$, let $\Phi$
consist of all formulas $\phi \not\in \Psi$ that occur as some $\phi = \phi_{i}$
in a formula $\nxt{\al}(\ol{\phi},\ol{\psi}) \subf' \xi$, and set $A_{\Psi} 
\isdef \{ a_{\psi} \mid \psi \in \Psi \}$ and $A_{\Phi} \isdef \{ a_{\phi} \mid 
\phi \in \Phi \}$.
It now follows by guardedness of $\xi$ that there is a unique map $(\cdot)^{o}$ 
assigning a one-step formula $\chi^{o} \in \oslang(A_{\Psi} \cup A_{\Phi})$
to every formula $\chi \subf' \xi$ such that 
\[\begin{array}{lll}
   q^{o}    & = & \delta_{q}^{o}
\\ \psi^{o} & = & a_{\psi}
\\ (\phi_{0}\lor\phi_{1})^{o}  & = & \phi_{0}^{o} \lor \phi_{1}^{o}
\\ (\phi_{0}\land\phi_{1})^{o} & = & \phi_{0}^{o} \land \phi_{1}^{o}
\\ \nxt{\al}(\phi_{1},\ldots,\phi_{n},\psi_{1},\ldots,\psi_{k})^{o} & = &
   \al(a_{\phi_{1}},\ldots,a_{\phi_{n}},a_{\psi_{1}},\ldots,a_{\psi_{k}})
\\ (\mu q. \delta_{q})^{o}     & = & \delta_{q}^{o},
\end{array}\]
where $\delta_{q}$ is the unique formula $\delta$ such that $\mu q.\delta 
\subf' \xi$.
% \bigskip\hrule\bigskip

It is easy to verify that for every $\chi \subf' \xi$, the formula $\chi^{o}$ 
is continuous in $A_{\Phi}$.
We may thus assume without loss of generality that every $\chi^{o}$ belongs to
the syntactic fragment $\cont{\oslang}{A_{\Phi}}$.
(Should this not be the case, then by Theorem~\ref{t:osnf-cont} we may replace
$\chi^{o}$ with an equivalent formula that does belong to this fragment.)
We are now ready to define the automaton $\bbA_{\xi} = \tup{A_{\xi},a_{I},
\tmap_{\xi},\pmap_{\xi}}$ by putting 
\[\begin{array}{lll}
A_{\xi} &\isdef& \{ a_{\xi} \} \cup A_{\Phi} \cup \bigcup_{\psi\subf'\xi}A_{\psi}
\\ a_{I}    &\isdef& a_{\xi}
\\ \tmap_{\xi}(a,c) &\isdef& 
   \left\{\begin{array}{ll}
    \chi^{o} & \text{if } a = a_{\chi} \text{ with } \chi \in \{ \xi \} \cup \Phi
   \\ \tmap_{\psi}(a,c) & \text{if } a \in A_{\psi}.
   \end{array}\right.
\\ \pmap_{\xi}(a) &\isdef& 
   \left\{\begin{array}{ll}
   1 & \text{if } a \in \{ a_{\xi} \} \cup A_{\Phi}
   \\ \pmap_{\psi}(a) & \text{if } a \in A_{\psi}.
   \end{array}\right.
\end{array}\]
It is then straightforward to check that $\bbA_{\xi}$ is a continuous-weak
$\oslang$-automaton.
Proving the equivalence of $\xi$ and $\bbA_{\xi}$ is a routine exercise.

% \input{neutral/ssec-3-af}
% !TEX root = ../00CFVZ_TOCL.tex

\section{Automata for $\wmso$}
   \label{sec:autwmso}

In this section we start looking at the automata-theoretic characterisation of
$\wmso$.
That is, we introduce the following automata, corresponding to this version of
monadic second-order logic; these \emph{$\wmso$-automata} are the continuous-weak
automata for the one-step language $\ofoei$, cf.~Definition~\ref{d:ctwk}.

\begin{definition}
A \emph{$\wmso$-automaton} is a continuous-weak automaton for the one-step
language $\ofoei$.
\end{definition}

Recall that our definition of continuous-weak automata is syntactic in nature,
i.e., if $\bbA = \tup{A,\tmap,\pmap,a_I}$ is a $\wmso$-automaton, then for any
pair of states $a,b$ with $a \ord b$ and $b \ord a$, and any $c\in C$, we have
$\tmap(a,c) \in \cont{{\ofoei(A)}^+}{b}$ if $\pmap(a)$ is odd and $\tmap(a,c)
\in \cocont{{\ofoei(A)}^+}{b}$ if $\pmap(a)$ is even.

The main result of this section states one direction of the automata-theoretic
characterisation of $\wmso$.

\begin{theorem}
\label{t:wmsoauto}
There is an effective construction transforming a $\wmso$-formula $\phi$
into a $\wmso$-automaton $\bbA_{\phi}$ that is equivalent
to $\phi$ on the class of trees.
% \models {\phi}$.
\end{theorem}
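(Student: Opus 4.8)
The plan is to proceed by induction on the structure of the $\wmso$-formula $\phi$, working with the one-sorted syntax of Definition~\ref{def:mso} and building, for each $\phi$ with free variables among $\pprop$, an equivalent $\wmso$-automaton over the alphabet $\pow(\pprop)$. For the atomic formulas $\here{p}$, $p \inc q$ and $R(p,q)$ I would exhibit small $\wmso$-automata by hand: each is essentially a safety property (with $\here{p}$ carrying in addition a ``first round is special'' condition marking the root), so the corresponding automaton consists of a single self-looping state of priority $0$ together with at most one or two degenerate states, and its one-step transitions have the shape $\forall z.\,a(z)$, possibly conjoined with an $\exists y.\,b(y)$ and guarded by boolean tests on the current colour. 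Computing boolean duals shows that such transitions lie in the appropriate $\cocont{\ofoei(A)}{M}$ fragments, so these automata are indeed weak-continuous.

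For the boolean connectives the point is that $\AutWC(\ofoei)$ is closed under union and complementation. For $\phi_1 \lor \phi_2$ I would take the disjoint union of the inductively obtained automata $\bbA_{\phi_1}$ and $\bbA_{\phi_2}$ together with a fresh, degenerate initial state whose transition is the disjunction of the two old initial transitions; since no cluster spans the two components, weakness and the continuity constraints are inherited. For $\neg\phi_1$ I would apply the complementation construction of Definition~\ref{d:caut} and Proposition~\ref{prop:autcomplementation}, after checking that the complement of a $\wmso$-automaton is again a $\wmso$-automaton: $\ofoei$ is closed under boolean duals (Definition~\ref{def:concreteduals}), the priority shift $\pmap \mapsto 1+\pmap$ interchanges the roles of $0$ and $1$ (modulo Remark~\ref{rmk:weak01}), and dualising a transition turns $\cont{\ofoei(A)}{M}$ into $\cocont{\ofoei(A)}{M}$ and vice versa, while weakness is obviously preserved. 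That $\bbA_{\neg\phi_1}$ is equivalent to $\neg\phi_1$ on trees then follows from the induction hypothesis together with the positional determinacy of parity games (Fact~\ref{THM_posDet_ParityGames}), which lets us read ``$\bbA_{\phi_1}$ rejects $\bbT$'' as ``$\bbT \not\models \phi_1$''.

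The hard part is the existential case $\phi = \exists p.\,\phi_1$, and this is the only place where the restriction to trees is genuinely used. From the induction hypothesis we have a $\wmso$-automaton $\bbA$ equivalent to $\phi_1$ on trees, over the alphabet $\pow(\pprop\cup\{p\})$; we want a $\wmso$-automaton $\bbA'$ over $\pow(\pprop)$ accepting a tree $\bbT$ exactly when $\bbA$ accepts $\bbT[p\mapsto S]$ for some \emph{finite} $S$. The strategy is first to put $\bbA$, by a simulation step, into a normal form amenable to projection: an equivalent $\wmso$-automaton whose runs behave non-deterministically (assigning a single macro-state, which is a \emph{set} of states of $\bbA$, to each node) only on a well-founded part of each accepted tree. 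The class $\AutWC(\ofoei)$ is \emph{not} closed under ordinary non-deterministic simulation, so one cannot reuse the Janin--Walukiewicz construction; instead the macro-state automaton must be designed so that (i) its cluster and weakness structure survives, and (ii) each macro-transition again falls into $\cont{\ofoei(\cdot)}{\cdot}$ or $\cocont{\ofoei(\cdot)}{\cdot}$ according to its priority, for which the continuity normal forms of Theorem~\ref{t:osnf-cont} are the main tool. Given such a normalised automaton $\bbA^{*}$, the projection $\bbA'$ is obtained by replacing, on the non-deterministic portion, each transition $\tmap^{*}(a,c')$ by $\bigvee\{\tmap^{*}(a,c') \mid c'\cap\pprop = c\}$, i.e.\ by having $\bbA'$ additionally guess the value of $p$ at each such node and set $p$ false everywhere else; since the $\cont$ and $\cocont$ fragments are closed under disjunction, $\bbA'$ is still a $\wmso$-automaton.

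Finally I would establish the equivalence on trees in both directions. For one direction, if $\bbT[p\mapsto S]$ is accepted by $\bbA$ with $S$ finite, then $S$ sits inside a finite (hence well-founded and finitely branching) subtree of $\bbT$, and from an accepting run of $\bbA^{*}$ on $\bbT[p\mapsto S]$ one reads off an accepting run of $\bbA'$ on $\bbT$ that guesses $p$ true exactly on $S$. Conversely, an accepting run of $\bbA'$ on $\bbT$ determines both a run of $\bbA^{*}$ and a set $S$ of nodes at which $p$ was guessed true; weakness forces this guessing to occur on a well-founded part of the run, while the continuity of the one-step logic bounds the horizontal spread that each transition can actually witness, so that $S$ can be taken finite, giving $\bbT[p\mapsto S] \models \phi_1$ and hence $\bbT\models\exists p.\phi_1$. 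The main obstacle is precisely the tailored simulation theorem underlying $\bbA^{*}$: carrying out the set-based macro-state construction while simultaneously preserving weakness and both continuity constraints, and arranging that the residual non-determinism lives in a well-founded region, is the delicate technical work; once it is in place, the projection step and the two-way equivalence argument are comparatively routine.
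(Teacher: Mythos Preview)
Your proposal is correct and follows essentially the same route as the paper: induction on $\phi$, explicit small automata for the atoms, closure of $\AutWC(\ofoei)$ under union and complementation for the boolean cases, and for $\exists p$ a tailored simulation theorem producing a two-sorted automaton with set-based macro-states that is functional and finitary in $\shA$, followed by the obvious projection $\tmap(Q,c)\lor\tmap(Q,c\cup\{p\})$ on the macro-state part. The paper's Theorem~\ref{PROP_facts_finConstrwmso} and Lemma~\ref{PROP_fin_projection} are exactly the simulation and projection steps you sketch, and your diagnosis that the delicate work lies in designing the macro-state construction so as to preserve both weakness and the continuity constraints is spot on.
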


The proof proceeds by induction on the complexity of
$\phi$. For the inductive steps, we will need to verify that the class of
$\wmso$-automata is closed under the boolean operations and finite projection.
The latter closure property requires most of the work: we devote
Section \ref{sec:simulationwmso} to a simulation theorem that puts
$\wmso$-automata in a suitable shape for the projection construction.
To this aim, it is convenient to define a closure operation on classes of 
tree models corresponding to the semantics of $\wmso$ quantification.
The inductive step of the proof of Theorem \ref{t:wmsoauto} will show that 
the classes that are recognizable by $\wmso$-automata are closed under this 
operation.

\begin{definition}\label{def:tree_finproj-w}
Fix a set $\pprop$ of proposition letters, a proposition letter $p \not\in P$ 
and a language $\mathsf{C}$ of $\pprop\cup\{p\}$-labeled trees.
The \emph{finitary projection} of $\mathsf{C}$ over $p$ is the language of 
$\pprop$-labeled trees defined as 
\[
{\finexists} p.\mathsf{C} \df \{\bbT \mid
\text{ there is a finite $p$-variant } \bbT' \text{ of } \bbT \text{ with }
\bbT' \in \mathsf{C}\}.
\]
A collection of classes of tree models is \emph{closed under finitary 
projection over $p$} if it contains the class ${{\finexists} p}.\mathsf{C}$ 
whenever it contains the class $\mathsf{C}$ itself.
\end{definition}

\subsection{Simulation theorem for $\wmso$-automata}
\label{sec:simulationwmso}

\noindent
Our next goal is a \emph{projection construction} that, given a $\wmso$-automaton
$\bbA$, provides one recognizing ${\finexists p}.\TMod(\bbA)$.
For $\smso$-automata, the analogous construction crucially uses the following
\emph{simulation theorem}: every $\smso$-automaton $\bbA$ is equivalent to a
\emph{non-deterministic} automaton $\bbA'$ \cite{Walukiewicz96}.
Semantically, non-determinism yields the appealing property that every node of
the input model $\bbT$ is associated with at most one state of $\bbA'$ during
the acceptance game--- that means, we may assume $\eloise$'s strategy $f$ in
$\agame(\bbA',\bbT)$ to be \emph{functional} (\emph{cf.}
Definition \ref{def:StratfunctionalFinitary} below).
This is particularly helpful in case we want to define a $p$-variant of $\bbT$
that is accepted by the projection construct on $\bbA'$: our decision whether
to label a node $s$ with $p$ or not, will crucially depend on the value
$f(a,s)$, where $a$ is the unique state of $\bbA'$ that is associated with $s$.
Now, in the case of $\wmso$-automata we are interested in guessing
\emph{finitary} $p$-variants, which requires $f$ to be functional only on a
\emph{finite} set of nodes.
Thus the idea of our simulation theorem is to turn a $\wmso$-automaton $\bbA$
into an equivalent one $\bbA^{\f}$ that behaves non-deterministically on a
\emph{finite} portion of any accepted tree.

For $\smso$-automata, the simulation theorem is based on a powerset construction:
if the starting automaton has carrier $A$, the resulting non-deterministic
automaton is based on ``macro-states'' from the set $\shA$.
Analogously, for $\wmso$-automata we will associate the non-deterministic
behaviour with macro-states.
However, as explained above, the automaton $\bbA^{\f}$ that we construct has to
be non-deterministic just on finitely many nodes of the input and may behave as
$\bbA$ (i.e. in ``alternating mode'') on the others.
To this aim, $\bbA^{\f}$ will be ``two-sorted'', roughly consisting of a copy of
$\bbA$ (with carrier $A$) together with a variant of its powerset construction,
based both on $A$ and $\shA$.
For any accepted $\bbT$, the idea is to make any match $\pi$ of
$\mc{A}(\bbA^{\f},\bbT)$ consist of two parts:
\begin{description}
\item[(\textit{Non-deterministic mode})] For finitely many rounds $\pi$ is
  played on macro-states, i.e. positions belong to the set $\shA \times T$.
  In her strategy player $\exists$ assigns macro-states (from $\shA$) only to
  \emph{finitely many} nodes, and states (from $A$) to the rest.
  Also, her strategy is functional in $\shA$, i.e. it assigns \emph{at most one
  macro-state} to each node.
\item[(\textit{Alternating mode})] At a certain round, $\pi$ abandons
  macro-states and turns into a match of the game $\mc{A}(\bbA,\bbT)$, i.e. all
  subsequent positions are from $A \times T$ (and are played according to a
  not necessarily functional strategy).
\end{description}
Therefore successful runs of $\bbA^{\noet}$ will have the property of processing 
only a \emph{finite} amount of the input with $\bbA^{\noet}$ being in a 
macro-state and all the rest with $\bbA^{\noet}$ behaving exactly as $\bbA$.
We now proceed in steps towards the construction of $\bbA^{\noet}$. 
First, recall from Definition \ref{def:basicform-ofoe} that a \emph{$A$-type} is
just a subset of $A$. 
We now define a notion of liftings for sets of types, which is instrumental in
translating the transition function from states on macro-states.

\begin{definition}
The \emph{lifting} of a type $S \in \pow A$ is defined as the following
$\pow A$-type:
\[
\lift{S} \isdef 
\begin{cases} \{ S \} & \text{ if } S \neq \nada \\
\nada & \text{ if } S = \nada.
\end{cases}
\]
This definition is extended to sets of $A$-types by putting $\lift{\Sigma} 
\isdef \{ \lift{S} \mid S \in \Sigma \}$.
\end{definition}
The distinction between empty and non-empty elements of $\Sigma$ is to ensure 
that the empty type on $A$ is lifted to the empty type on $\pow A$. 
Notice that the resulting set $\lift{\Sigma}$ is either empty or 
contains exaclty one $\pow A$-type.
This property is important for functionality, see below.

Next we define a translation on the sentences associated with the transition
function of the original $\wmso$-automaton.
Following the intuition given above, we want to work with sentences that can be
made true by assigning macro-states (from $\shA$) to finitely many nodes in the
model, and ordinary states (from $A$) to all the other nodes.
Moreover, each node should be associated with \emph{at most one} macro-state,
because of functionality. 
These desiderata are expressed for one-step formulas as \emph{$\shA$-continuity} 
and \emph{$\shA$-separability}, see the Definitions~\ref{def:semnotions} 
and~\ref{d:sep}.
For the language $\ofoei$, Theorem \ref{t:osnf} and Proposition~\ref{p:sep}
guarantee these properties when formulas are in a certain syntactic shape.
The next definition will provide formulas that conform to this particular shape.

\begin{definition}\label{DEF_finitary_lifting}
Let $\phi \in {\ofoei}^+(A)$ be a formula of shape 
$\posdbnfofoei{\vlist{T}}{\Pi}{\Sigma}$ for some $\Pi,\Sigma \subseteq \shA$ 
and $\vlist{T} = \{T_1,\dots,T_k\} \subseteq \shA$. 
We define $\phi^{\fin} \in {\ofoei}^+(A \cup \shA)$ as the formula
$\posdbnfofoei{\lift{\vlist{T}}}{\lift{\Pi} \cup \lift{\Sigma}}{\Sigma}$, that
means,
\begin{equation}\label{eq:unfoldingNablaolque}
\begin{aligned}
\phi^{\fin} \df\ &
    \exists \vlist{x}.\Big(\arediff{\vlist{x}}
      \land \bigwedge_{0 \leq i \leq n} \tau^+_{\lift{T}_i}(x_i)
\land
    \forall z.(\arediff{\vlist{x},z} \to
    \bigvee_{S\in \lift{\Pi} \cup \lift{\Sigma} \cup \Sigma}
       \tau^+_S(z))\Big)
\\ &
    \land \bigwedge_{P\in\Sigma} \qu y.{\tau}^{+}_P(y)
 \land
    \dqu y.\bigvee_{P\in\Sigma} {\tau}^{+}_P(y)
    \end{aligned}
\end{equation}
\end{definition}

We combine the previous definitions to form the transition function for
macro-states.

\begin{definition}\label{PROP_DeltaPowerset}
Let $\bbA = \tup{A,\tmap,\pmap,a_I}$ be a $\wmso$-automaton. 
Fix $c \in C$ and $Q \in \shA$. 
By Theorem \ref{t:osnf}, for some $\Pi,\Sigma \subseteq
\shA$ and $T_i \subseteq A$, there is a sentence $\Psi_{Q,c} \in {\ofoei}^+(A)$
in the basic form $\bigvee \posdbnfofoei{\vlist{T}}{\Pi}{\Sigma}$ such that
$\bigwedge_{a \in Q} \tmap(a,c) \equiv \Psi_{Q,c}$. 
By definition $\Psi_{Q,c}$ is of the form $\bigvee_{i}\phi_i$, with each
$\phi_{i}$ of shape $\posdbnfofoei{\vlist{T}}{\Pi}{\Sigma}$. 
We put $\shDe(Q,c) \isdef \bigvee_{i}\phi_i^{\fin}$, where the translation
$(-)^{\fin}$ is given as in Definition~\ref{DEF_finitary_lifting}. 
Observe that $\shDe(Q,c)$ is of type ${\ofoei}^+(A \cup \shA)$.
\end{definition}

We have now all the ingredients to define our two-sorted automaton.

\begin{definition}\label{def:finitaryconstruct}
Let $\bbA = \tup{A,\tmap,\pmap,a_I}$ be a {\wmso-automaton}.
We define the \emph{finitary construct over $\bbA$} as the automaton
$\bbA^{\fin} = \tup{A^{\fin},\tmap^{\fin},\pmap^{\fin},a_I^{\fin}}$ given by
% \begin{gather*}
%       % \nonumber to remove numbering (before each equation)
%         A^{\fin} \ \df \  A \cup \shA \quad\quad\quad a_I^{\fin} \ \df \  \{(a_I,a_I)\} \quad\quad\quad \pmap^{\fin}(a) \ \df \  \pmap(a) \quad\quad\quad \pmap^{\fin}(R) \ \df \  1 \\
%         \tmap^{\fin}(a,c) \ \df \  \tmap(a,c) \qquad \qquad \qquad
%         \tmap^{\fin}(Q,c) \ \ \df \ \  \shDe(Q,c) \vee \! \! \! \! \bigwedge_{a \in \Ran(Q)} \! \! \! \tmap(a,c).
% \end{gather*}
\[
\begin{array}{lll}
   A^{\fin}   &\df&  A \cup \shA
\\ a_I^{\fin} &\df&  \{a_I\}
\end{array}
\hspace*{5mm}
\begin{array}{lll}
   \pmap^{\fin}(a) &\df& \pmap(a)
\\ \pmap^{\fin}(R) &\df& 1
\end{array}
\hspace*{5mm}
\begin{array}{lll}
   \tmap^{\fin}(a,c) &\df& \tmap(a,c)
\\ \tmap^{\fin}(Q,c) &\df&
  \shDe(Q,c) \vee \bigwedge_{a \in Q} \! \! \tmap(a,c).
\end{array}
\]
\end{definition}

\begin{remark} 
In the standard powerset construction of non-deterministic parity automata
(\cite{Walukiewicz02}, see also \cite{Ven08,ArnoldN01})
macro-states are required to be \emph{relations} rather than sets in order to
determine whether a run through macro-states is accepting. 
This is not needed in our construction: macro-states will never be visited
infinitely often in accepting runs, thus they may simply be assigned the
priority $1$.
\end{remark}

The idea behind this definition is that $\bbA^{\fin}$ is enforced to process only a finite portion of any accepted tree while in the non-deterministic mode. This is encoded in game-theoretic terms through the notion of functional and finitary strategy.

\begin{definition}\label{def:StratfunctionalFinitary}
Given a $\wmso$-automaton $\bbA = \tup{A,\tmap,\pmap,a_I}$ and transition system $\bbT$, a strategy $f$ for \eloise in $\mathcal{A}(\bbA,\bbT)$ is \emph{functional in $B \subseteq A$} (or simply functional, if $B=A$) if for each node $s$ in $\bbT$ there is at most one $b \in B$ such that $(b,s)$ is a reachable position in an $f$-guided match. Also $f$ is \emph{finitary} in $B$ if there are only finitely many nodes $s$ in $\bbT$ for which a position $(b,s)$ with $b \in B$ is reachable in an $f$-guided match.
\end{definition}

The next proposition establishes the desired properties of the finitary
construct.

\begin{theorem}[Simulation Theorem for $\wmso$-automata]
\label{PROP_facts_finConstrwmso}
Let $\bbA$ be a $\wmso$-automaton and $\bbA^{\fin}$ its finitary construct.
\begin{enumerate}[(1)]
  \itemsep 0 pt
\item $\bbA^{\fin}$ is a $\wmso$-automaton.\label{point:finConstrAut-w}
\item 
For any tree model $\bbT$, if $(a_I^{\fin},s_I)$ is a winning position for 
$\eloise$ in $\agame(\bbA^{\fin},\bbT)$ , then she has a winning strategy that
is both functional and finitary in $\shA$.
  \label{point:finConstrStrategy}
\item $\bbA \equiv \bbA^{\fin}$. \label{point:finConstrEquiv}
\end{enumerate}
\end{theorem}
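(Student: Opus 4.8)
My plan is to prove the three items in the order (1), (3), (2), since the equivalence from (3) gives useful intuition for (2), even though (2) will be argued directly (that direct argument is the one that explains the design of $\bbA^{\fin}$).

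\emph{Item (1).} This is a purely syntactic check. Positivity of $\tmap^{\fin}$ is immediate from Definitions~\ref{PROP_DeltaPowerset}--\ref{def:finitaryconstruct}. Since $\tmap^{\fin}(a,c)=\tmap(a,c)\in{\ofoei}^{+}(A)$ for $a\in A$, no macro-state ever occurs in the transitions of ordinary states, so the reachability relation of $\bbA^{\fin}$ never leads from $A$ into $\shA$; hence every cluster of $\bbA^{\fin}$ lies entirely in $A$ or entirely in $\shA$. A cluster inside $A$ coincides with a cluster of $\bbA$ and thus inherits weakness and continuity from $\bbA$ (a one-step formula continuous in a cluster $M\subseteq A$ over the signature $A$ is a fortiori continuous in $M$ over $A\cup\shA$). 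A cluster $M\subseteq\shA$ has all priorities equal to $1$, so weakness holds trivially; for continuity one checks $\tmap^{\fin}(Q,c)\in\cont{\ofoei(A^{\fin})}{M}$: the disjunct $\bigwedge_{a\in Q}\tmap(a,c)$ mentions no predicate of $M$ and is thus $M$-continuous, while each disjunct $\phi_{i}^{\fin}$ of $\shDe(Q,c)$ is, by Definition~\ref{DEF_finitary_lifting}, in basic $\ofoei$-form $\posdbnfofoei{\lift{\vlist{T}}}{\lift{\Pi}\cup\lift{\Sigma}}{\Sigma}$ whose $\Sigma$-component consists of subsets of $A$, so $M\cap\bigcup\Sigma=\nada$ and $\phi_{i}^{\fin}$ is continuous in $M$ by Theorem~\ref{t:osnf-cont}(2). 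Disjunctions of $M$-continuous formulas being $M$-continuous, item (1) follows.

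\emph{Item (3).} The inclusion $\Mod(\bbA)\sse\Mod(\bbA^{\fin})$ is easy: from the initial macro-position $(a_{I}^{\fin},s_{I})=(\{a_{I}\},s_{I})$ player $\eloise$ passes immediately to ``alternating mode'' by playing, as the disjunct $\bigwedge_{a\in\{a_{I}\}}\tmap(a,c)=\tmap(a_{I},c)$ of $\tmap^{\fin}(\{a_{I}\},c)$, any winning $\bbA$-move with all macro-states assigned $\nada$, after which $\agame(\bbA^{\fin},\bbS)$ and $\agame(\bbA,\bbS)$ coincide. For the converse I would transfer a positional winning strategy $f$ for $\eloise$ in $\agame(\bbA^{\fin},\bbS)@(a_{I}^{\fin},s_{I})$ to one in $\agame(\bbA,\bbS)@(a_{I},s_{I})$, maintaining the invariant that, when the $\bbA$-match sits at $(a,t)$, a parallel $f$-guided $\bbA^{\fin}$-match sits at $(X,t)$ with either $X=a$ (alternating mode) or $X=Q\in\shA$ and $a\in Q$ (macro mode). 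In alternating mode $\eloise$ copies $f$. In macro mode at $(a,t)$ with $a\in Q$, where $f$ prescribes $V$ with $(R[t],V)\models\tmap^{\fin}(Q,c)$ and $c=\tscolors(t)$: if $V$ witnesses $\bigwedge_{a'\in Q}\tmap(a',c)$ she plays $V\rst{A}$, which satisfies $\tmap(a,c)$; if $V$ witnesses some $\phi_{i}^{\fin}$ she plays the ``macro-unfolding'' $W$ defined by $W(a')\isdef V(a')\cup\{u\in R[t]\mid u\in V(S)\text{ for some }S\in\shA\text{ with }a'\in S\}$, and, using the defining relation $\bigvee_{i}\phi_{i}=\Psi_{Q,c}\equiv\bigwedge_{a'\in Q}\tmap(a',c)$ together with the definition of the lifting, one checks $(R[t],W)\models\phi_{i}$, so $W$ satisfies $\tmap(a',c)$ for \emph{every} $a'\in Q$. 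Any witness $(b,u)$ subsequently chosen by $\abelard$ is then either an $f$-legal ordinary witness $(b,u)$, on which we pass to alternating mode, or it lies in some $V(S)$ with $b\in S$, yielding the $f$-legal macro witness $(S,u)$, on which we stay in macro mode. Since macro-states carry priority $1$ (odd), no $f$-guided match of $\agame(\bbA^{\fin},\bbS)$ stays in macro positions forever; hence every resulting $\bbA$-match has a finite macro-prefix followed by an alternating tail on which the two acceptance games literally coincide, so $\eloise$ wins. Thus $\bbA^{\fin}$ accepts $\bbS$ only if $\bbA$ does.

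\emph{Item (2).} Fix a tree $\bbT$ accepted by $\bbA^{\fin}$ and, by Fact~\ref{THM_posDet_ParityGames}, a positional winning strategy $f$. I would first replace $f$ by $f'$ which at every ordinary position zeroes the macro-component of $f$'s move (harmless, $\tmap$ being over $A$), and at every macro position $(Q,t)$ replaces $f(Q,t)$ by some $V'\leq_{\shA}f(Q,t)$ that is finite and separating on $\shA$: if $f(Q,t)$ witnesses $\bigwedge_{a\in Q}\tmap(a,c)$ take $V'(S)=\nada$ for all $S\in\shA$; otherwise $f(Q,t)$ witnesses some $\phi_{i}^{\fin}$, which is $\shA$-continuous (item (1)) and $\shA$-separating (Proposition~\ref{p:sep}(2), as every type-component of $\phi_{i}^{\fin}$ meets $\shA$ in at most one element), so one shrinks to a finite $\shA$-valuation and then to a separating one. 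A routine domination argument --- every $f'$-guided match lifts to an $f$-guided match with the same basic positions, obtained by letting $\eloise$ play the un-shrunk moves and $\abelard$ repeat his choices --- shows $f'$ is still winning. Now the macro-positions $(Q,t)$ reachable in $f'$-guided matches form a tree rooted at $(a_{I}^{\fin},s_{I})$, finitely branching because each $f'(Q,t)$ is $\shA$-finite, and without infinite branches because macro-states have odd priority $1$; by K\"onig's Lemma it is finite, which gives finitarity in $\shA$. Moreover, since $\bbT$ is a tree, every macro-position at a node $u$ arises as an $\abelard$-response to the move at a macro-position of $u$'s parent (ordinary positions have zeroed macro-component); arguing by induction along $\bbT$ --- the root carries only $(a_{I}^{\fin},s_{I})$, and an $\shA$-separating move at the unique macro-position of a parent assigns each child at most one macro-state --- each node of $\bbT$ carries at most one reachable macro-position, i.e.\ $f'$ is functional in $\shA$.

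The main obstacle, in my view, is the book-keeping inside item (3): verifying that the macro-unfolding $W$ of a single macro-move really satisfies $\phi_{i}$ --- so that one transition of $\bbA^{\fin}$ out of a macro-state $Q$ simultaneously legitimises the moves of all states of $Q$ --- requires unwinding the basic-form definition and the lifting $(\cdot)^{\Uparrow}$ carefully, in particular matching the ``$\qu$-'' and ``$\dqu$-'' parts of $\phi_{i}^{\fin}$ (which live over $A$) against those of $\phi_{i}$. A second delicate point is that in item (2) one must invoke $\shA$-continuity (to bound each macro-move) \emph{and} $\shA$-separability (to obtain functionality) while checking that these shrinkings preserve the winning condition, which is exactly what the domination argument handles. (We note in passing that item (2) also follows softly from item (3), since the ``immediately-to-alternating-mode'' strategy produced there is trivially functional and finitary in $\shA$; but it is the direct construction above that the projection construction of Section~\ref{sec:autwmso} will exploit.)
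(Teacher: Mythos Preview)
Your proposal is correct and follows essentially the same approach as the paper: the same cluster analysis for item~(1), the same shadow-match invariant and ``macro-unfolding'' valuation for item~(3), and the same shrinking of $f$ to an $\shA$-continuous and $\shA$-separating $f'$ for item~(2). Your K\"onig's-Lemma argument for finitarity and your induction along $\bbT$ for functionality are in fact slightly more explicit than the paper's rather terse ``it follows that $f'$ is finitary in $\shA$'' and ``functionality in $\shA$ also follows immediately by definition of $f'$''.
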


\begin{proof}
\begin{enumerate}[(1)]
\item
Observe that any cluster of $\bbA^{\fin}$ involves states of exactly one sort,
either $A$ or $\shA$.
For clusters on sort $A$, weakness and continuity of $\bbA^{\fin}$ follow by 
the same properties of $\bbA$.
For clusters on sort $\shA$, weakness follows by observing that all macro-states
in $\bbA^{\fin}$ have the same priority.
Concerning continuity, by definition of $\tmap^{\fin}$ any macro-state can only
appear inside a formula of the form $\phi^{\fin} =
\posdbnfofoei{\lift{\vlist{T}}}{\lift{\Pi} \cup \lift{\Sigma}}{\Sigma}$ as in
\eqref{eq:unfoldingNablaolque}.
Because $\shA \cap \bigcup\Sigma = \nada$, by Theorem \ref{t:osnf-cont}
$\phi^{\fin}$ is continuous in each $Q \in \shA$.

\item
Let $f$ be a (positional) winning strategy for $\eloise$ in $\mathcal{A}
(\bbA^{\fin},\bbT)@(a_I^{\fin},s_I)$.
We define a strategy $f'$ for $\eloise$ in the same game as follows:
\begin{enumerate}[label=(\alph*),ref=\alph*]
\item 
\label{point:stat2point1}
On basic positions of the form $(a,s) \in A\times T$, let $V: A \to \pow R[s]$
be the valuation suggested by $f$.
We let the valuation suggested by $f'$ be the restriction $V'$ of $V$ to $A$.
Observe that, as no predicate from $A^{\fin}\setminus A =\shA$ occurs in
$\tmap^{\fin}(a,\V(s)) = \tmap(a,\V(s))$, then $V'$ also makes that sentence
true in $\R{s}$.

\item
For winning positions of the form $(R,s) \in \shA \times T$, let $V_{R,s}: 
(\pow A \cup A) \to \pow R[s]$ be the valuation suggested by $f$.
As $f$ is winning, $\tmap^{\fin}(R,\V(s))$ is true in the model $V_{R,s}$.
If this is because the disjunct $\bigwedge_{a \in R} \tmap(a,\V(s))$ is made
true, then we can let $f'$ suggest the restriction to $A$ of $V_{R,s}$,
for the same reason as in \eqref{point:stat2point1}.
  
Otherwise, the disjunct $\shDe(R,\V(s)) = \bigvee_{i}\phi_i^{\fin}$ is made
true.
This means that, for some $i$, $(R[s], V_{R,s}) \models \phi_i^{\fin}$.
Now, by construction of $\phi_i^{\fin}$ as in \eqref{eq:unfoldingNablaolque},
we have $\shA \cap \bigcup\Sigma = \nada$.
By Theorem \ref{t:osnf-cont}, this implies that $\phi_i^{\fin}$ is continuous
in $\shA$.
Thus we have a restriction $V_{R,s}'$ of $V_{R,s}$ that verifies $\phi_i^{\fin}$ 
and assigns only finitely many nodes to predicates from $\shA$. 
Moreover, by construction of $\phi_i^{\fin}$, for each $S \in
\{\lift{T}_1,\dots,\lift{T}_k\}\cup \in \lift{\Pi} \cup \lift{\Sigma}$,
$S$ contains at most one element from $\shA)$. 
Thus, by Proposition~\ref{p:sep}, $\phi_i^{\fin}$ is $\shA$-separable. 
But then we may find a separating valuation $V_{R,s}''\leq_{\shA} V_{R,s}''$ 
such that $V_{R,s}''$ verifies $\phi_i^{\fin}$. 
Separation means that $V_{R,s}''$ associates with each node at most one
predicate from $\shA$, and the fact that $V_{R,s}''\leq_{\shA} V_{R,s}''$,
combined with the $\shA$-continuity of $V_{R,s}'$ ensures $\shA$-continuity of 
$V_{R,s}''$. 
In this case we let $f'$ suggest $V_{R,s}''$ at position $(R,s)$.
\end{enumerate}
% \bigskip\hrule\bigskip
The strategy $f'$ defined as above is immediately seen to be surviving for
$\eloise$. 
% It is also winning, because the set of basic positions on which $f'$
% is defined is a subset of the one of the winning strategy $f$.
It is also winning, since at every basic winning position for $\eloise$, the set
of possible next basic positions offered by $f'$ is a subset of those 
offered by $f$.
By this observation it also follows that any $f'$-guided match visits basic
positions of the form $(R,s) \in \shA \times C$ only finitely many times, as
those have odd parity.
By definition, the valuation suggested by $f'$ only assigns finitely many nodes
to predicates in $\shA$ from positions of that shape, and no nodes from other
positions. It follows that $f'$ is finitary in $\shA$. Functionality in $\shA$ 
also follows immediately by definition of $f'$.
\item 
For the direction from left to right, it is immediate by definition of 
$\bbA^{\fin}$ that a winning strategy for $\eloise$ in $\mc{G} = 
\mathcal{A}(\bbA,\bbT)@(a_I,s_I)$ is also winning for $\eloise$ in 
$\mc{G}^{\fin} = \mathcal{A}(\bbA^{\fin},\bbT)@(a_I^{\fin},s_I)$.
\smallskip

For the direction from right to left, let $f$ be a winning strategy for 
$\eloise$ in $\mc{G}^{\fin}$.
The idea is to define a strategy $f'$ for $\eloise$ in stages, while playing a
match $\pi'$ in $\mc{G}$. 
In parallel to $\pi'$, a shadow match $\pi$ in $\mc{G}^{\fin}$ is maintained, 
where $\eloise$ plays according to the strategy $f$. 
For each round $z_i$, we want to keep the following relation between the two 
matches:
\smallskip
\begin{center}
\fbox{\parbox{12cm}{
Either
\begin{enumerate}[label=(\arabic*),ref=\arabic*]
  \item positions of the form $(Q,s) \in \shA \times T$ and $(a,s) \in A \times T$ occur respectively in $\pi$ and $\pi'$, with $a \in Q$,
\end{enumerate}
or
\begin{enumerate}[label=(\arabic*),ref=\arabic*]
  \item[(2)] the same position of the form $(a,s) \in A \times T$ occurs in both matches.
\end{enumerate}
}}\hspace*{0.3cm}($\ddag$)
\end{center}
\smallskip
The key observation is that, because $f$ is winning, a basic position of the 
form $(Q,s) \in \shA \times T$ can occur only for finitely many initial rounds
$z_0,\dots,z_n$ that are played in $\pi$, whereas for all successive rounds 
$z_n,z_{n+1},\dots$ only basic positions of the form $(a,s) \in A \times T$ are
encountered. 
Indeed, if this was not the case then either $\eloise$ would get stuck or the
highest priority occurring infinitely often would be odd, since states from 
$\shA$ all have priority $1$.

It follows that enforcing a relation between the two matches as in ($\ddag$) suffices to prove that the defined strategy $f'$ is winning for $\eloise$ in $\pi'$. For this purpose, first observe that $(\ddag).1$ holds at the initial round, where the positions visited in $\pi'$ and $\pi$ are respectively $(a_I,s_I) \in A \times T$ and $(\{a_I\},s_I) \in A^{\fin} \times T$. Inductively, consider any round $z_i$ that is played in $\pi'$ and $\pi$, respectively with basic positions $(a,s) \in A \times T$ and $(q,s) \in A^{\fin} \times T$. To define the suggestion of $f'$ in $\pi'$, we distinguish two cases.
\begin{itemize}

\item 
First suppose that $(q,s)$ is of the form $(Q,s) \in \shA\times T$. 
By ($\ddag$) we can assume that $a$ is in $Q$. 
Let $V_{Q,s} :A^{\fin} \rightarrow \pow(\R{s})$ be the valuation suggested by 
$f$, verifying the sentence $\tmap^{\fin}(Q,\V(s))$. 
We distinguish two further cases, depending on which disjunct of 
$\tmap^{\fin}(Q,\V(s))$ is made true by $V_{Q,s}$.
\begin{enumerate}[label=(\roman*), ref=\roman*]
\item 
\label{point:valuation1}
If $(\R{s},V_{Q,s})\models \bigwedge_{b \in Q} \tmap(b,\V(s))$, then we let 
$\eloise$ pick the restriction to $A$ of the valuation $V_{Q,s}$. 
\item 
\label{point:valuation2}
If $(\R{s},V_{Q,s})\models \shDe(Q,\V(s))$, we let $\eloise$ pick a valuation
$V_{a,s}:A \rightarrow \p (\R{s})$ defined by putting, for each $b \in A$:
\begin{align*}
V_{a,s}(b) \isdef \bigcup_{b \in Q'} \{t \in \R{s} \mid t \in V_{Q,s}(Q')\}
               \cup  \{t \in \R{s} \mid t \in V_{Q,s}(b)\} .
\end{align*} 
\end{enumerate}
It can be readily checked that the suggested move is legitimate for $\eloise$
in $\pi$, i.e. it makes $\tmap(a,\V(s))$ true in $\R{s}$.

For case \eqref{point:valuation2}, observe that the nodes assigned to $b$ by
$V_{Q,s}$ have to be assigned to $b$ also by $V_{a,s}$, as they may be necessary
to fulfill the condition, expressed with $\qu$ and $\dqu$ in $\shDe$, that 
infinitely many nodes witness (or that finitely many nodes do not witness) 
some type.

We now show that $(\ddag)$ holds at round $z_{i+1}$. 
If \eqref{point:valuation1} is the case, any next position $(b,t)\in A \times T$
picked by player $\forall$ in $\pi'$ is also available for $\forall$ in $\pi$, 
and we end up in case $(\ddag .2)$. 
Suppose instead that \eqref{point:valuation2} is the case. 
Given a move $(b,t) \in A \times T$ by $\forall$, by definition of $V_{a,s}$ 
there are two possibilities. 
First, $(b,t)$ is also an available choice for $\forall$ in $\pi$, and we end up
in case $(\ddag .2)$ as before. 
Otherwise, there is some $Q' \in \shA$ such that $b$ is in $Q'$ and $\forall$ 
can choose $(Q',t)$ in the shadow match $\pi$. By letting $\pi$ advance at round
$z_{i+1}$ with such a move, we are able to maintain $(\ddag .1)$ also in
$z_{i+1}$.
\item 
In the remaining case, inductively we are given the same basic position $(a,s)
\in A\times T$ both in $\pi$ and in $\pi'$. 
The valuation $V$ suggested by $f$ in $\pi$ verifies $\tmap^{\fin}(a,\V(s)) = 
\tmap(a,\V(s))$, thus we can let the restriction of $V$ to $A$ be the valuation
chosen by $\eloise$ in the match $\pi'$. 
It is immediate that any next move of $\forall$ in $\pi'$ can be mirrored by the
same move in $\pi$, meaning that we are able to maintain the same position 
--whence the relation $(\ddag.1)$-- also in the next round.
\end{itemize}
In both cases, the suggestion of strategy $f'$ was a legitimate move for 
$\eloise$ maintaining the relation $(\ddag)$ between the two matches for any
next round $z_{i+1}$.
It follows that $f'$ is a winning strategy for $\eloise$ in $\mc{G}$.
\end{enumerate}
\end{proof}

\subsection{From formulas to automata}

In this subsection we conclude the proof of Theorem~\ref{t:wmsoauto}.
We first focus on the case of projection with respect to finite sets, which
exploits our simulation result, Theorem~\ref{PROP_facts_finConstrwmso}.
The definition of the projection construction is formulated more generally for
parity automata, as it will be later applied to classes other than
$\AutWC(\ofoei)$.
It clearly preserves the weakness and continuity conditions.

%\subsection{Closure under Finitary Projection}

\begin{definition}\label{DEF_fin_projection}
Let $\bbA = \tup{A, \tmap, \Omega, a_I}$ be a parity automaton on alphabet 
$\p(\pprop \cup \{p\})$.
We define the automaton ${{\exists} p}.\bbA = \tup{A, \tmapProj, \Omega, a_I}$ 
on alphabet $\p\pprop$ by putting
\begin{equation*}
% \nonumber to remove numbering (before each equation)
  \tmapProj(a,c) \ \df \ \tmap(a,c) \qquad \qquad
  \tmapProj(Q,c) \ \df \ \tmap(Q,c) \vee \tmap(Q,c\cup\{p\}).
\end{equation*}
The automaton ${{\exists} p}.\bbA$ is called the \emph{finitary projection
construct of $\bbA$ over $p$}.
\end{definition}

\begin{lemma}\label{PROP_fin_projection}
Let $\bbA$ be a $\wmso$-automaton on alphabet $\p (\pprop \cup \{p\})$.
Then $\bbA^{\fin}$ is a $\wmso$-automaton on alphabet $\p\pprop$ which satisfies
$$\TMod({{\exists} p}.\bbA^{\fin}) \equiv
{{\finexists} p}.\TMod(\bbA).$$
\end{lemma}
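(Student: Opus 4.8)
I would split the statement into its two assertions: that $\exists p.\bbA^{\fin}$ is again a $\wmso$-automaton (the first clause, the displayed name $\bbA^{\fin}$ there standing for the projected automaton $\exists p.\bbA^{\fin}$ over $\p\pprop$), and the semantic equation $\TMod(\exists p.\bbA^{\fin}) = \finexists p.\TMod(\bbA)$. The first assertion is cheap: by Theorem~\ref{PROP_facts_finConstrwmso}(1) the finitary construct $\bbA^{\fin}$ is a $\wmso$-automaton, and the projection of Definition~\ref{DEF_fin_projection} only alters the transition map at macro-states, replacing $\tmap^{\fin}(Q,c)$ by $\tmap^{\fin}(Q,c)\vee\tmap^{\fin}(Q,c\cup\{p\})$, leaving priorities and clusters fixed; since macro-states all carry priority $1$ and a disjunction of $M$-continuous $\ofoei$-sentences is $M$-continuous, weakness and continuity are preserved, so $\exists p.\bbA^{\fin}\in\AutWC(\ofoei)$ on alphabet $\p\pprop$.

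For $\finexists p.\TMod(\bbA)\subseteq\TMod(\exists p.\bbA^{\fin})$, fix a tree $\bbT$ and a finite $p$-variant $\bbT'$ of $\bbT$ with $\bbT'\in\TMod(\bbA)$; by Theorem~\ref{PROP_facts_finConstrwmso}(3) also $\bbT'\in\TMod(\bbA^{\fin})$. Let $S\isdef\tsval_{\bbT'}(p)$, a finite set, and let $R$ be the finite subtree of $\bbT'$ spanned by $S$ together with the root. Starting from a winning strategy of $\eloise$ in $\agame(\bbA,\bbT')$, I would run the powerset simulation that is built into $\bbA^{\fin}$ on the region $R$ only: at nodes of $R$, $\eloise$ plays macro-states tracking the set of $\bbA$-states reachable there, using that $\bigwedge_{a\in Q}\tmap(a,c)\equiv\Psi_{Q,c}=\bigvee_i\phi_i$ to satisfy the $\shDe$-disjunct of $\tmap^{\fin}(Q,c)$ with the matching $\phi_i^{\fin}$, assigning macro-states to the children in $R$ and ordinary $A$-states to the children outside $R$; at the frontier of $R$ she switches to ``alternating mode'' and copies her $\bbA$-strategy. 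This gives a winning strategy of $\eloise$ in $\agame(\bbA^{\fin},\bbT')$ whose macro-state region contains $S$. Transplanting it to $\agame(\exists p.\bbA^{\fin},\bbT)$: at a macro-position $(Q,s)$ with $s\in R$, $\eloise$ picks the disjunct $\tmap^{\fin}(Q,\tscolors_{\bbT'}(s))$ of $\tmapProj(Q,\tscolors_{\bbT}(s))$, which is legal since $\tscolors_{\bbT'}(s)\in\{\tscolors_{\bbT}(s),\tscolors_{\bbT}(s)\cup\{p\}\}$, thereby ``guessing'' $p$ exactly as in $\bbT'$; at an $A$-position $(a,t)$ we have $t\notin S$, so $\tscolors_{\bbT'}(t)=\tscolors_{\bbT}(t)$ and the copied $\bbA$-strategy stays legal. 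Hence $\bbT\in\TMod(\exists p.\bbA^{\fin})$.

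For the converse inclusion, take a winning strategy $f$ of $\eloise$ in $\agame(\exists p.\bbA^{\fin},\bbT)$. Since $\exists p.\bbA^{\fin}$ has the same macro-states of priority $1$ and the same continuous-weak shape as $\bbA^{\fin}$, the argument of Theorem~\ref{PROP_facts_finConstrwmso}(2) applies and lets us assume $f$ functional and finitary in $\shA$; moreover, choosing $f$'s valuations at macro-positions from the sorted basic-form witnesses (as in that proof, via $\shA$-separation) we may assume the finite macro-state region $U$ and the region where $A$-states occur are disjoint. Define the $p$-variant $\bbT'$ of $\bbT$ by putting $p$ at $s\in U$ iff $f$ selected the disjunct $\tmap^{\fin}(Q,\tscolors_{\bbT}(s)\cup\{p\})$ at the unique (by functionality) macro-position $(Q,s)$, and putting $p$ nowhere outside $U$. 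Then $\tsval_{\bbT'}(p)\subseteq U$ is finite, so $\bbT'$ is a finite $p$-variant of $\bbT$, and $f$ is a winning strategy of $\eloise$ in $\agame(\bbA^{\fin},\bbT')$: at macro-positions the selected disjunct is exactly $\tmap^{\fin}(Q,\tscolors_{\bbT'}(s))$, and at $A$-positions the node lies outside $U$, where $\bbT'$ and $\bbT$ agree and $\bbA^{\fin}$ acts as $\bbA$. Therefore $\bbT'\in\TMod(\bbA^{\fin})=\TMod(\bbA)$ by Theorem~\ref{PROP_facts_finConstrwmso}(3), i.e.\ $\bbT\in\finexists p.\TMod(\bbA)$.

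The main obstacle in both directions is the bookkeeping that prevents the macro-state ``non-deterministic'' region and the alternating $\bbA$-region from interfering at nodes reachable by both: one must guarantee that $A$-positions occur only at nodes where $\bbT$ and the guessed $p$-variant $\bbT'$ carry the same colour, so that no spurious $p$ breaks legitimacy of the $\bbA$-part of the strategy. This is precisely why the simulation theorem is phrased with functional and finitary strategies and why the sorted basic-form witnesses (yielding $\shA$-separation and the disjointness of the two regions) are needed; once those are in hand the two translations of strategies are routine game-theoretic verifications.
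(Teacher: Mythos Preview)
Your proposal is correct and takes essentially the same approach as the paper's proof: both directions work as you describe, with the paper phrasing your right-to-left powerset simulation as maintaining a \emph{bundle} of $g$-guided shadow matches in $\agame(\bbA,\bbT')$ and determining the macro-state region dynamically (by $p$-freeness of subtrees) rather than fixing your region $R$ in advance. The one point where care is needed---and which you correctly flag in your final paragraph---is ensuring that $A$-positions occur only at nodes where $\bbT$ and $\bbT'$ carry the same colour; both you and the paper secure this via the $\shA$-separation/functionality properties coming from the basic-form witnesses.
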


\begin{proof}
Unraveling definitions, we need to show that for any tree $\bbT = \tup{T,R,\V \:
\pprop \to \pow T,s_I}$:
$${{\exists} p}.\bbA^{\fin} \text{ accepts } \bbT \text{ iff } \text{there is a
finite }p \text{ -variant }\bbT' \text{of } \bbT \text{  such that } \bbA 
\text{  accepts } \bbT'.
$$
For the direction from left to right, by the equivalence between $\bbA$ and 
$\bbA^{\fin}$ it suffices to show that if ${{\exists} p}.\bbA^{\fin}$ accepts
$\bbT$ then there is a finite $p$-variant $\bbT'$ of $\bbT$ such that 
$\bbA^{\fin}$ accepts $\bbT'$. 
First, we first observe that the properties stated by 
Theorem~\ref{PROP_facts_finConstrwmso}, which hold for $\bbA^{\fin}$ by 
assumption, by construction hold for ${{\exists} p}.\bbA^{\fin}$ as well. 
Thus we can assume that the given winning strategy $f$ for $\eloise$ in 
$\mc{G_{\exists}} = \mc{A}({\finexists p}.\bbA^{\fin},\bbT)@(a_I^{\fin},s_I)$ 
is functional and finitary in $\shA$. 
Functionality allows us to associate with each node $s$ either none or a unique 
state $Q_s \in \shA$ such that $(Q_s,s)$ is winning for $\eloise$. 
We now want to isolate the nodes that $f$ treats ``as if they were labeled with 
$p$''. 
For this purpose, let $V_{s}$ be the valuation suggested by $f$ from a position
$(Q_s,s) \in \shA \times T$. As $f$ is winning, $V_{s}$ makes 
$\tmapProj(Q,\tscolors(s))$ true in $\R{s}$. 
We define a $p$-variant $\bbT' = \tup{T,R,\V' \: \pprop\cup\{p\} \to \pow T,s_I}$
of $\bbT$ by defining $\tscolors' \isdef \tscolors[p \mapsto X_{p}]$, that is, 
by colouring with $p$ all nodes in the following set:
\begin{equation}\label{eq:X_p}
% \nonumber to remove numbering (before each equation)
X_p \isdef \{s \in T\mid (\R{s},V_{s}) \models 
\tmap^{\f}(Q_s,\tscolors(s)\cup\{p\})\}.
\end{equation}
The fact that $f$ is finitary in $\shA$ guarantees that $X_p$ is finite, whence 
$\bbT'$ is a finite $p$-variant.
It remains to show that $\bbA^{\fin}$ accepts $\bbT'$: we claim that $f$ itself
is winning for $\eloise$ in $\mc{G} = (\bbA^{\fin},\bbT')@(a_I,s_I)$. 
In order to see that, let us construct in stages an $f$-guided match $\pi$ 
of $\mc{G}$ and an $f$-guided shadow match $\tilde{\pi}$ of $\mc{G_{\exists}}$.
The inductive hypothesis we want to bring from one round to the next is that the 
same basic position occurs in both matches, as this suffices to prove that $f$ 
is winning for $\eloise$ in $\mc{G}$.

First we consider the case of a basic position $(Q,s) \in A^{\fin} \times T$ 
where $Q \in \shA$. 
By assumption $f$ provides a valuation $V_s$ that makes $\tmapProj(Q,\V(s))$ true
in $\R{s}$. 
Thus $V_s$ verifies either $\tmap^{\fin}(Q,\V(s))$ or $\tmap^{\fin}(Q,\V(s)\cup 
\{p\})$. 
Now, the match $\pi^{\fin}$ is played on the $p$-variant $\bbT'$, where the 
labeling $\V'(s)$ is decided by the membership of $s$ to $X_p$. 
According to \eqref{eq:X_p}, if $V_s$ verifies 
$\tmap^{\fin}(Q,\V(s)\cup \{p\})$ then $s$ is in $X_p$, meaning that it is
labeled with $p$ in $\bbT'$, i.e. $\V'(s) = \V(s)\cup \{p\}$. 
Therefore $V_s$ also verifies $\tmap^{\fin}(Q,\V'(s))$ and it is a 
legitimate move for $\eloise$ in match $\pi^{\fin}$. 
In the remaining case, $V_s$ verifies $\tmap^{\fin}(Q,\V(s))$ but 
falsifies $\tmap^{\fin}(Q,\V(s)\cup \{p\})$, implying by definition that $s$ 
is not in $X_p$. 
This means that $s$ is not labeled with $p$ in $\bbT'$, i.e. $\V'(s) = \V(s)$.
Thus again $V_s$ verifies $\tmap^{\fin}(Q,\V'(s))$ and it is a 
legitimate move for $\eloise$ in match $\pi^{\fin}$.

It remains to consider the case of a basic position $(a,s) \in A^{\fin} \times T$
with $a \in A$ a state. 
By definition $\tmapProj(a,\V(s))$ is just $\tmap^{\fin}(a,\V(s))$. 
As $(a,s)$ is winning, we can assume that no position $(Q,s)$ with $Q$ a 
macro-state is winning according to the same $f$, as making $\tmapProj$-sentences
true never forces $\eloise$ to mark a node both with a state and a macro-state. 
Therefore, $s$ is not in $X_p$ either, meaning that it it is not labeled with 
$p$ in the $p$-variant $\bbT'$ and thus $\V'(s) = \V(s)$. 
This implies that $f$ makes $\tmap^{\fin}(a,\V'(s)) = \tmap^{\fin}(a,\V(s))$
true in $\R{s}$ and its suggestion is a legitimate move for $\eloise$ in match
$\pi^{\fin}$.
In order to conclude the proof, observe that for all positions that we consider
the same valuation is suggested to $\eloise$ in both games: this means that any 
next position that is picked by player $\abelard$ in $\pi^{\fin}$ is also 
available for $\abelard$ in the shadow match $\tilde{\pi}$.
\medskip

We now show the direction from right to left of the statement. Let $\bbT'$ be a finite $p$-variant of
$\bbT$, with labeling function $\tscolors'$, and $g$ a winning strategy for $\exists$ in $\mc{G} = \mathcal{A}(\bbA,\bbT')@(a_I,s_I)$. Our goal is to define a strategy $g'$ for $\exists$ in $\mc{G_{\exists}}$. 
As usual, $g'$ will be constructed in stages, while playing a match $\pi'$ in $\mc{G_{\exists}}$. In parallel to $\pi'$, a \emph{bundle} $\mc{B}$ of $g$-guided shadow matches in $\mc{G}$ is maintained, with the following condition enforced for each round $z_i$:
\smallskip
\begin{center}
\fbox{\parbox{13.5cm}{
\begin{enumerate}
  \item If the current basic position in $\pi'$ is of the form $(Q,s) \in \shA \times T$, then for each $a \in Q$ there is an $g$-guided (partial) shadow match $\pi_a$ at basic position $(a,s) \in A\times T$ in the current bundle $\mc{B}_i$. Also, either $\bbT'_s$ is not $p$-free (i.e., it does contain a node $s'$ with $p \in \tscolors'(s')$) or $s$ has some sibling $t$ such that $\bbT'_t$ is not $p$-free.
  \item Otherwise, the current basic position in $\pi'$ is of the form $(a,s) \in A \times T$ and $\bbT'_s$ is $p$-free. Also, the bundle $\mc{B}_i$ only consists of a single $g$-guided match $\pi_a$ whose current basic position is also $(a,s)$.
\end{enumerate}
}}\hspace*{0.3cm}($\ddag$)
\end{center}
\smallskip
We recall the idea behind ($\ddag$). Point ($\ddag.1$) describes the part of match $\pi'$ where it is still possible to encounter nodes which are labeled with $p$ in $\bbT'$. As $\tmapProj$ only takes the letter $p$ into account when defined on macro-states in $\shA$, we want $\pi'$ to visit only positions of the form $(Q,s) \in \shA \times T$ in that situation. Anytime we visit such a position $(Q,s)$ in $\pi'$, the role of the bundle is to provide one $g$-guided shadow match at position $(a,s)$ for each $a \in Q$.
Then $g'$ is defined in terms of what $g$ suggests from those positions.

 Point ($\ddag.2$) describes how we want the match $\pi'$ to be
 played on a $p$-free subtree: as any node that one might encounter has the same label in $\bbT$ and $\bbT'$,
it is safe to let ${\finexists p}.\bbA^{\fin}$ behave as $\bbA$ in such situation. Provided that the two matches visit the same basic positions, of the form $(a,s)\in A \times T$, we can let $g'$ just copy $g$.

The key observation is that, as $\bbT'$ is a \emph{finite} $p$-variant of $\bbT$, nodes labeled with $p$ are reachable only for finitely many rounds of $\pi'$. This means that, provided that ($\ddag$) hold at each round, ($\ddag.1$) will describe an initial segment of $\pi'$, whereas ($\ddag.2$) will describe the remaining part. Thus our proof that $g'$ is a winning strategy for $\exists$ in $\mc{G}_{\exists}$ is concluded by showing that ($\ddag$) holds for each stage of construction of $\pi'$ and $\mc{B}$.

For this purpose, we initialize $\pi'$ from position $(\shai,s) \in \shA\times T$ and the bundle $\mc{B}$ as $\mc{B}_0 = \{\pi_{a_I}\}$, with $\pi_{a_I}$ the partial $g$-guided match consisting only of the position $(a_I,s)\in A\times T$. The situation described by ($\ddag .1$) holds at the initial stage of the construction.
Inductively, suppose that at round $z_i$ we are given a position $(q,s) \in A^{\f} \times T$ in $\pi^{\f}$ and a bundle $\mc{B}_i$ as in ($\ddag$). To show that ($\ddag$) can be maintained at round $z_{i+1}$, we distinguish two cases, corresponding respectively to situation ($\ddag.1$) and ($\ddag.2$) holding at round $z_i$.
\begin{enumerate}[label = (\Alph*), ref = \Alph*]
%\yvwarning{Notation `$q$' is confusing, see $V'(q)$ below FZ: I corrected $q$ into $q'$ below}
  \item If $(q,s)$ is of the form $(Q,s) \in \shA \times T$, by inductive hypothesis we are given with $g$-guided shadow matches $\{\pi_a\}_{a \in Q}$ in $\mc{B}_i$. For each match $\pi_a$ in the bundle, we are provided with a valuation $V_{a,s}: A \rightarrow \p (\R{s})$ making $\tmap(a,\tscolors'(s))$ true. Then we further distinguish the following two cases.
\begin{enumerate}[label = (\roman*), ref = \roman*]
  \item \label{point:TsNotPFree} Suppose first that $\bbT'_s$ is not $p$-free. We let the suggestion $V' \: A^{\f} \to \p (\R{s})$ of $g'$ from position $(Q,s)$ be defined as follows:
       \begin{align*}
       % \nonumber to remove numbering (before each equation)
       %\widetilde{V}_{Q,s}(Q') \ \df \  \bigcup_{a \in \Ran(Q),\ b \in \Ran(Q')}\{t\ \in \R{s}|\ t \in V_{a,s}(b)\}.
       V'(q') \isdef \begin{cases}
               \bigcap\limits_{\substack{(a,b) \in q',\\ a \in Q}}\{t\ \in \R{s} \mid t \in V_{a,s}(b)\}               & q' \in \shA \\[2em]
               \bigcup\limits_{a \in Q} \{t\ \in \R{s} \mid t \in V_{a,s}(q') \text{ and }\bbT'.t\text{ is $p$-free}\}              & q' \in A.
               %\\[1.5em]               \hspace{.6cm}\nada & \text{otherwise.}
           \end{cases}
       \end{align*}
The definition of $V'$ on $q' \in \shA$ is standard 
(\emph{cf.}~\cite[Prop. 2.21]{Zanasi:Thesis:2012}) and guarantees a 
correspondence between the states assigned by the valuations 
$\{V_{a,s}\}_{a \in Q}$ and the macro-states assigned by $V'$. 
The definition of $V'$ on $q' \in A$ aims at fulfilling the conditions, 
expressed via $\qu$ and $\dqu$, on the number of nodes in $\R{s}$ witnessing 
(or not) some $A$-types. 
Those conditions are the ones that $\shDe(Q,\tscolors'(s))$ --and thus also
$\tmap^{\f}(Q,\tscolors'(s))$-- ``inherits'' by $\bigwedge_{a \in R} 
\tmap(a,\tscolors'(s))$, by definition of $\shDe$. 
Notice that we restrict $V'(q')$ to the nodes $t \in V_{a,s}(q')$ such that 
$\bbT'.t$ is $p$-free.
As $\bbT'$ is a \emph{finite} $p$-variant, only \emph{finitely many} nodes in
$V_{a,s}(q')$ will not have this property.
Therefore their exclusion, which is crucial for maintaining condition ($\ddag$) 
(\emph{cf.}~case \eqref{point:ddag2CardfromMacro} below), does not influence 
the fulfilling of the cardinality conditions expressed via $\qu$ and $\dqu$ in 
$\shDe(Q,\tscolors'(s))$.

       On the base of these observations, one can check that $V'$ makes $\shDe(Q,\tscolors'(s))$--and thus also $\tmap^{\f}(Q,\tscolors'(s))$--true in $\R{s}$. In fact, to be a legitimate move for $\exists$ in $\pi'$, $V'$ should make $\tmapProj(Q,\tscolors(s))$ true: this is the case, for $\tmap^{\f}(Q,\tscolors'(s))$ is either equal to $\tmap^{\f}(Q,\tscolors(s))$, if $p \not\in \tscolors'(s)$, or to $\tmap^{\f}(Q,\tscolors(s)\cup\{p\})$ otherwise. In order to check that we can maintain $(\ddag)$, let $(q',t) \in A^{\f} \times T$ be any next position picked by $\forall$ in $\pi'$ at round $z_{i+1}$. As before, we distinguish two cases:
       \begin{enumerate}[label = (\alph*), ref = \alph*]
         \item If $q'$ is in $A$, then, by definition of $V'$, $\forall$ can choose $(q',t)$ in some shadow match $\pi_a$ in the bundle $\mc{B}_i$. We dismiss the bundle --i.e. make it a singleton-- and bring only $\pi_a$ to the next round in the same position $(q',t)$. Observe that, by definition of $V'$, $\bbT'.t$ is $p$-free and thus ($\ddag.2$) holds at round $z_{i+1}$. \label{point:ddag2CardfromMacro}
         \item Otherwise, $q'$ is in $\shA$. The new bundle $\mc{B}_{i+1}$ is given in terms of the bundle $\mc{B}_i$: for each $\pi_a \in \mc{B}_i$ with $a\in Q$, we look if for some $b \in q'$ the position $(b,t)$ is a legitimate move for $\forall$ at round $z_{i+1}$; if so, then we bring $\pi_a$ to round $z_{i+1}$ at position $(b,t)$ and put the resulting (partial) shadow match $\pi_b$ in $\mc{B}_{i+1}$. Observe that, if $\forall$ is able to pick such position $(q',t)$ in $\pi'$, then by definition of $V'$ the new bundle $\mc{B}_{i+1}$ is non-empty and consists of an $g$-guided (partial) shadow match $\pi_b$ for each $b \in q'$. In this way we are able to keep condition ($\ddag.1$) at round $z_{i+1}$.
       \end{enumerate}
    \item Let us now consider the case in which $\bbT'_s$ is $p$-free. We let $g'$ suggest the valuation $V'$ that assigns to each node $t \in \R{s}$ all states in $\bigcup_{a \in Q}\{b \in A\ |\ t \in V_{a,s}(b)\}$. It can be checked that $V'$ makes $\bigwedge_{a \in Q} \tmap(a,\tscolors'(s))$ -- and then also $\tmap^{\f}(Q,\tscolors'(s))$ -- true in $\R{s}$. As $p \not\in \tscolors(s)=\tscolors'(s)$, it follows that $V'$ also makes $\tmapProj(Q,\tscolors(s))$ true, whence it is a legitimate choice for $\exists$ in $\pi'$. Any next basic position picked by $\forall$ in $\pi'$ is of the form $(b,t) \in A \times T$, and thus condition ($\ddag.2$) holds at round $z_{i+1}$ as shown in (i.a). %\eqref{point:ddag2CardfromMacro}
  \end{enumerate}
  \item In the remaining case, $(q,s)$ is of the form $(a,s) \in A \times T$ and by inductive hypothesis we are given with a bundle $\mc{B}_i$ consisting of a single $f$-guided (partial) shadow match $\pi_a$ at the same position $(a,s)$. Let $V_{a,s}$ be the suggestion of $\exists$ from position $(a,s)$ in $\pi_a$. Since by assumption $s$ is $p$-free, we have that $\tscolors'(s) = \tscolors(s)$, meaning that $\tmapProj(a,\tscolors(s))$ is just $\tmap(a,\tscolors(s)) = \tmap(a,\tscolors'(s))$. Thus the restriction $V'$ of $V$ to $A$ makes $\tmap(a,\tscolors'(t))$ true and we let it be the choice for $\exists$ in $\tilde{\pi}$. It follows that any next move made by $\forall$ in $\tilde{\pi}$ can be mirrored by $\forall$ in the shadow match $\pi_a$.
%Version with minimality:
%      It follows that $\tmapProj(a,\tscolors(t))$ is just $\tmap(a,\tscolors(t)) = \tmap(a,\tscolors'(t))$ and the same valuation suggested by $f$ in $\pi_a$ is a legitimate choice for $\exists$ in $\tilde{\pi}$. By letting $\exists$ choose such valuation, it follows that any next move made by $\forall$ in $\tilde{\pi}$ can be mirrored by $\forall$ in the shadow match $\pi_a$.
\end{enumerate}
%As explained above, since $\bbT'$ is a noetherian $p$-variant, then ($\ddag .1$) holds for finitely many stages of construction of $\tilde{\pi}$, whereas ($\ddag .2$) holds for all the remaining stages, by construction of $\tilde{f}$. It follows that this strategy is winning for $\exists$ in $\tilde{G}$.
\end{proof}

%%%%%%
%%%%%% BOOLEANS
%%%%%%

\subsubsection{Closure under Boolean operations}

Here we show that the collection of $\Aut(\wmso)$-recognizable classes of tree
models is closed under the Boolean operations.
For union, we use the following result, leaving the straightforward proof as an 
exercise to the reader.

\begin{lemma}
\label{t:cl-dis}
Let $\bbA_{0}$ and $\bbA_{1}$ be $\wmso$-automata.
Then there is a $\wmso$-automaton $\bbA$ such that $\TMod(\bbA)$ is the
union of $\TMod(\bbA_{0})$ and $\TMod(\bbA_{1})$.
\end{lemma}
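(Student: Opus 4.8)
The plan is to use the standard disjoint-union construction with a fresh initial state. Assume without loss of generality that $\bbA_0 = \tup{A_0,\tmap_0,\pmap_0,a^0_I}$ and $\bbA_1 = \tup{A_1,\tmap_1,\pmap_1,a^1_I}$ have disjoint carriers, and fix a fresh state $a_I \notin A_0 \cup A_1$. I would define $\bbA \isdef \tup{A,\tmap,\pmap,a_I}$ with $A \isdef A_0 \uplus A_1 \uplus \{a_I\}$, letting $\tmap$ and $\pmap$ agree with $\tmap_i$ and $\pmap_i$ on $A_i$ for $i = 0,1$, setting $\pmap(a_I) \isdef 1$, and putting
\[
\tmap(a_I,c) \isdef \tmap_0(a^0_I,c) \lor \tmap_1(a^1_I,c) \qquad (c \in \pow\pprop).
\]
This is a legitimate $\oslang^{+}(A)$-sentence, since $\oslang = \ofoei$ is closed under disjunction and each $\tmap_i(a^i_I,c)$ lies in $\oslang^{+}(A_i) \subseteq \oslang^{+}(A)$.

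The first thing to verify is that $\bbA$ is a $\wmso$-automaton in the sense of Definition~\ref{d:ctwk}. Since every transition formula of a state in $A_i$ mentions only states of $A_i$, no state of $A_i$ reaches $a_I$ or any state of $A_{1-i}$; hence the clusters of $\bbA$ are exactly those of $\bbA_0$, those of $\bbA_1$, and the singleton $\{a_I\}$ — the last one degenerate, because $a_I$ does not occur in $\tmap(a_I,c)$. Weakness and continuity on clusters contained in $A_i$ are inherited verbatim from $\bbA_i$. For the cluster $\{a_I\}$ (priority $1$) the continuity requirement holds vacuously: $\tmap(a_I,c)$ is a positive sentence not containing $a_I$, and such a sentence lies in the syntactic fragment $\cont{\oslang(A)}{\{a_I\}}$, being admitted by the $\psi$-clause of its grammar with $\psi \in \oslang^{+}(A \setminus \{a_I\})$. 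Finally $a_I$ lies in the highest cluster, as it is reachable from no other state, so the usual convention on initial states is respected.

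Next I would prove $\Mod(\bbA) = \Mod(\bbA_0) \cup \Mod(\bbA_1)$, which restricts to $\TMod(\bbA) = \TMod(\bbA_0) \cup \TMod(\bbA_1)$. For $\supseteq$, a winning strategy for $\eloise$ in $\agame(\bbA_i,\bbS)@(a^i_I,s_I)$ yields one in $\agame(\bbA,\bbS)@(a_I,s_I)$: at the opening round $\eloise$ plays the valuation she would use at $(a^i_I,s_I)$, extended by $\nada$ on $A_{1-i} \cup \{a_I\}$; since $\tmap_i(a^i_I,\tscolors(s_I))$ mentions only $A_i$-predicates, this valuation still satisfies that disjunct, hence $\tmap(a_I,\tscolors(s_I))$, and it forces $\abelard$ to pick a witness in $A_i \times T$. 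From there the match — and in particular its parity outcome — coincides with a match of $\agame(\bbA_i,\bbS)$, the single initial visit to $a_I$ being immaterial. For $\subseteq$, given a winning strategy for $\eloise$ in $\agame(\bbA,\bbS)@(a_I,s_I)$, her opening valuation $V$ satisfies $\tmap_0(a^0_I,\tscolors(s_I)) \lor \tmap_1(a^1_I,\tscolors(s_I))$, hence some disjunct, say the $i$-th. By the standard fact that shrinking $\eloise$'s valuation to the predicates occurring in a satisfied disjunct preserves both the truth of that disjunct and the winning status of the position (it only removes witnesses available to $\abelard$), we may assume $V$ is supported on $A_i$; then every later position lies in $A_i \times T$, the transition and priority maps restricted to $A_i$ are exactly those of $\bbA_i$, and $V$ together with the residual strategy is a winning strategy in $\agame(\bbA_i,\bbS)@(a^i_I,s_I)$.

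The construction is routine — the paper itself leaves the proof as an exercise — so there is no real obstacle; the only two points deserving a line of care are the cluster analysis ensuring that the fresh state forms its own degenerate cluster and meets the continuity constraint vacuously, and, in the $\subseteq$ direction, the observation that $\eloise$ may restrict her opening valuation to the states of a single satisfied disjunct.
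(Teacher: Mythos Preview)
Your construction is correct and is exactly the standard disjoint-union-with-fresh-initial-state argument the paper has in mind when it leaves the proof as an exercise; the two points you flag --- that $\{a_I\}$ is a degenerate cluster so the weakness/continuity constraints are vacuous there, and that in the $\subseteq$ direction $\eloise$ may prune her opening valuation to the $A_i$-predicates of a satisfied disjunct --- are precisely the only details worth spelling out.
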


For closure under complementation we reuse the general results established in 
Section \ref{sec:parityaut} for parity automata.

\begin{lemma}
\label{t:cl-cmp}
Let $\bbA$ be an $\wmso$-automaton.
Then the automaton $\overline{\bbA}$ defined in Definition~\ref{d:caut} is a
$\wmso$-automaton recognizing the complement of $\TMod(\bbA)$.
\end{lemma}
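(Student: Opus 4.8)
The plan is to split the statement into a semantic part---that $\TMod(\dual{\bbA})$, with $\dual{\bbA} = \overline{\bbA}$ the automaton of Definition~\ref{d:caut}, is the complement of $\TMod(\bbA)$ on trees---and a syntactic part---that $\dual{\bbA}$ again belongs to the class $\AutWC(\ofoei)$ of $\wmso$-automata. The semantic part requires essentially no new work: since $\ofoei$ is closed under boolean duals (Definition~\ref{def:concreteduals} and the remark following it), Proposition~\ref{prop:autcomplementation} applies to $\bbA$ and yields that $\dual{\bbA}$ accepts a transition system $\bbS$ iff $\bbA$ rejects $\bbS$. By positional determinacy of parity games (Fact~\ref{THM_posDet_ParityGames}), for every $\bbS$ exactly one of ``$\bbA$ accepts $\bbS$'' and ``$\bbA$ rejects $\bbS$'' holds, so $\Mod(\dual{\bbA})$ is the complement of $\Mod(\bbA)$; intersecting with the class of trees gives that $\TMod(\dual{\bbA})$ is the complement of $\TMod(\bbA)$ within that class.

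The real content is checking that $\dual{\bbA}$ is continuous-weak for $\ofoei$. First I would record three elementary facts about the dual operation $(\cdot)^{\delta}$ of Definition~\ref{def:concreteduals}: (a) it neither creates nor destroys an occurrence of a predicate $a \in A$, since it only swaps $\exists/\forall$, $\wedge/\vee$, $\qu/\dqu$, $\approx/\not\approx$, $\top/\bot$ while fixing the literals $a(x)$ and $\lnot a(x)$, so $\dual{\bbA}$ has the same occurrence relation $\ord$, hence the same clusters, as $\bbA$; (b) it maps ${\ofoei}^{+}(A)$ into ${\ofoei}^{+}(A)$, as none of its clauses produces a negated predicate, so $\dual{\tmap}$ is a legitimate transition map; (c) it is a syntactic involution, $(\phi^{\delta})^{\delta} = \phi$. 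Weakness of $\dual{\bbA}$ is then immediate: $\dual{\pmap}(a) = 1 + \pmap(a)$ preserves equalities of priorities and the clusters are unchanged, so $\dual{\pmap}$ is constant on clusters; by Remark~\ref{rmk:weak01} I may renormalise $\dual{\pmap}$ to take values in $\{0,1\}$, which simply exchanges the parities $0$ and $1$ relative to $\bbA$.

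For the continuity condition, fix a cluster $M$, a state $a \in M$ and a colour $c$. If $\dual{\pmap}(a) = 1$, then (after renormalisation) $\pmap(a) = 0$, so continuity of $\bbA$ gives $\tmap(a,c) \in \cocont{\ofoei(A)}{M}$; by definition of the cocontinuous fragment this says precisely that $(\tmap(a,c))^{\delta} \in \cont{\ofoei(A)}{M}$, i.e.\ $\dual{\tmap}(a,c) \in \cont{\ofoei(A)}{M}$, as required. If $\dual{\pmap}(a) = 0$, then $\pmap(a) = 1$ and $\tmap(a,c) \in \cont{\ofoei(A)}{M}$; using the involutivity (c), $((\tmap(a,c))^{\delta})^{\delta} = \tmap(a,c) \in \cont{\ofoei(A)}{M}$, which by definition of the cocontinuous fragment means $\dual{\tmap}(a,c) = (\tmap(a,c))^{\delta} \in \cocont{\ofoei(A)}{M}$. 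This completes the verification that $\dual{\bbA} \in \AutWC(\ofoei)$, and combined with the first paragraph, the lemma.

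I do not expect a genuine obstacle. The only points needing care are the bookkeeping around the priority shift---matching the $\{0,1\}$-convention of Remark~\ref{rmk:weak01} and keeping straight that dualising swaps even and odd clusters---and confirming facts (a)--(c) about $(\cdot)^{\delta}$, i.e.\ that it preserves positivity, is involutive, and leaves the occurrence relation (hence the cluster decomposition and the cluster-indexed continuity fragments) invariant. All of these are routine inductions on one-step formulas, so the proof as a whole is short once Proposition~\ref{prop:autcomplementation} is invoked.
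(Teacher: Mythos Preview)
Your proposal is correct and follows essentially the same approach as the paper: invoke Proposition~\ref{prop:autcomplementation} (together with determinacy) for the semantic complementation, and then verify that the dual construction preserves weakness and continuity because these conditions are defined in a self-dual way. The paper compresses your careful check of facts (a)--(c) and the parity-swap argument into a single remark about ``the self-dual nature of these properties,'' but the content is the same.
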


\begin{proof} 
It suffices to check that Proposition \ref{prop:autcomplementation} restricts 
to the class $\AutWC(\ofoei)$ of $\wmso$-automata. 
First, the fact that $\ofoei$ is closed under Boolean duals 
(Definition~\ref{def:concreteduals}) implies that it holds for the class
$\Aut(\ofoei)$. 
It then remains to check that the dual automata construction 
$\overline{(\cdot)}$ preserves weakness and continuity. 
But this is straightforward, given the self-dual nature of these properties.
\end{proof}

%%%%
%%%% PROOF THEOREM
%%%%

We are now finally able to conclude the direction from formulas to automata of 
the characterisation theorem.

\begin{proof}[of Theorem \ref{t:wmsoauto}] The proof is by induction on $\phi$.
\begin{itemize}

\item 
For the base case, we consider the atomic formulas $\here{p}$, $p \inc q$ and 
$R(p,q)$.

\begin{itemize}
\item
The $\wmso$-automaton $\bbA_{\here{p}} = \tup{A,\tmap,\Omega,a_I}$ is given 
by putting
\begin{eqnarray*}
        A  \  \df \  \{a_0,a_1\}  \qquad \qquad  
	a_I  \   \df  \  a_0      \qquad \qquad   
	\Omega(a_0)  \  \df \  0  \qquad \qquad
        \Omega(a_1)  \  \df \  0 
\\ \tmap(a_0,c)  \  \df \  \left\{
	\begin{array}{ll}
           \forall x. a_1(x)  &  \mbox{if } p \in c
	\\ \bot               &  \mbox{otherwise.}
	\end{array}
\right. \qquad 
  \tmap(a_1,c)  \  \df \  \left\{
	\begin{array}{ll}
           \forall x. a_1(x)  &  \mbox{if } p \not\in c
	\\ \bot &  \mbox{otherwise.}
	\end{array}
\right.
\end{eqnarray*}
\item
% If $\phi = p \inc q$, as the corresponding $\wmso$-automaton we can take
% $\bbA_{p\inc q} = \tup{A,\tmap,\Omega,a_I}$, where
The $\wmso$-automaton $\bbA_{p\inc q} = \tup{A,\tmap,\Omega,a_I}$ is given by
$A \isdef \{ a \}$, $a_{I} \isdef a$, $\pmap(a) \isdef 0$ and 
$\tmap(a,c) \isdef \forall x\, a(x)$ if $p \not\in c$ or $q\in c$, and
$\tmap(a,c) \isdef \bot$ otherwise.

\item
The $\wmso$-automaton $\bbA_{R(p,q)} = \tup{A,\tmap,\Omega,a_I}$ is given below:
\begin{eqnarray*}
        A  \  \df \  \{a_0,a_1\}  \qquad \qquad  
	a_I  \   \df  \  a_0      \qquad \qquad   
	\Omega(a_0)  \  \df \  0  \qquad \qquad
        \Omega(a_1)  \  \df \  1 
\\ \tmap(a_0,c)  \  \df \  \left\{
	\begin{array}{ll}
           \exists x. a_1(x) \wedge \forall y. a_0(y)  &  \mbox{if }p \in c
	\\ \forall x\ (a_0(x))  &  \mbox{otherwise.}
	\end{array}
\right. \qquad 
  \tmap(a_1,c)  \  \df \  \left\{
	\begin{array}{ll}
        \top  &  \mbox{if }q \in c \\
        \bot  &  \mbox{otherwise}
	\end{array}
\right.
\end{eqnarray*}
% Note that the $\smso$-automaton for $R(p,q)$ provided in
% \cite[Ex. 2.5]{Zanasi:Thesis:2012} is \emph{not} a $\wmso$-automaton, as the
% continuity property does not hold.
\end{itemize}

\item
For the Boolean cases, where $\phi = \psi_1 \vee \psi_2$ or $\phi = \neg\psi$
we refer to the Boolean closure properties that we just established in the
Lemmas~\ref{t:cl-dis} and~\ref{t:cl-cmp},
respectively.

\item
The case $\phi = \exists p. \psi$ follows by the following chain of
equivalences, where $\bbA_{\psi}$ is given by the inductive hypothesis and
${\finexists p}.\bbA_{\psi}$ is constructed according to
Definition~\ref{DEF_fin_projection}:
\begin{alignat*}{2}
{\finexists p}.\bbA_{\psi} \text{ accepts }\mb{T}
   & \text{ iff }
     \bbA_{\psi} \text{ accepts } \mb{T}[p \mapsto X],
     \text{ for some } X \sse_{\om} T
   & \quad\text{(Lemma~\ref{PROP_fin_projection})}
\\ & \text{ iff }
     \mb{T}[p \mapsto X] \models \psi,
     \text{ for some } X \sse_{\om} T
   & \quad\text{(induction hyp.)}
\\ & \text{ iff }
    \mb{T} \models \exists p. \psi
   & \quad\text{(semantics $\wmso$)}
\end{alignat*}
\end{itemize}
\end{proof}

%\newpage 
% !TEX root = ../00CFVZ_TOCL.tex
% \section{Automata for $\nmso$}\label{sec:autnmso}

\section{Automata for $\nmso$}
   \label{sec:autnmso}

In this section we introduce the automata that capture $\nmso$.

\begin{definition}
A \emph{$\nmso$-automaton} is a weak automaton for the one-step language $\ofoe$.
\end{definition}

Aanalogous to the previous section, our main goal here is to construct an
equivalent $\nmso$-automaton for every $\nmso$-formula.

\begin{theorem}
\label{t:nmsoauto}
There is an effective construction transforming a $\nmso$-formula $\phi$
into a $\nmso$-automaton $\bbA_{\phi}$ that is equivalent
to $\phi$ on the class of trees.
\end{theorem}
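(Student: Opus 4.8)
The plan is to proceed by induction on the structure of the $\nmso$-formula $\phi$, mirroring the strategy used for $\wmso$ in Theorem~\ref{t:wmsoauto} but with $\AutW(\ofoe)$ in place of $\AutWC(\ofoei)$. For the base cases I would exhibit explicit $\nmso$-automata for the atomic formulas $\here{p}$, $p \inc q$ and $R(p,q)$; these are essentially the same small automata as in the proof of Theorem~\ref{t:wmsoauto} (their one-step transitions use only $\ofoe$, which is a sublanguage of $\ofoei$, so weakness is the only condition to check and it is immediate since all clusters are singletons or have uniform priority). For the Boolean cases $\phi = \psi_1 \vee \psi_2$ and $\phi = \neg\psi$ I would establish, exactly as in Lemmas~\ref{t:cl-dis} and~\ref{t:cl-cmp}, that the class of $\nmso$-automata is closed under union (disjoint union of automata with a fresh initial state) and under complementation (Proposition~\ref{prop:autcomplementation} applied to $\Aut(\ofoe)$, using that $\ofoe$ is closed under Boolean duals via Definition~\ref{def:concreteduals}, and that weakness is a self-dual property preserved by $\dual{(\cdot)}$).

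The only substantial case is the quantifier step $\phi = \exists p.\psi$, where $\exists$ ranges over \emph{noetherian} subsets. Here I would need the analogue of the machinery in Section~\ref{sec:simulationwmso}: a simulation theorem putting an $\nmso$-automaton $\bbA$ into a shape suitable for a projection construction. The right notion is an automaton $\bbA'$ that behaves non-deterministically (macro-states, functional strategy for $\eloise$) on a \emph{well-founded} portion of each accepted tree, rather than on a \emph{finite} portion as in the $\wmso$ case. Concretely I would define a ``noetherian construct'' $\bbA^{\noet}$ over sorts $A \cup \shA$, where the macro-state transition map $\tmap^{\noet}(Q,c)$ offers, besides the conjunctive fallback $\bigwedge_{a \in Q}\tmap(a,c)$, a disjunct built from liftings of the basic-form $\ofoe$-sentences $\bigwedge_{a\in Q}\tmap(a,c)$ (using Theorem~\ref{t:osnf} for $\ofoe$), and I would assign all macro-states priority $1$. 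The key structural property to prove is: if $\eloise$ wins the acceptance game on a tree, she has a winning strategy that is \emph{functional} in $\shA$ and in which the set of nodes ever labelled by a macro-state forms a bundle (i.e.\ has no infinite ascending chain) --- this is the $\nmso$-counterpart of Theorem~\ref{PROP_facts_finConstrwmso}, with ``finitary in $\shA$'' replaced by ``noetherian in $\shA$''. Weakness of $\bbA^{\noet}$ is easy (clusters on sort $A$ inherit it from $\bbA$; clusters on sort $\shA$ are uniform at priority $1$); crucially we do \emph{not} need any continuity condition, which is exactly why $\ofoe$ (rather than the continuity-restricted fragment) suffices.

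With the simulation theorem in hand, the projection construction ${\finexists p}.\bbA^{\noet}$ (adapting Definition~\ref{DEF_fin_projection}) would yield an $\nmso$-automaton satisfying $\TMod(\exists p.\bbA^{\noet}) \equiv \{\bbT \mid \bbT[p\mapsto X] \in \TMod(\bbA)\text{ for some noetherian }X\}$, via a bundle-of-shadow-matches argument parallel to the one in Lemma~\ref{PROP_fin_projection}: the right-to-left direction maintains a bundle of $g$-guided shadow matches in the original game, and because the $p$-labelled set is \emph{noetherian} the positions reachable while still ``seeing'' $p$-labelled nodes lie along a bundle, so only for a well-founded set of rounds is $\pi'$ in macro-state mode; beyond that it copies the alternating behaviour of $\bbA$ on the $p$-free part. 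The left-to-right direction uses functionality of $\eloise$'s strategy to read off the noetherian $p$-variant, exactly as $X_p$ was read off in the $\wmso$ proof. I expect the main obstacle to be verifying that the ``macro-state-visited nodes form a bundle'' invariant is genuinely maintained under $\abelard$'s moves and interacts correctly with the acceptance (parity) condition --- in particular, that an infinite descending path through macro-states cannot be accepting, which is where the priority-$1$ assignment to $\shA$ together with weakness does the work; and dually, that the bundle condition on the guessed set $X$ is precisely what the game on $\bbA^{\noet}$ certifies, neither more nor less.
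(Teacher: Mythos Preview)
Your proposal is correct and follows essentially the same approach as the paper. The paper constructs exactly the two-sorted noetherian construct $\bbA^{\noet}$ you describe (Definition~\ref{def:noetherianconstruct}), proves the simulation theorem with functionality and the ``noetherian in $\shA$'' strategy property enforced by the odd priority on macro-states (Theorem~\ref{PROP_facts_noetConstr}), and then establishes $\TMod({\exists p}.\bbA^{\noet}) \equiv {\noetexists p}.\TMod(\bbA)$ (Lemma~\ref{PROP_noet_projection}) by the same bundle-of-shadow-matches argument you outline; the base and Boolean cases are handled exactly as you indicate.
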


The proof for Theorem \ref{t:nmsoauto} will closely follow the steps for proving
the analogous result for $\wmso$ (Theorem \ref{t:wmsoauto}).
Again, the crux of the matter is to show that the collection of classes of tree
models that are recognisable by some $\nmso$-automaton, is closed under the 
relevant notion of projection.
Where this was finitary projection for $\wmso$ (Def. \ref{def:tree_finproj-w}),
the notion mimicking $\nmso$-quantification is \emph{noetherian} projection.

\begin{definition}\label{def:tree_finproj-n}
Given a set $\pprop$ of proposition letters, $p \not\in P$ and a class 
$\mathsf{C}$ of $\pprop\cup\{p\}$-labeled trees, we define the \emph{noetherian 
projection} of $\mathsf{C}$ over $p$ as the language of $\pprop)$-labeled trees 
given as 
$$
\noetexists p.\mathsf{C} \df \{\bbT \mid 
   \text{ there is a noetherian $p$-variant } \bbT' \text{ of } \bbT 
   \text{ with } \bbT' \in \mathsf{C}\}.
$$
A collection of classes of tree modelss is \emph{closed under noetherian 
projection over $p$} if it contains the class ${{\noetexists} p}.\mathsf{C}$
whenever it contains the class $\mathsf{C}$ itself.
\end{definition}

\subsection{Simulation theorem for $\nmso$-automata}\label{sec:simulation_nmso}

Just as for $\wmso$-automata, also for $\nmso$-automata the projection construction will rely on a simulation theorem, constructing a two-sorted automaton $\bbA^{\noet}$ consisting of a copy of the original automaton, based on states $A$, and a variation of its powerset construction, based on macro-states $\shA$. For any accepted $\bbT$, we want any match $\pi$ of $\mc{A}(\bbA^{\noet},\bbT)$ to split in two parts:
\begin{description}
\item[(\textit{Non-deterministic mode})] for finitely many rounds $\pi$ is
played on macro-states, i.e. positions are of the form $\shA \times T$. 
The strategy of player $\exists$ is functional in $\shA$, i.e. it assigns 
\emph{at most one macro-state} to each node.
\item[(\textit{Alternating mode})] 
At a certain round, $\pi$ abandons macro-states and turns into a match of the
game $\mc{A}(\bbA,\bbT)$, i.e. all next positions are from $A \times T$ (and 
are played according to a non-necessarily functional strategy). 
\end{description}
The only difference with the two-sorted construction for $\wmso$-automata is 
that, in the non-deterministic mode, the cardinality of nodes to which 
$\exists$'s strategy assigns macro-states is irrelevant. 
Indeed, $\nmso$'s finiteness is only on the vertical dimension: assigning an 
odd priority to macro-states will suffice to guarantee that the 
non-deterministic mode processes just a well-founded portion of any accepted
tree.

We now proceed in steps towards the construction of $\bbA^{\noet}$. First, the following lifting from states to macro-states parallels Definition \ref{DEF_finitary_lifting}, but for the one-step language $\ofoe$ proper of $\nmso$-automata. It is based on the basic form for $\ofoe$-formulas, see Definition \ref{def:basicform-ofoe}.

\begin{definition}\label{DEF_noetherian_lifting}
Let $\varphi \in {\ofoe}^+(A)$ be of shape $\posdbnfofoe{\vlist{T}}{\Pi}$ for some $\Pi \subseteq \shA$ and $\vlist{T} = \{T_1,\dots,T_k\} \subseteq \shA$. We define $\varphi^{\noet}$ as $\posdbnfofoe{\lift{\vlist{T}}}{\lift{\Pi}} \in {\ofoe}^+(\shA )$, that means,
\begin{equation}\label{eq:unfoldingNablaofoe}
\varphi^{\noet} \ \df \
    \exists \vlist{x}.\big(\arediff{\vlist{x}} \land \bigwedge_{0 \leq i \leq n} \tau^+_{\lift{T}_i}(x_i)
\land
    \forall z.(\arediff{\vlist{x},z} \to \bigvee_{S\in \lift{\Pi} } \tau^+_S(z))\big)
\end{equation}
\end{definition}

It is instructive to compare \eqref{eq:unfoldingNablaofoe} with its
$\wmso$-counterpart \eqref{eq:unfoldingNablaolque}: the difference is that,
because the quantifiers $\qu$ and $\dqu$ are missing, the sentence does not
impose any cardinality requirement, but only enforces $\shA$-separability --- 
\emph{cf.} Section \ref{sec:onestep-short}.

\begin{lemma}\label{lemma:automatafunctionalsentence}
Let $\varphi \in {\ofoe}^+(A)$ and $\varphi^{\noet}\in {\ofoe}^+(\shA )$ be as
in Definition~\ref{DEF_noetherian_lifting}. 
Then $\varphi^{\noet}$ is separating in $\shA$.
\end{lemma}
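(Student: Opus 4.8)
The plan is to obtain the statement as a direct instance of the sufficient condition for separability recorded in Proposition~\ref{p:sep}(1), after checking that $\varphi^{\noet}$ has exactly the syntactic shape needed to apply it.

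First I would unwind what ``separating in $\shA$'' means in this context. Since $\varphi^{\noet} \in {\ofoe}^+(\shA)$, the relevant instance of Definition~\ref{d:sep} is the one in which both the ambient set of predicates and the set $B$ are equal to $\shA = \pow A$; the only point requiring care here is not to confuse the ``inner'' set $A$ of states of the original automaton --- over which the types $T_i$ and the elements $S$ of $\Pi$ are built --- with the ``outer'' set $\shA$ of macro-states, over which $\varphi^{\noet}$ is a one-step formula. So what we must establish is: (i) $\varphi^{\noet}$ is monotone in $\shA$, and (ii) for every one-step model $(D,V)$ with $V : \shA \to \pow D$ and $(D,V) \models \varphi^{\noet}$ there is a valuation $V' \leq_{\shA} V$ with $(D,V') \models \varphi^{\noet}$ in which each $d \in D$ belongs to $V'(Q)$ for at most one $Q \in \shA$. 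Part (i) is immediate, as $\varphi^{\noet}$ is a positive sentence and hence $\shA$-monotone by Theorem~\ref{th:onesteplogics-grammars}(1).

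For part (ii) I would invoke Proposition~\ref{p:sep}(1). Its hypothesis asks that $\varphi^{\noet}$ be a disjunction of formulas of the form $\posdbnfofoe{\vlist{T}}{\Pi}$ in basic $\ofoe$-form over the predicate set $\shA$; in our case this disjunction has the single disjunct $\posdbnfofoe{\lift{\vlist{T}}}{\lift{\Pi}}$ of \eqref{eq:unfoldingNablaofoe}, which is a (degenerate but admissible) such disjunction --- basic form is witnessed by the facts that each $\lift{T_i}$ is a subset of $\shA$ and that $\Pi \subseteq \vlist{T}$ (since $\varphi$ is in basic form, Definition~\ref{def:basicform-ofoe}) gives $\lift{\Pi} \subseteq \lift{\vlist{T}}$. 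The quantitative hypothesis of Proposition~\ref{p:sep}(1) is that $\sz{S \cap \shA} \leq 1$ for every $S$ among $\lift{T_1}, \dots, \lift{T_k}$ and every $S \in \lift{\Pi}$. But every such $S$ is a lifting: by definition of $\lift{(\cdot)}$ it is either $\nada$ or a singleton $\{U\}$ with $U \in \pow A$, so $\sz{S} \leq 1$, and since $S \subseteq \shA$ this is the same as $\sz{S \cap \shA} \leq 1$. Proposition~\ref{p:sep}(1) therefore applies and yields that $\varphi^{\noet}$ is $\shA$-separating, as claimed.

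There is no genuine obstacle in this argument; the proof is a matter of matching the lemma's hypotheses against those of Proposition~\ref{p:sep}(1). The one point worth emphasising in the write-up is that the case distinction in the definition of $\lift{S}$ --- sending the empty type to the empty type rather than to $\{\nada\}$ --- is exactly what forces every lifted type to be empty or a singleton, and hence what makes the cardinality hypothesis of Proposition~\ref{p:sep}(1) hold automatically. This is the formal counterpart of the remark preceding the lemma: since the noetherian lifting \eqref{eq:unfoldingNablaofoe} differs from its $\wmso$-analogue \eqref{eq:unfoldingNablaolque} only by the absence of the $\qu$ and $\dqu$ clauses, the only constraint it carries over macro-states is precisely $\shA$-separability.
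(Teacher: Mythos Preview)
Your argument is correct and follows exactly the same approach as the paper: observe that each lifted type in $\lift{\vlist{T}}$ and $\lift{\Pi}$ is by definition either empty or a singleton $\{Q\}$ with $Q \in \shA$, so the cardinality hypothesis of Proposition~\ref{p:sep}(1) is satisfied and $\shA$-separability follows. The additional checks you include (monotonicity, that $\lift{\Pi} \subseteq \lift{\vlist{T}}$ so basic form is preserved) are accurate elaborations of details the paper leaves implicit.
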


\begin{proof}
Each element of $\lift{\vlist{T}}$ and $\lift{\Pi}$ is by definition either the
empty set or a singleton $\{Q\}$ for some $Q \in \shA$. 
Then the statement follows from Proposition~\ref{p:sep} .
\end{proof}

We are now ready to define the transition function for macro-states. 
The following adapts Definition \ref{PROP_DeltaPowerset} to the one-step
language $\ofoe$ of $\nmso$-automata, and its normal form result, 
Theorem~\ref{t:osnf}. 
\begin{definition}\label{PROP_DeltaPowerset_noet}
Let $\bbA = \tup{A,\tmap,\pmap,a_I}$ be a $\nmso$-automaton. 
Fix any $c \in C$ and $Q \in \shA$. 
By Theorem~\ref{t:osnf} there is a sentence $\Psi_{Q,c} \in
{\ofoe}^+(A)$ in the basic form $\bigvee \dbnfofoe{\vlist{T}}{\Pi}$, for some
$\Pi \subseteq \shA$ and $T_i \subseteq A$, such that $\bigwedge_{a \in Q} 
\tmap(a,c) \equiv \Psi_{Q,c}$.
By definition, $\Psi_{Q,c} = \bigvee_{n}\varphi_n$, with each $\phi_{n}$ of 
shape $\dbnfofoe{\vlist{T}}{\Pi}$.
We put $\bmDe(Q,c) \isdef  \bigvee_{n}\varphi_n^{\noet}  \in {\ofoe}^+(\shA)$, where the translation $(\cdot)^{\noet}$ is as in Definition \ref{DEF_noetherian_lifting}.
\end{definition}

\noindent
We now have all the ingredients for the two-sorted construction over
$\nmso$-automata.

\begin{definition}\label{def:noetherianconstruct}
Let $\bbA = \tup{A,\tmap,\pmap,a_I}$ be a {\nmso-automaton}.
We define the \emph{noetherian construct over $\bbA$} as the automaton
$\bbA^{\noet} = \tup{A^{\noet},\tmap^{\noet},\pmap^{\noet},a_I^{\noet}}$ given
by
% \begin{gather*}
%       % \nonumber to remove numbering (before each equation)
%         A^{\noet} \ \df \  A \cup \shA \quad\quad\quad
% 	a_I^{\noet} \ \df \  \{(a_I,a_I)\} \quad\quad\quad
% 	\pmap^{\noet}(a) \ \df \  \pmap(a) \quad\quad\quad
% 	\pmap^{\noet}(R) \ \df \  1 \\
%         \tmap^{\noet}(a,c) \ \df \  \tmap(a,c) \qquad \qquad \qquad
%         \tmap^{\noet}(Q,c) \ \ \df \ \  \bmDe(Q,c) \vee \! \! \! \! \bigwedge_{a \in \Ran(Q)} \! \! \! \tmap(a,c).
%       \end{gather*}
\[
\begin{array}{lll}
   A^{\noet}   &\df&  A \cup \shA
\\ a_I^{\noet} &\df&  \{a_I\}
\end{array}
\hspace*{5mm}
\begin{array}{lll}
   \pmap^{\noet}(a) &\df& \pmap(a)
\\ \pmap^{\noet}(R) &\df& 1
\end{array}
\hspace*{5mm}
\begin{array}{lll}
   \tmap^{\noet}(a,c) &\df& \tmap(a,c)
\\ \tmap^{\noet}(Q,c) &\df&
  \bmDe(Q,c) \vee \bigwedge_{a \in Q} \! \! \tmap(a,c).
\end{array}
\]
\end{definition}

The construction is the same as the one for $\wmso$-automata (Definition \ref{def:finitaryconstruct}) but for the definition of the transition function for macro-states, which is now free of any cardinality requirement. %The definition of $\bbA^{\noet}$ enforces its behaviour to be split according to the non-deterministic and alternating mode. Indeed, for any accepted $\bbT$, a match $\pi$ of $\agame(\bbA^{\noet},\bbT)$ will visit positions involving macro-states only for finitely many initial rounds, because $\pmap^{\noet}[\shA] = \{1\}$. The alternating mode will be entered when, at a certain position $(R,s)\in \shA \times T$, the winning strategy for $\exists$ makes the disjunct $\bigwedge_{a \in \Ran(R)} \tmap(a,c)$ of $\tmap^{\noet}(R,c)$ true and then all successive positions only involve states from $A$. The next proposition fixes our desiderata on $\bbA^{\noet}$.

\begin{definition}\label{def:noetherianstrategy}
We say that a strategy $f$ in an acceptance game $\agame(\bbA,\bbT)$ is \emph{noetherian} in $B \subseteq A$ when in any $f$-guided match there can be only finitely many rounds played at a position of shape $(q,s)$ with $q \in B$.
\end{definition}

%: in particular, \ref{point:finConstrStrategy} certifies the description that we did of the non-deterministic mode of $\bbA^{f}$.

\begin{theorem}[Simulation Theorem for $\nmso$-automata]
\label{PROP_facts_noetConstr}
Let $\bbA$ be an $\nmso$-automaton and $\bbA^{\noet}$ its noetherian construct.
\begin{enumerate}[(1)]
  \itemsep 0 pt
  \item \label{point:finConstrAut-n}
$\bbA^{\noet}$ is an $\nmso$-automaton.
\item 
\label{point:finConstrStrategy-n}
For any $\bbT$, if $\eloise$ has a winning strategy in $\mathcal{A}(\bbA^{\noet},
\bbT)$ from position $(a_I^{\noet},s_I)$ then she has one that is functional in
$\shA$ and noetherian in $\shA$.
\item $\bbA \equiv \bbA^{\noet}$. \label{point:finConstrEquiv-n}
\end{enumerate}
\end{theorem}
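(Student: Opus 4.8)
The three claims should be handled in the same spirit as the corresponding parts of Theorem~\ref{PROP_facts_finConstrwmso}, but simplified by the absence of the cardinality quantifiers $\qu,\dqu$. For part~(1), I would observe that every cluster of $\bbA^{\noet}$ consists either entirely of ``old'' states from $A$ or entirely of macro-states from $\shA$. On clusters of the first sort, weakness is inherited directly from $\bbA$ (and there is no continuity condition to check, since $\nmso$-automata are merely weak, not continuous-weak). On clusters of the second sort, all macro-states carry priority $1$ by definition of $\pmap^{\noet}$, so weakness holds trivially; again there is nothing else to verify. Hence $\bbA^{\noet}$ is an $\nmso$-automaton.

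For part~(2), starting from a positional winning strategy $f$ for $\eloise$ in $\agame(\bbA^{\noet},\bbT)@(a_I^{\noet},s_I)$, I would massage it into a strategy $f'$ that is functional and noetherian in $\shA$. At a basic position $(a,s)\in A\times T$, $f'$ simply copies the valuation of $f$ restricted to $A$ (the sentence $\tmap^{\noet}(a,c)=\tmap(a,c)$ mentions no macro-state). At a basic position $(Q,s)\in\shA\times T$, if $f$ makes the disjunct $\bigwedge_{a\in Q}\tmap(a,c)$ true we again restrict to $A$; otherwise $f$ makes $\bmDe(Q,c)=\bigvee_n\varphi_n^{\noet}$ true, and here the key input is Lemma~\ref{lemma:automatafunctionalsentence}: each $\varphi_n^{\noet}$ is $\shA$-separating, so we may replace the valuation with a separating one $V''\leq_{\shA}V$ still satisfying $\varphi_n^{\noet}$. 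Letting $f'$ play $V''$ yields functionality in $\shA$. As in the $\wmso$ case, since every state in $\shA$ has odd priority $1$, weakness forces any $f'$-guided match to leave $\shA$-positions after finitely many rounds (otherwise the highest priority seen infinitely often would be odd and $\eloise$ would lose), so $f'$ is noetherian in $\shA$; and $f'$ remains winning because the set of $f'$-successors at any position is a subset of the $f$-successors.

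For part~(3), the direction $\bbA\Rightarrow\bbA^{\noet}$ is immediate: a winning strategy for $\eloise$ in $\agame(\bbA,\bbT)$ lifts verbatim to one in $\agame(\bbA^{\noet},\bbT)$, since $\tmap^{\noet}$ on $A$-states agrees with $\tmap$ and we can start the lifted play at $(\{a_I\},s_I)$ by choosing the $\bigwedge_{a\in Q}\tmap(a,c)$ disjunct at the first move (then drop into $A$-states). The converse is the main work: from a winning $f$ for $\eloise$ in $\agame(\bbA^{\noet},\bbT)@(a_I^{\noet},s_I)$ I would build a winning $f'$ in $\agame(\bbA,\bbT)@(a_I,s_I)$ by a shadow-match argument, maintaining for each round the invariant $(\ddag)$ that either the current $\pi$-position is $(Q,s)\in\shA\times T$ with the shadow $\pi'$-position $(a,s)$ for some $a\in Q$, or the same $(a,s)\in A\times T$ occurs in both. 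The functional-separating $V_{Q,s}$ supplied by (2) lets one translate a macro-state move in $\pi$ into a legitimate move in $\pi'$ via the standard ``flatten the macro-state'' recipe on valuations (\emph{cf.}~the definition of $V_{a,s}$ in the proof of Theorem~\ref{PROP_facts_finConstrwmso}), and the absence of $\qu,\dqu$ means no delicate bookkeeping about infinitely many witnesses is needed. Noetherianness of $f$ in $\shA$ guarantees that the $\shA$-phase of $\pi$ is finite, so after finitely many rounds the invariant drops into its second form, $\pi$ and $\pi'$ track the same positions, and the parity condition of $\pi'$ agrees with that of the tail of $\pi$; hence $f'$ is winning. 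The main obstacle is getting the shadow-match invariant and the valuation-flattening to interlock cleanly when $\abelard$ picks a successor that can be ``explained'' either by a plain state or by a macro-state; this is exactly the case analysis carried out for $\wmso$, and I expect the $\nmso$ version to be a strict simplification of it.
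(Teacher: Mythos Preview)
Your proposal is correct and follows essentially the same approach as the paper. The paper's own proof is in fact just a two-sentence pointer back to Theorem~\ref{PROP_facts_finConstrwmso}, noting that one drops all continuity bookkeeping and that the noetherian-in-$\shA$ property follows from the odd priority on macro-states; your write-up simply unpacks that pointer with the appropriate simplifications (using Lemma~\ref{lemma:automatafunctionalsentence} for separation in place of the continuity argument, and omitting the finitary-valuation step).
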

\begin{proof}
The proof follows the same steps as the one of 
Proposition \ref{PROP_facts_finConstrwmso}, minus all the concerns about
continuity of the constructed automaton and any associated winning strategy
$f$ being finitary. 
One still has to show that $f$ is noetherian in $\shA$ (``vertically finitary''),
but this is enforced by macro-states having an odd parity: visiting one of them 
infinitely often would mean $\exists$'s loss.
\end{proof}

\begin{remark}
As mentioned, the class $\Aut(\ofoe)$ of automata characterising $\smso$
\cite{Jan96} also enjoys a simulation theorem \cite{Walukiewicz96}, turning any
automaton into an equivalent non-deterministic one.
Given that the class $\AutW(\ofoe)$ only differs for the weakness constraint,
one may wonder if the simulation result for $\Aut(\ofoe)$ could not actually be
restricted to $\AutW(\ofoe)$, making our two-sorted construction redundant.
This is actually not the case: not only does Walukiewicz's simulation theorem
\cite{Walukiewicz96} fail to preserve the weakness constraint, but even without
this failure our purposes would not be served:
A fully non-deterministic automaton is instrumental in guessing a $p$-variant
of any accepted tree, but it does not guarantee that the $p$-variant is also
noetherian, as the two-sorted construct does.
\end{remark}

\subsection{From formulas to automata}

We can now conclude one direction of the automata characterisation of $\nmso$.

\begin{lemma}\label{PROP_noet_projection}
For each $\nmso$-automaton $\bbA$ on alphabet $\p (\pprop \cup \{p\})$, let
$\bbA^{\noet}$ be its noetherian construct.
We have that
$$\TMod({{\exists} p}.\bbA^{\noet}) \ \equiv\
{{\noetexists} p}.\TMod(\bbA).
$$
\end{lemma}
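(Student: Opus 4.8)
The plan is to mirror, step by step, the proof of the analogous $\wmso$-statement (Lemma~\ref{PROP_fin_projection}), dropping all the bookkeeping related to finiteness on the horizontal dimension (the $\qu$/$\dqu$ cardinality conditions, the finitary-in-$\shA$ requirement) and retaining only the vertical control, which here is \emph{noetherianness in $\shA$}. Unravelling definitions, the goal is: for every tree $\bbT = \tup{T,R,\V\:\pprop\to\pow T,s_I}$,
\[
{{\exists} p}.\bbA^{\noet}\text{ accepts }\bbT
\quad\text{iff}\quad
\text{there is a noetherian $p$-variant }\bbT'\text{ of }\bbT\text{ with }\bbA\text{ accepts }\bbT'.
\]
As with Lemma~\ref{PROP_fin_projection}, by the equivalence $\bbA\equiv\bbA^{\noet}$ (Theorem~\ref{PROP_facts_noetConstr}\eqref{point:finConstrEquiv-n}) the right-hand side can equivalently be phrased with $\bbA^{\noet}$ in place of $\bbA$, and the definition of ${{\exists}p}.(\cdot)$ (Definition~\ref{DEF_fin_projection}) together with Theorem~\ref{PROP_facts_noetConstr}\eqref{point:finConstrAut-n} ensures ${{\exists}p}.\bbA^{\noet}$ is again an $\nmso$-automaton enjoying the properties of the Simulation Theorem.

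For the direction from left to right, I would proceed exactly as in Lemma~\ref{PROP_fin_projection}: given a winning strategy $f$ for $\eloise$ in $\mc{A}({{\exists}p}.\bbA^{\noet},\bbT)@(a_I^{\noet},s_I)$, use Theorem~\ref{PROP_facts_noetConstr}\eqref{point:finConstrStrategy-n} to assume $f$ is functional and \emph{noetherian} in $\shA$; functionality associates to each node $s$ at most one macro-state $Q_s\in\shA$ with $(Q_s,s)$ winning. Define the $p$-variant $\bbT'$ by colouring with $p$ exactly the set
\[
X_p \isdef \{s\in T\mid (\R{s},V_s)\models\tmap^{\noet}(Q_s,\V(s)\cup\{p\})\},
\]
where $V_s$ is the valuation $f$ suggests at $(Q_s,s)$. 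The crucial point is that $X_p$ is now \emph{noetherian} rather than finite: since $f$ is noetherian in $\shA$, along every $f$-guided match only finitely many macro-state positions occur, so the collection of finite paths in $\bbT$ through macro-state nodes forms a bundle, and $X_p$ lies on that bundle; hence $\bbT'$ is a noetherian $p$-variant. Then one shows $f$ itself is winning for $\eloise$ in $\mc{A}(\bbA^{\noet},\bbT')$ by the same shadow-match argument as in Lemma~\ref{PROP_fin_projection}: maintain the invariant that the same basic position occurs in both matches, checking separately the cases $(Q,s)\in\shA\times T$ (using that by definition of $X_p$ the labelling of $s$ in $\bbT'$ picks out precisely the disjunct of $\tmap^{\noet}(Q,\cdot)$ that $V_s$ satisfies) and $(a,s)\in A\times T$ (using $\tmap^{\noet}(a,c)=\tmap(a,c)$, so $s$ cannot be in $X_p$ and hence $\V'(s)=\V(s)$).

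For the direction from right to left I would again follow Lemma~\ref{PROP_fin_projection}: given a noetherian $p$-variant $\bbT'$ with labelling $\tscolors'$ and a winning strategy $g$ for $\eloise$ in $\mc{A}(\bbA,\bbT')@(a_I,s_I)$, build a strategy $g'$ for $\eloise$ in $\mc{A}({{\exists}p}.\bbA^{\noet},\bbT)$ in stages, maintaining a \emph{bundle} $\mc{B}$ of $g$-guided shadow matches and the invariant $(\ddag)$: either the current position in $\pi'$ is $(Q,s)\in\shA\times T$ with a shadow match at $(a,s)$ for each $a\in Q$ and $\bbT'_s$ (or some sibling subtree) not $p$-free, or it is $(a,s)\in A\times T$ with $\bbT'_s$ $p$-free and the bundle a singleton at $(a,s)$. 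The macro-state case uses $\bmDe(Q,c)=\bigvee_n\varphi_n^{\noet}$ and the $\shA$-separation property of Lemma~\ref{lemma:automatafunctionalsentence} to define the valuation $V'$ on $\shA$; the $A$-part of $V'$ is simpler than in the $\wmso$ case precisely because there are no $\qu/\dqu$ conditions to respect, so one may take $V'(b)\isdef\bigcup_{a\in Q}\{t\in\R{s}\mid t\in V_{a,s}(b)\text{ and }\bbT'.t\text{ is $p$-free}\}$ without worrying about cardinalities. The step that needs the noetherianness hypothesis is the termination of the non-deterministic phase: because $\bbT'$ is a noetherian $p$-variant, along any branch only finitely many nodes have a non-$p$-free subtree above them in the relevant sense, so positions of shape $(Q,s)\in\shA\times T$ are reachable for only finitely many rounds of $\pi'$; thus $(\ddag.1)$ governs an initial segment and $(\ddag.2)$ the tail, and since macro-states have priority $1$ and are visited only finitely often, the winning condition is met once the match settles into $A\times T$.

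The main obstacle, as in the $\wmso$ case, is the right-to-left direction, specifically verifying that the bundle-based strategy $g'$ is well-defined and winning while the noetherianness of $\bbT'$ translates correctly into the finiteness of the non-deterministic phase of every $g'$-guided match; the key is to make precise the passage from ``$X_p$ (resp.\ the $p$-labelled part of $\bbT'$) lies on a bundle'' to ``any branch of the run meets macro-states only finitely often,'' using Definition~\ref{d:bundle1} (no infinite ascending chain of paths). Everything else is a routine stripping-down of the $\wmso$ argument, and I would present it as such, citing Lemma~\ref{PROP_fin_projection} and Theorem~\ref{PROP_facts_noetConstr} rather than repeating the common bookkeeping.
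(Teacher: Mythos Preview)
The proposal is correct and takes essentially the same approach as the paper: the paper's proof is just a two-line pointer to Lemma~\ref{PROP_fin_projection} and Theorem~\ref{PROP_facts_noetConstr}, noting that the left-to-right direction uses the simulation theorem so that the non-deterministic mode of ${{\exists}p}.\bbA^{\noet}$ guesses a noetherian $p$-variant. Your plan spells out in more detail exactly this adaptation, correctly identifying that the horizontal cardinality bookkeeping drops out and that the only new point is that noetherianness of the strategy (resp.\ of the $p$-variant) replaces finiteness in $\shA$; this is precisely what the paper intends.
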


\begin{proof} 
The argument is the same as for $\wmso$-automata (Lemma \ref{PROP_fin_projection}). As in that proof, the inclusion from left to right relies on the simulation result (Theorem \ref{PROP_facts_noetConstr}): ${{\exists} p}.\bbA^{\noet}$ is two-sorted and its non-deterministic mode can be used to guess a noetherian $p$-variant of any accepted tree. \end{proof}

\begin{proof}[of Theorem \ref{t:nmsoauto}] 
As for its $\wmso$-counterpart Theorem \ref{t:wmsoauto}, the proof is by
induction on $\varphi \in \nmso$. 
The boolean inductive cases are handled by the $\nmso$-versions of 
Lemma \ref{t:cl-dis} and \ref{t:cl-cmp}. 
The projection case follows from Lemma~\ref{PROP_noet_projection}.
\end{proof}

\section{Fixpoint operators and second-order quantifiers}
\label{sec:fixpointToSO}

In this section we will show how to translate some of the mu-calculi that we
encountered until now into the appropriate second-order logics.
Given the equivalence between automata and fixpoint logics that we established 
in Section~\ref{sec:parityaut}, and the embeddings of $\wmso$ and $\nmso$ into,
respectively, the automata classes $\AutWC(\ofoei)$ and $\AutW(\ofoe)$ that
we provided in the Sections~\ref{sec:autwmso} and~\ref{sec:autnmso} for the 
class of tree models, the results here provide the missing link in the 
automata-theoretic characterizations of the monadic second order logics
$\wmso$ and $\nmso$:
\begin{eqnarray*}
   \mu_{C}(\ofoei) \equiv \wmso 
   && \qquad  \text{ (over the class of all tree models)} 
\\ \mu_{D}(\ofoe)  \equiv \nmso 
   && \qquad  \text{ (over the class of all tree models)}. 
\end{eqnarray*}

\subsection{Translating $\mu$-calculi into second-order logics}

More specifically, our aim in this Section is to prove the following result.

\begin{theorem}
\label{t:mfl2mso}
\begin{enumerate}[(1)]
\item
There is an effective translation $(\cdot)^{*}: \mu_{D}\ofoe \to \nmso$
such that $\phi \equiv \phi^{*}$ for every $\phi \in \mu_{D}\ofoe$; that is:
\[
\mu_{D}\ofoe \leq \nmso.
\]
% and as a corollary of this we obtain that 
% \[
% \mu_{D}\ML \leq \nmso.
% \]

\item
There is an effective translation $(\cdot)^{*}: \mu_{C}\ofoei \to \wmso$
such that $\phi \equiv \phi^{*}$ for every $\phi \in \mu_{C}\ofoei$; that is:
\[
\mu_{C}\ofoei \leq \wmso.
\]
\end{enumerate}
\end{theorem}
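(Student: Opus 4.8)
The plan is to define the translation $(\cdot)^{*}$ by induction on the structure of formulas in $\mu_{D}\ofoe$ (resp. $\mu_{C}\ofoei$), working with the two-sorted versions $2\nmso(v)$ and $2\wmso(v)$ of the target logics, since the quantifier structure is much easier to manage there; at the very end one applies Fact~\ref{fact:msovs2mso} to pass back to the one-sorted languages. The non-fixpoint cases are routine: propositional atoms $q,\neg q$ translate to $q(v)$ and $\neg q(v)$; the boolean connectives are handled homomorphically; and a modal formula $\nxt{\al}(\varphi_{1},\ldots,\varphi_{n})$ translates by taking the first-order (resp. $\foei$) sentence $\al(a_{1},\ldots,a_{n})$, relativising all its quantifiers to the successors of $v$ (i.e. to $\{x \mid R(v,x)\}$), and substituting for each predicate $a_{i}(x)$ the formula $\varphi_{i}^{*}[x/v]$, i.e. the translation of $\varphi_{i}$ with its free individual variable renamed to $x$. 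Since $\ofoe$ and $\ofoei$ are genuine sublogics of the two-sorted second-order languages (the infinity quantifier $\qu x.\phi$ is itself $\mso$-expressible, and $a fortiori$ $\wmso$-expressible once we know the relevant sets are finite), this step goes through in $2\nmso(v)$ and $2\wmso(v)$ respectively.

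**The fixpoint case.** The heart of the matter is the clause for $\mu p.\varphi'$ where $\varphi' \in \noe{\mu\oslang}{p}$ (resp. $\varphi' \in \cont{\mu\oslang}{p}$). The standard second-order definition of a least fixpoint is
\[
(\mu p.\varphi')^{*}(v) \ \isdef\ \forall p\,\big(\forall x\,(\varphi'^{*}[x/v] \to p(x)) \to p(v)\big),
\]
which defines the least prefixed point; dually for $\nu$ one uses $\exists p\,(\forall x\,(p(x)\to\varphi'^{*}[x/v]) \land p(v))$. In plain $\mso$ this is fine, but here $p$ must be quantified only over \emph{noetherian} subsets (resp. \emph{finite} subsets). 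So the real work is to argue that in each case the least fixpoint of the monotone operator induced by $\varphi'$ is in fact noetherian (resp. finite), \emph{and} that restricting the quantifier to such sets still computes that fixpoint correctly. For the noetherian case, the point is precisely the syntactic shape of $\noe{\mu\oslang}{p}$: because $p$ never occurs inside a $\nu$-subformula nor under a ``$\Box$-like'' position that would force $p$ to persist along an infinite branch, the evaluation game for $\mu p.\varphi'$ restricted to the $p$-occurrences has no infinite $p$-trace, so the set of nodes where $p$ is ``needed'' lies inside a bundle with no infinite ascending chain — it is noetherian by Definition~\ref{d:bundle1}. Analogously, $p$-continuity of $\varphi'$ (Corollary~\ref{cor:cont}, Theorem~\ref{th:onesteplogics-grammars}(2)) says exactly that $\mu p.\varphi'$ is satisfied already on a \emph{finite} $p$-variant, so the finite-set quantifier suffices. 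I would make this precise via the approximation ordinals: $\mu p.\varphi' = \bigcup_{\alpha} p^{\alpha}$ where $p^{0}=\nada$ and $p^{\alpha+1} = \ext{\varphi'}^{\bbS[p\mapsto p^{\alpha}]}$; the noetherian/continuous constraint on $\varphi'$ guarantees this chain stabilises in a way whose union is noetherian/finite, and that the witnessing set in the $\exists$-form of the translation can be taken noetherian/finite.

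**The main obstacle.** The expected difficulty is \emph{not} the correctness of the $\mso$-formula as such, but the interplay between the two constraints when fixpoints are nested. In a formula like $\mu p.\varphi'$ with $\varphi' \in \mu_{D}\oslang \cap \noe{\mu\oslang}{p}$, the body $\varphi'$ may itself contain greatest fixpoints $\nu q.\varphi''$ (with $\varphi''$ co-noetherian in $q$), which translate using an existential set quantifier; one must check that these inner quantifiers, translated with \emph{unrestricted} or \emph{co-}noetherian ranges as appropriate, remain compatible with the noetherian semantics of $\nmso$ — in particular that a $\nu$-quantifier ranging over \emph{all} sets is still available, which it is, because $\exists p$ ranging over arbitrary sets is expressible via $\neg\exists\,(\text{noetherian})\,\neg$-style tricks only fails; rather one observes that $\nmso$'s $\exists p$ already ranges over noetherian sets and a greatest fixpoint computation needs its witness set to be \emph{co}-noetherian, i.e. the complement is noetherian. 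This is exactly why the grammar of $\mu_{D}\oslang$ pairs $\mu$ with $\noe{}{}$ and $\nu$ with $\conoe{}{}$, and the bookkeeping to make the induction hypothesis strong enough — carrying along, for each free variable, whether it is being used ``positively'' (noetherian) or ``negatively'' (co-noetherian) — is the delicate part. I would therefore prove a strengthened statement by simultaneous induction: for $\phi \in \noe{\mu\oslang}{\qprop}$ the translation $\phi^{*}$ is such that, over any model where the $q\in\qprop$ are interpreted noetherianly, $\phi^{*}$ correctly computes $\phi$; and dually for the co-noetherian (resp. continuous / co-continuous) fragments. Once this invariant is set up, each inductive clause is a short verification using Fact~\ref{f:adeqmu} (adequacy of the evaluation game) to match up the $\mso$-semantics of the translated formula with the game semantics of the original. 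Finally, transfer to the one-sorted $\nmso$ and $\wmso$ via Fact~\ref{fact:msovs2mso} by setting $\phi^{*}\isdef(\phi^{t})^{o}$ after the two-sorted construction, completing both parts~(1) and~(2) of the theorem.
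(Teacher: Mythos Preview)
Your overall inductive framework and the handling of the atomic, boolean, and modal cases are essentially the same as the paper's. The gap is in the fixpoint clause.

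You propose the standard translation
\[
(\mu p.\varphi')^{*}(v) \ \isdef\ \forall p\,\big(\forall x\,(\varphi'^{*}[x/v] \to p(x)) \to p(v)\big),
\]
and then argue that restricting the $\forall p$ to noetherian (resp.\ finite) sets is harmless. This does not work, and the paper says so explicitly: when $\forall p$ ranges only over finite prefixpoints, the formula expresses membership in the intersection of all \emph{finite} prefixpoints, which in general strictly contains $\LFP.F$. Your justification that ``the least fixpoint \ldots\ is in fact noetherian (resp.\ finite)'' is false: for $\mu p.(q \lor \Diamond p)$ (``eventually $q$''), the least fixpoint may be the entire carrier of an infinite model. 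Likewise your claim that the approximant chain ``stabilises in a way whose union is noetherian/finite'' is simply not true. You also conflate the noetherian and continuous fragments: in $\noe{\mu\oslang}{p}$ the variable $p$ \emph{may} occur under $\Box$; the ``existential-only'' restriction is the hallmark of the continuous fragment, not the noetherian one.

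The paper's fix is a genuinely different translation. The key lemma (Proposition~\ref{p:keyfix}) is that for $\mu p.\varphi \in \mu_{D}\ofoe$ (resp.\ $\mu_{C}\ofoei$) and any point $r$,
\[
r \in \ext{\mu p.\varphi}^{\bbS} \quad\text{iff}\quad r \in \LFP.(\varphi^{\bbS}_{p})_{\rst{X}} \ \text{for some noetherian (resp.\ finite) } X,
\]
where $F_{\rst{X}}(Y) \isdef F(Y)\cap X$ is the functional restricted to $X$. This yields the translation
\[
(\mu p.\varphi)^{*} \ \isdef\ \exists q\,\Big(\forall p \sse q\,\big(\forall w\,((q(w)\land\varphi^{*}[w/v])\to p(w)) \to p(v)\big)\Big),
\]
which is sound in $\nmso$/$\wmso$ because the outer $\exists q$ is the noetherian/finite witness from the lemma, and the inner $\forall p \sse q$ ranges over subsets of $q$, which are automatically noetherian/finite. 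Proving Proposition~\ref{p:keyfix} is the real content: the continuous case follows from a direct argument about $\omega$-approximants of continuous functionals (Theorem~\ref{t:fixcont}); the noetherian case goes via the unfolding game $\UG_{F}$ and shows that the strategy tree of a descending winning strategy is a noetherian set (Propositions~\ref{p:unfg}--\ref{p:unfold=evalgame2}). None of this machinery appears in your proposal.
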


Two immediate observations on this Theorem are in order.
First, note that we use the same notation $(\cdot)^{*}$ for both translations; 
this should not cause any confusion since the maps agree on formulas belonging 
to their common domain.
Consequently, in the remainder we will speak of a single translation 
$(\cdot)^{*}$.
Second, as the target language of the translation $(\cdot)^{*}$ we will take 
the \emph{two-sorted} version of second-order logic, as discussed in 
section~\ref{sec:prel-so}, and thus we will need Fact~\ref{fact:msovs2mso}
to obtain the result as formulated in Theorem~\ref{t:mfl2mso}, that is,
for the one-sorted versions of \mso.
% although in the Preliminaries we state that in this paper we will work with 
% one-sorted variants of second-order logics.
We reserve a fixed individual variable $v$ for this target language, i.e., 
every formula of the form $\phi^{*}$ will have this $v$ as its unique free 
variable; the equivalence $\phi \equiv \phi^{*}$ is to be understood accordingly.

The translation $(\cdot)^{*}$ will be defined by a straightforward induction on
the complexity of fixpoint formulas.
The two clauses of this definition that deserve some special attention are the
ones related to the fixpoint operators and the modalities.

\paragraph{Fixpoint operators} 
It is important to realise that our clause for the fixpoint operators
\emph{differs} from the one used in the standard inductive translation 
$(\cdot)^{s}$ of $\muML$ into standard $\mso$, where we would inductively
translate $(\mu p. \phi)^{*}$ as
\begin{equation}
\label{eq:st}
% \exists p\, \big(   p(x) \land \forall y\, (p(y) \to \phi^{s}_{y}) \big)
\forall p\, \big( \forall w\, (\phi^{*}[w/v] \to p(w)) \to p(v) \big),
\end{equation}
which states that $v$ belongs to any prefixpoint of $\phi$ with respect to $p$.
To understand the problem with this translation in the current context, suppose,
for instance, that we want to translate some continuous $\mu$-calculus into
$\wmso$.
Then the formula in \eqref{eq:st} expresses that $v$ 
belongs to the intersection of all \emph{finite} prefixpoints of $\phi$, whereas
the least fixpoint is identical to the intersection of \emph{all} prefixpoints.
As a result, \eqref{eq:st} does not give the right translation for the formula 
$\mu p.\phi$ into \wmso.

To overcome this problem, we will prove that least fixpoints in restricted 
calculi like $\mu_{D}\ofoe$, $\mu_{C}\ofoei$ and many others, in fact satisfy a
rather special property, which enables an alternative translation.
We need the following definition to formulate this property.

\begin{definition}
\label{d:rst}
Let $F: \pow(S)\to \pow(S)$ be a functional; for a given $X \subseteq S$ we define
the \emph{restricted} map $F_{\rst{X}}: \pow(S)\to \pow(S)$ by putting 
$F_{\rst{X}}(Y) \isdef FY \cap X$.
\end{definition}

The observations formulated in the proposition below provide the crucial insight
underlying our embedding of various alternation-free and continuous 
$\mu$-calculi into, respectively, $\nmso$ and $\wmso$.

\begin{proposition}
\label{p:afmc-rstGen}
\label{p:keyfix}
Let $\bbS$ be an LTS, and let $r$ be a point in $\bbS$.
\begin{enumerate}[(1)]
\item
 For any formula $\phi$ with $\mu p. \phi \in \mu_{D}\ofoe$ we have
\begin{equation}
\label{eq:foe-d}
r \in \ext{\mu p.\phi}^{\bbS} \text{ iff there is a noetherian set $X$ such 
that } r \in \LFP. (\phi^{\bbS}_{p})_{\rst{X}}.
\end{equation}

\item
For any formula $\phi$ with $\mu p. \phi \in \mu_{C}\ofoei$ we have
\begin{equation}
\label{eq:foei-c}
r \in \ext{\mu p.\phi}^{\bbS} \text{ iff there is a finite set $X$ such 
that } r \in \LFP. (\phi^{\bbS}_{p})_{\rst{X}}.
\end{equation}
\end{enumerate}
\end{proposition}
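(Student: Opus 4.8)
The plan is to prove both statements simultaneously by induction on the syntactic structure witnessing that $\mu p.\phi$ lies in $\mu_{D}\ofoe$ (resp.\ $\mu_{C}\ofoei$). The ``if'' direction is the easy one: if $X$ is any subset of $T$, then $(\phi^{\bbS}_{p})_{\rst{X}}(Y) = \phi^{\bbS}_{p}(Y) \cap X \subseteq \phi^{\bbS}_{p}(Y)$, so by monotonicity of $\LFP$ in the underlying functional we get $\LFP.(\phi^{\bbS}_{p})_{\rst{X}} \subseteq \LFP.\phi^{\bbS}_{p} = \ext{\mu p.\phi}^{\bbS}$; hence $r \in \LFP.(\phi^{\bbS}_{p})_{\rst{X}}$ implies $r \in \ext{\mu p.\phi}^{\bbS}$, regardless of whether $X$ is noetherian or finite. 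So the real content is the ``only if'' direction, for which I would use the Adequacy Theorem (Proposition~\ref{p:unfold=evalgame}, together with Fact~\ref{f:adeqmu} for the $\mu\oslang$-modalities): from $r \in \ext{\mu p.\phi}^{\bbS}$ we obtain a winning strategy $f$ for \eloise\ in the evaluation game $\egame(\mu p.\phi, \bbS)$, and the goal is to read off from the $f$-reachable positions a witnessing set $X$ with the required smallness property (noetherian / finite) such that $r \in \LFP.(\phi^{\bbS}_{p})_{\rst{X}}$.

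The key structural point is that, since $\mu p.\phi$ is noetherian (resp.\ continuous) in $p$, in any $f$-guided match the $\mu$-variable $p$ can be ``unfolded'' only in a controlled way. First I would fix a positional winning strategy $f$ and consider the tree of $f$-guided partial matches starting at $(\mu p.\phi, s_I)$, restricted to those matches that do not pass through a lower fixpoint variable — i.e.\ matches that stay ``inside'' the body $\phi$ with respect to the outermost $\mu p$. Let $X$ be the set of states $t \in T$ such that some position of the form $(p, t)$ is reached along such a match. I would then argue two things: (a) $r \in \LFP.(\phi^{\bbS}_{p})_{\rst{X}}$, because \eloise's strategy $f$, suitably truncated so that it declares the current position ``false'' whenever it would leave $X$ via the $p$-variable, is still winning for the restricted functional — here one uses that the approximant hierarchy of $\LFP.(\phi^{\bbS}_{p})_{\rst{X}}$ is reached by well-founded descent on the $(p,t)$-positions, which is exactly what a winning strategy for the $\mu$-player provides; and (b) $X$ is noetherian (resp.\ finite). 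For (b), in the noetherian case the set of $f$-guided partial matches that end at a $(p,t)$-position, viewed as paths in $\bbS$ via the underlying modal moves, forms a bundle: there is no infinite ascending chain because an infinite such chain would yield an infinite $f$-guided match in which $p$ is the highest variable unfolded infinitely often, contradicting that $f$ is winning and $p$ is a $\mu$-variable. Hence $X$, being covered by this bundle of finite paths, is noetherian by Definition~\ref{d:bundle1}. In the continuous case, I would instead invoke the syntactic continuity grammar for $\cont{\mu\oslang}{p}$ together with Theorem~\ref{t:osnf-cont}: continuity of every one-step formula $\tmap$ in the relevant predicates means that at each modal step \eloise\ needs only \emph{finitely many} successors carrying (a state tracking) $p$, and since $\mu p.\phi$ is alternation-free the nesting of such finitary demands is finite in depth; a bookkeeping argument over the finitely many subformulas then shows that only finitely many positions $(p,t)$ are ever $f$-reachable, so $X$ is finite.

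The main obstacle I expect is the bookkeeping in step (a): showing that $r \in \LFP.(\phi^{\bbS}_{p})_{\rst{X}}$ requires relating the \emph{game-theoretic} notion of a winning strategy for the unrestricted formula to the \emph{fixpoint-approximant} characterization of the restricted functional, and one has to be careful that restricting to $X$ does not destroy \eloise's ability to satisfy the non-$p$ parts of $\phi$ (this is where it matters that $X$ was defined to contain \emph{all} $f$-reachable $p$-positions, not a smaller set). I would handle this by a signature/rank argument: assign to each $f$-reachable $(p,t)$ position an ordinal (its unfolding rank along $f$), show this rank is well-defined precisely because $f$ is winning and $p$ is a $\mu$-variable, and then prove by induction on this rank that $t \in \alpha$-th approximant of $(\phi^{\bbS}_{p})_{\rst{X}}$. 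For the continuous case an entirely parallel but simpler argument works, and in fact one could alternatively factor the proof through the automata correspondence (Theorem~\ref{t:autofor}/\ref{t:fortoaut}) and the structure of weak / weak-continuous automata, but the direct game-theoretic route above is cleaner and keeps the two cases uniform. Finally, the inductive passage through the other syntactic constructs ($\lor$, $\land$, $\nxt{\al}$, nested $\mu$, $\nu$) is routine once the base fixpoint case is settled, using Proposition~\ref{p:comp}-style closure of the noetherian / continuous fragments and the fact that finite unions of noetherian (resp.\ finite) sets are again noetherian (resp.\ finite).
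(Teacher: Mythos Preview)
Your approach to part~(1) is essentially the paper's: extract a winning strategy in the evaluation game, collect the $(p,t)$-positions reached while avoiding $\nu$-subformulas, and argue that the resulting set of paths in $\bbS$ is a bundle. One point you gloss over: an infinite ascending chain of \emph{projected paths} in $\bbS$ does not by itself give an infinite $f$-guided match; you need K\"onig's Lemma on the (finitely branching) tree of $f$-guided partial matches projecting onto prefixes of the limit path. The paper makes this explicit. The paper also factors the argument through an abstract \emph{unfolding game} $\UG_{F}$ for the functional $F = \phi^{\bbS}_{p}$ and a notion of \emph{descending} strategy (Propositions~\ref{p:unfg} and~\ref{p:afmc-2}); this separates the purely fixpoint-theoretic step ($r \in \LFP.F_{\rst{T_{f,r}}}$) from the model-theoretic step (the strategy tree is noetherian), and avoids your rank/signature bookkeeping for step~(a).

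For part~(2) the paper takes a genuinely different and much shorter route. Rather than any game-theoretic analysis, it proves a purely semantic lemma (Theorem~\ref{t:fixcont}): for \emph{any} continuous functional $F: \pow(S)\to\pow(S)$ and any $s \in \LFP.F$, there is a finite $X$ with $s \in \LFP.F_{\rst{X}}$. The proof uses only that $\LFP.F = \bigcup_{n<\omega} F^{n}(\nada)$ and then builds finite sets $U_{n},\ldots,U_{1}$ by downward induction, invoking continuity once per step. Part~(2) then follows immediately, since $\phi \in \cont{\mu\ofoei}{p}$ implies $\phi^{\bbS}_{p}$ is continuous. Your game-theoretic argument could be made to work (finitely many $p$-relevant successors at each modal step, plus well-foundedness of $p$-unfoldings, plus K\"onig), but your phrasing ``the nesting of such finitary demands is finite in depth'' is not the right reason---the depth is not bounded, only each branch is finite---and the functional argument is both cleaner and more general.

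Finally, the outer ``induction on the syntactic structure'' and the closing paragraph about passing through $\lor$, $\land$, $\nxt{\al}$, nested $\mu$, $\nu$ are misplaced: the proposition is a statement about a single fixpoint formula $\mu p.\phi$, not an inductive claim over all formulas, and neither the paper's proof nor yours actually uses such an induction.
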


\begin{remark}
In fact, the statements in Proposition~\ref{p:keyfix} can be generalised to the
setting of a fixpoint logic $\mu\oslang$ associated with an arbitrary one-step 
language $\oslang$.
\end{remark}

The right-to-left direction of both \eqref{eq:foe-d} and \eqref{eq:foei-c} follow
from the following, more general, statement, which can be proved by a routine 
transfinite induction argument.

\begin{proposition}
% \label{p:afmc-1}
\label{p:rstfix}
Let $F:  \pow(S)\to \pow(S)$ be monotone.
Then for every subset $X \subseteq S$ it holds that $\LFP. F\rst{X}\subseteq 
\LFP.F$.
\end{proposition}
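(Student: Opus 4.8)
The plan is to use the standard ordinal-approximation characterisation of least fixpoints of monotone operators. The first thing I would observe is that, since the set $X$ is fixed, the restricted map $F\rst{X}$ is again monotone: if $Y \sse Y'$ then $FY \sse FY'$ by monotonicity of $F$, and hence $FY \cap X \sse FY' \cap X$, i.e.\ $F\rst{X}(Y) \sse F\rst{X}(Y')$. Consequently, by the Knaster--Tarski theorem, both $F$ and $F\rst{X}$ have least fixpoints, and each can be computed as the supremum of its ordinal-indexed approximation sequence; writing $G^{\uparrow\alpha}$ for the $\alpha$-th approximant of a monotone $G$ (so $G^{\uparrow 0} = \nada$, $G^{\uparrow\alpha+1} = G(G^{\uparrow\alpha})$, and $G^{\uparrow\lambda} = \bigcup_{\beta<\lambda} G^{\uparrow\beta}$ at limit ordinals $\lambda$), we have $\LFP. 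G = \bigcup_\alpha G^{\uparrow\alpha}$.

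The core of the argument is then a transfinite induction establishing that $(F\rst{X})^{\uparrow\alpha} \sse \LFP. F$ for every ordinal $\alpha$. The base case $\nada \sse \LFP. F$ is trivial, and the limit case is immediate since a union of subsets of $\LFP. F$ is again a subset of $\LFP. F$. For the successor step one assumes $(F\rst{X})^{\uparrow\alpha} \sse \LFP. F$ and argues, using monotonicity of $F$ together with the fact that $\LFP. F$ is a fixpoint of $F$, that
\[
(F\rst{X})^{\uparrow\alpha+1} = F\big((F\rst{X})^{\uparrow\alpha}\big) \cap X \sse F\big((F\rst{X})^{\uparrow\alpha}\big) \sse F(\LFP. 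F) = \LFP. F.
\]
Taking the union over all ordinals yields $\LFP. F\rst{X} = \bigcup_\alpha (F\rst{X})^{\uparrow\alpha} \sse \LFP. F$, which is the claim.

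Alternatively, and somewhat more economically, one can bypass the transfinite induction altogether: $\LFP. F$ is a prefixpoint of $F\rst{X}$, since $F\rst{X}(\LFP. F) = F(\LFP. F) \cap X = \LFP. F \cap X \sse \LFP. F$. Because $\LFP. F\rst{X}$ is, by Knaster--Tarski, the \emph{least} prefixpoint of the monotone map $F\rst{X}$, it is contained in every prefixpoint, in particular in $\LFP. F$. Either route makes the proposition completely routine; the only point that requires a moment's care is to invoke the approximation (or least-prefixpoint) characterisation for the operator $F\rst{X}$, which is legitimate precisely because intersecting with the fixed set $X$ preserves monotonicity. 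I do not expect any genuine obstacle here.
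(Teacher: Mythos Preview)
Your proposal is correct, and your first approach---transfinite induction on the approximation sequence---is exactly what the paper indicates; the paper does not spell out the argument beyond calling it ``a routine transfinite induction argument.'' Your alternative Knaster--Tarski route (observing that $\LFP.F$ is a prefixpoint of $F\rst{X}$) is a valid and slightly slicker shortcut that the paper does not mention.
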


The left-to-right direction of \eqref{eq:foe-d} and \eqref{eq:foei-c} 
will be proved in the next two sections.
Note that in the continuous case we will in fact prove a slightly stronger
result, which applies to \emph{arbitrary} continuous functionals.
% , we discuss this case separately.

\newcommand{\PRE}{\mathit{PRE}}
The point of Proposition~\ref{p:keyfix} is that it naturally suggests the
following translation for the least fixpoint operator, as a subtle but important 
variation of \eqref{eq:st}:

\begin{equation}
\label{eq:trlmu}
(\mu p. \varphi)^{*} \df 
   \exists q\,\Big(\forall  p \sse q.\,
      \big(p \in \PRE((\varphi^{\bbS}_{p})_{\rst{q}}) \to p(v)\big)\Big),
\end{equation}
where $p \in \PRE((\varphi^{\bbS}_{p})_{\rst{q}})$ expresses that $p \sse q$ 
is a prefixpoint of the map $(\phi^{\bbS}_{p})_{\rst{q}}$, that is:
\[
p  \in \PRE((\varphi^{\bbS}_{p})_{\rst{q}}) \df
\forall w\, \Big(
( q(w) \land \varphi^{*}[w/v]) \to p(w)
\Big).
\]

\paragraph{Modalities}
Finally, before we can give the definition of the translation $(\cdot)^{*}$, we
briefly discuss the clause involving the modalities.
Here we need to understand the role of the \emph{one-step formulas} in the 
translation.
First an auxiliary definition.

\begin{definition}\label{def:exp}
Let $\bbS = \tup{T,R,\tscolors, s_I}$ be a $\pprop$-LTS, $A$ be a set of
new variables, and $V: A \to \pow(X)$ be a valuation on a subset $X\subseteq T$. 
The $\pprop\cup A$-LTS $\bbS^{V}\isdef \tup{T,R,\tscolors^V, s_I}$ given by 
defining the marking $\tscolors^V: T \to \pow{(\pprop \cup A)}$ where
\[\tscolors^V(s)\isdef  
\begin{cases} \tscolors(s) & \text{ if } s \notin X \\
\tscolors(s) \cup \{ a \in A \mid s \in V(a)\}& \text{ else,}
\end{cases}\]
is called the $V$-expansion of $\bbS$.
\end{definition}

The following proposition states that at the one-step level, the formulas that 
provide the semantics of the modalities of $\mu\ofoe$ and $\mu\ofoei$ can indeed 
be translated into, respectively $\nmso$ and $\wmso$.

\begin{proposition}
\label{p:1trl}
There is a translation $(\cdot)^{\dag}: \ofoei(A) \to \wmso$ such that for every 
model $\bbS$ and every valuation $V: A \to \pow(R[s_{I}])$:
\[
(R[s_{I}],V) \models \al \text{ iff } \bbS^{V} \models \al^{\dag}[s_{I}].
\]
Moreover, $(\cdot)^{\dag}$ restricts to first-order logic, i.e., $\al^{\dag}$ is
a first-order formula if $\al \in \ofoe$.
\end{proposition}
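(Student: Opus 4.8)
The plan is to define the translation $(\cdot)^{\dag}$ by a straightforward induction on the structure of one-step $\ofoei$-formulas (and then read off the restriction to $\ofoe$ by noting that the clauses for $\qu$ and $\dqu$ are never used). The key semantic point to keep in mind is that a one-step model $(R[s_{I}],V)$ is, up to the relabeling recorded by the $V$-expansion, exactly the ``successor layer'' of $\bbS^{V}$ below $s_{I}$; so each one-step quantifier over the domain $R[s_{I}]$ should be translated into a second-order logic quantifier \emph{relativised} to the successors of the distinguished node $v$. Concretely, I would introduce the relativisation shorthand: for a $\wmso$-formula $\chi(x)$ with a free individual variable $x$, write $(\exists x \triangleleft v.\,\chi)$ for $\exists x.(R(v,x) \land \chi)$ and $(\forall x \triangleleft v.\,\chi)$ for $\forall x.(R(v,x) \to \chi)$, and similarly for the cardinality quantifiers, using $\exists^{\infty}$ and $\forall^{\infty}$ of the two-sorted language restricted to $\{x : R(v,x)\}$.

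The inductive clauses would then be: $(a(x))^{\dag} \isdef a(x)$ and $(\lnot a(x))^{\dag} \isdef \lnot a(x)$ (here $a$ is treated as a unary predicate of the expanded signature); $(x \foeq y)^{\dag} \isdef x \foeq y$ and $(x \not\foeq y)^{\dag} \isdef x \not\foeq y$; the Boolean connectives commute with $(\cdot)^{\dag}$; $(\exists x.\psi)^{\dag} \isdef \exists x \triangleleft v.\,\psi^{\dag}$ and $(\forall x.\psi)^{\dag} \isdef \forall x \triangleleft v.\,\psi^{\dag}$; and, using the cardinality quantifiers of the two-sorted language, $(\qu x.\psi)^{\dag} \isdef \qu x \triangleleft v.\,\psi^{\dag}$ and $(\dqu x.\psi)^{\dag} \isdef \dqu x \triangleleft v.\,\psi^{\dag}$ — so that $\al^{\dag}$ lands in $\owmso$-style two-sorted $\wmso$, and in $\fo$ (hence $\nmso$) when $\al \in \ofoe$. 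Since one-step formulas are sentences (no free individual variables), $\al^{\dag}$ has $v$ as its only free variable, as required; and the clauses for $\qu, \dqu$ only ever fire when $\al \notin \ofoe$, giving the ``Moreover'' claim immediately.

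The correctness statement $(R[s_{I}],V) \models \al$ iff $\bbS^{V} \models \al^{\dag}[s_{I}]$ is then proved by induction on $\al$, but stated for formulas with free individual variables: one shows that for every assignment $g : \fovar \to R[s_{I}]$, $(R[s_{I}],V),g \models \al$ iff $\bbS^{V}, g[v \mapsto s_{I}] \models \al^{\dag}$. The atomic cases use that $\tscolors^{V}(t) \cap A = \{a \in A \mid t \in V(a)\}$ for $t \in R[s_{I}] = X$, by Definition~\ref{def:exp}; equality is immediate. The Boolean cases are trivial. For the quantifier cases, the relativisation is exactly what bridges ``elements of the domain $R[s_{I}]$'' on the one-step side and ``individuals $x$ with $R(v,x)$, i.e. $x \in R[s_{I}]$'' on the $\wmso$ side — here one uses that $\bbS^{V}$ has the same accessibility relation $R$ as $\bbS$, so the successors of $s_{I}$ in $\bbS^{V}$ are precisely $R[s_{I}]$. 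For $\qu$ and $\dqu$ one additionally observes that counting elements of $R[s_{I}]$ satisfying $\psi$ is counting individuals $x$ with $R(v,x)$ satisfying $\psi^{\dag}$, which matches the semantics \eqref{eq:definfquant} against the semantics of the $\wmso$ cardinality quantifiers. A small subtlety worth flagging in passing: if $R[s_{I}] = \nada$ then the one-step model is $(\nada,\nada)$, and one must check the clauses of Definition~\ref{d:ed} agree with the behaviour of the relativised quantifiers over an empty successor set — $\exists x \triangleleft v$ and $\qu x \triangleleft v$ are false, $\forall x \triangleleft v$ and $\dqu x \triangleleft v$ are true — which is exactly Definition~\ref{d:ed}, so this case is consistent rather than problematic.

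I expect no serious obstacle here: the translation is essentially bookkeeping, and the only thing requiring a little care is getting the two-sorted target language's cardinality quantifiers and the empty-domain conventions to line up, which the remarks above handle. The one genuinely load-bearing observation is that $V$-expansion changes only the marking, not $T$ or $R$, so ``domain of the one-step model = successor set of $v$'' holds verbatim in $\bbS^{V}$; everything else follows by routine induction.
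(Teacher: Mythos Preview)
Your overall approach---define $(\cdot)^{\dag}$ by induction, relativising every quantifier to the successor set of $v$---is exactly what the paper does, and your treatment of atoms, equality, Booleans, $\exists$, $\forall$, and the empty-domain corner case are all fine.

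There is one genuine gap, however: your clause $(\qu x.\psi)^{\dag} \isdef \qu x \triangleleft v.\,\psi^{\dag}$ does not land in $\wmso$. The target language (Definition~\ref{def:2mso}) has only $\exists x$ and $\exists p$ as primitive quantifiers; there is no $\qu$ in the syntax of two-sorted $\wmso$. You are implicitly appealing to the equivalence $\owmso \equiv \ofoei$, but the proposition asks for an actual $\wmso$-formula, so you must \emph{spell out} how the infinity quantifier is encoded. The paper does this via V\"a\"an\"anen's trick:
\[
(\qu x.\, \al)^{\dag} \isdef \forall p\, \exists x\, \big( R(v,x) \land \neg p(x) \land \al^{\dag} \big),
\]
which says that for every \emph{finite} set $p$ there is a successor of $v$ outside $p$ satisfying $\al^{\dag}$---equivalently, infinitely many successors satisfy $\al^{\dag}$. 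With this clause in place (and the dual one for $\dqu$), your inductive correctness argument goes through unchanged, and the ``Moreover'' claim still follows because this is the only clause introducing a second-order quantifier.
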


% \btbs
% \item
% Somehow clarify in the notation that $\al^{\dag}$ has exactly one free 
% individual variable, $v$.
% \etbs

\begin{proof}
Basically, the translation $(\cdot)^{\dag}$ \emph{restricts} all quantifiers
to the collection of successors of $v$.
In other words, $(\cdot)^{\dag}$ is the identity on basic formulas, it commutes
with the propositional connectives, and for the quantifiers $\exists$ and $\qu$
we define:
\[\begin{array}{lll}
(\exists x\, \al)^{\dag} &\df& \exists x\, (Rvx \land \al^{\dag})
\\ (\qu x\, \al)^{\dag}  &\df& \forall p \exists x\, (Rvx \land \neg p(x) 
    \land \al^{\dag})
\end{array}\]
We leave it for the reader to verify the correctness of this definition ---
observe that the clause for the infinity quantifier $\qu$ is based on the 
equivalence between $\wmso$ and $\foei$, established by 
V\"a\"an\"anen~\cite{vaananen77}.
\end{proof}

\noindent
We are now ready to define the translation used in the main result of this 
section.

\begin{definition}
By an induction on the complexity of formulas we define the following 
translation $(\cdot)^{*}$ from $\mu\foei$-formulas to formulas of monadic 
second-order logic:
\[\begin{array}{lll}
   p^{*} &\df& p(v)
\\ (\neg\phi)^{*}        &\df& \neg \phi^{*}
\\ (\phi\lor\psi)^{*}    &\df& \phi^{*} \lor \psi^{*}
\\ (\nxt{\al}(\ol{\phi}))^{*} &\df& \al^{\dag}[\phi_{i}^{*}/a_{i} \mid i \in I],
\end{array}\]
where $\al^{\dag}$ is as in Proposition~\ref{p:1trl}, and $[\phi_{i}^{*}/a_{i}
\mid i \in I]$ is the substitution that replaces every occurrence of an atomic
formula of the form $a_{i}(x)$ with the formula $\phi_{i}^{*}(x)$ (i.e. the 
formula $\phi_{i}^{*}$, but with the free variable $v$ substituted by $x$).

Finally, the inductive clause for a formula of the form $\mu p.\phi$ is given
as in \eqref{eq:trlmu}.
\end{definition}

\begin{proofof}{Theorem~\ref{t:mfl2mso}}
%\btbs
%\item
%For both parts of the theorem we need to verify that 
%\\ (i) the translation $(\cdot)^{*}$ lands in the correct language, 
%\\ (ii) the translation $(\cdot)^{*}$ is truth preserving and
%\\ (iii) the statement about the modal language indeed follows from the 
%   main statement.
%\etbs
First of all, it is clear that in both cases the translation $(\cdot)^{*}$ lands 
in the correct language.
For both parts of the theorem, we thence prove that $(\cdot)^{*}$ is truth
preserving by a straightforward formula induction.
E.g., for part (2) we need to show that, for an arbitrary formula $\phi\in
\mu_{C}\ofoei$ and an arbitrary model $\bbS$:
\begin{equation}
\label{eq:xxxx1}
\bbS \mmodels \phi \text{ iff } \bbS \models \phi^{*}[s_{I}].
\end{equation}

As discussed in the main text, the two critical cases concern the inductive 
steps for the modalities and the least fixpoint operators. 
Let $ \oslang^{+} \in \{\ofoe,\ofoei\}$.
We start verifying the case of modalities. 
Hence, consider the formula $\nxt{\al}(\varphi_{1},\ldots,\varphi_{n})$ with 
$\al(a_{1},\ldots,a_{n}) \in \oslang^{+}$. 
By induction hypothesis, $\phi_\ell \equiv \phi^{*}_\ell$, for $\ell=1,\dots, n$.
Now, let $\bbS$ be a transition system. We have that
\begin{align*}
\bbS \mmodels \nxt{\al}(\varphi_{1},\ldots,\varphi_{n}) \text{ iff } 
  & (R[s_{I}],V_{\overline{\varphi}}) \models \al(a_{1},\ldots,a_{n})  
  & \text{( by \eqref{eq:mumod})}
\\ \text{ iff } 
  & \bbS^{V_{\overline{\varphi}}} \models \al^{\dag}[s_{I}] 
  & \text{( by Prop. \ref{p:1trl})}
\\ \text{ iff } 
  & \bbS \models \al^{\dag}[\phi_{i}^{*}/a_{i} \mid i \in I][s_{I}] 
  & \text{( by \eqref{eq:valmod}, Def. \ref{def:exp} and IH)}
\end{align*}

The inductive step for the least fixpoint operator will be justified by 
Proposition~\ref{p:keyfix}.
In more detail, given a formula of the form $\mu x. \psi \in\mu_{Y} \oslang^{+}$,
with $Y=D$ for $\oslang^{+} =\ofoe$, and $Y=C$ for $\oslang^{+} =\ofoei$, 
consider the following chain of equivalences:
\begin{align*}
% s_{I} \in \ext{\mu p.\psi}^{\bbS} \text{ iff } 
  & s_{I} \in \ext{\mu p.\psi}^{\bbS} 
\\ \text{ iff } 
  & s_{I} \in \LFP. (\psi^{\bbS}_{p})_{\rst{Q}} \text{ for some }
        \begin{cases} \text{ finite} & \\ \text{ noetherian} & 
	\end{cases} 
    \text{set } Q 
  & \text{( by \eqref{eq:foe-d}/\eqref{eq:foei-c})}
\\ \text{ iff } 
  &  s_{I} \in \bigcap \Big\{ P \subseteq Q \mid P \in 
         \PRE((\psi^{\bbS}_{p})_{\rst{Q}})\Big\} 
     \text{ for some }
        \begin{cases} \text{ finite} & \\ \text{ noetherian} & 
	\end{cases} 
     \text{set } Q 
\\ \text{ iff } & 
    \bbS \models \exists q.\, \Big(\forall p \subseteq q.\,
       \big(p \in \PRE((\psi^{\bbS}_{p})_{\rst{q}}) \to p(s_{I})\big)
       \Big)
\\ \text{ iff } & 
    \bbS \models (\mu p. \psi)^{*}[s_{I}].
   & (\text{IH})
\end{align*}
This concludes the proof of \eqref{eq:xxxx1}.
% \btbs
% \item
% We finally verify that the statement about the modal languages follows from the
% main statement of the Theorem.
% In order to do this, it is enough to check that for any set $Q$ of propositional
% variables, both $\noe{\mu\ML}{Q}\subseteq\noe{\mu\ofoei}{Q}$ and 
% $\conoe{\mu\ML}{Q}\subseteq\conoe{\mu\ofoei}{Q}$, and 
% $\cont{\mu\ML}{Q}\subseteq\cont{\mu\ofoei}{Q}$ and 
% $\cocont{\mu\ML}{Q}\subseteq\cocont{\mu\ofoei}{Q}$ hold. 
% But this follows by a straightforward induction on the structure of a formula
% by noticing that $\Diamond \varphi = \nxt{\al}(\varphi)$ for $\al(a)\isdef 
% \exists x.a(x) \in \ofoei \cap \ofoe$ and $\Box \varphi = \nxt{\al'}(\varphi)$
% for $\al'(a)\isdef \forall x.a(x) \in \ofoei \cap \ofoe$.
% %This conclude the proof of the Theorem.
% \etbs
\end{proofof}
\subsection{Fixpoints of continuous maps}

It is well-known that continuous functionals are \emph{constructive}.
That is, if we construct the least fixpoint of a continuous functional $F: 
\pow(S) \to \pow(S)$ using the ordinal approximation $\nada, F\nada, F^2\nada, 
\ldots, F^{\al}\nada, \ldots$, then we reach convergence after at most $\omega$
many steps, implying that $\LFP. F = F^{\omega}\nada$.
We will see now that this fact can be strengthened to the following observation,
which is the crucial result needed in the proof of Proposition~\ref{p:keyfix}.

\begin{theorem}
\label{t:fixcont}
Let $F: \pow(S)\to \pow(S)$ be a continuous functional.
Then for any $s \in S$:
\begin{equation}
\label{eq:fixcont}
s \in \LFP. F \text{ iff }
s \in \LFP.F\rst{X}, \text{ for some finite } X \subseteq S.
\end{equation}
\end{theorem}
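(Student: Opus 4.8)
The plan is to prove both directions of \eqref{eq:fixcont}, the right-to-left direction being the easy one and the left-to-right direction requiring the constructivity of continuous functionals.

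\textbf{The easy direction.} Suppose $s \in \LFP.F\rst{X}$ for some finite $X \subseteq S$. Since $F$ is continuous it is in particular monotone, and then $F\rst{X}$ is monotone as well (intersecting with a fixed set preserves monotonicity). Thus Proposition~\ref{p:rstfix} applies and gives $\LFP.F\rst{X} \subseteq \LFP.F$, whence $s \in \LFP.F$. This is just a one-line appeal to an already-established fact.

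\textbf{The hard direction.} Suppose $s \in \LFP.F$. Since $F$ is continuous it is constructive, so $\LFP.F = F^{\omega}\nada = \bigcup_{n<\omega} F^{n}\nada$, where $F^{0}\nada \isdef \nada$ and $F^{n+1}\nada \isdef F(F^{n}\nada)$. Hence there is a least $n$ with $s \in F^{n}\nada$. The plan is to extract, by induction on $n$, a finite set $X \subseteq S$ with $s \in \LFP.F\rst{X}$; in fact it will be convenient to prove the slightly stronger statement that for every $n$ and every $s \in F^{n}\nada$ there is a finite $X$ with $F^{n}\nada \cap (\text{relevant part}) \subseteq$ well-behaved, but more cleanly: for every $n$, there is a finite $X_{n}$ such that $F^{n}\nada \subseteq \LFP.F\rst{X_{n}}$. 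The base case $n=0$ is trivial with $X_{0} = \nada$. For the inductive step, assume $F^{n}\nada \subseteq \LFP.F\rst{X_{n}}$ with $X_{n}$ finite, and take any $t \in F^{n+1}\nada = F(F^{n}\nada)$. By continuity of $F$ applied to the set $F^{n}\nada$, there is a \emph{finite} subset $Y_{t} \subseteq F^{n}\nada$ with $t \in F(Y_{t})$. Now set $X_{n+1} \isdef X_{n} \cup \{t\}$; since $Y_{t} \subseteq F^{n}\nada \subseteq \LFP.F\rst{X_{n}} \subseteq \LFP.F\rst{X_{n+1}}$ and $\LFP.F\rst{X_{n+1}}$ is a fixpoint of $F\rst{X_{n+1}}$ hence closed under $F\rst{X_{n+1}}$, from $t \in F(Y_{t})$ and $Y_{t} \subseteq \LFP.F\rst{X_{n+1}}$ we get $t \in F(\LFP.F\rst{X_{n+1}}) \cap X_{n+1} = (F\rst{X_{n+1}})(\LFP.F\rst{X_{n+1}}) = \LFP.F\rst{X_{n+1}}$, using $t \in X_{n+1}$. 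Since $F^{n+1}\nada$ may be infinite, one must be slightly more careful: rather than adding one $t$ at a time, fix the target element $s$ and the index $n$ with $s \in F^{n}\nada$ minimal, and build a \emph{descending} chain of finite witnesses: there is finite $Y_{s} \subseteq F^{n-1}\nada$ with $s \in F(Y_{s})$; for each $y \in Y_{s}$ recursively obtain a finite witness set, and take $X$ to be the (finite) union of $\{s\}$ with all these witnesses across the finitely many levels $n-1, n-2, \dots, 0$ — finiteness is preserved because each step branches into a finite set and there are only finitely many ($n$) steps. One then checks by a downward induction on the level that each element of each witness set lies in $\LFP.F\rst{X}$, and concludes $s \in \LFP.F\rst{X}$.

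\textbf{Main obstacle.} The only real subtlety is the bookkeeping in the hard direction: one must ensure the witness set $X$ stays finite. The tree of finite witnesses produced by iterated appeals to continuity has finite branching and depth exactly $n$, hence finitely many nodes in total, so the union of all witness sets is finite — this is essentially König's lemma for finite trees and needs to be stated carefully but is not deep. I would phrase the induction on $n$ as: \emph{for every $n < \omega$ there is a finite $X_{n} \subseteq S$ with $F^{n}\nada \subseteq \LFP.F\rst{X_{n}}$ — no, wait, $F^{n}\nada$ can be infinite}; so instead the clean formulation is pointwise with a running finite accumulator, or, best, prove directly by induction on $n$ that \emph{for every finite $Z \subseteq F^{n}\nada$ there is a finite $X \supseteq Z$ with $Z \subseteq \LFP.F\rst{X}$}, which self-propagates cleanly: given finite $Z \subseteq F^{n+1}\nada$, for each $z \in Z$ pick finite $Y_{z} \subseteq F^{n}\nada$ with $z \in F(Y_{z})$, let $Z' \isdef \bigcup_{z \in Z} Y_{z}$ (finite), apply the inductive hypothesis to get finite $X' \supseteq Z'$ with $Z' \subseteq \LFP.F\rst{X'}$, and set $X \isdef X' \cup Z$; then each $z \in Z$ satisfies $z \in F(Y_{z}) \subseteq F(\LFP.F\rst{X})$ and $z \in X$, so $z \in \LFP.F\rst{X}$. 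Applying this with $Z = \{s\}$ finishes the proof. The verification that $\LFP.F\rst{X}$ is closed under $F\rst{X}$ and the monotonicity remarks are routine.
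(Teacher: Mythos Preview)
Your proposal is correct and follows essentially the same route as the paper: both use continuity to extract, level by level from $F^{n}\nada$ down to $F^{0}\nada$, a finitely branching tree of finite witness sets whose union is the desired finite $X$. The paper presents this as an explicit downward construction of sets $U_{n} = \{s\}, U_{n-1}, \ldots, U_{1}$ with $U_{i+1} \subseteq_{\omega} F(U_{i})$ and $X \isdef \bigcup_{i} U_{i}$, followed by a separate verification that each $U_{i} \subseteq \LFP.F\rst{X}$; your final formulation packages the same construction as a single upward induction on $n$ with the statement ``for every finite $Z \subseteq F^{n}\nada$ there is a finite $X \supseteq Z$ with $Z \subseteq \LFP.F\rst{X}$'', which is arguably cleaner since the inductive hypothesis already carries the conclusion and avoids the separate verification step (at the cost of implicitly using the easy monotonicity $X' \subseteq X \Rightarrow \LFP.F\rst{X'} \subseteq \LFP.F\rst{X}$).
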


\begin{proof}
The direction from right to left of \eqref{eq:fixcont} is a special case of 
Proposition~\ref{p:rstfix}.
For the opposite direction of \eqref{eq:fixcont} a bit more work is needed.
Assume that $s \in \LFP. F$; we claim that there are sets $U_{1},\ldots,U_{n}$,
for some $n \in \omega$, such that $s \in U_{n}$, $U_{1} \sse_{\omega} F(\nada)$,
and $U_{i+1} \sse_{\omega} F(U_{i})$, for all $i$ with $1 \leq i < n$.

To see this, first observe that since $F$ is continuous, we have $\LFP. F = 
F^{\omega}(\nada) = \bigcup_{n\in\omega}F^{n}(\nada)$, and so we may take $n$ to
be the least natural number such that $s \in F^{n}(\nada)$.
By a downward induction we now define sets $U_{n},\ldots,U_{1}$, with $U_{i} 
\sse F^{i}(\nada)$ for each $i$. 
We set up the induction by putting $U_{n} \isdef \{ s \}$, then $U_{n}
\sse F^{n}(\nada)$ by our assumption on $n$.
For $i<n$, we define $U_{i}$ as follows.
Using the inductive fact that $U_{i+1} \sse_{\omega} F^{i+1}(\nada) = 
F(F^{i}(\nada))$, it follows by continuity of $F$ that for each $u \in U_{i+1}$
there is a set $V_{u} \sse_{\omega} F^{i}(\nada)$ such that $u \in F(V_{u})$.
We then define $U_{i} \isdef \bigcup \{ V_{u} \mid u \in U_{i+1} \}$,
so that clearly $U_{i+1} \sse_{\omega} F(U_{i})$ and $U_{i} \sse_{\omega}
F^{i}(\nada)$.
Continuing like this, ultimately we arrive at stage $i=1$ where we find
$U_{1} \sse F(\nada)$ as required.

Finally, given the sequence $U_{n},\ldots,U_{1}$, we define 
\[
X \isdef  \bigcup_{0<i\leq n} U_{i}.
\]
It is then straightforward to prove that $U_{i} \sse \LFP. F\rst{X}$, for each 
$i$ with $0<i\leq n$, and so in particular we find that $s \in U_{n} \sse \LFP.
F\rst{X}$.
This finishes the proof of the implication from left to right in 
\eqref{eq:fixcont}.
\end{proof}

\noindent
As an almost immediate corollary of this result we obtain the second part of 
Proposition~\ref{p:keyfix}.

\begin{proofof}{Proposition~\ref{p:keyfix}(2)}
Take an arbitrary formula $\mu p. \phi \in \mu_{C}\ofoei$, then by definition 
we have $\phi \in \mu_{C}\ofoei \cap \cont{\mu\ofoei}{p}$.
But it follows from a routine inductive proof that every formula $\psi \in 
\mu_{C}\ofoei \cap \cont{\mu\ofoei}{\qprop}$ is continuous in each variable 
in $\qprop$.
Thus $\phi$ is continuous in $p$, and so the result is immediate by 
Theorem~\ref{t:fixcont}.
\end{proofof}

% !TEX root = ../00CFVZ_TOCL.tex

\subsection{Fixpoints of noetherian maps}

We will now see how to prove Proposition~\ref{p:keyfix}(1), which is the key 
result that we need to embed alternation-free $\mu$-calculi such as 
$\mu_{D}\ofoe$ and $\mu_{D}\ML$ into noetherian second-order logic.
Perhaps suprisingly, this case is slightly more subtle than the characterisation of
fixpoints of continuous maps.

We start with stating some auxiliary definitions and results on monotone 
functionals, starting with a game-theoretic characterisation of their least
fixpoints~\cite{Ven08}.

\begin{definition}
\label{d:unfgame}
Given a monotone functional $F: \pow(S)\to \pow(S)$ we define the 
\emph{unfolding game} $\UG_{F}$ as follows:
\begin{itemize}
\item at any position $s \in S$, $\eloise$ needs to pick a set $X$ such that 
$s \in FX$;
\item at any position $X \in \pow(S)$, $\abelard$ needs to pick an element of 
$X$
\item all infinite matches are won by $\abelard$.
\end{itemize}
A positional strategy $\ystrat: S \to \pow(S)$ for $\eloise$ in $\UG_{F}$ is 
\emph{descending} if, for all ordinals $\alpha$,
\begin{equation}
\label{eq:dec}
s \in F^{\alpha+1}(\nada) \text{ implies } \ystrat(s) \sse F^{\alpha}(\nada).
\end{equation}
\end{definition}

It is not the case that \emph{all} positional winning strategies for $\eloise$ 
in $\UG_{F}$ are descending, but the next result shows that there always is one.

\begin{proposition}
\label{p:unfg}
Let $F: \pow(S)\to \pow(S)$ be a monotone functional. 
\begin{enumerate}[(1)]
\item
For all $s \in S$, $s \in \win_{\eloise}(\UG_{F})$ iff $s \in \LFP. F$;
\item 
If $s \in \LFP. F$, then \eloise has a descending winning strategy in 
$\UG_{F}@s$.
\end{enumerate}
\end{proposition}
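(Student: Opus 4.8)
The plan is to prove the two clauses together, by extracting from the ordinal approximation of $\LFP.F$ a single positional strategy for \eloise that is simultaneously \emph{descending} and winning, and then to dispatch the left-to-right implication of (1) by a direct ``trap'' argument on the complement of $\LFP.F$. Throughout I would use the standard transfinite approximation $F^{0}(\nada)=\nada$, $F^{\beta+1}(\nada)=F(F^{\beta}(\nada))$, $F^{\lambda}(\nada)=\bigcup_{\beta<\lambda}F^{\beta}(\nada)$ for limit $\lambda$, recalling that by monotonicity this sequence is increasing and that $\LFP.F=\bigcup_{\beta}F^{\beta}(\nada)$.

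For a point $t\in\LFP.F$ I would define its \emph{rank} $\rho(t)$ to be the least ordinal $\beta$ with $t\in F^{\beta+1}(\nada)$; this is well defined because the least ordinal $\gamma$ with $t\in F^{\gamma}(\nada)$ can be neither $0$ (as $F^{0}(\nada)=\nada$) nor a limit (by the definition of the limit stage and minimality of $\gamma$), hence is a successor. Then I would set $\ystrat(t)\isdef F^{\rho(t)}(\nada)$ for $t\in\LFP.F$ (the value of $\ystrat$ outside $\LFP.F$ is irrelevant). This move is legitimate at $t$, since $t\in F(F^{\rho(t)}(\nada))$; it is descending essentially by construction, because if $s\in F^{\alpha+1}(\nada)$ then $\rho(s)\le\alpha$, whence $\ystrat(s)=F^{\rho(s)}(\nada)\subseteq F^{\alpha}(\nada)$ by monotonicity of the approximants. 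Finally I would check that $\ystrat$ is winning from every $s\in\LFP.F$: at a position $t$ with $\rho(t)=0$ we have $\ystrat(t)=\nada$, so \abelard\ is immediately stuck; and if $\rho(t)>0$, then any $u\in\ystrat(t)=F^{\rho(t)}(\nada)$ chosen by \abelard\ satisfies $\rho(u)<\rho(t)$ --- treating separately the successor case (where $u\in F^{\rho(t)}(\nada)$ gives $\rho(u)<\rho(t)$ directly) and the limit case (where $u\in F^{\gamma}(\nada)$ for some $\gamma<\rho(t)$, and the least $\delta$ with $u\in F^{\delta}(\nada)$ equals $\rho(u)+1\le\gamma<\rho(t)$). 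Since the ordinals are well founded, no $\ystrat$-guided match can be infinite; every such match is therefore finite and lost by \abelard. This establishes the right-to-left direction of (1) and the whole of (2) at once.

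For the left-to-right direction of (1) I would argue on $N\isdef S\setminus\LFP.F$. As $\LFP.F$ is a fixpoint of $F$, for every $t\in N$ we have $t\notin F(\LFP.F)$, so by monotonicity no legitimate \eloise-move $X$ at $t$ can satisfy $X\subseteq\LFP.F$; hence every such $X$ meets $N$, and \abelard\ can respond inside $N$. Consequently, from any $t\in N$ \abelard\ has a strategy guaranteeing that either \eloise\ gets stuck or the play remains in $N$ forever --- in both cases \abelard\ wins, so $N\subseteq\win_{\abelard}(\UG_{F})$. Since a single position cannot be winning for both players (play the two purported winning strategies against each other and inspect the resulting match), $\win_{\eloise}(\UG_{F})$ is disjoint from $N$, which is exactly the contrapositive of $s\in\win_{\eloise}(\UG_{F})\Rightarrow s\in\LFP.F$.

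I do not expect a genuine obstacle here: the proof is routine. The step demanding the most care is the verification that the rank strictly decreases across a move of \abelard, which requires separate bookkeeping for the successor and limit cases of $\rho(t)$ together with the definition of the approximants at limit stages, and which is also where well-definedness of $\rho$ is used; the descending property itself then comes for free from always choosing the \emph{least} witnessing ordinal as the rank.
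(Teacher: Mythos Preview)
Your proof is correct and follows essentially the same approach as the paper: for part~(2) both you and the paper take $\ystrat(t)\isdef F^{\rho(t)}(\nada)$ where $\rho(t)+1$ is the least ordinal $\alpha$ with $t\in F^{\alpha}(\nada)$; you simply spell out in full the rank-decrease argument and the trap argument for~(1), whereas the paper leaves the latter as a citation and the former as implicit.
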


\begin{proof}
Point (1) corresponds to \cite[Theorem 3.14(2)]{Ven08}.
For part (2) one can simply take the following strategy.
Given $s \in \LFP.F$, let $\alpha$ be the least ordinal such that $s \in 
F^{\alpha}(\nada)$; it is easy to see that $\alpha$ must be a successor ordinal,
say $\alpha = \beta + 1$. 
Now simply put $\ystrat(s) \isdef  F^{\beta}(\nada)$.
\end{proof}

\begin{definition}
\label{d:str-tree}
Let $F: \pow(S)\to \pow(S)$ be a monotone functional, let $\ystrat$ be a 
positional winning strategy for $\eloise$ in $\UG_{F}$, and let $r \in S$. 
Define $T_{\ystrat,r} \sse S$ to be the set of states in $S$ that are 
$\ystrat$-reachable in $\UG_{F}@r$.
This set has a tree structure induced by the map $\ystrat$ itself, where the 
children of $s \in T_{\ystrat,r}$ are given by the set $\ystrat(s)$; we will
refer to $T_{\ystrat,r}$ as the \emph{strategy tree} of
$\ystrat$.
\end{definition}
Note that a strategy tree $T_{\ystrat,r}$ will have no infinite paths, since we
define the notion only for a \emph{winning} strategy $\ystrat$.

\begin{proposition}
\label{p:afmc-2}
Let $F: \pow(S)\to \pow(S)$ be a monotone functional, let $r \in S$, and let 
$\ystrat$ be a descending winning strategy for $\eloise$ in $\UG_{F}$.
Then
\begin{equation}
\label{eq:afmc3}
r \in \LFP. F \text{ implies } r \in \LFP. F\rst{T_{\ystrat,r}}.
\end{equation}
\end{proposition}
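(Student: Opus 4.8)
The plan is to run a well-founded induction over the strategy tree $T_{\ystrat,r}$, exploiting that a \emph{winning} strategy for $\eloise$ yields a tree with no infinite branches. Write $X \isdef T_{\ystrat,r}$. First I would record two preliminary facts. (i) The restricted map $F\rst{X}$ is again monotone, so $\LFP.F\rst{X}$ exists and is a genuine fixpoint, i.e.\ $F(\LFP.F\rst{X}) \cap X = F\rst{X}(\LFP.F\rst{X}) = \LFP.F\rst{X}$; in particular $\LFP.F\rst{X} \subseteq X$. (ii) Since by hypothesis $r \in \LFP.F$, Proposition~\ref{p:unfg}(1) gives $r \in \win_{\eloise}(\UG_{F})$, so $\ystrat$ is winning in $\UG_{F}@r$; as all infinite matches of $\UG_{F}$ are won by $\abelard$, there is no infinite $\ystrat$-guided match from $r$, and hence $T_{\ystrat,r}$ has no infinite path, so that the ``child'' relation $s \mapsto \ystrat(s)$ on it (Definition~\ref{d:str-tree}) is well-founded. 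Note also that $\ystrat(s) \subseteq T_{\ystrat,r}$ for every $s \in T_{\ystrat,r}$, since any element $\abelard$ may pick from $\ystrat(s)$ is again $\ystrat$-reachable from $r$.

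The core of the argument would be the claim that $s \in \LFP.F\rst{X}$ for every $s \in T_{\ystrat,r}$, proved by well-founded induction along the child relation; specialising to the root $s = r$ then gives exactly $r \in \LFP.F\rst{T_{\ystrat,r}}$. For the inductive step, fix $s \in T_{\ystrat,r}$ and assume $t \in \LFP.F\rst{X}$ for every child $t \in \ystrat(s)$, so that $\ystrat(s) \subseteq \LFP.F\rst{X}$. Since $s$ is $\ystrat$-reachable from a winning position, the move $\ystrat(s)$ is legitimate at $s$, which by the rules of $\UG_{F}$ means $s \in F(\ystrat(s))$. Monotonicity of $F$ then gives $s \in F(\ystrat(s)) \subseteq F(\LFP.F\rst{X})$, and since $s \in X$ we conclude $s \in F(\LFP.F\rst{X}) \cap X = \LFP.F\rst{X}$ by fact (i), closing the induction.

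I do not anticipate a real obstacle; the only point that needs care is the justification in (ii) that $T_{\ystrat,r}$ is well-founded, which is precisely where the hypothesis $r \in \LFP.F$ is used — without it $\ystrat$ need not be winning from $r$ and the strategy tree could carry an infinite branch, breaking the induction. It is also worth noting that the \emph{descending} character of $\ystrat$ plays no role in this particular implication; it only becomes relevant afterwards, when $T_{\ystrat,r}$ is shown to be noetherian inside an LTS in order to derive Proposition~\ref{p:keyfix}(1).
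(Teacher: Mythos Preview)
Your argument is correct, and it genuinely differs from the paper's. The paper proves, by transfinite induction on ordinals, that $F^{\alpha}(\nada) \cap T \subseteq (F\rst{T})^{\alpha}(\nada)$ for all $\alpha$, where $T = T_{\ystrat,r}$; in the successor step it takes $u \in F^{\beta+1}(\nada) \cap T$ and uses the \emph{descending} hypothesis to get $\ystrat(u) \subseteq F^{\beta}(\nada)$, which together with $\ystrat(u) \subseteq T$ and the inductive hypothesis yields $\ystrat(u) \subseteq (F\rst{T})^{\beta}(\nada)$, whence $u \in (F\rst{T})^{\beta+1}(\nada)$. Your route instead runs a well-founded induction along the child relation of the strategy tree itself, using only that $\ystrat$ is winning (hence the tree has no infinite branch) and the fixpoint equation for $\LFP.F\rst{X}$. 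This is more direct and, as you observe, shows that the descending assumption is not actually needed for this proposition: any positional winning strategy would do. The paper's ordinal-indexed argument buys a slightly finer quantitative statement (each ordinal stage is preserved under restriction), but for the conclusion at hand your approach is cleaner. Your closing remark about where descending is used later is not quite accurate --- in the paper the noetherian property of $T_{\ystrat,r}$ comes from compatibility with an evaluation-game strategy, not from descendingness --- but this is peripheral and does not affect your proof.
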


\begin{proof}
Let $F,r$ and $\ystrat$ be as in the formulation of the proposition.
Assume that $r \in \LFP. F$, then clearly $r \in F^{\al}(\nada)$ for some 
ordinal $\al$; furthermore, $T_{\ystrat,r}$ is defined and clearly we have 
$r \in T_{\ystrat,r}$.
Abbreviate $T \isdef T_{\ystrat,r}$.
It then suffices to show that for all ordinals $\alpha$ we have
\begin{equation}
\label{eq:unf1}
F^{\al}(\nada) \cap T \sse (F\rst{T})^{\alpha}(\nada).
\end{equation}
We will prove \eqref{eq:unf1} by transfinite induction.
The base case, where $\alpha = 0$, and the inductive case where $\alpha$ is a 
limit ordinal are straightforward, so we focus on the case where $\alpha$ is a 
successor ordinal, say $\alpha = \beta +1$.
Take an arbitrary state $u \in F^{\beta+1}(\nada) \cap T$, then we 
find $\ystrat(u) \sse F^{\beta}(\nada)$ by our assumption \eqref{eq:dec}, and 
$\ystrat(u) \sse T$ by definition of $T$.
Then the induction hypothesis yields that 
$\ystrat(u) \sse (F\rst{T})^{\beta}(\nada)$, and so we have 
$\ystrat(u) \sse (F\rst{T})^{\beta}(\nada) \cap T$.
But since $\ystrat$ is a winning strategy, and $u$ is a winning position for 
$\eloise$ in $\UG_{F}$ by Claim~\ref{p:unfg}(i), $\ystrat(u)$ is a
legitimate move for $\eloise$, and so we have $u \in F(\ystrat(u))$.
Thus by monotonicity of $F$ we obtain $u \in 
F((F\rst{T})^{\beta}(\nada) \cap T)$, and since $u \in T$ 
by assumption, this means that $u \in (F\rst{T})^{\beta+1}(\nada)$ as 
required.
\end{proof}

We now turn to the specific case where we consider the least fixed point of a 
functional $F$ which is induced by some formula $\phi(p) \in 
\mu_{D}\oslang$ on some LTS $\bbS$.   
By Proposition \ref{p:unfg} and Fact~\ref{f:adeqmu}, $\eloise$ has a winning
strategy in  $\egame(\mu p.\phi(p),\bbS)@(\mu p.\phi(p),s)$ if and only if she
has a winning strategy in $\UG_{F}@s$ too, where $F \isdef  \phi_{p}^{\bbS}$ is the 
monotone functional defined by $\phi(p)$. 
The next Proposition  makes this correspondence explicit when $\oslang =
\foe$. 

%\newpage

First, we need to introduce some auxiliary concepts and notations.
Given  a winning strategy   $\ystrat$ for $\eloise$ in $\egame(\mu p. \phi,
\bbS)@(\mu p. \phi,s)$, we denote by $B(\ystrat)$ the set of all finite 
$\ystrat$-guided, possibly partial, matches in  $\egame(\psi,\bbS)@(\psi,s)$ in
which no position of the form $(\nu q. \psi, r)$ is visited. Let 
$\ystrat$ be a positional winning strategies for $\eloise$ in $\UG_F@s$ and 
$\ystrat'$ a winning strategy  for her in $\egame(\mu p.\phi,\bbS)@(\mu p.\phi,
s)$. We call $\ystrat$ and  $\ystrat'$  \emph{compatible} if
each point in $T_{\ystrat,s}$ occurs on some path belonging to $B(\ystrat')$. 

\begin{proposition}\label{p:unfold=evalgame2}
Let $\phi(p) \in \mu_{D}\foe{p}$ and $s \in \ext{\mu p.\phi}^{\bbS}$. 
Then there is a descending winning strategy for $\eloise$ in $\UG_F@s$ 
compatible with a  winning strategy for $\eloise$ in 
$\egame(\mu p.\phi,\bbS)@(\mu p.\phi,s)$
\end{proposition}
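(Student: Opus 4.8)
Write $F \isdef \phi_{p}^{\bbS}$, the monotone functional defined by $\phi(p)$ (monotone since $p$ is positive in $\phi$), and recall that $\phi$ is noetherian in $p$, i.e.\ $\phi \in \noe{\mu\foe}{p}$. By the semantics of $\mu$ the hypothesis gives $s \in \ext{\mu p.\phi}^{\bbS} = \LFP. F$, so we may work with the ordinal approximants $F^{\alpha}(\nada)$. For $t \in \LFP. F$ let $\rho(t)$ be the least ordinal with $t \in F^{\rho(t)+1}(\nada)$; then $t \in F(F^{\rho(t)}(\nada)) = \ext{\phi}^{\bbS[p\mapsto F^{\rho(t)}(\nada)]}$, so by adequacy of the evaluation game (Fact~\ref{f:adeqmu}) together with positional determinacy of parity games I would fix a positional winning strategy $g_{t}$ for $\eloise$ in $\egame(\phi,\bbS[p\mapsto F^{\rho(t)}(\nada)])@(\phi,t)$ --- here $p$ is a \emph{free} variable, so each $(p,u)$ is a terminal position, won by $\eloise$ exactly when $u \in F^{\rho(t)}(\nada)$.

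The plan is to read the unfolding-game strategy off the family $(g_{t})_{t \in \LFP. F}$ by putting $\ystrat(t) \isdef \{ u \mid (p,u) \text{ is reached in some } g_{t}\text{-guided match from } (\phi,t)\}$, and then to check three things. First, $\ystrat(t) \subseteq F^{\rho(t)}(\nada)$ (each such $(p,u)$ is won by $\eloise$), and shrinking the valuation of $p$ down to $\ystrat(t)$ keeps $g_{t}$ winning: a $g_{t}$-guided match of $\egame(\phi,\bbS[p\mapsto\ystrat(t)])@(\phi,t)$ that reaches some $(p,u)$ has $u\in\ystrat(t)$ by construction, and any other $g_{t}$-match has the same winner as before since the winning condition never mentions $p$; hence $t \in F(\ystrat(t))$, so $\ystrat(t)$ is a legitimate move for $\eloise$ at $t$ in $\UG_{F}$. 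Second, $\ystrat(t)\subseteq F^{\rho(t)}(\nada)$ forces $\rho(u)<\rho(t)$ for every $u\in\ystrat(t)$, so every $\ystrat$-guided match from $s$ strictly decreases $\rho$, hence is finite, and (as $\eloise$ is never stuck) won by $\eloise$; thus $\ystrat$ is a winning strategy for $\eloise$ in $\UG_{F}@s$, with well-founded strategy tree $T \isdef T_{\ystrat,s}$ whose children of a node $t$ are exactly $\ystrat(t)$. Third, $\ystrat(t)\subseteq F^{\rho(t)}(\nada)\subseteq F^{\alpha}(\nada)$ whenever $t\in F^{\alpha+1}(\nada)$, so $\ystrat$ is \emph{descending}, as required for Proposition~\ref{p:afmc-2}.

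Next I would assemble the evaluation-game strategy $\ystrat'$ on $\egame(\mu p.\phi,\bbS)@(\mu p.\phi,s)$ by pasting the $g_{t}$'s along $T$: during a match keep a ``current root'' $t\in T$ (initially $s$), let $\eloise$ follow $g_{t}$, and whenever the match reaches a position $(p,u)$ --- which forces the move to $(\delta_{p},u)=(\phi,u)$ --- note that $u\in\ystrat(t)$ is a child of $t$ in $T$ and reset the root to $u$. Since $T$ is well-founded, no $\ystrat'$-guided match unfolds $p$ infinitely often, so every infinite $\ystrat'$-guided match is eventually a $g_{t}$-guided match for its final root $t$ and is won by $\eloise$ (the parity condition does not mention $p$); hence $\ystrat'$ is winning. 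For compatibility, fix $t^{*}\in T$ and a branch $s=t_{0},t_{1},\dots,t_{m}=t^{*}$ of $T$; since $t_{i+1}\in\ystrat(t_{i})$, choose for each $i<m$ a $g_{t_{i}}$-guided partial match from $(\phi,t_{i})$ to $(p,t_{i+1})$, and splice these together (separated by the forced moves $(p,t_{i+1})\to(\phi,t_{i+1})$) into a finite $\ystrat'$-guided partial match $\pi$ of $\egame(\mu p.\phi,\bbS)@(\mu p.\phi,s)$ ending at $(\phi,t^{*})$.

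It remains to see that $\pi$ lies in $B(\ystrat')$, i.e.\ visits no position of the form $(\nu q.\chi,r)$, and this is where noetherianness of $\phi$ in $p$ is used: the grammar of $\noe{\mu\foe}{p}$ forces every $\nu$-subformula of $\phi$ to sit inside one of the grammar's $p$-free subformulas $\psi$ (more precisely, a subformula containing none of the accumulated noetherian variables), and such a $\psi$ is a trap in the evaluation game --- once a match descends into it, it can never leave it, and in particular never reaches a $(p,\cdot)$-position (a routine induction on the grammar). Hence any evaluation-game match that reaches a $(p,\cdot)$-position has visited no $(\nu q.\chi,\cdot)$-position; applying this to each spliced segment, which ends at a $(p,\cdot)$-position, shows $\pi$ is free of $\nu$-positions, so $\pi\in B(\ystrat')$ and $t^{*}$ occurs on $\pi$. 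This is exactly compatibility. I expect the crux to be precisely the interplay used to define $\ystrat$: being \emph{descending} is what is wanted, but a strategy obtained from the ordinal approximation need not be ``realised'' by any evaluation strategy, while one read off an evaluation strategy need not be descending; taking $\ystrat(t)$ to be exactly the set of $p$-targets actually used by a winning strategy in the rank-$\rho(t)$ game reconciles both demands at once. The remaining ingredients --- the shrinking-valuation argument, positional determinacy, adequacy, and the grammar-induction showing that $\nu$-subformulas act as traps --- are routine.
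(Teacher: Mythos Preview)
Your proof is correct and follows essentially the same route as the paper's: for each $t$ in the least fixpoint, use a winning strategy in the evaluation game for $\phi$ with $p$ interpreted as the appropriate ordinal approximant, define the unfolding-game move at $t$ as the set of $p$-targets actually reached, paste the local strategies together to get $\ystrat'$, and check that the resulting unfolding strategy is descending and compatible. The paper organises this slightly differently---it first invokes Proposition~\ref{p:unfg}(2) to obtain an abstract descending strategy $\ystrat$ with $\ystrat(t)=F^{\rho(t)}(\nada)$ and then refines it to $\ystrat^{*}$ by intersecting with the reached targets---but this is the same construction as yours, just factored differently. Your explicit grammar argument that $\nu$-subformulas of a noetherian formula act as traps (so any partial match ending at a $(p,\cdot)$-position is $\nu$-free) makes precise a point the paper leaves to the reader when it asserts compatibility ``by construction''.
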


\begin{proof}
Let $F \isdef  \phi_{p}^{\bbS}$ be the monotone functional defined by $\phi(p)$.
From $s \in \ext{\mu p.\phi}^{\bbS}$, we get that $s \in \LFP. F$.  
Applying Proposition~\ref{p:unfg} to the fact that $s \in \LFP. F$ yields that 
$\eloise$ has a descending winning strategy $\ystrat: S \to \pow{(S)}$ in $\UG_{F}@s$. 
We define $\eloise$'s strategies $\ystrat'$  in
$\egame(\mu p.\phi,\bbS)@(\mu p.\phi,s)$, and $\ystrat^*$ in $\UG_{F}@s$ as follows:
\begin{enumerate}
\item 
In the evaluation games $\egame$, after the initial automatic move, the position 
of the match is $(\phi,s)$; there $\eloise$ first plays her positional winning 
strategy $\ystrat_s$ from $\egame(\phi(p),\bbS[p \mapsto \ystrat(s)])@(\phi(p),
s)$, and we define her move $\ystrat^*(s)$ in the unfolding game $\UG$ as the 
set of all nodes $t \in \ystrat(s)$ such that there is a $\ystrat_s$-guided 
match in $B(\ystrat_s)$ whose last position is $(p,t)$.
\item 
Each time a position $(p,t)$ is reached in the evaluation games $\egame$, 
distinguisg cases:
\begin{enumerate}
%necessarily 
\item 
if $t \in \win_{\eloise}(\UG_{F})$, then $\eloise$ continues with the positional
winning strategy $\ystrat_t$ from $\egame(\phi(p),\bbS[p \mapsto
\ystrat(t)])@(\phi(p),t)$, and we define her move $\ystrat^*(t)$ in $\UG$ as the 
set of all nodes $w \in \ystrat(t)$ such that there is a $\ystrat_t$-guided
match in $B(\ystrat_t)$ whose last position $(p,w)$;
\item 
if $t \notin \win_{\eloise}(\UG_{F})$, then $\eloise$ continues with a random 
positional strategy and we define $\ystrat^*(t)\isdef \nada$.
\end{enumerate}
\item 
For any position $(p,t)$ that was not reached in the previous steps, $\eloise$ 
sets $\ystrat^*(t)\isdef \nada$.
\end{enumerate}
By construction, $\ystrat'$ and  $\ystrat^*$ are compatible. 
Moreover, $\ystrat^*(t) \subseteq \ystrat(t)$, for $t \in S$, meaning that 
$\ystrat^*$ is descending.
We verify that both $\ystrat'$ and  $\ystrat^*$ are actually winning strategies
for $\eloise$ in the respective games.

First of all, observe that every position of the form $(p,t)$ reached during
a $\ystrat'$-guided match, we have $t \in \win_{\eloise}(\UG_{F})$. 
This can be proved by induction on the number of position of the form $(p,t)$
visited during an $\ystrat'$-guided match. 
For the inductive step, assume $w \in \win_{\eloise}(\UG_{F})$. 
Hence $\ystrat_w$ is winning for $\eloise$ in  $\egame(\phi,\bbS[p \mapsto 
  \ystrat(w)])@(\phi,w)$. 
This means that if a position of the form $(p, t)$ is reached, the variable $p$ 
must be true at $t$ in the model $\bbS[p \mapsto \ystrat(w)]$, meaning that it
belongs to the set $\ystrat(w)$.
By assumption $\ystrat$ is a winning strategy for $\eloise$ in $\UG_F$, and 
therefore any element of $\ystrat(w)$ is again a member of the set 
$\win_{\eloise}(\UG_{F})$. 

Finally, let $\pi$ be an arbitrary $\ystrat'$-guided match of $\egame(\phi,
\bbS[p \mapsto \ystrat(w)])@(\phi,w)$. 
We verify that $\pi$ is winning for $\eloise$. 
First observe that since $\ystrat$ is winning for her in $\UG_F@s$, the fixpoint 
variable $p$ is unfolded only finitely many times during $\pi$. 
Let $(p,t)$ be the last basic position in $\pi$ where $p$ occurs. 
Then from now on $\ystrat'$ and $\ystrat_t$ coincide, yielding  that the match is 
winning for $\eloise$. 

%We define \[P_{\last(\pi)}\isdef \{ (p,w) \mid \exists \pi' \in B(\ystrat'), \pi'=\pi \rho(p,w), p\text{ does not occur in }\rho\}.\]
%Recall that $B(\ystrat')$ is the set of all finite $\ystrat'$-guided, possibly partial, matches in 
%$\egame$ in which no position of the form 
%$(\nu q. \psi, r)$ is visited.
%\\
%It is not hard to derive from the positionality of $\ystrat'$ that for every $\pi, 
%\pi' \in B(\ystrat')$ such that $\last(\pi)=\last(\pi')$, it holds that $\pi\rho 
%\in B(\ystrat')$ if and only if $\pi'\rho \in B(\ystrat')$, for every finite match 
%$\rho$.
%\\
%Now consider  the  strategy $\ystrat^*$ in $\UG_{F}@s$ defined by 
%\[
%\ystrat^*(t) \isdef \{ w \mid (p,w) \in P_{\last(\pi)}, \pi\in B(\ystrat'), 
%\last(\pi)=(\mu p. \phi(p), t)\}.
%\]
%\\
%Compatibility being immediate, 
We finally verify that $\ystrat^*$ is winning for $\eloise$ in the unfolding
game $\UG_F@s$. 
First of all, since $\ystrat'$ is winning, $B(\ystrat')$ does not contain an
infinite ascending chain of $\ystrat'$-guided matches, and thence any
$\ystrat^*$-guided match in $\UG_{F}@s$ is finite. 
It therefore remains to verify that for every $\ystrat^*$-guided match $\pi$ 
in $\UG_{F}@s$ such that $\last(\pi)$ is an $\eloise$ position, she can always
move.
We do it by induction on the length of a $\ystrat^*$-guided match. 
At each step, we use compatibility and thus keep track of the corresponding 
position in the evaluation game $\egame(\mu p.\phi,\bbS)@(\mu p.\phi,s)$. 
The initial position for her is $s \in S$. Notice that $\ystrat^*(s) = 
\ystrat(s) \cap B(\xi')$ and therefore $\ystrat'$ corresponds to $\ystrat_s$
on $\egame(\phi(p),\bbS[p \mapsto \ystrat^*(s)])@(\phi(p),s)$ and it is 
therefore winning for $\eloise$.
In particular, this means  that $s \in F(\ystrat^*(s))$.
Hence, as initial move, $\eloise$ is allowed to play $\ystrat^*(s)$. 
Moreover any subsequent choice for $\abelard$ is such that there is a winning
match $\pi \in B(\xi_s)$ for $\eloise$ such that $\last(\pi)=(p,w)$. 
For the induction step, assume $\abelard$ has chosen $ t \in \ystrat^*(w)$, 
where $\ystrat^*(w) = \ystrat(w) \cap B(\xi')$, $\ystrat'$ corresponds to the
winning strategy $\ystrat_w$ on $\egame(\phi(p),\bbS[p \mapsto 
\ystrat^*(w)])@(\phi(p),w)$, and there is a winning match $\pi \in B(\xi_w)$ 
for $\eloise$ such that $\last(\pi)=(p,w)$. 
By construction, $\ystrat'$ corresponds to the winning strategy $\ystrat_t$ 
for $\eloise$ on $\egame(\phi(p),\bbS[p \mapsto \ystrat(t)])@(\phi(p),t)$. 
Because $\ystrat^*(t)= \ystrat(s) \cap B(\xi')$, $\ystrat_t$ is also winning
for her in $\egame(\phi(p),\bbS[p \mapsto \ystrat^*(t)])@(\phi(p),t)$, meaning
that $s \in F(\ystrat^*(s))$. 
The move $\ystrat^*(t)$ is therefore admissible, and any subsequent choice for
$\abelard$ is such that there is a winning match $\pi \in B(\xi_t)$ for 
$\eloise$ with $\last(\pi)=(p,w)$.
\end{proof}

%Notice that the evaluation game for the modal $\mu$ calculus can be easily adapted to $\mu_{D}\foe$ and provide a corresponding Adequacy Theorem (Proposition \ref{p:unfold=evalgame}). In particular, we would have that 
%$\eloise$ has a winning strategy in  $\egame(\mu p.\phi(p),\bbS)@(\mu p.\phi(p),s)$ if and only if $s \in \ext{\mu p.\phi(p)}^{\bbS}$, with $\phi(p) \in \mu_{D}\foe{p}$.

\begin{proofof}{Proposition~\ref{p:keyfix}(1)}
%    
%\btbs
%\item
%the argument below needs to be thoroughly checked.
%\etbs
Let $\bbS$ be an LTS and $\phi(p) \in \mu_{D}\ofoe{p}$. 

The right-to-left direction of \eqref{eq:foe-d} being proved by 
Proposition \ref{p:rstfix}, we check the left-to-right direction.
We first verify that winning strategies in evaluation games for noetherian 
fixpoint formulas naturally induce bundles. 
More precisely:
\medskip
%Given  a winning strategy   $\ystrat$ for $\eloise$ in $\egame(\mu p. \phi,\bbS)@(\mu p. \phi,s)$,  let $B(\ystrat)$ be the set of all finite $\ystrat$-guided, possibly partial, matches in  $\egame(\psi,\bbS)@(\psi,s)$ in which a position of the form $(\nu q. \psi, r)$ is not visited.

\textsc{Claim.}
% \begin{claim*}\label{p:strategybundledEv}
Let $B^\bbS(\ystrat)$ be the projection of $B(\ystrat)$ on $S$, that is the set
of all paths in $\bbS$ that are a projection on $S$ of a $\ystrat$-guided 
(partial) match in $B(\ystrat)$. Then $B^\bbS(\ystrat)$ is a bundle.
% \end{claim*}
\medskip

\begin{pfclaim}
Assume towards a contradiction that $B^\bbS(\ystrat)$ contains an infinite 
ascending chain $\pi_{0} \sqsubset \pi_{1} \sqsubset \cdots$. 
Let $\pi$ be the limit of this chain and consider the set of elements in 
$B(\ystrat)$ that, projected on $S$, are prefixes of $\pi$. 
By  K\"{o}nig's Lemma, this set contains an infinite ascending chain whose 
limit is an infinite $\ystrat$-guided match in $\egame(\mu p. \phi,\bbS)$
which starts at $(\mu p. \phi,s)$, and of which $\pi$ is the projection on $S$.
By definition of $B(\ystrat)$,  the highest bound variable of $\mu p. \phi$ 
that gets unravelled infinitely often in $\rho$ is a $\mu$-variable, meaning 
that the match is winning for $\abelard$, a contradiction.
\end{pfclaim}

Assume that $s \in \ext{\mu p.\phi}^{\bbS}$, and let $F \isdef  \phi_{p}^{\bbS}$ be the monotone functional defined by $\phi(p)$.
By Proposition \ref{p:unfold=evalgame2}, $\eloise$ has a winning strategy 
$\ystrat'$ in $\egame(\mu p.\phi,\bbS)@(\mu p.\phi,s)$ compatible with a descending winning strategy $\ystrat$ in $\UG_{F}@s$.
By Proposition~\ref{p:afmc-2}, we obtain 
that $s \in \LFP. F\rst{T_{\ystrat,s}}$. 
Because of compatibility, every node in $T_{\ystrat, s}$ occurs on some path of 
$B(\ystrat')$. 
From the Claim
% Claim \ref{p:strategybundledEv}
we known that $B^\bbS(\ystrat')$ is a bundle, meaning that  $T_{\ystrat, s}$ is
noetherian as required. 
\end{proofof}

% !TEX root = ../00CFVZ_TOCL.tex
% \section{Expressiveness modulo bisimilarity}\label{sec:expresso}

\section{Expressiveness modulo bisimilarity}
   \label{sec:expresso}

In this Section we use the tools developed in the previous parts to prove the
main results of the paper on expressiveness modulo bisimilarity, viz., 
Theorem~\ref{t:11} stating
\begin{eqnarray}
% \[\begin{array}{rcl}
%\label{eq:1}
%\eqno{(*)}
   \AFMC &\equiv& \nmso / {\bis} 
   \label{eq:z11}
\\[1mm] \mucML &\equiv& \wmso / {\bis} 
   \label{eq:z12}
% \end{array}\]
\end{eqnarray}

\begin{proofof}{Theorem~\ref{t:11}}
The structure of the proof is the same for the statements \eqref{eq:z11} and
\eqref{eq:z12}.
In both cases, we will need three steps to establish a link between the modal 
language on the left hand side of the equation to the bisimulation-invariant 
fragment of the second-order logic on the right hand side.

The first step is to connect the fragments $\AFMC$ and $\mucML$ of the modal 
$\mu$-calculus to, respectively, the weak and the continuous-weak automata for
first-order logic without equality.
That is, in Theorem~\ref{t:mlaut} below we prove the following:
\begin{eqnarray}
   \AFMC &\equiv& \AutW(\ofo)
   \label{eq:z21}
\\[1mm] \mucML &\equiv&  \AutWC(\ofo)
   \label{eq:z22}
\end{eqnarray}

Second, the main observations that we shall make in this section is that 
\begin{eqnarray}
\AutW(\ofo)  &\equiv& \AutW(\ofoe)/{\bis} 
   \label{eq:z31}
\\[1mm] 
\AutWC(\ofo) &\equiv&  \AutWC(\ofoei)/ {\bis} 
   \label{eq:z32}
\end{eqnarray}
That is, for \eqref{eq:z31} we shall see in Theorem~\ref{t:bi-aut} below that a
weak $\ofoe$-automaton $\bbA$ is bisimulation invariant iff it is equivalent to
a weak $\ofo$-automaton $\bbA^{\tmod}$ (effectively obtained from $\bbA$);
% which can be effectively obtained from $\bbA$ via a one-step translation applied
% to the transition map of $\bbA$
and similarly for \eqref{eq:z32}.

Finally, we use the automata-theoretic characterisations of $\nmso$ and $\wmso$
that we obtained in earlier sections:
\begin{eqnarray}
\AutW(\ofoe)   &\equiv&  \nmso
   \label{eq:z41}
\\[1mm] 
\AutWC(\ofoei) &\equiv&  \wmso 
   \label{eq:z42}
\end{eqnarray}

Then it is obvious that the equation \eqref{eq:z11} follows from \eqref{eq:z21},
\eqref{eq:z31} and \eqref{eq:z41}, while similarly 
\eqref{eq:z12} follows from \eqref{eq:z22}, \eqref{eq:z32} and \eqref{eq:z42}.
\end{proofof}

It is left to prove the equations \eqref{eq:z21} and \eqref{eq:z22}, and 
\eqref{eq:z31} and  \eqref{eq:z32}; this we will take care of in the two 
subsections below.

\subsection{Automata for $\AFMC$ and $\mucML$}

In this subsection we consider the automata corresponding to the 
continuous and the alternation-free $\mu$-calculus. 
That is, we verify the equations \eqref{eq:z21} and \eqref{eq:z22}.

\begin{theorem}
\label{t:mlaut}
\begin{enumerate}
\item 
There is an effective construction transforming a formula $\phi \in \muML$ into
an equivalent automaton in $\Aut(\ofo)$, and vice versa.
\item 
There is an effective construction transforming a formula $\phi \in \AFMC$ into
an equivalent automaton in $\AutW(\ofo)$, and vice versa.
\item 
There is an effective construction transforming a formula $\phi \in \mucML$ into
an equivalent automaton in $\AutWC(\ofo)$, and vice versa.
\end{enumerate}
\end{theorem}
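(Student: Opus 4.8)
The plan is to reduce everything to the automata--$\mu$-calculus correspondences already established in Sections~\ref{sec:parity-to-mc}--\ref{sec:mc-to-parity}. Recall from Example~\ref{ex:mul1}(1), and the example immediately following the definition of the continuous $\mu$-calculus $\mu_{C}\oslang$, that under the identification of the modalities $\nxt{\Diamond a}$ and $\nxt{\Box a}$ of $\oml$ with the standard $\Diamond$ and $\Box$ one has $\mu\oml=\muML$, $\mu_{D}\oml=\AFMC$ and $\mu_{C}\oml=\mucML$. Since $\ofo$ is closed under conjunctions and disjunctions, Theorems~\ref{t:autofor} and~\ref{t:fortoaut} instantiated at $\oslang=\ofo$ give effective equivalences $\Aut(\ofo)\equiv\mu\ofo$, $\AutW(\ofo)\equiv\mu_{D}\ofo$ and $\AutWC(\ofo)\equiv\mu_{C}\ofo$. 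Hence it suffices to set up effective truth-preserving translations in both directions between $\mu\oml$ and $\mu\ofo$ and to verify that they restrict correctly to the alternation-free and continuous fragments, i.e. $\mu\oml\equiv\mu\ofo$, $\mu_{D}\oml\equiv\mu_{D}\ofo$ and $\mu_{C}\oml\equiv\mu_{C}\ofo$; the three parts of Theorem~\ref{t:mlaut} then follow by composition.

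For the direction $\mu\oml\to\mu\ofo$ I would translate by the identity on proposition letters, Boolean connectives and fixpoint operators, and put $\Diamond\phi\mapsto\nxt{\exists x.\,a(x)}(\phi')$ and $\Box\phi\mapsto\nxt{\forall x.\,a(x)}(\phi')$, where $\phi'$ is the translation of $\phi$. Truth-preservation is immediate from the semantic clause~\eqref{eq:mumod}. Both $\exists x.\,a(x)$ and $\forall x.\,a(x)$ lie in $\ofo^{+}(\{a\})$, so arbitrary $\mu\oml$-formulas land in $\mu\ofo$; moreover $\exists x.\,a(x)$ is $a$-continuous and $\forall x.\,a(x)$ is $a$-cocontinuous, so the grammars of $\mu_{D}\ofo$ (any positive one-step formula is allowed under $\mu$) and of $\mu_{C}\ofo$ are respected --- in the continuous fragment of $\muML$ the only modality appearing under $\mu$ is $\Diamond$, translated to the $a$-continuous modality $\nxt{\exists x.\,a(x)}$, whereas $\Box$ occurs only in $\qprop$-free parts. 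This yields $\mu\oml\le\mu\ofo$, $\AFMC\le\mu_{D}\ofo$ and $\mucML\le\mu_{C}\ofo$.

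For the direction $\mu\ofo\to\mu\oml$ the only nontrivial inductive clause is a modality $\nxt{\al}(\phi_{1},\dots,\phi_{n})$ with $\al\in\ofo^{+}(\{a_{1},\dots,a_{n}\})$. Using Theorem~\ref{t:osnf} I would rewrite $\al$ in basic form $\bigvee_{j}\posdgbnfofo{\Sigma'_{j}}{\Sigma_{j}}$; then by Example~\ref{ex:mul1}(3), writing $\psi_{S}\df\bigwedge_{a_{i}\in S}\phi_{i}$ for a type $S$, a direct computation with~\eqref{eq:mumod}--\eqref{eq:valmod} gives
\[
\nxt{\al}(\ol{\phi})\ \equiv\ \bigvee_{j}\Big(\bigwedge_{S\in\Sigma'_{j}}\Diamond\psi_{S}\ \wedge\ \Box\bigvee_{S\in\Sigma_{j}}\psi_{S}\Big),
\]
a modal formula; replacing every modality in this way produces a $\muML$-formula equivalent to the input, so $\mu\ofo\le\mu\oml$. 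To make this respect the fragments I would argue: in $\mu_{D}\ofo$ nothing beyond positivity is required of $\al$, and both $\Diamond$ and $\Box$ are unconstrained in the grammars of $\noe{\muML}{\qprop}$ and $\conoe{\muML}{\qprop}$, so $\mu_{D}\ofo$ is mapped into $\AFMC$; in $\mu_{C}\ofo$ the modalities occurring inside $\cont{\mu\ofo}{p}$ have the shape $\nxt{\al}(\ol{\varphi},\ol{\psi})$ with $\al$ continuous in the predicates of the $\ol{\varphi}$-arguments, and invoking Theorem~\ref{t:osnf-cont}(1) I may choose the basic form above so that those predicates are disjoint from every $\Sigma_{j}$, whence each $\Box\bigvee_{S\in\Sigma_{j}}\psi_{S}$ is $\qprop$-free while the remaining $\Diamond\psi_{S}$ stay continuous in $p$ by induction; thus the translated modality lies in $\cont{\muML}{p}$ and $\mu_{C}\ofo$ is mapped into $\mucML$. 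The $\nu$-/cocontinuous clauses are handled by dualising.

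I expect the main obstacle to be not the semantic equivalences, which are routine unfoldings of~\eqref{eq:mumod}, but the verification that the $\mu\ofo\to\mu\oml$ translation lands in precisely the syntactic fragments $\AFMC$ and $\mucML$. This rests entirely on the normal-form theorems for $\ofo$ one-step formulas (Theorems~\ref{t:osnf} and~\ref{t:osnf-cont}) imported from~\cite{carr:mode18}, and on keeping the continuity/noetherianness bookkeeping straight through the mutual recursion on the fragment grammars $\cont{\mu\oslang}{\qprop}$ and $\noe{\mu\oslang}{\qprop}$.
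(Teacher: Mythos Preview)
Your proposal is correct and is essentially the paper's own argument: both directions are mediated through the identifications $\muML=\mu\oml$, $\AFMC=\mu_{D}\oml$, $\mucML=\mu_{C}\oml$ and Theorems~\ref{t:autofor}--\ref{t:fortoaut}, and the only nontrivial step --- translating a modality $\nxt{\al}(\ol{\phi})$ of $\mu_{C}\ofo$ into $\mucML$ --- is handled exactly as you describe, by putting $\al$ into the basic form of Theorem~\ref{t:osnf-cont}(1) so that the predicates carrying the continuous arguments are absent from the $\forall$-part, whence the resulting $\Box$-subformula is $\qprop$-free. The paper merely spells out the inductive invariant \eqref{eq:zz1} (your ``continuity bookkeeping'') a bit more explicitly, but the content is the same.
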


\begin{proof}
In each of these cases the direction from left to right is easy to verify,
so we omit details.
For the opposite direction, we focus on the hardest case, that is, we will 
only prove that $\AutWC(\ofo) \leq \mucML$.
By Theorem~\ref{t:autofor} it suffices to show that $\mu_{C}\ofo \leq \mucML$,
and we will in fact provide a direct, inductively defined, truth-preserving 
translation $(\cdot)^{t}$ from $\mu_{C}\ofo(\pprop)$ to $\mucML(\pprop)$.
Inductively we will ensure that, for every set $\qprop \sse \pprop$:
\begin{equation}
\label{eq:zz1}
\phi \in \cont{\mu\ofo}{\qprop} % \cap \mu_{C}\ofo
\text{ implies } \phi^{t} \in 
\cont{\muML}{\qprop} % \cap \mucML,
\end{equation}
and that the dual property holds for cocontinuity.

Most of the clauses of the definition of the translation $(\cdot)^{t}$ are 
completely standard: for the atomic clause we take $p^{t} \isdef p$ and
$(\neg p)^{t} \isdef \neg p)$, for the boolean connectives we define 
$(\phi_{0}\lor\phi_{1})^{t} \isdef \phi_{0}^{t} \lor \phi_{1}^{t}$ and 
$(\phi_{0}\land\phi_{1})^{t} \isdef \phi_{0}^{t} \land \phi_{1}^{t}$, and 
for the fixpoint operators we take $(\mu p. \phi)^{t} \isdef \mu p. \phi^{t}$ 
and $(\nu p. \phi)^{t} \isdef \nu p. \phi^{t}$ ---
to see that the latter clauses indeed provide formulas in $\mucML$ we use
\eqref{eq:zz1} and its dual.
In all of these cases it is easy to show that \eqref{eq:zz1} holds (or remains
true, in the inductive cases).

The only interesting case is where $\phi$ is of the form 
$\nxt{\al}(\phi_{1},\ldots,\phi_{n})$.
By definition of the language $\mu_{C}\ofo$ we may assume that 
$\al(a_{1},\ldots,a_{n}) \in \cont{\ofo(A)}{B}$, where 
$A = \{ a_{1},\ldots,a_{n} \}$ and $B = \{ a_{1}, \ldots, a_{k} \}$,
that for each $1 \leq i \leq k$ the formula $\phi_{i}$ belongs to the set 
$\cont{\mu_{C}\ofo}{\qprop}$ and that for each $k+1\leq j \leq n$ the formula 
$\phi_{j}$ is $\qprop$-free.
It follows by the induction hypothesis 
that $\phi_{l} \equiv \phi_{l}^{t} \in \mucML$ for each $l$, 
that $\phi_{i}^{t} \in \cont{\muML}{\qprop}$ for each $1 \leq i \leq k$,
and that  the formula $\phi_{j}^{t}$ is $\qprop$-free for each $k+1\leq j \leq n$.
The key observation is now that by Theorem~\ref{t:osnf-cont} we may without 
loss of generality assume that $\al$ is in \emph{normal form}; that is, a 
disjunction of formulas of the form $\al_{\Sigma} = \posdgbnfofo{\Sigma}{\Pi}$,
where every $\Sigma$ and $\Pi$ is a subset of $\pow (A)$, $B \cap \bigcup\Pi =
\nada$ for every $\Pi$, and 
\[
\posdgbnfofo{\Sigma}{\Pi} \isdef 
\bigwedge_{S\in\Sigma} \exists x \bigwedge_{a \in S} a(x) 
\;\land\; \forall x \bigvee_{S\in\Pi} \bigwedge_{a \in S} a(x) 
\]
We now define
\[\begin{array}{rll}
\bigvee \big(\nxt{\al_{\Sigma}}(\ol{\phi})\big)^{t} & \isdef &
    \bigwedge_{S\in\Sigma} \Diamond \bigwedge_{a_{l} \in S} \phi_{l}^{t}
     \;\land\; 
     \Box \bigvee_{S\in\Pi} \bigwedge_{a_{j} \in S} \phi_{j}^{t} 
\\ \phi^{t} & \isdef & {\bigvee}_{\Sigma} \big(\nxt{\al_{\Sigma}}(\ol{\phi})\big)^{t}
\end{array}\]
It is then obvious that $\phi$ and $\phi^{t}$ are equivalent, so it remains to 
verify \eqref{eq:zz1}.
But this is immediate by the observation that all formulas $\phi_{j}^{t}$
in the scope of the $\Box$ are associated with an $a_{j}$ belonging to a set 
$S \sse A$ that has an empty intersection with the set $B$; that is, each 
$a_{j}$ belongs to the set $\{ a_{k+1}, \ldots, a_{n}\}$ and so $\phi_{j}^{t}$
is $\qprop$-free.
\end{proof}

\subsection{Bisimulation invariance, one step at a time}
\label{ss:bisinv}

In this subsection we will show how the bisimulation invariance results in this
paper can be proved by automata-theoretic means.
Following Janin \& Walukiewicz~\cite{Jan96}, 
we will define a construction that, for $\oslang \in \{{\ofoe},{\ofoei}\}$, 
transforms an arbitrary $\oslang$-automaton $\bbA$ into an $\ofo$-automaton 
$\bbA^{\tmod}$ such that $\bbA$ is bisimulation invariant iff it is equivalent
to $\bbA^{\tmod}$.
In addition, we will make sure that this transformation preserves both the
weakness and the continuity condition.
The operation $(\cdot)^{\tmod}$ is completely determined by the following 
translation at the one-step level.

\begin{definition}
Recall from Theorem~\ref{t:osnf} that any formula in ${\ofoe}^+(A)$ is 
equivalent to a disjunction of formulas of the form 
$\posdbnfofoe{\vlist{T}}{\Sigma}$, whereas any formula in ${\ofoei}^+(A)$ is 
equivalent to a disjunction of formulas of the form 
$\posdbnfolque{\vlist{T}}{\Pi}{\Sigma}$. 
Based on these normal forms, for both one-step languages $\oslang={\ofoe}$ and 
$\oslang={\ofoei}$, we define the translation 
$(\cdot)^{\tmod} : {\oslang}^+(A) \to \ofo^+(A)$ by setting
% \[
% \Big( \posdbnfofoe{\vlist{T}}{\Sigma} \Big)^{\tmod} = 
% \Big( \posdbnfolque{\vlist{T}}{\Pi}{\Sigma} \Big)^{\tmod} 
% \df
% \bigwedge_{i} \exists x_i. \tau^+_{T_i}(x_i) \land 
% \forall x. \bigvee_{S\in\Sigma} \tau^+_S(x),
% \]
\[
\left.\begin{array}{l}
   \Big( \posdbnfofoe{\vlist{T}}{\Sigma} \Big)^{\tmod} 
\\ \Big( \posdbnfolque{\vlist{T}}{\Pi}{\Sigma} \Big)^{\tmod} 
\end{array}\right\}
\df \bigwedge_{i} \exists x_i. \tau^+_{T_i}(x_i) \land 
\forall x. \bigvee_{S\in\Sigma} \tau^+_S(x),
\]
and for $\al = \bigvee_{i} \al_{i}$ we define $\al^{\tmod} \df \bigvee 
\al_{i}^{\tmod}$.
\end{definition}

\noindent
This definition propagates to the level of automata in the obvious way.

\begin{definition}
Let $\oslang\in \{{\ofoe},{\ofoei}\}$ be a one-step language.
Given an automaton $\bbA = \tup{A,\tmap,\pmap,a_{I}}$ in $\Aut(\oslang)$, define 
the automaton $\bbA^{\tmod} \df \tup{A,\tmap^{\tmod},\pmap,a_{I}}$ in 
$\Aut(\ofo)$ by putting, for each $(a,c) \in A \times C$:
\[
\tmap^{\tmod}(a,c) \df (\tmap(a,c))^{\tmod}.
\]
\end{definition}

The main result of this section is the theorem below.
For its formulation, recall that $\bbS^{\om}$ is the $\om$-unravelling of 
the model $\bbS$ (as defined in the preliminaries).
As an immediate corollary of this result, we see that \eqref{eq:z31} and
\eqref{eq:z32} hold indeed.

\begin{theorem}
\label{t:bi-aut}
Let $\oslang\in \{{\ofoe},{\ofoei}\}$ be a one-step language and let $\bbA$ be an
$\oslang$-automaton.

\begin{enumerate}
\item
The automata $\bbA$ and $\bbA^{\tmod}$ are related as follows, for every model $\bbS$:
\begin{equation}
\label{eq:crux}
\bbA^{\tmod} \text{ accepts } \bbS \text{ iff } \bbA \text{ accepts
} \bbS^{\om}.
\end{equation}
\item
The automaton $\bbA$ is bisimulation invariant iff $\bbA \equiv \bbA^{\tmod}$.
\item
If $\bbA\in \AutW(\oslang)$ then $\bbA^{\tmod}\in \AutW(\ofo)$, and 
if $\bbA\in \AutWC(\ofoei)$ then $\bbA^{\tmod}\in \AutWC(\ofo)$.
\end{enumerate}
\end{theorem}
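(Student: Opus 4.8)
The plan is to derive everything from equation~\eqref{eq:crux} of item~(1), following the pattern of Janin \& Walukiewicz. The engine behind \eqref{eq:crux} is a purely one-step statement. Writing $(D^{\infty},V^{\infty})$ for the one-step model with domain $D\times\om$ and valuation $V^{\infty}(a)\isdef V(a)\times\om$, I claim that for every $\al\in\oslang^{+}(A)$ and every one-step model $(D,V)$ we have $(D,V)\models\al^{\tmod}$ iff $(D^{\infty},V^{\infty})\models\al$. Since $(\cdot)^{\tmod}$ distributes over disjunctions and, by Theorem~\ref{t:osnf}, $\al$ may be taken in basic $\oslang$-form, it suffices to prove this for a single basic disjunct $\al=\posdbnfofoei{\vlist{T}}{\Pi}{\Sigma}$ (the case of $\ofoe$ being entirely analogous). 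The one important input to this short verification is that in a basic disjunct $\Pi\cup\Sigma\sse\{T_{1},\dots,T_{k}\}$: hence every type of $\Sigma$ occurs among the distinguished types, so in $(D^{\infty},V^{\infty})$ each conjunct $\qu y.\tau^{+}_{S}(y)$ of $\posdbnfinf{\Sigma}$ holds as soon as all the $T_{i}$ are realised; and since every copy of an element realises exactly that element's type, the conjunct $\dqu y.\bigvee_{S\in\Sigma}\tau^{+}_{S}(y)$ of $\posdbnfinf{\Sigma}$ forces every element of $D$ to realise some type of $\Sigma$, which is exactly what the cover clause $\forall x.\bigvee_{S\in\Sigma}\tau^{+}_{S}(x)$ of $\al^{\tmod}$ asserts (and which also implies the cover clause of $\posdbnfofoe{\vlist{T}}{\Pi\cup\Sigma}$). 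Both directions of the biconditional then drop out.

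Given this one-step lemma, item~(1) follows by simulating the two acceptance games along the evident pairing of positions, using that any node $t$ of $\omegaunrav{\bbS}$ lying over a node $s$ of $\bbS$ has $\tscolors(t)=\tscolors(s)$ and $\R{t}\cong\R{s}\times\om$. If \eloise wins $\agame(\bbA^{\tmod},\bbS)$ and her strategy suggests $W$ at a position $(a,s)$, then by the lemma she may play $W^{\infty}$ at the twinned position $(a,t)$ of $\agame(\bbA,\omegaunrav{\bbS})$; \abelard's reply there, of the form $(b,(d,n))$, projects to a legitimate reply $(b,d)$ in the shadow match, and since $\bbA$ and $\bbA^{\tmod}$ share the same priority map the winner is preserved. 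Conversely, if \eloise wins $\agame(\bbA,\omegaunrav{\bbS})$ and her strategy suggests $V$ at $(a,t)$, then — positive one-step formulas being monotone, Theorem~\ref{th:onesteplogics-grammars} — we may replace $V$ by its copy-saturation $\widehat{V}$, given by $\widehat{V}(b)\isdef\{(d,n)\mid (d,m)\in V(b)\text{ for some }m\}\supseteq V(b)$, which has the form $(\widehat{V}_{0})^{\infty}$; the lemma then yields $\widehat{V}_{0}\models\tmap^{\tmod}(a,\tscolors(s))$, which \eloise plays in $\agame(\bbA^{\tmod},\bbS)$, while \abelard's reply $(b,d)$ is mirrored by $(b,(d,m))$ in the shadow match. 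The saturation step and the bookkeeping that keeps the two matches twinned are where I expect the real work to lie; the one-step lemma, though conceptually central, is a brief calculation.

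Items~(2) and~(3) are then short. For~(2): every LTS $\bbS$ is bisimilar to $\omegaunrav{\bbS}$ (Fact~\ref{prop:tree_unrav}), so by~\eqref{eq:crux} $\bbA^{\tmod}$ accepts $\bbS$ iff $\bbA$ accepts $\omegaunrav{\bbS}$; if $\bbA$ is bisimulation invariant the latter is equivalent to $\bbA$ accepting $\bbS$, whence $\bbA\equiv\bbA^{\tmod}$, and conversely $\bbA^{\tmod}\in\Aut(\ofo)$ is bisimulation invariant (it is equivalent to a modal $\mu$-calculus formula by~\eqref{eq:AutCharMuML}, and $\muML$-definable classes are bisimulation closed), so $\bbA\equiv\bbA^{\tmod}$ forces $\bbA$ to be bisimulation invariant. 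For~(3): $\bbA^{\tmod}$ has the same states and priority map as $\bbA$, and, applying $(\cdot)^{\tmod}$ to transitions already in basic form, a disjunct $\posdbnfofoei{\vlist{T}}{\Pi}{\Sigma}$ and its image involve exactly the same states (once more because $\Pi\cup\Sigma\sse\{T_{i}\}$), so the reachability relation and hence the clusters are unchanged and weakness is preserved. For continuity, when $\pmap(a)=1$ we may by Theorem~\ref{t:osnf-cont}(2) store $\tmap(a,c)$ in a basic form with each $\bigcup\Sigma$ disjoint from the cluster $M$ of $a$; then $\tmap^{\tmod}(a,c)$ is a disjunction of $\ofo$-formulas of the shape $\posdgbnfofo{\Sigma'}{\Sigma}$ with $M\cap\bigcup\Sigma=\nada$, which is exactly the criterion of Theorem~\ref{t:osnf-cont}(1) for $M$-continuity. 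The case $\pmap(a)=0$ reduces to this one upon checking that $(\cdot)^{\tmod}$ commutes with boolean duals up to equivalence and passing through the complement automaton of Definition~\ref{d:caut}.
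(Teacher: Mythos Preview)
Your proposal is correct and follows the paper's approach closely: your one-step lemma is precisely Proposition~\ref{p-1P} (imported from~\cite{carr:mode18}), the game comparison for item~(1) matches the paper's sketch (which defers details to~\cite{Venxx}), and your arguments for items~(2) and~(3) coincide with the paper's, only with more detail filled in on your side --- including the duality check $(\al^{\delta})^{\tmod}\equiv(\al^{\tmod})^{\delta}$, which the paper leaves implicit. One point to verify: the inclusion $\Sigma\sse\{T_{1},\dots,T_{k}\}$ for $\ofoei$-basic disjuncts, which you rely on both for the $\qu$-clauses in the one-step lemma and for cluster preservation in item~(3), is stated in Definition~\ref{def:basicform-ofoei} only for the $\ofoe$ case (as $\Pi\subseteq\vlist{T}$); it is genuinely needed for Proposition~\ref{p-1P} to hold as written, so confirm that the normal form of~\cite{carr:mode18} guarantees it rather than assuming it.
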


The remainder of this section is devoted to the proof of Theorem~\ref{t:bi-aut}.
The key proposition is the following observation on the one-step translation,
that we take from the companion paper~\cite{carr:mode18}.

\begin{proposition}
\label{p-1P}
Let $\oslang\in \{{\ofoe},{\ofoei}\}$.
For every one-step model $(D,V)$ and every $\al \in \oslang^+(A)$ we have
\begin{equation}
\label{eq-1cr}
(D,V) \models \alpha^{\tmod} \text{ iff } (D\times \om,V_\pi) \models \alpha,
\end{equation}
where $V_{\pi}$ % =  f^{-1} \circ V$
 is the induced valuation given by 
$V_{\pi}(a) \df \{ (d,k) \mid d \in V(a), k\in\omega\}$.
\end{proposition}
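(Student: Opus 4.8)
\begin{proofsketch}
The plan is to reduce the statement to the case of a single disjunct in basic form, and then to verify the equivalence by an explicit manipulation of witnesses that exploits the $\om$-many copies present in $D \times \om$. By Theorem~\ref{t:osnf} we may assume that $\al$ is in basic $\oslang$-form, i.e.\ $\al = \bigvee_{i}\al_{i}$ with each $\al_{i}$ a disjunct of the form $\posdbnfofoe{\vlist{T}}{\Sigma}$ (when $\oslang = \ofoe$) or $\posdbnfofoei{\vlist{T}}{\Pi}{\Sigma} = \posdbnfofoe{\vlist{T}}{\Pi \cup \Sigma} \land \posdbnfinf{\Sigma}$ (when $\oslang = \ofoei$). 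Since $(\cdot)^{\tmod}$ commutes with disjunction and both truth relations do as well, it suffices to prove \eqref{eq-1cr} for a single such disjunct $\al_{i}$. Throughout I will use the basic observation that, for a type $S \sse A$, an element $d \in D$ and any $k \in \om$, the pair $(d,k)$ witnesses $S$ in $(D\times\om,V_{\pi})$ iff $d$ witnesses $S$ in $(D,V)$; this is immediate from $V_{\pi}(a) = V(a)\times\om$. Recall that $\al_{i}^{\tmod} = \bigwedge_{j} \exists x_{j}.\tau^{+}_{T_{j}}(x_{j}) \land \forall x.\bigvee_{S\in\Sigma}\tau^{+}_{S}(x)$ in both cases.

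For the direction from left to right, assume $(D,V) \models \al_{i}^{\tmod}$. For each $j$ fix $d_{j} \in D$ witnessing $T_{j}$; then the pairs $(d_{1},1),\dots,(d_{n},n)$ are pairwise distinct in $D\times\om$ (their second coordinates differ) and each $(d_{j},j)$ witnesses $T_{j}$. Moreover, by the universal conjunct of $\al_{i}^{\tmod}$, \emph{every} $d \in D$ witnesses some $S \in \Sigma$, hence every element of $D\times\om$ witnesses some $S \in \Sigma \sse \Pi \cup \Sigma$; so the universal requirement of $\posdbnfofoe{\vlist{T}}{\Sigma}$ (resp.\ $\posdbnfofoe{\vlist{T}}{\Pi\cup\Sigma}$) is met for the witnesses just chosen. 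In the $\ofoei$ case it remains to check $\posdbnfinf{\Sigma}$: the well-formedness of $\posdbnfofoe{\vlist{T}}{\Pi\cup\Sigma}$ gives $\Sigma \sse \vlist{T}$, so each $S \in \Sigma$ equals some $T_{j}$ and therefore has a witness $d \in D$; this yields $\om$-many witnesses $(d,k)$ in $D\times\om$, validating $\qu y.\tau^{+}_{S}(y)$ for every $S\in\Sigma$, while $\dqu y.\bigvee_{S\in\Sigma}\tau^{+}_{S}(y)$ holds with \emph{no} exceptions since all elements witness some $S\in\Sigma$. Hence $(D\times\om,V_{\pi}) \models \al_{i}$.

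For the direction from right to left, assume $(D\times\om,V_{\pi}) \models \al_{i}$. The existential part provides pairwise distinct $e_{1},\dots,e_{n} \in D\times\om$ with $e_{j}$ witnessing $T_{j}$; writing $e_{j} = (d_{j},k_{j})$, the projections $d_{j}$ witness $T_{j}$ in $(D,V)$ (distinctness is irrelevant since $\al_{i}^{\tmod}$ drops the $\arediff{}$ clause), so $(D,V) \models \bigwedge_{j}\exists x_{j}.\tau^{+}_{T_{j}}(x_{j})$. For the universal conjunct of $\al_{i}^{\tmod}$, fix $d \in D$. In the $\ofoe$ case, choose $k$ with $(d,k) \notin \{e_{1},\dots,e_{n}\}$ (possible as $\om$ is infinite); then $\arediff{e_{1},\dots,e_{n},(d,k)}$ holds, so by the universal part of $\posdbnfofoe{\vlist{T}}{\Sigma}$ the element $(d,k)$ witnesses some $S \in \Sigma$, whence $d$ does. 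In the $\ofoei$ case, $\dqu y.\bigvee_{S\in\Sigma}\tau^{+}_{S}(y)$ implies that only finitely many elements of $D\times\om$ fail to witness any $S\in\Sigma$, so among the infinitely many copies $(d,k)$ at least one witnesses some $S\in\Sigma$, and again $d$ does. Thus $(D,V) \models \forall x.\bigvee_{S\in\Sigma}\tau^{+}_{S}(x)$, so $(D,V)\models\al_{i}^{\tmod}$.

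The routine edge cases (empty domain, empty $\vlist{T}$) are handled by noting that the structural constraint $\Pi\cup\Sigma \sse \vlist{T}$ forces $\Sigma = \nada$ whenever $\vlist{T}$ is empty, so no mismatch arises. I expect the only genuinely delicate point to be the verification of the infinity conjunct $\posdbnfinf{\Sigma}$ in the $\ofoei$ case of the left-to-right direction, which is exactly where one must invoke the inclusion $\Sigma \sse \vlist{T}$ built into the basic form; the rest is bookkeeping with the copy construction, and the $\ofoe$ case is a strict simplification obtained by deleting every reference to $\qu$ and $\dqu$.
\end{proofsketch}
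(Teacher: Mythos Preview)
The paper does not give its own proof of this proposition; it simply cites the companion paper~\cite{carr:mode18}. So there is no in-paper argument to compare against. Your direct verification via the basic normal forms is the natural route and is essentially correct.

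One point deserves a sharper justification. In the left-to-right direction for $\ofoei$ you need that every $S\in\Sigma$ has a witness in $D$ in order to secure $\qu y.\tau^{+}_{S}(y)$ in $D\times\om$. You obtain this from the inclusion $\Sigma \sse \vlist{T}$, which you attribute to ``the well-formedness of $\posdbnfofoe{\vlist{T}}{\Pi\cup\Sigma}$''. That inclusion is indeed necessary: without it the equivalence \eqref{eq-1cr} can fail (take $A=\{a,b\}$, $\vlist{T}=(\{a\})$, $\Sigma=\{\{a\},\{b\}\}$, $D=\{d\}$ with $V(a)=\{d\}$, $V(b)=\nada$; then $\al^{\tmod}$ holds in $(D,V)$ but $\qu y.\, b(y)$ fails in $D\times\om$). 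In Definition~\ref{def:basicform-ofoei} the constraint $\Pi\cup\Sigma \sse \vlist{T}$ is not restated explicitly for the $\ofoei$ basic form, so rather than calling it ``well-formedness'' you should either argue that the $\ofoei$ clause inherits the constraint from the $\ofoe$ clause it invokes, or note that Theorem~\ref{t:osnf} always outputs disjuncts satisfying it. With that made explicit, your argument goes through; the $\ofoe$ case, the right-to-left direction, and the empty-domain edge cases are handled correctly.
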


\begin{proofof}{Theorem~\ref{t:bi-aut}}
The proof of the first part is based on a fairly routine comparison, based on
Proposition~\ref{p-1P}, of the acceptance games $\mathcal{A}(\bbA^{\tmod},\bbS)$
and $\mathcal{A}(\bbA,\bbS^{\om})$.
(In a slightly more general setting, the details of this proof can be found 
in~\cite{Venxx}.)

For part~2, the direction from right to left is immediate by Theorem \ref{t:mlaut}.
%the observation  by XXXX~\cite{xxxxxxxx} that $\muML \equiv \Aut{(\ofo)}$.
The opposite direction follows from the following equivalences, where we use
the bisimilarity of $\bbS$ and $\bbS^{\om}$ (Fact~\ref{prop:tree_unrav}):
\begin{align*}
\bbA \text{ accepts } \bbS
  & \text{ iff } \bbA \text{ accepts } \bbS^{\om}
  & \tag{$\bbA$ bisimulation invariant}
\\ & \text{ iff } \bbA^{\tmod} \text{ accepts } \bbS
  & \tag{equivalence~\eqref{eq:crux}}
\end{align*}

It remains to be checked that the construction $(\cdot )^{\tmod}$, which has
been defined for arbitrary automata in $\Aut(\oslang)$, transforms 
both $\wmso$-automata and $\nmso$-automata into automata of the right kind.
This can be verified by a straightforward inspection at the one-step level.
\end{proofof}

\begin{remark}{\rm
% As a corollary of the previous two propositions we find that 
% \begin{itemize}
% 	\itemsep 0 pt
% 	\item $\AutW(\ofo) \equiv \AutW(\ofoe)/{\bis}$, and
% 	\item $\AutWC(\ofo) \equiv \AutWC(\ofoei)/{\bis}$.
% \end{itemize}
In fact, we are dealing here with an instantiation of a more general phenomenon 
that is essentially coalgebraic in nature.
In~\cite{Venxx} it is proved that if $\oslang$ and $\oslang'$ are two one-step
languages that are connected by a translation $(\cdot )^{\tmod}: \oslang' \to 
\oslang$ satisfying a condition similar to \eqref{eq-1cr}, then we find that 
$\Aut(\oslang)$ corresponds to the bisimulation-invariant fragment of 
$\Aut(\oslang')$: $\Aut(\oslang) \equiv \Aut(\oslang')/{\bis}$.
This subsection can be generalized to prove similar results relating
$\AutW(\oslang)$ to $\AutW(\oslang')$, and $\AutWC(\oslang)$ to 
$\AutWC(\oslang')$.
}\end{remark}

% Bibliography
{\small 
\bibliographystyle{plain}
\bibliography{references/logic,references/extra}
}

\end{document}